\newcommand{\nc}{\newcommand}
\nc{\one}{\mbox{\bf 1}}
\nc{\invtensor}{\underset{\leftarrow}{\otimes}}
\nc{\const}{\operatorname{const}}
\nc{\ad}{\operatorname{ad}}
\nc{\tr}{\operatorname{tr}}
\nc{\tp}{\operatorname{top}}
\nc{\rank}{\operatorname{rank}}
\nc{\corank}{\operatorname{corank}}
\nc{\codim}{\operatorname{codim}}
\nc{\sdim}{\operatorname{sdim}}
\nc{\mult}{\operatorname{mult}}
\nc{\spn}{\operatorname{span}}
\nc{\Sym}{\operatorname{Sym}}
\nc{\sym}{\operatorname{sym}}
\nc{\id}{\operatorname{id}}
\nc{\Id}{\operatorname{Id}}
\nc{\Ree}{\operatorname{Re}}
\nc{\htt}{\operatorname{ht}}
\nc{\Ker}{\operatorname{Ker}}
\nc{\rker}{\operatorname{rKer}}
\nc{\im}{\operatorname{Im}}
\nc{\osp}{\mathfrak{osp}}
\nc{\sgn}{\operatorname{sgn}}
\nc{\F}{\operatorname{F}}
\nc{\Mod}{\operatorname{Mod}}
\nc{\Mat}{\operatorname{Mat}}
\nc{\Soc}{\operatorname{Soc}}
\nc{\Inj}{\operatorname{Inj}}
\nc{\Hom}{\operatorname{Hom}}
\nc{\End}{\operatorname{End}}
\nc{\supp}{\operatorname{supp}}
\nc{\Card}{\operatorname{Card}}
\nc{\Ann}{\operatorname{Ann}}
\nc{\Ind}{\operatorname{Ind}}
\nc{\Coind}{\operatorname{Coind}}
\nc{\wt}{\operatorname{wt}}
\nc{\ch}{\operatorname{ch}}
\nc{\Stab}{\operatorname{Stab}}
\nc{\Sch}{{\mathcal S}\mbox{\em ch}}
\nc{\Irr}{\operatorname{Irr}}
\nc{\Spec}{\operatorname{Spec}}
\nc{\Prim}{\operatorname{Prim}}
\nc{\Aut}{\operatorname{Aut}}
\nc{\Ext}{\operatorname{Ext}}
\nc{\Fract}{\operatorname{Fract}}
\nc{\gr}{\operatorname{gr}}
\nc{\deff}{\operatorname{def}}
\nc{\HC}{\operatorname{HC}}
\nc{\red}{\operatorname{red}}
\nc{\wdchi}{\widetilde{\chi}}
\nc{\wdH}{\widetilde{H}}
\nc{\wdN}{\widetilde{N}}
\nc{\wdM}{\widetilde{M}}
\nc{\wdO}{\widetilde{O}}
\nc{\wdR}{\widetilde{R}}
\nc{\wdS}{\widetilde{S}}
\nc{\wdV}{\widetilde{V}}
\nc{\wdC}{\widetilde{C}}
\nc{\Obj}{\operatorname{Obj}}
\nc{\Dglie}{\operatorname{{\mathcal D}glie}}
\nc{\Fin}{\operatorname{{\mathcal F}in}}
\nc{\Adm}{\operatorname{\mathcal{A}dm}}
\nc{\Sg}{{\cS(\fg)}}
\nc{\Shg}{{\cS(\fhg)}}
\nc{\Ug}{{\cU(\fg)}}
\nc{\Uhg}{{\cU(\fhg)}}
\nc{\Sh}{{\cS(\fh)}}
\nc{\Uh}{{\cU(\fh)}}
\nc{\Uhh}{{\cU(\fhh)}}
\nc{\Zg}{{{\mathcal{Z}}(\fg)}}
\nc{\Vir}{{\mathcal{V}ir}}
\nc{\NS}{{\mathcal{N}S}}
\nc{\tZg}{{\widetilde{\mathcal Z}({\mathfrak g})}}
\nc{\Zk}{{\mathcal Z}({\mathfrak k})}
\nc{\Up}{{\mathcal U}({\mathfrak p})}
\nc{\Ah}{{\mathcal A}({\mathfrak h})}
\nc{\Ag}{{\mathcal A}({\mathfrak g})}
\nc{\Ap}{{\mathcal A}({\mathfrak p})}
\nc{\Zp}{{\mathcal Z}({\mathfrak p})}
\nc{\cZ}{\mathcal Z}
\nc{\cS}{\mathcal S}
\nc{\cT}{\mathcal{T}}
\nc{\cA}{\mathcal A}
\nc{\cU}{\mathcal U}
\nc{\cH}{\mathcal H}
\nc{\cM}{\mathcal M}
\nc{\cL}{\mathcal L}
\nc{\cF}{\mathcal F}
\nc{\cP}{\mathcal P}
\nc{\fg}{\mathfrak g}
\nc{\fo}{\mathfrak o}
\nc{\CO}{\mathcal O}
\nc{\CR}{\mathcal R}
\nc{\Cl}{\mathcal {C}\ell}
\nc{\cW}{\mathcal{W}}
\nc{\bM}{\mathbf{M}}
\nc{\bL}{\mathbf{L}}
\nc{\bN}{\mathbf{N}}
\nc{\zq}{\mathpzc q}
\nc{\fl}{\mathfrak l}
\nc{\fn}{\mathfrak n}
\nc{\fm}{\mathfrak m}
\nc{\fp}{\mathfrak p}
\nc{\fh}{\mathfrak h}
\nc{\ft}{\mathfrak t}
\nc{\fk}{\mathfrak k}
\nc{\fb}{\mathfrak b}
\nc{\fs}{\mathfrak s}
\nc{\fB}{\mathfrak B}
\nc{\vareps}{\varepsilon}
\nc{\varesp}{\varepsilon}
\nc{\veps}{\varepsilon}
\nc{\fsl}{\mathfrak{sl}}
\nc{\fgl}{\mathfrak{gl}}
\nc{\fso}{\mathfrak{so}}
\nc{\fpq}{\mathfrak{pq}}
\nc{\fq}{\mathfrak q}
\nc{\fsq}{\mathfrak{sq}}
\nc{\fpsq}{\mathfrak{psq}}
\nc{\fhg}{\hat{\fg}}
\nc{\fhn}{\hat{\fn}}
\nc{\fhh}{\hat{\fh}}
\nc{\fhb}{\hat{\fb}}
\nc{\hrho}{\hat{\rho}}
\nc{\hsl}{\hat{\fsl}}
\nc{\fpo}{\mathfrak{po}}
\nc{\dirlim}{\underset{\rightarrow}{\lim}\,}
\nc{\nen}{\newenvironment}
\nc{\ol}{\overline}
\nc{\ul}{\underline}
\nc{\ra}{\rightarrow}
\nc{\lra}{\longrightarrow}
\nc{\Lra}{\Longrightarrow}
\nc{\Lla}{\Longleftarrow}
\nc{\Llra}{\Longleftrightarrow}
\nc{\thla}{\twoheadleftarrow}
\nc{\hra}{\hookrightarrow}
\nc{\iso}{\overset{\sim}{\lra}}
\nc{\ssubset}{\underset{\not=}{\subset}}
\nc{\vac}{|0\rangle}
\nc{\Thm}[1]{Theorem~\ref{#1}}
\nc{\Prop}[1]{Proposition~\ref{#1}}
\nc{\Lem}[1]{Lemma~\ref{#1}}
\nc{\Cor}[1]{Corollary~\ref{#1}}
\nc{\Conj}[1]{Conjecture~\ref{#1}}
\nc{\Claim}[1]{Claim~\ref{#1}}
\nc{\Defn}[1]{Definition~\ref{#1}}
\nc{\Exa}[1]{Example~\ref{#1}}
\nc{\Rem}[1]{Remark~\ref{#1}}
\nc{\Note}[1]{Note~\ref{#1}}
\nc{\Quest}[1]{Question~\ref{#1}}
\nc{\Hyp}[1]{Hypoth\`ese~\ref{#1}}
\begin{document}
\setcounter{section}{-1}

\title[On complete reducibility]{On complete reducibility
for infinite-dimensional Lie algebras}
\author[Maria Gorelik]{Maria Gorelik~$^\dag$}

\address{Dept. of Mathematics, The Weizmann Institute of Science,
Rehovot 76100, Israel}
\email{maria.gorelik@weizmann.ac.il}
\thanks{$^\dag$
Supported in part by Minerva Grant and ISF Grant No.
 1142/07}

\author[Victor Kac]{Victor Kac~$^\ddag$}

\address{Dept. of Mathematics, 2-178, Massachusetts Institute of Technology,
Cambridge, MA 02139-4307, USA}
\email{kac@math.mit.edu}
\thanks{$^\ddag$ Supported in part by NSF Grant.}

\begin{abstract}
In this paper we study the complete reducibility of representations
of infinite-dimensional Lie algebras from the perspective of
representation theory of vertex algebras.
\end{abstract}

\maketitle

\section{Introduction}\label{intro}
The first part of the paper is concerned with restricted representations
of an arbitrary Kac-Moody algebra $\fg=\fg(A)$, associated to a symmetrizable
generalized Cartan matrix $A$~(\cite{Kbook}, Chapter 1).
Let $\fh$ be the Cartan subalgebra
of $\fg$ and let $\fg=\fh\oplus(\oplus_{\alpha\in\Delta}\fg_{\alpha})$
be the root space decomposition of $\fg$ with respect to $\fh$.
A $\fg$-module $M$ is called {\em restricted} if for any $v\in M$,
we have $\fg_{\alpha}v=0$ for all but finitely many positive roots $\alpha$.
This condition allows the action of the Casimir operator on $M$,
which is the basic tool of representation theory of
Kac-Moody algebras~\cite{Kbook} and also, in the case of an affine matrix
$A$, allows the basic constructions of the vertex algebra theory,
like normally ordered product of quantum fields.

In representation theory of Kac-Moody algebras one usually considers
$\fh$-diagonalizable $\fg$-modules $M=\oplus_{\mu\in\Omega(M)} M_{\mu}$,
where
$$M_{\mu}:=\{v\in M|\ hv=\mu(h)v, h\in\fh\}\not=0,$$
is the weight space, attached to a weight $\mu$, and $\Omega(M)$
is the set of weights. The $\fh$-diagonalizable module $M$
is called {\em bounded} if the set $\Omega(M)$ is bounded
by a finite set of elements $\lambda_1,\ldots,\lambda_s\in\fh^*$, i.e.
for any $\mu\in\Omega(M)$ one has $\mu\leq \lambda_i$ for some
$1\leq i\leq s$.

The category $\CR$ of restricted $\fg$-modules contains the extensively
studied category $\CO$, whose objects are $\fh$-diagonalizable
bounded $\fg$-modules. Recall that all irreducibles
in the category $\CO$ are the irreducible highest weight modules $L(\lambda)$
with highest weight $\lambda\in\fh^*$.

One of the basic facts of representation theory of Kac-Moody algebras
is that the subcategory $\CO_{int}$ of $\CO$, which consists of modules,
all of whose irreducible subquotients are integrable (i.e. are
isomorphic to $L(\lambda)$ with $\lambda$ dominant integral),
is semisimple~(\cite{Kbook}, Chapter 10).
This generalization of Weyl's complete
reducibility theorem can be generalized further as follows.

Let $(.,.)$ be a non-degenerate symmetric invariant bilinear form on $\fg$.
When restricted to $\fh$ it is still non-degenerate, and one can normalize
it, such that $(\alpha_i,\alpha_i)\in \mathbb{Q}_{>0}$ for all simple roots
$\alpha_i$~(\cite{Kbook}, Chapter 2). Then
$\Delta=\Delta^{re}\coprod\Delta^{im}$, where $\Delta^{re}\ (\text{resp.}
\Delta^{im}) =\{\alpha\in\Delta|\ (\alpha,\alpha)>0\
(\text{resp.}  (\alpha,\alpha)\leq 0)\}$; for $\alpha\in\Delta^{re}$
we let $\alpha^{\vee}\in\fh$ be such that
$\langle \lambda,\alpha^{\vee}\rangle=\frac{2(\lambda,\alpha)}
{ (\alpha,\alpha)}$. The Weyl group
$W$ of $\fg$ is the subgroup of $GL(\fh^*)$,
generated by orthogonal reflections
$r_{\alpha}$ in the roots $\alpha\in\Delta^{re}$~(\cite{Kbook}, Chapter 3).
Given $\lambda\in\fh^*$, let
$$\Delta(\lambda):=\{\alpha|\ \alpha\in\Delta^{re}\ \&\
\langle \lambda,\alpha^{\vee}\rangle\in\mathbb{Z}\},\ \
\Delta(\lambda)^{\vee}:=\{\alpha^{\vee}|\ \alpha\in\Delta(\lambda)\},\ \
\Delta_+(\lambda):= \Delta(\lambda)\cap\Delta_+^{re},$$
and let $W(\lambda)$ be the subgroup of $W$, generated by $\{r_{\alpha}|
\alpha\in\Delta(\lambda)\}$.

Set
$$\fh':=\fh\cap [\fg,\fg].$$
Notice that for any $\lambda$ one has $\Delta(\lambda)^{\vee}\subset \fh'$.
An element $\lambda\in\fh^*$, is called {\em rational} if
$$\mathbb{C}\Delta(\lambda)^{\vee}=\fh'.$$
An element $\lambda\in\fh^*$, is called {\em non-critical} if
$$2(\lambda+\rho,\alpha)\not\in\mathbb{Z}_{>0}(\alpha,\alpha)\
\text{for all }\alpha\in\Delta_+^{im}.$$

By~\cite{KK}, Thm. 2, all simple subquotients of a Verma module
$M(\mu)$ with $\mu$ non-critical are of the form $L(w(\mu+\rho)-\rho)$
for some $w\in W(\mu)$. We call $\lambda\in\fh^*$
and the corresponding $L(\lambda)$ {\em weakly admissible} if
it is non-critical and for any non-critical $\mu\not=\lambda$,
$L(\lambda)$ is not a subquotient of $M(\mu)$.
By~\cite{KK}, Thm. 2, a non-critical $\lambda$ is weakly admissible iff
$$\langle \lambda+\rho,\alpha^{\vee}\rangle\geq 0\ \text{
for all }\alpha\in\Delta_+(\lambda).$$

We denote by $\CO_{w.adm}$ the subcategory of $\CO$, which consists
of modules, all of whose simple subquotients are weakly admissible.
Clearly, any integrable module is weakly admissible, hence
the category $\CO_{int}$ is a subcategory of $\CO_{w.adm}$.
 It follows from~\cite{DGK} that
the category $\CO_{w.adm}$ is semisimple as well.

From the vertex algebra theory viewpoint it is important to study modules,
which are not necessarily $\fh$-diagonalizable,
but are $\fh$-locally finite (see~\cite{M}). In this case a $\fg$-module $M$
decomposes into a direct sum of {\em generalized weight spaces}:
$$M_{\mu}:=\{v\in M|\ (h-\mu(h))^Nv=0, h\in\fh, \ N>>0\},$$
and we still denote by $\Omega(M)$ the set of generalized weights.
We still call $M$  {\em bounded} if $\Omega(M)$ is bounded
by a finite set of elements of $\fh^*$.
We denote by $\tilde{\CO}$ the category of all $\fh$-locally finite bounded
$\fg$-modules. While in the category $\CO$,
any self-extension of $L(\lambda)$ is trivial, this is not the case for the
category $\tilde{\CO}$. In fact, if $\fh'\not=\fh$
(which is the case when $\fg$ is affine), then any $L(\lambda)$ has
obvious non-trivial self-extensions. Hence we may expect triviality
of self-extensions of $L(\lambda)$ in $\tilde{\CO}$ only if
$L(\lambda)$ is viewed as a $[\fg,\fg]$-module (recall that
$\fg=[\fg,\fg]+\fh$). Our main result in this direction is the following.

\subsection{}
\begin{thm}{thm01}
Let $\lambda\in\fh^*$ be a non-critical weight which is dominant, i.e.
$$\langle\lambda+\rho,\alpha^{\vee}\rangle>0\
\text{ for all }\alpha\in\Delta_+(\lambda).$$
Then $\Ext^1(L(\lambda),L(\lambda))$ is in canonical bijection
with the annihilator $\Delta(\lambda)^{\vee}$ in $\fh^*$ ($=\{\mu\in\fh^*|\
\langle \mu, \Delta(\lambda)^{\vee}\rangle=0\}$).

Consequently, if, in addition, $\lambda$ is rational,
then any self-extension of $L(\lambda)$, viewed as a $[\fg,\fg]$-module,
is trivial.
\end{thm}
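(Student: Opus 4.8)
The plan for the final assertion is to deduce it from the first part of the theorem, applied with $\fg$ replaced by its derived algebra $\fg':=[\fg,\fg]$. Recall that $\fg'$ has the triangular decomposition $\fn^-\oplus\fh'\oplus\fn^+$, with $\fh':=\fh\cap[\fg,\fg]$, carries the restriction of the invariant form, has the same root system $\Delta$ and the same coroots (all $\alpha^\vee$ lie in $\fh'$), and is perfect: $[\fg',\fg']=\fg'$, so $\fh'\cap[\fg',\fg']=\fh'$. Set $\bar\lambda:=\lambda|_{\fh'}$. Since the coroots lie in $\fh'$ and the root system and form are unchanged, the conditions defining ``dominant'' and ``non-critical'' depend only on $\lambda|_{\fh'}$; hence $\Delta(\bar\lambda)=\Delta(\lambda)$, $\bar\lambda$ is a dominant non-critical weight for $\fg'$, and $\mathbb C\Delta(\bar\lambda)^\vee=\mathbb C\Delta(\lambda)^\vee$. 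In particular, if $\lambda$ is rational for $\fg$, i.e. $\mathbb C\Delta(\lambda)^\vee=\fh'$, then (using $\fh'=\fh'\cap[\fg',\fg']$) $\bar\lambda$ is rational for $\fg'$ as well.

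Next I would reduce the statement to a vanishing of $\Ext^1$ over $\fg'$. The restriction $L(\lambda)|_{\fg'}$ is irreducible, equal to $L_{\fg'}(\bar\lambda)$: it is generated over $\fg'$ by the highest weight vector $v_\lambda$ (because $\fn^-\subseteq\fg'$), and any nonzero $\fg'$-submodule contains a nonzero $\fn^+$-invariant $\fh'$-weight vector, which by the $\fg$-irreducibility of $L(\lambda)$ (whose only $\fn^+$-invariant line is $\mathbb C v_\lambda$) must be a multiple of $v_\lambda$. Therefore the claim ``every self-extension of $L(\lambda)$, viewed as a $[\fg,\fg]$-module, is trivial'' is precisely the assertion $\Ext^1_{[\fg,\fg]}\bigl(L_{\fg'}(\bar\lambda),L_{\fg'}(\bar\lambda)\bigr)=0$, the $\Ext$ being computed in the category $\tilde{\CO}$ attached to $\fg'$.

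Now I would apply the first part of the theorem to $\fg'$ and $\bar\lambda$: it puts $\Ext^1_{[\fg,\fg]}\bigl(L_{\fg'}(\bar\lambda),L_{\fg'}(\bar\lambda)\bigr)$ in bijection with the annihilator of $\Delta(\bar\lambda)^\vee=\Delta(\lambda)^\vee$ in $(\fh')^*$. Since $\lambda$ is rational, $\mathbb C\Delta(\lambda)^\vee=\fh'$, so this annihilator is $0$, which proves the claim. It is worth stressing that one cannot argue this way over $\fg$ itself: there $\Ext^1(L(\lambda),L(\lambda))$ is the annihilator of $\Delta(\lambda)^\vee$ inside the possibly larger $\fh^*$, which need not vanish for rational $\lambda$ when $\fh'\subsetneq\fh$. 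One can, however, check directly that a $\fg$-self-extension of $L(\lambda)$ always splits over $\fg'$: by the first part it corresponds to some $\phi\in\fh^*$ annihilating $\Delta(\lambda)^\vee$, hence $\fh'$; then $\fh'$ acts by the scalar $\bar\lambda$ on the two-dimensional generalized top weight space (which is killed by $\fn^+$), producing two genuine $\fg'$-highest weight vectors of weight $\bar\lambda$, and since $L_{\fg'}(\bar\lambda)$ has multiplicity one in the Verma module $M_{\fg'}(\bar\lambda)$, the $\fg'$-submodule they generate is a proper direct summand. But because not every $\fg'$-self-extension of $L_{\fg'}(\bar\lambda)$ need arise by restriction from $\fg$, the passage to $\fg'$ in the previous step is the genuine argument.

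The step I expect to be the main obstacle is making the appeal to the first part of the theorem for $\fg'=[\fg,\fg]$ fully rigorous: $\fg'$ is the derived Kac--Moody algebra rather than $\fg(A)$ for a generalized Cartan matrix $A$, and this distinction is real exactly when $\det A=0$ (e.g. in the affine case, where $\fh'\subsetneq\fh$). One must therefore verify that each ingredient of the proof of the first part — the existence and centrality of the Casimir operator on restricted modules, the Kac--Kazhdan description of the composition factors of Verma modules, and the needed properties of the category of bounded $\fh'$-locally finite modules — survives the replacement of $\fg$ by $\fg'$. All of these depend only on the triangular decomposition and the symmetrizable invariant form, so this should be routine, but it is the point that needs care; the identification $L(\lambda)|_{\fg'}=L_{\fg'}(\bar\lambda)$ is a minor secondary point.
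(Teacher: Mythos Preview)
Your reading of the ``consequently'' clause is stronger than what the paper intends and proves. In the paper, $\Ext^1(L(\lambda),L(\lambda))$ is computed in the category $\tilde{\CO}$ of $\fg$-modules, and ``viewed as a $[\fg,\fg]$-module'' modifies the self-extension: the assertion is that every $\fg$-self-extension of $L(\lambda)$, when restricted to $\fg'=[\fg,\fg]$, splits. The paper's argument is exactly the one you sketch and then dismiss: by the first part, $\im\Upsilon_{L(\lambda)}=\Delta(\lambda)^{\perp}$; rationality gives $\Delta(\lambda)^{\perp}=\{\mu\in\fh^*\mid\langle\mu,\fh'\rangle=0\}$; and by the elementary criterion (Remark after Proposition~2.2) a $\fg$-self-extension with $\Upsilon$-value $\mu$ splits over $\fg'$ iff $\langle\mu,\fh'\rangle=0$. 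That is the whole proof of the second clause --- no passage to $\fg'$-representation theory is needed.

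Your main line of attack --- reproving the first part with $\fg$ replaced by $\fg'$ --- has a genuine obstacle that is not ``routine''. When $\det A=0$ (precisely the interesting case), the restriction of the imaginary root $\delta$ to $\fh'$ vanishes, so the $\fh'$-weight spaces of $\cU(\fn_-)$ are infinite-dimensional: for instance $\cU(\fn_-)_{0}$ contains every PBW monomial of $\fh$-weight $-n\delta$. This breaks conditions (U3) and (U4) of the paper's abstract setup (Section~1.2): the $0$th weight space is no longer $\mathbb{C}\cdot 1$, there is no $h\in\fh'$ separating $\cU'_-$ from $1$, and the Shapovalov determinants $\det B_\nu$ (finite matrices!) are not defined. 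The Jantzen-filtration argument behind Proposition~1.14 and Proposition~2.9, and hence the proof of the first part, therefore does not transfer to $\fg'$ as written. A symptom: your identification $L(\lambda)|_{\fg'}\cong L_{\fg'}(\bar\lambda)$ is correct, but already this module has infinite-dimensional $\fh'$-weight spaces (e.g.\ $\dim L(\Lambda_0)_{\bar\Lambda_0}=\infty$ for $\widehat{\fsl}_2$), so the standard highest-weight category machinery over $\fg'$ is simply unavailable.

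In short: the ``direct'' argument you put in parentheses \emph{is} the paper's proof and is complete; the route via $\fg'$ that you call ``the genuine argument'' is the one with a real gap.
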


A $\fg$-module $L(\lambda)$ and the corresponding $\lambda$ are called
{\em admissible}
if it is weakly admissible and, viewed as  a $[\fg,\fg]$-module,
has no non-trivial self-extensions. Denote by $\tilde{\CO}_{adm}$
the subcategory of $\tilde{\CO}$, consisting of $\fg$-modules,
all of whose simple subquotients are admissible. \Thm{thm01} implies that
${\CO}_{int}$ is a subcategory of $\tilde{\CO}_{adm}$.
Again, the following result is easily derived from~\cite{DGK}.

\subsection{}
\begin{thm}{thm02}
Any $\fg$-module from the category  $\tilde{\CO}_{adm}$, viewed as a
$[\fg,\fg]$-module, is completely reducible.
\end{thm}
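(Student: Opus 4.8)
The plan is to reduce \Thm{thm02} to a vanishing statement for $\Ext^1$ between simple modules and then to invoke the structural results of \cite{DGK}. First I would use that every $M$ in $\tilde{\CO}$ decomposes, according to the generalized eigenvalues of the Casimir operator (which acts on restricted modules), into a direct sum of submodules each lying in a single linkage class; and that, $M$ being bounded, on every such summand any nonempty set of generalized weights has a maximal element. Granting the $\tilde{\CO}$-analogue of the semisimplicity criterion of \cite{DGK} --- that a module in $\tilde{\CO}$ lying in a single linkage class, and such that $\Ext^1$ vanishes between any two of its simple subquotients (including a simple subquotient with itself), is completely reducible --- the theorem reduces to the claim that $\Ext^1_{[\fg,\fg]}(L(\lambda),L(\mu))=0$ for all admissible $L(\lambda),L(\mu)$, the $\Ext$ being taken in the category of $[\fg,\fg]$-modules underlying $\tilde{\CO}$. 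Along the way one records the elementary facts that $L(\lambda)$ remains simple and is cyclic (hence finitely generated) over $[\fg,\fg]$, and that passing to $[\fg,\fg]$ does not enlarge the set of simple subquotients of a module.

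For the $\Ext$-computation I would distinguish two cases. If $L(\lambda)\cong L(\mu)$ as $[\fg,\fg]$-modules --- in particular if $\lambda=\mu$ --- the required vanishing is, by \Thm{thm01}, precisely the definition of admissibility: the self-extension group of $L(\lambda)$ as a $[\fg,\fg]$-module is zero. If $L(\lambda)\not\cong L(\mu)$ over $[\fg,\fg]$, then $\lambda\neq\mu$, and I would show that every extension $0\to L(\mu)\to M\to L(\lambda)\to 0$ with $M\in\tilde{\CO}$ already splits as an extension of $\fg$-modules. Since $\lambda$ and $\mu$ are weakly admissible they are non-critical; by the standard relation between non-split extensions of simple objects and Verma-module composition factors (valid for category $\CO$ and, by the same argument, for $\tilde{\CO}$; cf. \cite{DGK}), a non-split such extension would force $L(\mu)$ to be a subquotient of $M(\lambda)$, or $L(\lambda)$ to be a subquotient of $M(\mu)$; by \cite{KK}, either alternative contradicts the weak admissibility of $\lambda$, respectively of $\mu$. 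Whenever the quotient weight $\lambda$ is maximal in $\Omega(M)$ (i.e. whenever $\lambda\not<\mu$) this is transparent: $M_{[\lambda]}$ is one-dimensional and killed by $\fn_+=\oplus_{\alpha>0}\fg_\alpha$, so $M$ carries a genuine highest weight vector of weight $\lambda$, and the submodule it generates is a quotient of $M(\lambda)$ which either contains $L(\mu)$ (the forbidden subquotient) or maps isomorphically onto $L(\lambda)$ (the splitting); the remaining case $\lambda<\mu$ is covered by the general relation just cited.

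The step I expect to be the main obstacle is the one imported in the first paragraph: the pairwise vanishing of $\Ext^1$ among the simple subquotients of $M$ does not, in a general abelian category, make $M$ semisimple once $M$ has infinite length --- and this genuinely occurs here, since Verma modules over an affine $\fg$ need not have finite length. What makes the implication go through is the combination of the linkage decomposition of $\tilde{\CO}$ with the local-finiteness results of \cite{DGK}: within a single linkage class one shows that $M$ is the union of its finite-length submodules, so that the removal of $\Soc M$ can be iterated to exhaust $M$, using that $\Ext^1(L(\lambda),-)$ commutes with the relevant (bounded) direct sums because $L(\lambda)$ is cyclic over $[\fg,\fg]$. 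The real care lies in checking that the relevant statements of \cite{DGK}, originally formulated for $\CO$ over $\fg$ and used there to establish the semisimplicity of $\CO_{w.adm}$, carry over verbatim to the restricted $[\fg,\fg]$-module incarnation of $\tilde{\CO}$; granting that, the remaining input is just the short two-case $\Ext$ computation above.
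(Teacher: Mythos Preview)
Your proposal is correct in outline and shares the paper's two-step architecture: (a) show $\Ext^1$ vanishes between any two admissible simples, and (b) deduce complete reducibility from a general semisimplicity criterion. Step (a) is handled identically: self-extensions vanish by the definition of admissibility, and for $\lambda\neq\mu$ the paper proves exactly your ``non-split extension forces a Verma subquotient'' statement (\Lem{proplala}, \Cor{corlala}), the case $\lambda<\mu$ being dispatched via the contragredient duality $N\mapsto N^{\#}$ rather than by citing a general relation.

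The genuine difference is in step (b). You propose to import the DGK machinery---Casimir decomposition, linkage classes, local finiteness---and then worry (rightly) about whether it transports to $\tilde{\CO}$ over $[\fg,\fg]$. The paper sidesteps all of this by proving a self-contained abstract criterion (\Lem{lemcH}): given the pairwise $\Ext^1$ vanishing, one only needs that every module contain an irreducible submodule. Boundedness of $\Omega(N)$ gives a maximal weight and hence a highest-weight vector; the generated submodule is then simple because a non-simple highest-weight module would have a length-two quotient, contradicting $\Ext^1=0$ (\Cor{corCO'}). The infinite-length obstacle you flag is handled not by local finiteness but by the finite generation of $\cU_+$: if the socle were proper, a preimage of a highest-weight vector in the quotient would have $\cU'_+v$ contained in a \emph{finite} sub-sum of the socle, and one splits off a copy of $L(\lambda)$ from that finite extension. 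This argument is both shorter and more portable---the paper reuses it verbatim for the Virasoro, Neveu--Schwarz, and minimal $W$-algebras, where no Casimir-type linkage decomposition is available.
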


A $\fg$-module $L(\lambda)$ and the corresponding $\lambda$
are called {\em KW-admissible} if
$\lambda$ is rational non-critical and dominant.
By~\Thm{thm01}, the KW-admissibility is a slightly
stronger condition than the admissibility, namely, an admissible
$\lambda\in\fh^*$ is KW-admissible iff $\lambda+\rho$ is regular, i.e.
$(\lambda+\rho,\alpha)\not=0$ for all $\alpha\in\Delta^{re}$.

KW-admissible modules have been introduced in~\cite{KWmod}, where
their characters were computed. The importance of this notion
comes from the fact that if $\fg$ is an affine Kac-Moody
algebra~(\cite{Kbook}, Chapter 6), then the (normalized) complete character
of a KW-admissible $\fg$-module $L(\lambda)$ is
a modular function~\cite{KWmod},\cite{KWmod2}
(and conjecturally these are all $\fg$-modules
$L(\lambda)$ with this property, which is known only
for $\fg=\hat{\fsl}_2$~\cite{KWmod}).
This fact, in turn, implies modular invariance of characters of modules
over the associated $W$-algebras~\cite{KRW},\cite{KWdet}.

Observe that  rationality, weak admissibility, KW-admissibility,
and admissibility of $\lambda\in\fh^*$
are, in fact, the properties that depend only on the restriction of $\lambda$
to $\fh'$. Indeed,take $\nu\in\fh^*$ such that $\langle\nu,\fh'\rangle=0$.
Since any root is a linear combination of simple roots,
$\lambda\in\fh^*$ is non-critical iff $\lambda+\nu$ is non-critical.
Therefore $\lambda$ is weakly admissible (resp., KW-admissible)
iff $\lambda+\nu$ is weakly admissible (resp., KW-admissible).
Since $L(\nu)$ is trivial as a $[\fg,\fg]$-module,
and $L(\lambda+\nu)\cong L(\lambda)\otimes L(\nu)$,
$\lambda$ is admissible iff $\lambda+\nu$ is  admissible.

The KW-admissibility and admissibility of a $\fg$-module $L(\lambda)$
can be understood in geometric terms as follows. Given a rational
$\Lambda\in(\fh')^*$, we associate to it the following (infinite) polyhedron:
$$\mathcal{P}(\Lambda):=\{\lambda\in (\fh')^*|\ \langle \lambda+\rho,
\alpha^{\vee}\rangle\geq 0\ \text{ for all }
\alpha\in \Delta_+(\Lambda)\}.$$
An element $\lambda\in \mathcal{P}(\Lambda)$ is called
{\em integral} if $\Delta(\lambda)=\Delta(\Lambda)$. The set of
KW-admissible weights is the union of the sets of
all interior integral points of all the polyhedra
$\mathcal{P}(\Lambda)$, whereas the set of admissible weights is obtained by
adding some integral points of the boundary of $\mathcal{P}(\Lambda)$'s, and
that of weakly admissible rational weights by adding all the integral points
of the boundary.

It would be important to replace in~\Thm{thm02} the boundedness condition by
the restrictness condition, but for general Kac-Moody algebras we can do it
only in the integrable case:

\subsection{}
\begin{thm}{thm03}
Let $M$ be a restricted $\fh$-locally finite $\fg$-module with
finite-dimensional generalized weight spaces. If any irreducible
subquotient of $M$ is isomorphic to an integrable
$\fg$-module $L(\lambda),\ \lambda\in P_+$, then, viewed as a
$[\fg,\fg]$-module, $M$ is completely reducible.
\end{thm}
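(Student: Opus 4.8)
The plan is to use the Casimir operator to split $M$ into pieces on which all composition factors share one Casimir eigenvalue, to show that on such a piece every cyclic $\fg$-submodule is automatically \emph{bounded}, and then to quote \Thm{thm02}, keeping in mind that by \Thm{thm01} every integrable $L(\lambda)$ is admissible.

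Since $M$ is restricted, the Casimir operator $\mathcal C$ acts on $M$ and commutes with $\fg$; in particular it commutes with $\fh$ and so preserves each generalized weight space $M_\mu$, which is finite-dimensional by hypothesis. Decomposing every $M_\mu$ into generalized eigenspaces of $\mathcal C|_{M_\mu}$ and collecting them, we obtain a $\fg$-module decomposition $M=\bigoplus_c M^{(c)}$ with $\mathcal C-c$ locally nilpotent on $M^{(c)}$; as complete reducibility passes to and from direct sums, we may assume $M=M^{(c)}$. Then every composition factor of $M$ is some $L(\mu)$ with $\mu\in P_+$ and $(\mu,\mu+2\rho)=c$; write $X_c$ for this (possibly infinite) set of weights.

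The heart of the matter is the finiteness statement: \emph{for every $\nu\in\fh^*$ the set $\{\mu\in X_c\mid\mu\ge\nu\}$ is finite.} Granting it, for $\nu\in\Omega(M)$ the set $F_\nu:=\{\mu\in\Omega(M)\mid\mu\ge\nu\}$ is finite, since any such $\mu$ is a weight of some composition factor $L(\mu')$, so $\nu\le\mu\le\mu'$ with $\mu'\in X_c$, and there are finitely many such $\mu'$, each accounting for finitely many $\mu$. Hence, for any $v\in M_\nu$, the space $\cU(\fn_+)v$ has all weights in the finite set $F_\nu$ and finite-dimensional weight spaces, so it is finite-dimensional; so is $\cU(\fh)\cU(\fn_+)v$, because $\cU(\fh)$ preserves each generalized weight space and thus this space still lies in $\bigoplus_{\eta\in F_\nu}M_\eta$; and therefore $\cU(\fg)v=\cU(\fn_-)\,\cU(\fh)\,\cU(\fn_+)v$ has weights in $F_\nu-Q_+$, i.e. is bounded. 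Thus $\cU(\fg)v$ is an object of $\tilde{\CO}$ whose composition factors are integrable, hence admissible, so $\cU(\fg)v\in\tilde{\CO}_{adm}$; by \Thm{thm02} it is completely reducible as a $[\fg,\fg]$-module. Since $M$ is the sum of its submodules $\cU(\fg)v$ over all weight vectors $v$, it is a sum of completely reducible $[\fg,\fg]$-submodules, hence completely reducible as a $[\fg,\fg]$-module.

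It remains to prove the finiteness statement, which is where essentially all the work lies. Write $\mu=\nu+\beta$ with $\beta=\sum_i k_i\alpha_i\in Q_+$ and put $q(\beta):=2(\nu+\rho,\beta)+(\beta,\beta)$, so the conditions on $\mu$ become $\langle\nu+\beta,\alpha_i^\vee\rangle\ge0$ for all $i$ and $q(\beta)=c-(\nu,\nu+2\rho)$. The linear conditions cut out a polyhedron whose recession cone is $C:=\{\beta\in Q_+\otimes\mathbb R\mid\langle\beta,\alpha_i^\vee\rangle\ge0\ \text{for all}\ i\}$, which is pointed as it sits inside $\sum_i\mathbb R_{\ge0}\alpha_i$. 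Since all simple roots are real, $(\alpha_i,\alpha_i)>0$, whence $(\lambda,\gamma)>0$ for every nonzero $\gamma\in Q_+$ and every strictly dominant $\lambda$ (expand $(\lambda,\gamma)=\sum_i(\text{$i$-th coefficient of }\gamma)\cdot\tfrac12(\alpha_i,\alpha_i)\langle\lambda,\alpha_i^\vee\rangle$), while $(\gamma,\gamma)\ge0$ for $\gamma\in C$. Writing a feasible $\beta$ as $\beta_0+\gamma$ with $\beta_0$ in the bounded part and $\gamma\in C$, one computes $q(\beta_0+\gamma)=q(\beta_0)+2(\nu+\rho+\beta_0,\gamma)+(\gamma,\gamma)$, which is bounded below and tends to $+\infty$ as $\gamma\to\infty$ in $C$; hence $q$ attains any prescribed value at only finitely many lattice points of the feasible set. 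The main obstacle is exactly this properness argument: one must justify the decomposition $\beta=\beta_0+\gamma$ and verify that the Casimir equation excludes the directions along which $q$ might stay bounded, which it does because the reality of the simple roots forces the strict positivity $(\nu+\rho+\beta_0,\gamma)>0$.
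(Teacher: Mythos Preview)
Your argument is correct, but it takes a longer route than the paper's, and the very fact you identify as the ``main obstacle'' is one the paper sidesteps entirely.

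The paper also begins by passing to a single generalized Casimir eigenvalue $c$, but then applies \Lem{lemcH} directly to $M$ rather than reducing to \Thm{thm02}. The only substantive point is to exhibit an irreducible submodule. For this the paper observes that if $\lambda,\lambda'\in P_+$ and $\lambda'>\lambda$ then
\[
c_{\lambda'}-c_{\lambda}=(\lambda'-\lambda,\lambda'+\lambda+2\rho)>0,
\]
since $\lambda'+\lambda+2\rho$ is strictly dominant and $\lambda'-\lambda\in Q_+\setminus\{0\}$. Hence $X_c$ is an \emph{antichain} in $P_+$: no two of its elements are comparable. Now pick any $\lambda$ with $L(\lambda)\in\supp(M)$; by the antichain property and~(\ref{subqu}) one gets $\Omega(M)\cap(\lambda+Q_+)=\{\lambda\}$, so $\fn_+M_\lambda=0$, and any $\fh$-eigenvector in $M_\lambda$ generates a highest weight submodule, necessarily $L(\lambda)$. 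Together with the Ext-vanishing (same positivity for $\lambda\ne\lambda'$, and \Prop{prop1}(ii) for $\lambda=\lambda'$) this feeds straight into \Lem{lemcH}.

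By contrast, you prove the stronger finiteness statement that $\{\mu\in X_c:\mu\ge\nu\}$ is finite for every $\nu$, via the Minkowski decomposition and a properness argument for $q$. This works, but two points deserve to be made explicit: first, the polytope part $P_0$ (the convex hull of the vertices) is contained in the polyhedron itself, so $\nu+\beta_0$ is genuinely dominant and hence $\nu+\rho+\beta_0$ strictly dominant; second, ``strict positivity $(\nu+\rho+\beta_0,\gamma)>0$'' alone does not give properness --- you need the uniform linear bound $(\nu+\rho+\beta_0,\gamma)\ge m\cdot\htt(\gamma)$ with $m=\min_i(\alpha_i,\alpha_i)/2>0$, which follows from $\langle\nu+\rho+\beta_0,\alpha_i^\vee\rangle\ge1$. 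With these in place your properness argument goes through.

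So both proofs are valid. Yours has the virtue of showing that cyclic submodules land in $\tilde{\CO}$, a fact of independent interest; the paper's has the virtue of replacing the polyhedral analysis by the single inequality above, which already yields the irreducible submodule needed for \Lem{lemcH}.
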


Another version of this theorem is
\subsection{}
\begin{thm}{thm04}
The category of all restricted integrable $[\fg,\fg]$-modules
is semisimple (recall that a $[\fg,\fg]$-module $M$
is called integrable if all Chevalley generators $e_i$ and $f_i$ are locally
nilpotent on $M$).
\end{thm}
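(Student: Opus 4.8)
The plan is to reduce \Thm{thm04} to \Thm{thm02} (one could equally invoke \Thm{thm03}) by showing that restricted integrable $[\fg,\fg]$-modules are, locally, objects of $\tilde{\CO}$. Since a sum of completely reducible submodules is completely reducible and every module is the sum of its cyclic submodules, it suffices to treat a cyclic restricted integrable $[\fg,\fg]$-module $M=\cU([\fg,\fg])v$. First I would check that $M$ is $\fh'$-diagonalizable with weights in $P$: for each $i$ the $\fsl_2$-copy $\spn\{e_i,f_i,\alpha_i^{\vee}\}$ acts locally finitely on $M$ by integrability, so $\alpha_i^{\vee}$ acts semisimply with integer eigenvalues, and since the $\alpha_i^{\vee}$ commute and span $\fh'$ this gives the claim; moreover $\exp(\pm e_i)$ and $\exp(\pm f_i)$ are well defined (by local nilpotency together with restrictedness), so one gets an action of $W$ on $M$ with $\dim M_{\mu}=\dim M_{w\mu}$ and $W$-invariant weight support.

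The heart of the matter is to realize $M$ as the restriction of a module in $\tilde{\CO}$. For this I would prove the finiteness lemma: in any restricted integrable module, $\cU(\fn_+)v$ is finite-dimensional for every weight vector $v$. Each graded piece of $\cU(\fn_+)v$ is finite-dimensional for the trivial reason that there are only finitely many simple roots, so the content is to rule out an infinite strictly ascending chain of weights above $\wt(v)$; this is where restrictedness must be used in an essential way --- the adjoint representation shows that both the lemma and the theorem fail without it --- and I would combine it with the $W$-invariance of the weight support, the well-partial-order property of $Q_+$, and the fact that on the cyclic module $M$ the central elements of $[\fg,\fg]$ act by scalars (the weights of $M$ lie in a single $Q$-coset), so that the weights $v$ can ``reach upward'' are confined to a controlled region of the weight lattice --- for affine $\fg$, to the finite set of dominant weights of a fixed level. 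Granting the lemma, $M=\cU(\fn_-)\,\cU(\fn_+)v$ (triangular decomposition, $\cU(\fh')$ preserving the $\fh'$-graded space $\cU(\fn_+)v$) has its weights bounded above by the finitely many weights of $\cU(\fn_+)v$, so $M$ is bounded. Since the weights of $M$ are dominant, hence non-critical, I would then extend the $[\fg,\fg]$-module structure on $M$ to an $\fh$-locally finite $\fg$-module structure on the same underlying space --- possible precisely because $M$ is restricted and non-critical (for affine $\fg$ this is the Sugawara extension) --- and call the result $\widetilde{M}$, so that $\widetilde{M}\in\tilde{\CO}$ and $\widetilde{M}|_{[\fg,\fg]}=M$.

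Finally, $\widetilde{M}$ being bounded, $\fh$-locally finite and integrable, each of its simple subquotients has a maximal weight and is therefore $L(\lambda)$ with $\lambda\in P_+$; such $L(\lambda)$ is admissible (this is precisely the inclusion $\CO_{int}\subset\tilde{\CO}_{adm}$ recorded after \Thm{thm01}), so $\widetilde{M}\in\tilde{\CO}_{adm}$, and \Thm{thm02} gives that $\widetilde{M}$, hence $M=\widetilde{M}|_{[\fg,\fg]}$, is completely reducible over $[\fg,\fg]$. I expect the main obstacle to be the finiteness lemma of the second paragraph: one has to convert the purely local information carried by restrictedness into global boundedness of the $\fn_+$-span of a single weight vector, with the $\fn_-$-direction a priori unbounded, and the bookkeeping that combines the Casimir/level constraint with the Weyl-group symmetry of the weight support is the delicate point.
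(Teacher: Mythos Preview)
Your identification of the finiteness of $\cU(\fn_+)v$ as the crux is correct, but the paper's proof of it is much shorter and more uniform than the route you sketch. Restrictedness gives $\fn_{>m}v=0$ for some $m$ (where $\fn_{>m}:=\sum_{\htt\beta>m}\fg_\beta$), and since $\fn_{>m}$ is an ideal of $\fn_+$ one checks immediately that $\fn_{>m}$ annihilates all of $N':=\cU(\fn_+)v$; thus $N'$ is a cyclic module over the \emph{finite-dimensional} nilpotent Lie algebra $\fm:=\fn_+/\fn_{>m}$. Integrability makes the generators $e_i$ act locally nilpotently, and the key external input is \cite{Kbook}, Prop.~3.8: a Lie algebra generated by $\ad$-locally-finite elements that also act locally finitely on a given module is \emph{spanned} by elements acting locally finitely on that module. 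Applied to $\fm$ and $N'$, this yields a basis of $\fm$ consisting of locally finite operators, and a cyclic module over such an algebra is finite-dimensional. No Weyl-group symmetry, Casimir eigenvalue, or well-partial-order bookkeeping enters, and the argument works verbatim for every symmetrizable Kac-Moody algebra --- whereas your sketch (``finite set of dominant weights of a fixed level'') is specific to the affine case and, even there, does not by itself control the $\delta$-direction.

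After the lemma the paper also finishes more directly than you propose. Lie's theorem on the finite-dimensional $\fm$-module $N'$ and local nilpotence of the $e_i$ produce a vector $v'$ with $\fn_+v'=0$; choosing an $\fh'$-eigenvector in $N^{\fn_+}$ gives an irreducible highest-weight submodule, necessarily $L(\lambda)$ with $\lambda\in P_+$. This is exactly condition~(iv) of \Lem{lemcH}, and together with the vanishing of $\Ext^1_{[\fg,\fg]}(L(\lambda),L(\lambda'))$ for $\lambda,\lambda'\in P_+$ (already established in the proof of \Prop{corP+}) the lemma gives semisimplicity. Your detour through $\tilde{\CO}$ and \Thm{thm02} instead forces you to promote the $[\fg,\fg]$-module to a $\fg$-module; you know how to do this only via Sugawara, hence only for affine $\fg$, and for a general symmetrizable $A$ of positive corank the root lattice $Q$ does not inject into $(\fh')^*$, so lifting an $\fh'$-grading to an $\fh$-grading is a genuine extra problem that the paper's argument simply sidesteps.
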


If $\fg$ is an  affine Kac-Moody algebra
we can treat the admissible case as well.
Recall that $\fg=[\fg,\fg]\oplus\mathbb{C}D$ for some $D\in\fh$
and $[\fg,\fg]$ contains the canonical
central element $K$, such that its eigenvalue $k$ in all
integrable $\fg$-module $L(\lambda)$ runs over all
non-negative integers~(\cite{Kbook}, Chapters 6 and 12).
It follows from~\cite{KK}, Thm. 2 that a $\fg$-module
$L(\lambda)$ is non-critical iff $k\not=-h^{\vee}$, where
$h^{\vee}$ is the dual Coxeter number, which is a positive
integer~(\cite{Kbook}, Chapter 6). By~\cite{KWmod2}, if $L(\lambda)$
is KW-admissible, then $k+h^{\vee}\in\mathbb{Q}_{>0}$.

Furthermore, if $M$ is a restricted $[\fg,\fg]$-module on which
$K=k\id_M$ and $k\not=-h^{\vee}$, then, by the Sugawara
construction~(\cite{Kbook}, Chapter 12), the action of $[\fg,\fg]$
can be extended to that of $\fg$ by letting $D=-L_0$.
We call  a simple $[\fg,\fg]$-module
$L(\lambda)$ ($\lambda\in(\fh')^*$)
{\em $k$-admissible}, if $K=k\id_M$, if $L(\lambda)$
is weakly admissible (i.e. a Verma $[\fg,\fg]$-module $M(\mu)$
with $\mu\not=\lambda$ has no subquotients
isomorphic to $L(\lambda)$), and if $L(\lambda)$ has no
non-trivial self-extensions $N$ with $K=k\id_N$.

Given $k\not=-h^{\vee}$, denote by $\Adm_k$ the category of
restricted $[\fg,\fg]$-modules $M$ with $K=k\id_M$,
which are locally finite over $\fh'+\mathbb{C}L_0$,
and such that any simple subquotient of $M$ is $k$-admissible.
It is easy to see that $\Adm_k$ is empty if
$k+h^{\vee}\in\mathbb{Q}_{\leq 0}$.

\subsection{}\label{introk}
\begin{thm}{thm05}
For $k\in \mathbb{Q}$
the category $\Adm_k$ is semisimple with finitely many simple objects.
A  weight $\lambda$, such that $\lambda+\rho$ is regular,
 is $k$-admissible iff it is KW-admissible.
\end{thm}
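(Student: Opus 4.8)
\emph{Proof proposal.} The plan is to establish the three assertions of the theorem — the identification of the $k$-admissible weights with $\lambda+\rho$ regular as the KW-admissible ones, the finiteness of the set of simple objects of $\Adm_k$, and the semisimplicity of $\Adm_k$ — by feeding \Thm{thm01} and the linkage principle of~\cite{KK},~\cite{DGK} into an argument modeled on the restricted integrable cases \Thm{thm03}, \Thm{thm04}. If $k+h^{\vee}\in\mathbb{Q}_{\leq 0}$ the category $\Adm_k$ is empty and there is nothing to prove, so I assume $k+h^{\vee}>0$; then every weight of level $k$ is non-critical.

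\emph{$k$-admissibility versus KW-admissibility.} The decisive elementary point is that, since $k\in\mathbb{Q}$, there is a positive integer $N$ with $\langle\lambda,NK\rangle\in\mathbb{Z}$, so $\Delta(\lambda)$ is stable under $\alpha\mapsto\alpha+N\delta$; hence $K\in\mathbb{C}\Delta(\lambda)^{\vee}$ as soon as $\Delta(\lambda)\neq\emptyset$, whereas if $\Delta(\lambda)=\emptyset$ then (using $\dim\fh'\geq 2$) $L(\lambda)$ admits non-split self-extensions on which $K$ still acts by the scalar $k$. By the $[\fg,\fg]$-version of \Thm{thm01}, a self-extension of $L(\lambda)$ with $K=k\,\id$ is classified by the annihilator of $\mathbb{C}\Delta(\lambda)^{\vee}+\mathbb{C}K$ in $(\fh')^{*}$; combining this with the previous remark, a weight $\lambda$ of level $k$ is $k$-admissible if and only if it is weakly admissible and rational, and in that case it is also admissible in the sense of the text (by \Thm{thm01}, rationality makes all $[\fg,\fg]$-self-extensions trivial). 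If moreover $\lambda+\rho$ is regular, then weak admissibility forces $\langle\lambda+\rho,\alpha^{\vee}\rangle>0$ for all $\alpha\in\Delta_{+}(\lambda)$, i.e.\ $\lambda$ is dominant; hence $k$-admissible $+$ regular $\Leftrightarrow$ rational $+$ non-critical $+$ dominant, which is exactly KW-admissibility, the reverse implication being immediate (dominance $\Rightarrow$ weak admissibility, rationality $\Rightarrow$ no self-extensions). Finally, the weakly admissible rational weights of a fixed level $k$ are, by the Kac--Wakimoto classification recalled geometrically in the Introduction, the integral points of the finitely many bounded polyhedra $\mathcal{P}(\Lambda)$ lying at level $k$; hence $\Adm_k$ has finitely many simple objects.

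\emph{Semisimplicity.} Let $M\in\Adm_k$. As $k\neq-h^{\vee}$ and $M$ is restricted with $K=k\,\id_M$, the Sugawara construction makes $M$ into a restricted $\fg$-module with $D=-L_0$; since $M$ is also locally finite over $\fh'+\mathbb{C}L_0=\fh$, the Casimir operator of $\fg$ acts on $M$ locally finitely and centrally, as it does on restricted modules in the proofs of \Thm{thm03} and \Thm{thm04}. Consequently $M$ decomposes into a direct sum of $\fg$-submodules, each of which has all its composition factors isomorphic to a single $L(\lambda)$, namely the unique weakly admissible weight in the relevant linkage class (two weakly admissible weights that are $W(\lambda)$-dot-linked coincide, by~\cite{KK},~\cite{DGK}). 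On such a summand $N$, $\Ext^1(L(\lambda),L(\lambda))$ computed inside $\Adm_k$ vanishes by $k$-admissibility, and any extension of a (possibly infinite) direct sum of copies of $L(\lambda)$ reduces, by cyclicity of $L(\lambda)$, to an extension by a finite sub-sum; it follows that $N=\Soc N$, so $N$, and therefore $M$, is semisimple. I expect the genuine difficulties to be technical rather than conceptual: first, making precise the $[\fg,\fg]$-analogue of \Thm{thm01} with the constraint $K=k\,\id$ (that the relevant self-extension is classified by the annihilator of $\mathbb{C}\Delta(\lambda)^{\vee}+\mathbb{C}K$ in $(\fh')^{*}$), and second — the main obstacle — checking that the Casimir acts locally finitely on the merely restricted, not necessarily bounded, module $M$, which is precisely what lets one bypass the boundedness hypothesis of \Thm{thm02}. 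The rest is bookkeeping.
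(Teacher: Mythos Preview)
Your treatment of the second assertion---that a shifted-regular weight is $k$-admissible iff KW-admissible---is essentially the paper's (\Cor{corLie1}(iii)): once weak admissibility and regularity of $\lambda+\rho$ force dominance, \Thm{thm01} applies, and the rest is the observation that $K\in\mathbb{C}\Delta(\lambda)^{\vee}$ for rational $k$ (\Rem{Remkrat}). Your unrestricted biconditional ``$k$-admissible $\Leftrightarrow$ weakly admissible and rational'' is, however, not justified as written: \Thm{thm01} carries a dominance hypothesis, and without it you only get the forward implication, via the one-sided inclusion $\Delta(\lambda)^{\perp}\subset\im\Upsilon_{L(\lambda)}$ of \Prop{propalmin}, which holds with no dominance assumption. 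That forward implication is all you need for finiteness; the reverse is neither proved in the paper nor needed for the theorem.

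The genuine gap is in the semisimplicity argument. Once you set $D=-L_0$ via Sugawara, the affine Casimir takes the form $\hat{\Omega}=2r(k+h^{\vee})(D+L_0)$ (Subsection~\ref{viraf}) and therefore acts as \emph{zero} on every module in the category. It carries no linkage information whatsoever, so it cannot decompose $M$ into single-isotype blocks; your ``main obstacle'' (local finiteness of the Casimir) is a red herring, since the operator vanishes identically. The paper avoids the Casimir entirely: after extending to $\fg$-modules via $D=-L_0$, it applies the abstract \Lem{lemcH} under hypothesis (iv'), using only $\fh$-local finiteness together with the finiteness of the set of simples (your polyhedral argument, which is the paper's \Cor{corkh}). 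The mechanism is that among finitely many highest weights there is one maximal for the partial order, and any $\fh$-eigenvector of that weight in $M$ is annihilated by $\fn_+$ and hence generates an irreducible submodule---this replaces your block decomposition. Your closing reduction (extension by a finite sub-sum, via cyclicity) is already the core of the proof of \Lem{lemcH}, so if you drop the Casimir step and insert the maximal-weight argument in its place, you recover the paper's proof.
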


From the vertex algebra theory viewpoint the most important modules over
a (non-twisted) affine Lie algebra $\fg$, associated to a finite-dimensional
simple Lie algebra $\dot{\fg}$, are irreducible vacuum modules $L(k\Lambda_0)$
of level $k$, where $\Lambda_0|_{\dot{\fh}}=0$, $\dot{\fh}$ being
a Cartan subalgebra of $\dot{\fg}$. This $\fg$-module is KW-admissible
iff $k+h^{\vee}=\frac{p}{q}$,
where $p,q$ are  coprime positive integers, $p\geq h^{\vee}$
(resp. $p\geq h$, the Coxeter number), if $q$ and the ``lacety'' $l=1,2,3$
of $\dot{\fg}$ are coprime (resp. $l$ divides $q$)~\cite{KW4}.
However, the $\fg$-module $L(k\Lambda_0)$ for rational $k$
is $k$-admissible for a slightly larger set of levels, namely, we add
the ``boundary'' values $p=h^{\vee}-1$ (resp. $p=h-1$) and arbitrary
positive integer values of $q$, provided that $\gcd (q,l)=1$
(resp. $\gcd (q,l)=l$).

Similar picture persists in the description of all $k$-admissible weights
vs all KW-admissible weights of level $k$ for rational $k$. Namely,
there is a finite number of finite polyhedra in $(\fh')^*$,
such that the set of KW-admissible weights  is the set of interior
integral points of these  polyhedra , whereas the set of
$k$-admissible weights is obtained by adding some boundary integral points;
the set of admissible weights is obtained  by removing from the latter
some of the boundary  points. For example, for
$\dot{\fg}={\fsl}_2$, all KW-admissible modules of level $k=-2+\frac{p}{q}$
are of the form $L(\lambda_{r,s})$, where $\langle\lambda_{r,s}, K\rangle=k,
(\lambda_{r,s},\alpha)=(s-1)-(r-1)\frac{p}{q}$ and $\alpha$ is a simple root
of $\fsl_2$, $p$ and $q$ are positive coprime integers, $p\geq 2$,
$r$ and $s$ are integers, $1\leq r\leq q, 1\leq s\leq p-1$~\cite{KWmod2}.
On the other hand, the category $\Adm_k$ for rational $k$
is non-empty  for  $\dot{\fg}={\fsl}_2$ iff $k=-2+\frac{p}{q}$
for all positive coprime integers $p,q$
and all simple objects of this category are
$L(\lambda_{r,s})$, where $r,s\in\mathbb{Z}, 1\leq r\leq q, 0\leq s\leq p$.
The set of admissible weights in this case coincides with the set
of KW-admissible weights.

\Thm{thm05} explains why the simple affine vertex algebra $V_k$,
associated to $\fg$, with non-negative integral level $k$,
is {\em regular}, i.e. any
(restricted) $V_k$-module is completely reducible~\cite{DLM1}. Indeed, $V_k$
satisfies Zhu's $C_2$ condition (\cite{Zh},\cite{FZ},\cite{KWa}), hence any
$V_k$-module $M$ is locally finite over
$\fh'\oplus \mathbb{C}L_0$ (\Prop{hlocfin}). Moreover, any irreducible
$V_k$-module is one of the $L(\lambda)$ (cf. Subsection~\ref{92}),
and it is easy to see that $L(\lambda)$ must be integrable.
Thus, $M$ is in the category $\Adm_k$, hence is admissible.
If $k\not\in\mathbb{Z}_{\geq 0}$,
the $C_2$ condition obviously fails, hence $V_k$ is not regular.

Along the same lines the complete reducibility can be studied over
the Virasoro, Neveu-Schwarz, and other $W$-algebras.

For example, in the case of the Virasoro algebra
$\Vir=(\sum_j \mathbb{C}L_j)+\mathbb{C}C$, we denote by
$\Adm_c$ the category of all restricted $\Vir$-modules
$M$, for which $C=c\id_M$, $L_0$ is locally finite on $M$, and every
irreducible subquotient of $M$ is a $c$-admissible
highest weight $\Vir$-module $L(h,c)$, where $h$ is the lowest eigenvalue
of $L_0$. As before, $L(h,c)$ is called {\em $c$-admissible} if it is a
subquotient of a Verma module $M(h',c)$ only for $h'=h$, and it
has no non-trivial self-extensions, where $C$ acts by $c\id$.
Of course, ``restricted'' means that for any $v\in M$ one has
$L_nv=0$ for $n>>0$.

\subsection{}\label{sect06}
\begin{thm}{thm06}
Set $c(k):=1-\frac{6(k+1)^2}{k+2}$, where $k\in\mathbb{Q}\setminus\{-2\}$.
The category $\Adm_{c(k)}$ of $\Vir$-modules is semisimple with finitely
many simple objects if $k>-2$  and is empty if $k<-2$.
\end{thm}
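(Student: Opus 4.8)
The plan is to follow the template of Theorems~\ref{thm02} and~\ref{thm05}, combined with the Feigin--Fuchs description of Virasoro Verma modules. First I would reduce the semisimplicity of $\Adm_{c(k)}$, and the count of its simple objects, to the vanishing of $\Ext^1(L(h,c(k)),L(h',c(k)))$ for all $c(k)$-admissible $h,h'$, the $\Ext$ being computed in the category of restricted, $L_0$-locally finite $\Vir$-modules with $C=c(k)\,\id$. For $h=h'$ this vanishing is part of the definition of $c(k)$-admissibility. For $h\neq h'$, a non-split extension $0\to L(h',c)\to N\to L(h,c)\to 0$ is a module whose weights are bounded below, so the generalized $L_0$-eigenspace of smallest real part of $N$ (or of the contragredient dual $N^\vee$, which one uses to handle the opposite ordering of $h,h'$, via $L(h,c)^\vee\cong L(h,c)$) consists of highest weight vectors; working with the submodule they generate, one of $L(h,c),L(h',c)$ must be a subquotient of a Verma module $M(h'',c)$ with $h''$ equal to the \emph{other} highest weight, contradicting weak admissibility. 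Granting this and the finiteness established below, the semisimplicity follows as in~\cite{DGK}: every module in $\Adm_{c(k)}$ is a directed union of finite-length submodules whose composition factors lie in a fixed finite, pairwise $\Ext^1$-orthogonal family, hence is semisimple.

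Next I would show that $c(k)$-admissibility forces $M(h,c(k))$ to be reducible. If $M(h,c)$ is irreducible, then $L(h,c)=M(h,c)$ carries a non-trivial self-extension in the relevant category, namely $N:=U(\Vir)\otimes_{U(\Vir_{\geq 0})}W$, where $W$ is the two-dimensional $\Vir_{\geq 0}$-module on which $L_n$ ($n\geq 1$) acts by $0$, $C$ by $c$, and $L_0$ by one Jordan block of eigenvalue $h$; a direct check shows $N$ is restricted and $L_0$-locally finite, and the Jordan block on its lowest $L_0$-eigenspace shows that $0\to M(h,c)\to N\to M(h,c)\to 0$ does not split. Hence $L(h,c)$ is not $c(k)$-admissible, so every $c(k)$-admissible weight is a zero of the Shapovalov determinant: $h=h_{r,s}(t):=\frac{(r-st)^2-(1-t)^2}{4t}$ for some positive integers $r,s$, where $t:=k+2$ and $c=c(k)=13-6(t+t^{-1})$. (Had one needed self-extensions of reducible Verma modules as well, one would feed a variant of $N$ into the long exact $\Ext$-sequence of $0\to K\to M(h,c)\to L(h,c)\to 0$ with $K$ the maximal submodule; the present step makes that unnecessary.)

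Finally I would treat the two cases. Write $t=p/q$ in lowest terms, $q>0$, so that $h_{r,s}(t)=\frac{(rq-sp)^2-(q-p)^2}{4pq}$ depends only on $|rq-sp|$, and recall the Feigin--Fuchs fact that $M(h_{r,s}(t),c)$ ($r,s\geq 1$) has a singular vector at level $rs$ generating (the image of) $M(h_{r,-s}(t),c)$, where $h_{r,-s}(t)=h_{r,s}(t)+rs$. \emph{If $k<-2$}, then $t<0$ and $c(k)\geq 25$: given a reducibility weight $h=h_{r_0,s_0}(t)$ with $r_0,s_0\geq 1$, one can solve $aq-b|p|=|r_0q-s_0p|$ with $a,b\geq 1$ (possible since $\gcd(q,|p|)=1$), so $h=h_{a,-b}(t)$ and $L(h,c)$ is a subquotient of $M(h_{a,b}(t),c)$ with highest weight $h_{a,b}(t)=h-ab\neq h$; thus $L(h,c)$ is not weakly admissible. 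Combined with the reducibility reduction, no weight is $c(k)$-admissible and $\Adm_{c(k)}$ is empty. \emph{If $k>-2$}, then $t=p/q>0$: a reducibility weight corresponds to $m:=|rq-sp|\in\mathbb{Z}_{\geq 0}$, and the Feigin--Fuchs diagrams show $L(h,c)$ is weakly admissible only if $m$ admits no representation $m=aq+bp$ with $a,b\geq 1$; since $\gcd(p,q)=1$ this excludes all sufficiently large $m$ (Frobenius), so only finitely many weakly admissible weights occur. Hence there are finitely many $c(k)$-admissible weights, and by the first step $\Adm_{c(k)}$ is semisimple with finitely many simple objects.

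\emph{Main obstacle and an alternative.} The delicate part is the last step: reading off the exact set of weakly admissible, and then $c(k)$-admissible, weights from the Feigin--Fuchs embedding diagrams uniformly in $(p,q)$, including the degenerate values $|p|=1$ or $q=1$, and pinning down the precise mechanism of the dichotomy --- that for $c(k)\geq 25$ every reducible Verma module lies ``below'' another one, whereas for $c(k)\leq 1$ the descent terminates. Note that, unlike in Theorem~\ref{thm01}, self-extensions over $\Vir$ cannot be eliminated by a dimension count against $\fh\cap[\Vir,\Vir]$, which is all of $\fh$, so the explicit construction of the second step is essential. Alternatively, one can try to deduce the theorem from the $\hat{\fsl}_2$ case of Theorem~\ref{thm05} via quantum Drinfeld--Sokolov reduction --- the map $k\mapsto c(k)=1-\frac{6(k+1)^2}{k+2}$ being exactly the induced map on central charges --- provided one checks that this functor is exact on the relevant subcategory and neither annihilates nor identifies the $k$-admissible $\hat{\fsl}_2$-modules, nor trivializes their self-extensions.
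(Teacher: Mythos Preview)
Your plan is correct and aligns closely with the paper's argument in Section~\ref{Vir}. Both reduce semisimplicity to $\Ext^1$-vanishing plus finiteness of the set of $c$-admissible weights (the paper via Lemma~\ref{lemcH}), handle the off-diagonal $\Ext^1$ by the contragredient-dual trick (Lemma~\ref{proplala}, Corollary~\ref{corlala}), and analyze weak admissibility through the Kac determinant. The differences are in execution: where you build the self-extension of an irreducible Verma module by hand with a Jordan block, the paper reads it off from its Jantzen-filtration criterion (Proposition~\ref{corally}), since $M(h,c)^1=0=M(h,c)^2$ when $M(h,c)$ is irreducible; and where you bound the number of weakly admissible reducible weights by a Sylvester--Frobenius argument, the paper pins them down exactly as the integral points $(r,s)$ of the rectangle $0\le r\le q$, $0\le s\le p$ via the line geometry of Lemma~\ref{lemVir} and Subsection~\ref{vir2}. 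Your route is more elementary and fully adequate for the theorem as stated; the paper's route yields in addition the precise list of $c$-admissible weights (Corollary~\ref{corVir}), for which the Jantzen analysis at the boundary of the rectangle is genuinely needed. Your alternative via Drinfeld--Sokolov reduction is also realized in the paper (Section~\ref{Wa}, the Virasoro case being $\cW^k(\fsl_2,e_{-\theta})$), with exactness and non-vanishing of $H^0$ supplied by~\cite{Ar}.
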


Recall that $c(k)$ with $k+2=p/q$, where $p,q\in\mathbb{Z}_{\geq 2}$ are
coprime, is called a {\em minimal series
central charge}~\cite{BPZ} and denoted by $c^{p,q}=1-\frac{6(p-q)^2}{pq}$.
In this case all simple
objects of $\Adm_{c(k)}$ are $L(h_{r,s}^{p,q},c^{p,q})$, where
$$h_{r,s}^{p,q}=\{\frac{(rp-sq)^2-(p-q)^2}{4pq}|\
r=0,\ldots, q,\ s=0,\ldots,p, \  (r,s)\not=(0,p)\text{ or } (q,0)\}.$$
(Note that $h_{r,s}^{p,q}=h_{q-r,p-s}^{p,q}$).
The $\Vir$-modules $L(h_{r,s}^{p,q},c^{p,q})$ with $0<r<q, 0<s<p$
are called the {\em minimal series modules}~\cite{BPZ}.
Since the simple vertex algebra $V_c$, associated with the Virasoro algebra,
with the minimal series central charge $c=c^{p,q}$, satisfies the $C_2$
condition (see e.g.~\cite{GK}), it follows from~\cite{M} that $L_0$ is
locally finite on any (restricted) $V_c$-module $M$. It is also well known
that all irreducible $V_c$-modules are the minimal series modules, hence
the category $\Adm_{c^{p,q}}$ contains the category of
$V_{c^{p,q}}$-modules, as a (strictly smaller) subcategory.
Hence, by~\Thm{thm06}, $M$ is completely
reducible, proving regularity of $V_c$~\cite{DLM1}.

On the contrary,
since the vertex algebra $V_c$ with $c$ not a minimal series central charge
has infinitely many irreducible modules, the category of its modules is much
larger than the category $\Adm_{c(k)}$ with $k\in\mathbb{Z}\setminus\{-2\}$.
Indeed, letting $p=k+2$, it follows from~\Thm{thm06} that the category
$\Adm_{c^{p,1}}$ is semisimple with finitely many
simple objects if $p$ is a positive integer,
and is empty if $p$ is a negative integer. All simple
modules of this category are $L(h_{1,s}^{p,1},c^{p,1})$,
with $s=1,2,\ldots,p$, provided that $p$ is a positive integer.

Since the minimal series modules are precisely the irreducible
$\Vir$-modules, for which the (normalized) characters are modular
invariant functions~\cite{KWmod}, there is an obvious analogy
between them and the KW-admissible modules over affine Lie algebras.
In fact, conjecturally, the latter are precisely the simple highest
weight modules over the
corresponding simple vertex algebras. This is known only for
$\fg=\hat{\fsl}_2$~\cite{AM}.

\subsection{}
The contents of the paper is as follows. In Section 1 we prove a general
lemma on complete reducibility of a category of modules over an
associative algebra $\cU$, satisfying certain conditions,
see~\Lem{lemcH}. We introduce the
notions of highest weight modules, Jantzen-type filtrations and
Shapovalov-type determinants in this general setup. This is used to find
the conditions, under which any extension between two isomorphic
(resp. non-isomorphic) highest weight modules  over $\cU$ splits,
see~\Prop{corally} (resp.~\Cor{corlala}).

In Sections 2 and 3 the general approach, developed in  Section 1, is
used to establish the complete reducibility results
for symmetrizable Kac-Moody algebras. In Section 4 some stronger results
for affine Lie algebras are obtained
(see Theorems~\ref{thm01}--\ref{thm05} above). Here
we also  classify $k$-admissible irreducible vacuum modules
and give a geometric description of all  $k$-admissible  vs
KW-admissible highest weights.
In Sections 5 and 6 we classify admissible modules
for the Virasoro and the Neveu-Schwarz algebra,
using~\Prop{corally}.

After recalling some background material on vertex algebras in
Section~\ref{sectva},
we study the admissible modules over minimal $W$-algebras
in Section~\ref{Wa}. In particular, we classify the admissible
irreducible vacuum modules over those $W$-algebras (see~\Cor{corvacW})
and describe all admissible  weights vs KW-admissible highest weights
via the quantum Hamiltonian reduction (Subsection~\ref{polyhW}).

In conclusion, in Chapter 9 we state a conjecture,
which is supposed to imply that all simple $W$-algebras
$\cW_k(\fg,f)$ satisfying the $C_2$-condition, are regular.

Our ground field is $\mathbb{C}$. All tensor products are considered
over $\mathbb{C}$, unless otherwise specified.

{\em Acknowledgement.}
The first named author wishes to thank M.~Duflo and A.~Joseph
for fruitful discussions. A part of the work was done during
the first author's stay at IHES, whose hospitality is
greatly appreciated.

\section{Verma modules and Jantzen-type filtrations}\label{sectJan}
In this section we introduce, for a natural class
of associative superalgebras, the notions of Verma module,
Shapovalov form and  Jantzen filtration. For a simple
highest weight module $L$ we introduce a linear map
$\Upsilon_L:\Ext^1(L,L)\to \fh^*$
and describe its image in terms of a Jantzen-type filtration.
We prove~\Lem{lemcH}, \Prop{corally}, and~\Lem{proplala}.

\subsection{Notation}
For a commutative Lie algebra $\fh$ and
an $\fh$-module $N$, we still denote by $N_{\nu}$
the generalized weight space:
$$N_{\nu}:=\{v\in N|\ \forall h\in\fh\
\exists r \text{ s.t. } (h-\langle \nu,h\rangle)^r v=0\},$$
and by $\Omega(N)$ the set of generalized weights.
We call $N_{\nu}$ the generalized weight space of weight $\nu$ and we
say that a non-zero $v$ is a {\em generalized weight vector} if
 $v\in N_{\nu}$  for some $\nu$.
Recall that, if $N$ is a $\fg$-module with a locally  finite action of
$\fh$, then $N$ is a direct sum of its generalized weight spaces
and
\begin{equation}\label{subqu}
  \Omega(N)=\cup_{L\in \supp(N)}\Omega(L),
\end{equation}
where $\supp(N)$ stands for the set of irreducible subquotients of $N$.

\subsection{Assumptions}\label{assm}
Let  $\cU$ be  a unital associative superalgebra
and let $\fh$ be an even finite-dimensional subspace of $\cU$. We
will assume some of the following properties:

(U1) the map $\fh\hookrightarrow \cU$ induces an injective map of associative
algebras $\cS(\fh)\hookrightarrow \cU$;

(U2) $\cU$ contains two unital subalgebras $\cU_{\pm}$
such that the multiplication map gives the bijection
$\cU=\cU_-\otimes\cU(\fh)\otimes\cU_+$;

(U3) with respect to the adjoint action $(ad h)(u):=hu-uh$,
$\fh$ acts diagonally on $\cU_{\pm}$ and
the weight spaces $\cU_{\pm;\nu}$ in $\cU_{\pm}$ are finite-dimensional,
the $0$th weight space $\cU_{\pm;0}$ being $\mathbb{C}\cdot 1$;

(U4) there exists $h\in\fh$ such that $\langle \nu, h\rangle<0$
for any  $\nu\in\Omega(\cU_-)\setminus\{0\}$;

(U4')  $\ \exists h\in\fh^*\text{ s.t. }
\forall\nu\in\Omega(\cU_+)\setminus\{0\}
\text{ (resp. $\forall\nu\in\Omega(\cU_-)\setminus\{0\}$) }
\langle \nu,h\rangle>0
\text{ (resp. $\langle \nu,h\rangle<0$)}$;

(U5) $\forall\nu\ \ \dim\cU_{+;\nu}=\dim\cU_{-;\nu}$;

(U6) $\cU$ admits an antiautomorphism $\sigma$, which interchanges
$\cU_+$ with $\cU_-$ and fixes the elements of $\fh$.

The assumption~(U4') is a strong form of (U4).
Note that (U1)--(U5) imply (U4'), and that (U1)--(U4) and (U6) imply
(U5).

In this section $\cU$ satisfies the assumptions (U1)--(U4).

\subsubsection{}
For a contragredient Lie superalgebra $\fg(A,\tau)$, the Virasoro algebra or
the Neveu-Schwarz superalgebra we have the natural triangular decomposition
$\fg=\fn_-\oplus\fh\oplus\fn_+$. In these cases we let $\cU$ be the
universal enveloping algebra of $\fg$ and let $\cU_{\pm}:=\cU(\fn_{\pm})$.
Then $\cU$ satisfies the assumptions (U1)--(U6). For  minimal
$\cW$-algebras, $\cU$ is the universal enveloping algebra of the vertex
algebra $\cW$; this is a topological algebra, satisfying assumptions
(U1)--(U5), where (U2) is substituted  by its topological
version~(\ref{eqtriW}).

\subsubsection{}\label{Qpm}
Let $Q_{\pm}$ be the $\mathbb{Z}_{\geq 0}$-span of $\Omega(\cU_{\pm})$:
$$Q_{\pm}=\sum_{\alpha\in  \Omega(\cU_{\pm})}\mathbb{Z}_{\geq 0}\alpha.$$
Introduce the standard partial order on $\fh^*$: $\nu\leq \nu'$
if $\nu-\nu'\in Q_-$. If $\cU$ satisfies (U5) then $Q_+=-Q_-$.

\subsection{Verma module}
Set
$$\cU'_{\pm}:=\sum_{\nu\not=0}\cU_{\pm;\nu}.$$
By (U4), $\cU'_-$ is a two-sided ideal of $\cU_-$.

We view $\cM:=\cU/\cU\cU'_+$
as a left $\cU$-module and a right $\cS(\fh)$-module;
we call it a {\em universal Verma module}.
We consider the adjoint action of $\cS(\fh)$ on $\cM$
($(\ad h)v:=hv-vh$ for $h\in\fh, v\in \cM$) and
define the weight spaces $\cM_{\nu}$ with respect to this action.
For each $\lambda\in\fh^*$ the {\em Verma module}
$M(\lambda)$ is the evaluation of $\cM$ at $\lambda$, that is the $\cU$-module
$\cU/\cU(\cU'_+ +\sum_{h\in\fh}
(h-\langle\lambda, h\rangle))$.
Observe that $\fh$ acts diagonally on
$M(\lambda)$, a weight space $M(\lambda)_{\lambda+\nu}$
being the image of $\cM_{\nu}$, hence $\Omega(M(\lambda))=\lambda+Q_-$,
and $\dim M(\lambda)_{\lambda+\nu}=\dim\cU_{-;\nu}<\infty$.

Any quotient $N$ of the $\cU$-module $M(\lambda)$ is called
a {\em module with highest weight }
$\lambda$. In other words, $N$ is a module with highest weight
$\lambda$ if $\dim N_{\lambda}=1$, $\cU'_+ N_{\lambda}=0$ and
$N_{\lambda}$ generates $N$. Note that any  highest weight module
is indecomposable.

Each Verma module has a unique maximal proper submodule $M'(\lambda)$;
the irreducible module $L(\lambda):=M(\lambda)/M'(\lambda)$
is called the irreducible module with highest weight $\lambda$.

Our study of complete reducibility in various categories will be
based on the following lemma.

\subsubsection{}
\begin{lem}{lemcH}
Assume that $\cU$ satisfies the assumptions (U1)-(U4).
Assume, in addition, that $\cU_+$ is finitely generated
(as an associative algebra).

Let $\cA$ be a category of
$\cU$-modules with the following properties:

(i) $\cA$ is closed under taking subquotients;

(ii) $\Ext^1_{\cA} (L',L)=0$ for any irreducible modules $L,L'$ in $\cA$;

(iii) each irreducible module in $\cA$ is of the form
$L(\lambda), \lambda\in\fh^*$,

and, in addition, one of the following conditions hold:

(iv) each module  contains an irreducible submodule;

or

(iv') $\cU$ satisfies (U4'), each module in $\cA$ is $\fh$-locally finite
and $\cA$ has finitely many irreducible modules.

Then $\cA$ is a semisimple category.
\end{lem}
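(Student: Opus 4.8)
The plan is to show that every module $M$ in $\cA$ is semisimple, i.e. equals its socle. Since $\cA$ is closed under subquotients and $\Ext^1$ between irreducibles vanishes, it suffices to prove that $M$ has an \emph{essential semisimple submodule}: if $\Soc(M)$ were a proper submodule, then pick a submodule $N\supseteq\Soc(M)$ with $N/\Soc(M)$ irreducible; this gives a non-split extension $0\to \Soc(M)\to N\to L(\mu)\to 0$, which by taking a suitable pushout/pullback against the inclusion of an irreducible summand of $\Soc(M)$ would contradict~(ii). So the crux is to prove that every non-zero $M\in\cA$ contains a non-zero semisimple submodule, in fact that $\Soc(M)$ meets every non-zero submodule. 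Under hypothesis~(iv) this is essentially built in, so the real content is to run the argument under~(iv').

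First I would set up the combinatorics. Using (U4'), fix $h\in\fh$ separating $\Omega(\cU_\pm)$; on any $\fh$-locally finite module $M$ the real parts of the $h$-eigenvalues occurring in $\Omega(M)$ are what we control. Because $\cA$ has only finitely many irreducibles $L(\lambda_1),\dots,L(\lambda_r)$, and by~\eqref{subqu} the weights of $M$ lie in $\bigcup_i\Omega(L(\lambda_i))=\bigcup_i(\lambda_i+Q_-)$, there is a finite set $\Pi$ of weights that are \emph{maximal} in $\Omega(M)$ with respect to $\leq$, or at least maximal among the finitely many "top" cosets — more precisely, $\Ree\langle\nu,h\rangle$ is bounded above on $\Omega(M)$ by $\max_i\Ree\langle\lambda_i,h\rangle$, and a highest-weight-type argument produces a generalized weight vector $v\in M_\lambda$ with $\cU'_+v=0$. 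Here is where finite generation of $\cU_+$ is used: $\cU'_+=\cU_+\cap(\text{augmentation ideal})$ is generated by finitely many weight vectors of strictly positive $h$-weight, so "annihilated by $\cU'_+$" is a finite system of conditions, and a vector of locally maximal $h$-weight in $M$ (which exists since weights are bounded above and weight spaces are finite-dimensional, or at least since the relevant cosets are finite in number) is automatically killed by $\cU'_+$.

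Next, such a $v$ generates a highest weight submodule: $\cU v=\cU_-\cU(\fh)\cU_+v=\cU_-\cU(\fh)v$ by (U2), and since $v$ is only a \emph{generalized} weight vector, $\cU(\fh)v$ is finite-dimensional, so $\cU v$ is a quotient of $\cU_-\otimes(\cU(\fh)v)$, a "generalized Verma" built from the $\fh$-module $\cU(\fh)v$; it has a finite filtration whose subquotients are genuine highest weight modules $N$ with $\cU'_+N_\lambda=0$, and each such $N$ surjects onto some $L(\lambda)$, forcing $\lambda\in\{\lambda_1,\dots,\lambda_r\}$ and, by~(iii), this being consistent. I would then argue that $M$ has a non-zero \emph{irreducible} submodule: take a submodule $N'\subseteq M$ minimal among non-zero submodules of $\cU v$ supported at the top weight coset — a descending chain argument works because all relevant weight spaces are finite-dimensional and only finitely many top cosets occur — and show $N'$ is irreducible. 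Finally, upgrade to: $\Soc(M)$ is essential. If some submodule $M'\neq0$ had $M'\cap\Soc(M)=0$, apply the above to $M'$ to get an irreducible submodule of $M'$, contradiction. Hence $\Soc(M)$ is essential, and combined with $\Ext^1(L,L')=0$ for all irreducibles (hypothesis~(ii)) — which kills every one-step extension of an irreducible by $\Soc(M)$ — we get $M=\Soc(M)$.

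\textbf{Main obstacle.} The delicate point is producing the highest-weight vector and the finiteness of "top" weights when $M$ is only $\fh$-\emph{locally finite} (not $\fh$-diagonalizable) and possibly of infinite length: one must argue that $\Ree\langle\cdot,h\rangle$ attains its supremum on $\Omega(M)$ and that the set of weights achieving a given near-maximal real part is finite, so that a vector annihilated by the finitely many generators of $\cU'_+$ actually exists. This is exactly where (U4'), finite generation of $\cU_+$, and the finiteness of $\Irr(\cA)$ all have to be combined; everything after that is the standard socle-is-essential plus $\Ext^1$-vanishing formal argument. I expect the proof to invoke~\eqref{subqu} to pin the weights of $M$ inside $\bigcup_i(\lambda_i+Q_-)$, and then a Noetherian-type induction on weights to extract the irreducible submodule.
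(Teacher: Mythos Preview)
There is a genuine gap, and it lies exactly where you wave your hands: the claim that $\Ext^1(L,L')=0$ for all irreducibles ``kills every one-step extension of an irreducible by $\Soc(M)$''. This is false when $\Soc(M)$ is an infinite direct sum. Concretely, over $A=\mathbb{C}[x_1,x_2,\dots]$ take $W=\mathbb{C}$ with all $x_i$ acting by zero and $V_i=\mathbb{C}$ with $x_j\mapsto\delta_{ij}$; then $\Ext^1(W,V_i)=0$ for every $i$, yet the module with basis $w,v_1,v_2,\dots$ and action $x_iw=v_i$, $x_iv_j=\delta_{ij}v_j$ is a non-split extension of $W$ by $\bigoplus_i V_i$. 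So your pushout/pullback reduction does not go through, and the final sentence of your plan is precisely the step that fails.

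This is where finite generation of $\cU_+$ actually enters, and you have misplaced it. Under~(iv') one does not need finite generation to produce a highest weight vector: pick $\lambda_1$ among the finitely many irreducible highest weights with $\langle\lambda_1,h\rangle$ maximal; then (U4') alone forces $\cU'_+ N_{\lambda_1}=0$, since every weight of $\cU'_+$ has strictly positive $h$-value. The genuine role of finite generation is in the socle step under~(iv). In your non-split sequence $0\to\bigoplus_{i\in I}L_i\to N\to L(\mu)\to 0$, lift the highest weight vector of $L(\mu)$ to a generalized weight vector $v\in N_\mu$. Then $\cU'_+v$ lies in the socle, and because $\cU'_+$ is generated by finitely many weight elements, $\cU'_+v\subset\bigoplus_{j\in J}L_j$ for some \emph{finite} $J\subset I$. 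Modulo $\bigoplus_{j\in J}L_j$ the image of $v$ is a genuine highest weight vector, so by~(ii) and~(iii) it generates a copy of $L(\mu)$; pulling back, one gets $0\to\bigoplus_{j\in J}L_j\to(\cdot)\to L(\mu)\to 0$, which now splits by~(ii) since $J$ is finite. The resulting copy of $L(\mu)$ inside $N$ surjects onto $L(\mu)$, splitting the original sequence---contradiction. Your outline is salvageable, but only once this finite-$J$ reduction is inserted; without it the argument is incomplete.
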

\begin{proof}
By (ii), $\cA$ does not contain indecomposable modules of length two
so any highest weight module in $\cA$ is irreducible.

Let $\cA$ satisfy (i)-(iv). Let $N$ be a module in $\cA$.
Assume that $N$ is not completely reducible.
Let $N'$ be the socle of $N$, that is the sum of all irreducible submodules.
By (iv), $N'\not=0$. By~\cite{Lang}, Ch. XVII,
 a sum of irreducible submodules can be decomposed in a  direct sum:
$N'=\oplus_{i\in I} L_i$, where $L_i$ are irreducible.
Let $L$ be an irreducible submodule
of $N/N'$  and $N''\subset N$ be the preimage of
$L$ under the map $N\to N/N'$. We obtain  the exact sequence
\begin{equation}\label{exseq1}
0\to \oplus_{i\in I} L_i\to N''\to^{\phi} L\to 0,
\end{equation}
which does not split, because
$N'$ is the socle of $N''$. By (iii), $L_i, L$ are
irreducible highest weight modules, so both $N',L$ are $\fh$-diagonalizable.
Thus $N''$ is $\fh$-locally finite.

Let $\nu$ be the highest weight of $L$: $L=L(\nu)$.
Let $v\in N''_{\nu}$ be a preimage of the highest weight vector in $L(\nu)$.
Since $\phi(\cU'_+v)=0$  and $\cU_+$ is finitely generated, there
exists a  finite set $J\subset I$ such that
$\cU'_+v\subset\oplus_{j\in J} L_j$. Let $\psi$ be the natural
map $\psi: N''\to N''/ \oplus_{j\in J} L_j$. Since
any highest weight module in $\cA$ is irreducible, the equality
$\cU'_+\psi(v)=0$ implies that  $\psi(v)$ generates a
module, which admits a finite filtration with the subquotients
isomorphic to $L(\nu)$. Thus, by (ii),  $\cU(\psi(v))\cong L(\nu)$.
The exact sequence
$$0\to \oplus_{j\in J} L_j\to \psi^{-1}(\cU(\psi(v)))\to^{\psi}\ L(\nu)\to
0$$
splits since $J$ is finite and $\Ext^1_{\cA} (L(\nu),L_i)=0$ by (iii).
Let $L'\subset \psi^{-1}(\cU(\psi(v)))\subset N''$
be such that $\psi^{-1}(\cU(\psi(v)))=\oplus_{j\in J} L_j\oplus L'$.
Since $v\in\oplus_{j\in J} L_j\oplus L'$, one has
$\phi(L')\not=0$.
Since $L'\cong L(\nu)$, we obtain $\phi(L')=L$ so
the exact  sequence~(\ref{exseq1}) splits, a contradiction.
Hence (i)--(iv) imply semisimplicity of $\cA$.

Let $\cA$ satisfy the assumptions (i)--(iv)'.
It is enough to verify that $\cA$ satisfies (iv).
Let $N$ be a module in $\cA$.
By (iv)', $\supp(N)$ is finite: $\supp(N)=\{L(\lambda_i)\}_{i=1}^s$.
Let $\lambda_1$ be such that, for $h\in\fh$ introduced in (U4'), the value
$\langle \lambda_1,h\rangle$ is maximal, that is
$\langle \lambda_1-\lambda_i,h\rangle\not <0$.
Then, by~(\ref{subqu}) and (U4'),
for any $\nu\in \Omega(N)$ one has $\langle \lambda_1-\nu,h\rangle\not <0$,
so $\cU'_+ N_{\lambda_1}=0$, by (U4').
Let $v\in N_{\lambda_1}$ be an $\fh$-eigenvector. Since $\cU'_+v=0$,
the submodule generated by $v$ is a highest weight module and
it belongs to $\cA$ by (i). By above,
this submodule is irreducible. This establishes (iv) and completes the proof.
\end{proof}

\subsubsection{}
\begin{rem}{}
The assumption that $\cU_+$ is finitely generated was used only for the claim
 that $\cU'_+v\subset\oplus_{j\in J} L_j$ for a finite
set $J\subset I$.  As a result, this assumption can be weakened
as follows: for any $N\in\cA$ and each $v\in N$ the left $\cU_+$-module
$\cU'_+/\Ann_{\cU'_+}v$ is finitely generated. This holds for the
continuous representations of finitely generated vertex algebras.
\end{rem}

\subsubsection{}
\begin{rem}{}
The following example shows why (i), (ii), (iv) do not imply
semisimplicity. Let $A=\mathbb{C}[x_1,x_2,\ldots]$ be a polynomial
ring in infinitely many variables, let $W$ be a one-dimensional
$A$-module $\mathbb{C}w$, such that $x_iw=0$ for all $i$,
  and let, for $i=1,2,\ldots$,  $V_i$ be the following one-dimensional
$A$-module: $x_j|_{V_i}=\delta_{ij}\Id_{V_i}$.
Then $\Ext^1(W,V_i)=0$ for all $i$, but $\Ext^1(W,\oplus_i V_i)\not=0$.
Namely, there exists a
non-splitting extension of $\oplus_i V_i$ by $W$ with the basis
$w,v_1,v_2,\ldots$, and the action of $A$ given by
$x_iw=v_i,\ x_iv_j=\delta_{ij}v_j$.
\end{rem}

\subsection{Shapovalov-type map}\label{shmap}
In this section we introduce in our setup a generalization
of Shapovalov's bilinear form and Shapovalov's determinant~\cite{Sh}.

\subsubsection{}\label{HC}
By (U2), $\cU=\cU'_-\cS(\fh) \cU'_+\oplus \cU'_-\cS(\fh)\oplus\cS(\fh)$;
let
$$\HC: \cU\to \cS(\fh)$$
be the projection with respect to this decomposition.
Identify $\cM$ with $\cU_-\otimes\cS(\fh)$ (as $\cU_-$-$\cS(\fh)$
bimodules)
and introduce a bilinear map $\cU\otimes\cM\to\cS(\fh)$  by
$$B(u,v):=\HC(uv).$$
 Notice that for $u\in\cU, v\in\cM$ one has $B(u,v)=P(uv)$,
where $P$ is the projection $\cM\to \cS(\fh)$
with the kernel $\cU'_-\otimes\cS(\fh)$. One has
\begin{equation}\label{Bush}\begin{array}{l}
B(au,v)=B(u,va)=B(u,v)a
\ \text{ for any } a\in \cS(\fh),\ u\in \cU,\ v\in \cM.\end{array}
\end{equation}

Let $v_0$ be the canonical generator of the module $\cM$
(i.e., the image of $1\in\cU$).
Denote by  $v_{\lambda}$ the image of $v_0$ in $M(\lambda)$
and by $B(\lambda)$ the evaluation of the map $B$ in $M(\lambda)$, i.e.
$B(\lambda): \cU\otimes M(\lambda)\to \mathbb{C}$
%% 26/8/09 it was $B(\lambda): M(\lambda)\otimes M(\lambda)\to \mathbb{C}$ %
((\ref{Bush}) ensures that the evaluation is well defined).

Denote by $P_{\lambda}: M(\lambda)\to M(\lambda)_{\lambda}$
the projection with the kernel
$\cU'_-v_{\lambda}=\sum_{\nu\not=\lambda} M(\lambda)_{\nu}$.
Then, for each $v\in M(\lambda)$, one has
\begin{equation}\label{Blambda}
B(\lambda)(u,v)v_{\lambda}=
P_{\lambda}(uv).
\end{equation}
By (U4), $\cU_-'$ is a two-sided ideal of $\cU_-$,  so $B(\cU_-'\cU,\cM)=0$.
Using~(\ref{Bush}) we get
\begin{equation}\label{sh+}
B(\lambda)(\cU,v)=B(\lambda)(\cU_{+,\nu},v)
\text{ for each } v\in M(\lambda)_{\lambda-\nu}.
\end{equation}

\subsubsection{}\label{MI}
We claim that
\begin{equation}\label{M'}
\{v\in M(\lambda)|\  B(\lambda)(\cU_+,v)=0\}=M'(\lambda).
\end{equation}
Indeed, by~(\ref{sh+}), $M':=\{v|\  B(\lambda)(\cU_+,v)=0\}$
is a submodule, and $M'_{\lambda}=0$, because
 $B(\lambda)(1,v_{\lambda})=1$. Thus $M'\subset M'(\lambda)$.
On the other hand, using~(\ref{Blambda}), we get
$$v\in M'(\lambda)\ \Longrightarrow\ \cU v\cap M(\lambda)_{\lambda}=0
\ \Longrightarrow\ P_{\lambda}(\cU v)=0\ \Longrightarrow\
B(\lambda)(\cU,v)=0.$$
So $M'(\lambda)\subset M'$, as required.

\subsubsection{}\label{shdet}
Assume that $\cU$ satisfies (U1)--(U5).

One has $B(\cU_{\mu},\cM_{\nu})=0$ if $\mu+\nu\not=0$.
For $\nu\in Q_+$ let
$B_{\nu}:\ \cU_{+;\nu}\otimes\cM_{-\nu}\to \cS(\fh)$ be
 the restriction of $B$. A matrix of the bilinear map $B_{\nu}$ is called
a {\em Shapovalov matrix}.
By (U5), $\cU_{+;\nu}$ and $\cM_{-\nu}$ have the same
dimension, so this is a square matrix.
The determinant of $B_{\nu}$ is the determinant of this matrix;
this is an element in $\cS(\fh)$ defined up to multiplication
by an invertible scalar. This is called a  {\em Shapovalov determinant}.
Define $B_{\nu}(\lambda)$ as the evaluation of $B_{\nu}$
at $\lambda$; clearly, this coincides with the restriction
of $B(\lambda)$ to $\cU_{+;\nu}\otimes M(\lambda)_{\lambda-\nu}$.
Since $M'(\lambda)=\{v|\  B(\lambda)(\cU_+,v)=0\}$
one has
$$\det B_{\nu}(\lambda)=0\ \Longleftrightarrow\
M'(\lambda)_{\lambda-\nu}\not=0.$$

\subsection{Jantzen-type filtrations}\label{Jan}
The Jantzen filtration and the sum formula were described
by Jantzen in~\cite{Jan}
for a Verma module over a semisimple Lie algebra.
We describe this construction in our setup, i.e. for
$\cU$ satisfying (U1)--(U5).

Assume  that $\cU$ satisfies (U1)--(U4).
Let $R$ be  the localization of  $\mathbb{C}[t]$ by the maximal ideal $(t)$.
We shall extend the scalars  from $\mathbb{C}$
to $R$. For a $\mathbb{C}$-vector superspace $V$ denote by $V_R$ the
$R$-module $V\otimes R$. Then $\cU_R, \cU_{\pm;R}$ are algebras
and $\cU_R=\cU_{-;R}\otimes \cS(\fh)_R\otimes\cU_{+;R}$.
Identify $\cS(\fh)_R$ with $\cS_R(\fh_R)$.
For any $\tilde{\lambda}\in\fh^*_R=\Hom_R (\fh_R,R)$ denote
by $M_R(\tilde{\lambda})$ the corresponding Verma module over $\cU_R$,
and denote by $v_{\tilde{\lambda}}$
the canonical generator of $M_R(\tilde{\lambda})$.
Define  a filtration on $M_R(\tilde{\lambda})$ as follows:
for $m\in\mathbb{Z}_{\geq 0}$
$$\begin{array}{l}
M_R(\tilde{\lambda})^m:=\{v\in M_R(\tilde{\lambda})|\ \cU_R v\cap
Rv_{\tilde{\lambda}}\subset Rt^m v_{\tilde{\lambda}}\}.
\end{array}$$

Clearly, $\{M_R(\tilde{\lambda})^m\}$ is a decreasing filtration,
$M_R(\tilde{\lambda})^0=M_R(\tilde{\lambda})$.
Let $\lambda\in\fh^*$ be the evaluation of
$\tilde{\lambda}$ at $t=0$ (the composition of
$\tilde{\lambda}: \fh^*_R\to R$ and $R\to R/tR=A$).
One has $M(\lambda)=M_R(\tilde{\lambda})/tM_R(\tilde{\lambda})$.
Denote by $\{M(\lambda)^m\}$ the image of
$\{M_R(\tilde{\lambda})^m\}$ in $M(\lambda)$.
Then $\{M(\lambda)^m\}$ is a decreasing filtration
of $M(\lambda)$ by $\cU$-submodules, and, by~(\ref{M'}), one has
\begin{equation}\label{M01}
M(\lambda)^0=M(\lambda),\ \ \ M(\lambda)^1=M'(\lambda).
\end{equation}
We call the filtration $\{M(\lambda)^m\}$ a {\em Jantzen-type filtration}.

Define the Shapovalov map $\cU_R\otimes \cM_R\to \cS(\fh)_R$ and
its evaluation $B(\tilde{\lambda}): \cU\otimes_R M_R(\tilde{\lambda})\to R$
as above. Then $B(\lambda)$ is the evaluation of $B(\tilde{\lambda})$ at $t=0$.
One readily sees that for $m\in\mathbb{Z}_{\geq 0}$ one has
\begin{equation}\label{U+Rv}
\begin{array}{l}
M_R(\tilde{\lambda})^m=\{v\in M_R(\tilde{\lambda})|\
B(\tilde{\lambda})  (\cU_{+;R},v)\subset R t^m\}.
\end{array}\end{equation}

Now assume that $\cU$ satisfies (U1)--(U5).
Observe that the Shapovalov matrix for $\cU_R$ written
with respect to bases lying in $\cU$ coincide with
the Shapovalov matrix for $\cU$ written
with respect to the same bases. Consequently,
the Shapovalov determinants $\det B_{\nu}\in \Sh$ viewed
as elements of the algebra $\cS(\fh)_R$ coincide with the Shapovalov
determinants $\det B_{\nu}$ constructed for ${\cU}_R$.

Recall that the matrix of a bilinear form on a free module of  finite rank
 over a local ring with values in this ring is diagonalizable. This implies
that, for each $\nu\in Q_+$, $\dim M(\lambda)^r_{\lambda-\nu}$
is equal to the corank of
the map $B_{\nu}(\tilde{\lambda})$
modulo $Rt^r$. Using the standard  reasoning of Jantzen~\cite{Jan},\cite{Tony}
we obtain the following sum formula
\begin{equation}\label{sumformula}
\sum_{r=1}^{\infty}\dim M(\lambda)^r_{\lambda-\nu}=
\upsilon(\det B_{\nu}(\tilde{\lambda})) \text{ for } \nu\in Q_+,
\end{equation}
 where $\upsilon: R\to\mathbb{Z}_{\geq 0}$ is given by
$\upsilon(a)=r$ if $a\in Rt^r\setminus (Rt^{r+1})$.
In particular, the above sum is finite iff
$\det B_{\nu}(\tilde{\lambda})\not=0$. Thus
\begin{equation}\label{Mrl}
\exists r:\ M(\lambda)^r_{\lambda-\nu}=0\ \Longleftrightarrow\
\det B_{\nu}(\tilde{\lambda})\not=0.
\end{equation}
We call a Jantzen-type filtration $\{M(\lambda)^m\}$  {\em non-degenerate}
if for each $\nu$ there exists $r$ such that $M(\lambda)^r_{\lambda-\nu}=0$.

We say that $\mu\in\fh^*$ is {\em $\lambda$-generic} if $\mu$ is transversal
to all hypersurfaces $\det B_{\nu}=0$ at point $\lambda$,
i.e. transversal to all irreducible components of $\det B_{\nu}=0$ passing
through $\lambda$ for each $\nu\in Q_+$, where $\det B_{\nu}$ is
non-zero. For example, for symmetrizable Kac-Moody  algebras,
$\det B_{\nu}\not=0$ for all $\nu$ and $\rho$
is $\lambda$-generic for each $\lambda\in\fh^*$.

The {\em Jantzen filtration} on the $\fg$-module $M(\lambda)$
is the filtration $\{M(\lambda)^m\}$ for $\tilde{\lambda}=\lambda+t\nu$, where
$\nu$ is $\lambda$-generic. We do not know whether the Jantzen filtration
depends on $\nu$ (for semisimple Lie algebras the fact that
the Jantzen filtration does not depend on $\nu$ follows from~\cite{BB}).
Below we will use more general filtrations, taking
$\tilde{\lambda}=\lambda+t\lambda_1+t^2\lambda_2$, where $\lambda,\lambda_1,
\lambda_2\in\fh^*$. In our setup $\lambda_1$ will be fixed and
not always $\lambda$-generic, but we will take $\lambda_2$ to be
$\lambda$-generic; then, the Jantzen-type filtration
$\{M(\lambda)^m\}$  is non-degenerate by~(\ref{Mrl}).

\subsection{The map $\Upsilon:\Ext^1(L(\lambda),L(\lambda))\to \fh^*$}
\label{mapUps}
In this subsection we introduce for a highest weight module $M$ a
map $\Upsilon_{M}:\Ext^1_{\fg}(M,M)\to\fh^*$, and
establish some useful properties of this map (see~\Cor{corally}).

\subsubsection{Definition of $\Upsilon$}
\label{extlala}
Let $M$ be a module with the  highest weight $\lambda$
(i.e. a quotient of $M(\lambda)$), and let $v_{\lambda}\in M$ be
the highest weight vector, i.e. the image of the canonical generator
of $M(\lambda)$.
Introduce the natural map
$$\Upsilon_{M}:\Ext^1_{\fg}(M,M)\to\fh^*$$
as follows. Let $0\to M\to^{\phi_1} N\to^{\phi_2} M\to 0$ be
an exact sequence. Let
$v:=\phi_1(v_{\lambda})$ and fix $v'\in N_{\lambda}$ such
that $\phi_2(v')=v_{\lambda}$. Observe that $v,v'$
is a basis of $N_{\lambda}$ and so there exists $\mu\in\fh^*$ such that
 for any $h\in\fh$ one has $h(v')=\lambda(h)v'+\mu(h)v$
(i.e., the representation $\fh\to\End(N_{\lambda})$
is  $h\mapsto \begin{pmatrix} \lambda(h) & \mu(h)\\ 0 &\lambda(h)
\end{pmatrix}$). The map $\Upsilon_{M}$ assigns $\mu$ to the exact sequence.

Notice that if $0\to M\to N_1\to M\to 0$
and $0\to M\to N_2\to M\to 0$ are two exact sequences then
$$N_1\cong N_2\ \Longleftrightarrow\
\Upsilon_{M}(N_1)=c\Upsilon_{M}(N_2)\ \text{ for some }
c\in\mathbb{C}\setminus\{0\}.$$
If $N$ is an extension of $M$ by $M$ (i.e., $N/M\cong M$)
we denote by $\Upsilon_{M}(N)$ the corresponding one-dimensional subspace
of $\fh^*$, i.e. $\Upsilon_{M}(N)=\mathbb{C}\mu$, where
$\mu$ is the image of the exact sequence $0\to M\to N\to M\to 0$.

\subsubsection{}
\begin{lem}{corUps}
The map $\Upsilon_{M}$ has the following properties:

($\Upsilon 1$) $\Upsilon_{M}:\Ext^1(M,M)\to \fh^*$ is injective;

($\Upsilon 2$) $\Upsilon_{M(\lambda)}:\Ext^1(M(\lambda),M(\lambda))
\to \fh^*$ is bijective;

($\Upsilon 3$) if $\Upsilon_{M(\lambda)}(N)=\Upsilon_{M}(N')$,
then $N'$  is isomorphic to a quotient of $N$;

($\Upsilon 4$) if $\Upsilon_{M(\lambda)}(N)=\Upsilon_{L(\lambda)}(N')$,
then $N'\cong N/N''$, where $N''$ is the maximal submodule of $N$
which intersects trivially the highest weight space $N_{\lambda}$.
\end{lem}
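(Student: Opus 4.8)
The four statements concern the map $\Upsilon$ on self-extensions, and the natural strategy is to prove them in the order $(\Upsilon 1)$, $(\Upsilon 2)$, $(\Upsilon 3)$, $(\Upsilon 4)$, since the later parts build on the earlier ones. For $(\Upsilon 1)$, I would show that if $\Upsilon_M(N)=0$ for an extension $0\to M\to^{\phi_1} N\to^{\phi_2} M\to 0$, then the sequence splits. Indeed, $\Upsilon_M(N)=0$ means we can choose $v'\in N_\lambda$ with $\phi_2(v')=v_\lambda$ and $hv'=\lambda(h)v'$ for all $h\in\fh$; thus $v'$ is a genuine (not merely generalized) weight vector of weight $\lambda$. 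Since $\cU'_+ v'$ lies in $\phi_1(M)$ (because $\phi_2(\cU'_+v')=\cU'_+v_\lambda=0$) and has weights strictly above $\lambda$ in the appropriate sense, while $\phi_1(M)$ has highest weight $\lambda$, we get $\cU'_+v'=0$. Hence $v'$ generates a highest weight submodule of $N$ which maps onto $M$ under $\phi_2$; comparing weight-space dimensions (each $N_\mu$ has dimension $2\dim\cU_{-;\mu-\lambda}$, each $M_\mu$ has dimension $\dim\cU_{-;\mu-\lambda}$) forces $\cU v'\cong M$ and provides a splitting.

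For $(\Upsilon 2)$, by $(\Upsilon 1)$ it remains to prove surjectivity: given $\mu\in\fh^*$, I must construct an extension $N$ of $M(\lambda)$ by $M(\lambda)$ realizing $\mu$. The natural candidate is $N:=\cU\otimes_{\cS(\fh)}R_\mu$ where $R_\mu$ is the $2$-dimensional $\cS(\fh)$-module on which $h$ acts by $\left(\begin{smallmatrix}\lambda(h)&\mu(h)\\0&\lambda(h)\end{smallmatrix}\right)$ and $\cU'_+$ acts by zero — equivalently, the quotient of $\cU$ by the left ideal generated by $\cU'_+$ and the operators $(h-\lambda(h))^2$ together with a suitable twist; concretely one takes $\cM$, the universal Verma module (a right $\cS(\fh)$-module), and sets $N:=\cM\otimes_{\cS(\fh)}R_\mu$ for $R_\mu$ viewed as a left $\cS(\fh)$-module. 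Using (U2), $N$ is free of rank $2$ over $\cU_-$ with respect to a compatible basis, and the submodule $\cM\otimes(\text{annihilated line})\cong M(\lambda)$ gives the exact sequence, with $\Upsilon_{M(\lambda)}(N)=\mathbb{C}\mu$ by construction.

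For $(\Upsilon 3)$: suppose $M$ is a quotient of $M(\lambda)$, say $M=M(\lambda)/K$, and $\Upsilon_{M(\lambda)}(N)=\Upsilon_M(N')$. The key point is that $N$, as constructed in $(\Upsilon 2)$, has a universal property: the highest weight vector $v'\in N_\lambda$ with prescribed $\fh$-action generates $N$, and $\cU'_+v'=0$, so any $\cU$-module $N''$ generated by a vector $w$ with $\cU'_+w=0$ and $hw=\lambda(h)w + \mu(h)(\text{something in the }M\text{-copy})$ receives a surjection from $N$. Apply this to $N'$: its distinguished weight-$\lambda$ vector generates $N'$ (since $N'/\phi_1(M)\cong M$ is generated by the image of $v_\lambda$, and $\phi_1(M)$ is generated by $\phi_1(v_\lambda)$, which lies in $\cU v'$ after adjusting by the $\fh$-action data encoded by the equality of $\Upsilon$ values). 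This yields the surjection $N\twoheadrightarrow N'$.

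For $(\Upsilon 4)$: here $M'=L(\lambda)$, and I apply $(\Upsilon 3)$ to get $N\twoheadrightarrow N'$ with some kernel $K'$. One checks $K'\cap N_\lambda=0$ (the weight-$\lambda$ space maps isomorphically because both $N$ and $N'$ have $2$-dimensional weight-$\lambda$ space realizing the same $\mu$), so $K'$ is contained in the maximal submodule $N''$ of $N$ intersecting $N_\lambda$ trivially. Conversely, $N/N''$ is an extension of $L(\lambda)$ by $L(\lambda)$: the image of $\phi_1(M(\lambda))$ in $N/N''$ is $\phi_1(M(\lambda))/(\phi_1(M(\lambda))\cap N'')$, and $\phi_1(M(\lambda))\cap N''$ is a submodule of $M(\lambda)$ avoiding the highest weight line, hence equals $M'(\lambda)$ by maximality plus the characterization~(\ref{M'}); similarly the quotient copy becomes $L(\lambda)$. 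So $N/N''$ is an extension of $L(\lambda)$ by $L(\lambda)$ with $\Upsilon$-value $\mathbb{C}\mu$, and since such an extension is determined up to isomorphism by its $\Upsilon$-value (as noted in~\ref{extlala}), $N/N''\cong N'$, giving $N'\cong N/N''$ as claimed.

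**Main obstacle.** The delicate point is $(\Upsilon 3)$ — making precise the universal property of the extension $N$ of $M(\lambda)$ by $M(\lambda)$ and verifying that the hypothesis $\Upsilon_{M(\lambda)}(N)=\Upsilon_M(N')$ really supplies enough data to produce a well-defined $\cU$-module map $N\to N'$ (not just a linear map on the weight-$\lambda$ space). The issue is that $N'$ need not be generated by a single weight vector in the naive way — one must argue that the preimage $v'$ of the highest weight vector of the quotient $M$ in $N'$, together with its image in the submodule $\phi_1(M)$, is forced by the $\fh$-action to match the generator of $N$ after rescaling; this is where the precise bookkeeping of the $2\times2$ matrix action of $\fh$ in~\ref{extlala} and assumption (U4) (to control that $\cU'_+v'$ lands in the right place) must be combined carefully.
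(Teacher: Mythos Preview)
Your approach to $(\Upsilon 1)$--$(\Upsilon 3)$ matches the paper's: the module $N=\cU\otimes_{\cS(\fh)\otimes\cU_+}E$ (with $E$ the two-dimensional $\fh$-module determined by $\mu$, extended by the trivial $\cU'_+$-action) and its universal property are exactly what the paper uses. Your worry that $(\Upsilon 3)$ is the main obstacle is misplaced --- once $N$ is written as an induced module, its universal property is automatic; the only check is $\cU'_+N'_\lambda=0$, which follows from $\cU'_+v'\subset\phi_1(M)$ and the weight bound $\Omega(M)\subset\lambda+\Omega(\cU_-)$.

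The genuine gap is in your $(\Upsilon 4)$. You correctly obtain $K'\subset N''$, but then try to show \emph{directly} that $N/N''$ is an extension of $L(\lambda)$ by $L(\lambda)$. Your argument for the submodule copy is fine: $\phi_1(M(\lambda))\cap N''=\phi_1(M'(\lambda))$, so its image in $N/N''$ is $L(\lambda)$. But the ``similarly'' for the quotient copy fails. You would need $\phi_2(N'')=M'(\lambda)$; the inclusion $\phi_2(N'')\subset M'(\lambda)$ is immediate, but the reverse is not --- maximality of $N''$ does not help here, since $\phi_2^{-1}(M'(\lambda))$ has $\lambda$-weight space $\mathbb{C}v\neq 0$ and so is not contained in $N''$. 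In fact $N/N''$ is an extension of $L(\lambda)$ by $L(\lambda)$ \emph{only when} $\mu\in\im\Upsilon_{L(\lambda)}$ (this is precisely the content of Proposition~\ref{corally}), so an argument that never invokes the existence of $N'$ cannot succeed.

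The paper's route is shorter and uses $N'$ from the start: from $K'\subset N''$ one gets an embedding $N''/K'\hookrightarrow N/K'\cong N'$ with $(N''/K')_\lambda=0$. But every nonzero submodule of an extension of $L(\lambda)$ by $L(\lambda)$ meets the $\lambda$-weight space (its socle contains $L(\lambda)$). Hence $N''/K'=0$, i.e.\ $K'=N''$, and $N'\cong N/N''$.
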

\begin{proof}
Property ($\Upsilon 1$) is clear. For ($\Upsilon 2$) let us
construct a preimage of a non-zero element $\mu\in\fh^*$.
Take a two-dimensional $\fh$-module
$E$ spanned by $v,v'$ such that $h(v)=\lambda(h)v,\
 h(v')=\lambda(h)v'+\mu(h)v$, view $E$ as a trivial $\cU_+$-module
(i.e., $\cU'_+E=0$), and consider the $\cU$-module
$N:=\cU\otimes_{\cS(\fh)\otimes\cU_+}E$.
Clearly, $\Upsilon_{M(\lambda)}(N)=\mu$, proving ($\Upsilon 2$).
Observe that $N$ is universal in the following sense:
any module $N'$ satisfying
(i) $N'_{\lambda}\cong E$ as $\fh$-module,
(ii) $\cU'_+ N'_{\lambda}=0$,
(iii) $N'$ is generated by $N'_{\lambda}$,
is a quotient of $N$. In particular, if $M$ is a quotient
of  $M(\lambda)$ and
$\Upsilon_{M(\lambda)}(N)=\Upsilon_{M}(N')$,
then $N'$  is a quotient of $N$. This proves ($\Upsilon 3$).

For ($\Upsilon 4$)
suppose that  $\Upsilon_{M(\lambda)}(N)=\Upsilon_{L(\lambda)}(N')$.
 We claim that $N'\cong N/N''$, where
$N''$ is the maximal submodule of $N$ which intersects
$N_{\lambda}$ trivially. Indeed,
by above, $N'$ is   a quotient of $N$: $N'\cong N/X$.
Since $\dim N'_{\lambda}=
\dim N_{\lambda}=2$, $X_{\lambda}=0$, hence $X\subset N''$.
Then $N''/X$ is isomorphic to a submodule of $N'$ and
$(N''/X)_{\lambda}=0$. Hence $N''/X=0$ as required.
\end{proof}

\subsubsection{The map $\Upsilon_{L(\lambda)}$}
\label{KMVirSchw}
Let us describe the image $\im\Upsilon_{L(\lambda)}$
in terms of the Jantzen-type filtration. For $\mu,\mu'\in\fh^*$
let $\{M(\lambda)^i\}$ be the non-degenerate Jantzen-type filtration
described in Subsection~\ref{Jan}, which is the image of
the  filtration $\{M_R(\lambda+t\mu+t^2\mu')^i\}$  in
$M(\lambda)=M_R(\lambda+t\mu+t^2\mu')/tM_R(\lambda+t\mu+t^2\mu')$.

Set $\tilde{M}:=M_R(\lambda+t\mu+t^2\mu'),\ \
\tilde{N}:=\tilde{M}/t^2 \tilde{M}$
and view $\tilde{N}$ as a $\fg$-module. Observe that $\tilde{N}$
has a submodule $t\tilde{N}\cong M(\lambda)$ and that
$\Upsilon_{M(\lambda)}$
maps the exact sequence
$0\to t\tilde{N}\to \tilde{N}\to M(\lambda)\to 0$ to $\mu$.

\subsubsection{}
\begin{cor}{coruex}
If $\Upsilon_{L(\lambda)}: \Ext^1(L(\lambda),L(\lambda))\to \fh^*$ maps
$0\to L(\lambda)\to N\to L(\lambda)\to 0$ to $\mu$,  then
$N\cong M_R(\lambda+t\mu+t^2\mu')/M_R(\lambda+t\mu+t^2\mu')^2$.
\end{cor}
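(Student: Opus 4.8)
The statement to prove is Corollary~\ref{coruex}: if $\Upsilon_{L(\lambda)}$ sends the extension $0\to L(\lambda)\to N\to L(\lambda)\to 0$ to $\mu\in\fh^*$, then $N\cong M_R(\lambda+t\mu+t^2\mu')/M_R(\lambda+t\mu+t^2\mu')^2$ for a $\lambda$-generic $\mu'$.

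The plan is to realize $M_R(\lambda+t\mu+t^2\mu')/M_R(\lambda+t\mu+t^2\mu')^2$ as a quotient of the universal two-step extension $\tilde{N}$ of $M(\lambda)$ by $M(\lambda)$ constructed in Subsection~\ref{KMVirSchw}, and then to quote property ($\Upsilon 4$) of \Lem{corUps}. Throughout write $\tilde{M}:=M_R(\lambda+t\mu+t^2\mu')$, $\tilde{M}^2:=M_R(\lambda+t\mu+t^2\mu')^2$, and $\tilde{N}:=\tilde{M}/t^2\tilde{M}$, with $\mu$ as in the hypothesis and $\mu'$ being $\lambda$-generic as in \ref{KMVirSchw}; recall from there that $\tilde{N}$ fits in an exact sequence $0\to t\tilde{N}\to\tilde{N}\to M(\lambda)\to 0$ with $t\tilde{N}\cong M(\lambda)$ and $\Upsilon_{M(\lambda)}(\tilde{N})=\mu$. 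Since, by ($\Upsilon 2$), $\Upsilon_{M(\lambda)}$ is a bijection, $\tilde{N}$ is, up to isomorphism of extensions, the unique extension of $M(\lambda)$ by $M(\lambda)$ with this $\Upsilon$-value; this is exactly what lets us apply ($\Upsilon 3$)--($\Upsilon 4$) with $\tilde{N}$ playing the role of the universal module.

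First I would check two elementary facts about the Jantzen-type term $\tilde{M}^2=\{v\in\tilde{M}\mid \cU_R v\cap R v_{\tilde{\lambda}}\subseteq Rt^2 v_{\tilde{\lambda}}\}$: that $t^2\tilde{M}\subseteq\tilde{M}^2$, so that $\tilde{M}^2/t^2\tilde{M}$ is a well-defined submodule of $\tilde{N}$ with $\tilde{N}/(\tilde{M}^2/t^2\tilde{M})\cong\tilde{M}/\tilde{M}^2$; and that $\tilde{M}^2\cap R v_{\tilde{\lambda}}=Rt^2 v_{\tilde{\lambda}}$, which holds because $v\in\cU_R v\cap R v_{\tilde{\lambda}}$ for every $v\in R v_{\tilde{\lambda}}$ (both facts use only the definition of $\tilde{M}^2$ and the freeness of $\tilde{M}$ over $R$). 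The heart of the matter is then to identify $\tilde{M}^2/t^2\tilde{M}$ with the maximal submodule $\tilde{N}''$ of $\tilde{N}$ meeting the highest weight space $\tilde{N}_\lambda$ trivially. Exactness of the generalized weight-space functor over the Artinian ring $R/t^2R$, together with $\tilde{M}^2\cap R v_{\tilde{\lambda}}=Rt^2 v_{\tilde{\lambda}}$, gives $(\tilde{M}^2/t^2\tilde{M})_\lambda=0$, so $\tilde{M}^2/t^2\tilde{M}\subseteq\tilde{N}''$. Conversely, if $X\subseteq\tilde{N}$ is a submodule with $X_\lambda=0$ and $Y\subseteq\tilde{M}$ is its preimage (so $Y\supseteq t^2\tilde{M}$ and $Y\cap R v_{\tilde{\lambda}}\subseteq Rt^2 v_{\tilde{\lambda}}$), then for each $v\in Y$ the submodule property gives $\cU_R v\cap R v_{\tilde{\lambda}}\subseteq Y\cap R v_{\tilde{\lambda}}\subseteq Rt^2 v_{\tilde{\lambda}}$, i.e. $v\in\tilde{M}^2$; hence $Y\subseteq\tilde{M}^2$ and $X\subseteq\tilde{M}^2/t^2\tilde{M}$. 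Therefore $\tilde{N}''=\tilde{M}^2/t^2\tilde{M}$.

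Finally, since $\Upsilon_{M(\lambda)}(\tilde{N})=\mu$ and, by hypothesis, $\Upsilon_{L(\lambda)}$ sends $0\to L(\lambda)\to N\to L(\lambda)\to 0$ to $\mu$, property ($\Upsilon 4$) of \Lem{corUps} yields $N\cong\tilde{N}/\tilde{N}''=\tilde{M}/\tilde{M}^2=M_R(\lambda+t\mu+t^2\mu')/M_R(\lambda+t\mu+t^2\mu')^2$, as claimed. I expect the only genuine obstacle to be the middle paragraph — translating the intrinsic description of $\tilde{N}''$ into the valuation-theoretic definition of the Jantzen-type term — and, within it, the routine but necessary verification that the generalized weight-space decomposition over $R/t^2R$ is compatible with passage to submodules and quotients (this follows because $\fh$ already separates the finitely many weights occurring in any given element). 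Everything else is bookkeeping with the maps $\Upsilon$ and the relations already recorded in \Lem{corUps} and Subsection~\ref{KMVirSchw}.
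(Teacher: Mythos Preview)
Your argument follows the same line as the paper's: reduce via ($\Upsilon4$) to showing that $\tilde M^2/t^2\tilde M$ is the maximal $\cU$-submodule of $\tilde N$ missing the highest weight space, check one inclusion from $\tilde M^2\cap Rv_{\tilde\lambda}=Rt^2v_{\tilde\lambda}$, and get the other by translating the condition $X_\lambda=0$ on a submodule back into the valuation-theoretic definition of $\tilde M^2$. The paper phrases the converse inclusion contrapositively (if $u_-v\notin\tilde M^2$ then, via~(\ref{U+Rv}), some $u_+\in\cU_+$ sends it into $Rv_{\tilde\lambda}\setminus Rt^2v_{\tilde\lambda}$), but this is the same computation.

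There is one small slip. In the line ``the submodule property gives $\cU_R v\cap R v_{\tilde\lambda}\subseteq Y\cap R v_{\tilde\lambda}$'' you are using $\cU_R v\subseteq Y$, i.e.\ that $Y$ is an $\cU_R$-submodule. But $X$ is only assumed to be a $\cU$-submodule of $\tilde N$, and its preimage $Y$ is then only a $\cU$-submodule of $\tilde M$; $t$-stability of $X$ is not automatic (for $\mu=0$ it can fail). The fix is easy and is exactly what the paper does implicitly: work with $\cU$ instead of $\cU_R$. For a weight vector $v\in Y$ of weight $\lambda-\nu$ one has $\cU v\cap Rv_{\tilde\lambda}=\cU_\nu v$ and $\cU_R v\cap Rv_{\tilde\lambda}=R\cdot\cU_\nu v$, so $\cU v\cap Rv_{\tilde\lambda}\subseteq Rt^2v_{\tilde\lambda}$ already forces $v\in\tilde M^2$; and $\cU v\subseteq Y$ holds since $Y$ is $\cU$-stable. (Alternatively, for $\mu\neq0$ observe that $t$ acts on $\tilde N_\nu$ as $\mu(h)^{-1}(h-\nu(h))$ for any $h$ with $\mu(h)\neq0$, so every $\cU$-submodule is $t$-stable; for $\mu=0$ replace $X$ by $X+tX$.) The obstacle you anticipated --- compatibility of weight-space decompositions with submodules --- is fine as you say; it is the $\cU$ versus $\cU_R$ distinction that needs the extra sentence.
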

\begin{proof}
Retain notation of~\ref{shmap} and \ref{Jan}.
Let $\{\tilde{M}^i\}$  be  the Jantzen filtration of
$\tilde{M}=M_R(\lambda+t\mu+t^2\mu')$ and
$\{\tilde{N}^i\}$ be its image  in $\tilde{N}=\tilde{M}/t^2 \tilde{M}$.
In light of ($\Upsilon 4$), it is enough to show that $\tilde{N}^2$ is
the maximal submodule of $\tilde{N}$ which intersects
$\tilde{N}_{\lambda}$ trivially.
Let $v\in\tilde{M}$ be the canonical generator of
the Verma module $\tilde{M}$. Clearly, $\tilde{M}^2\cap Rv=Rt^2v$,
so $\tilde{N}^2_{\lambda}=0$.
If $u_-\in\cU_{-,R}$ is a weight element such that
$u_-v\not\in \tilde{M}^2$ then, by~(\ref{U+Rv}), there exists a weight element
$u_+\in\cU(\fn_+)$
such that $\HC(u_+u_-)(\lambda+t\mu+t^2\mu')\not\in Rt^2$ so
$u_+(u_-v)\in Rv\setminus Rt^2v$. As a result, the submodule spanned by
the image of $u_-v$ in $\tilde{N}$ intersects $\tilde{N}_{\lambda}$
non-trivially. The assertion follows.
\end{proof}

\subsubsection{}
\begin{prop}{corally}
Let $\lambda,\mu,\mu'\in\fh^*$, and let $\{M(\lambda)^i\}$ be the
image  in $M(\lambda)$ of the Jantzen filtration
$\{M_R(\lambda+t\mu+t^2\mu')^i\}$.
Then
$$\mu\in\im\Upsilon_{L(\lambda)}\ \Longleftrightarrow\
M(\lambda)^1=M(\lambda)^2.$$
\end{prop}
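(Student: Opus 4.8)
The plan is to read off both sides of the equivalence from the structure of a single $\fg$-module, namely $X:=\tilde M/\tilde M^2$, where I abbreviate $\tilde\lambda:=\lambda+t\mu+t^2\mu'$, $\tilde M:=M_R(\tilde\lambda)$, $\tilde M^i:=M_R(\tilde\lambda)^i$, so that $M(\lambda)^i$ is the image of $\tilde M^i$ in $M(\lambda)=\tilde M/t\tilde M$. Concretely, I will show that $M(\lambda)^1=M(\lambda)^2$ is equivalent to $X$ being an extension of $L(\lambda)$ by $L(\lambda)$, and that this in turn is equivalent to $\mu\in\im\Upsilon_{L(\lambda)}$.

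The first and main step is bookkeeping with the $t$-adic submodules $\tilde M\supseteq\tilde M^1\supseteq\tilde M^2$. Using the description~(\ref{U+Rv}) of $\tilde M^m$ as $\{v\mid B(\tilde\lambda)(\cU_{+;R},v)\subseteq Rt^m\}$, together with $R$ being a domain and $\tilde M$ being free over $R$ (hence $t$-torsion-free), one obtains immediately the inclusions $t\tilde M\subseteq\tilde M^1$, $t\tilde M^1\subseteq\tilde M^2$, $t^2\tilde M\subseteq\tilde M^2$, and the equality $t\tilde M\cap\tilde M^2=t\tilde M^1$. Combining~(\ref{U+Rv}) with~(\ref{M'}) and~(\ref{M01}) identifies $\tilde M^1$ with the full preimage of $M'(\lambda)=M(\lambda)^1$ under $\tilde M\twoheadrightarrow\tilde M/t\tilde M=M(\lambda)$; in particular $\tilde M/\tilde M^1\cong L(\lambda)$. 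Multiplication by $t$ then gives a $\cU$-isomorphism $\tilde M/\tilde M^1\iso t\tilde M/t\tilde M^1=(\tilde M^2+t\tilde M)/\tilde M^2$, which is a submodule of $\tilde M^1/\tilde M^2$ with cokernel $\tilde M^1/(\tilde M^2+t\tilde M)=M(\lambda)^1/M(\lambda)^2$. This yields exact sequences
$$0\to L(\lambda)\to\tilde M^1/\tilde M^2\to M(\lambda)^1/M(\lambda)^2\to 0,\qquad 0\to\tilde M^1/\tilde M^2\to X\to L(\lambda)\to 0 .$$
Counting composition factors through these shows that $X$ is an extension of $L(\lambda)$ by $L(\lambda)$ (of length two) if and only if $M(\lambda)^1=M(\lambda)^2$; and when this holds, the submodule $\tilde M^1/\tilde M^2\cong L(\lambda)$ has highest weight vector $w$, the image of $tv_{\tilde\lambda}$ in $X$.

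It remains to connect the condition that $X$ be an extension of $L(\lambda)$ by $L(\lambda)$ with the condition $\mu\in\im\Upsilon_{L(\lambda)}$. For the direction $(\Leftarrow)$, assuming $M(\lambda)^1=M(\lambda)^2$ I would evaluate $\Upsilon_{L(\lambda)}$ on the extension $0\to L(\lambda)\to X\to L(\lambda)\to 0$ just produced: letting $v'$ be the image of $v_{\tilde\lambda}$ in $X$, the vectors $w,v'$ form a basis of $X_\lambda$, and from $hv_{\tilde\lambda}=\tilde\lambda(h)v_{\tilde\lambda}=\bigl(\lambda(h)+t\mu(h)+t^2\mu'(h)\bigr)v_{\tilde\lambda}$ together with $t^2v_{\tilde\lambda}\in t^2\tilde M\subseteq\tilde M^2$ one reads off $hv'=\lambda(h)v'+\mu(h)w$; hence $\Upsilon_{L(\lambda)}(X)=\mu$ and $\mu\in\im\Upsilon_{L(\lambda)}$. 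For $(\Rightarrow)$, given $\mu\in\im\Upsilon_{L(\lambda)}$ choose an extension $N'$ of $L(\lambda)$ by $L(\lambda)$ with $\Upsilon_{L(\lambda)}(N')=\mu$; by~\Cor{coruex} we have $N'\cong M_R(\tilde\lambda)/M_R(\tilde\lambda)^2=X$, so $X$ is an extension of $L(\lambda)$ by $L(\lambda)$, and the previous paragraph gives $M(\lambda)^1=M(\lambda)^2$.

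The point requiring the most care is the first of the two short exact sequences above — i.e. that $\tilde M^1/\tilde M^2$ is an extension of $M(\lambda)^1/M(\lambda)^2$ by a \emph{single} copy of $L(\lambda)$, not by something larger — which is exactly where the four $t$-adic relations among $\tilde M$, $\tilde M^1$, $\tilde M^2$ are used; after that the argument reduces to the one-line $\Upsilon$-computation (for $\Leftarrow$) and an appeal to~\Cor{coruex} (for $\Rightarrow$). Note that non-degeneracy of the Jantzen-type filtration is not needed here, since in both directions the modules involved automatically have finite length.
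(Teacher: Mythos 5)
Your proof is correct and follows essentially the same route as the paper's: both work with $X=\tilde M/\tilde M^2$, use the identity $t\tilde M\cap\tilde M^2=t\tilde M^1$ to exhibit $(t\tilde M+\tilde M^2)/\tilde M^2\cong L(\lambda)$ inside $\tilde M^1/\tilde M^2$ with cokernel $M(\lambda)^1/M(\lambda)^2$, and conclude via \Cor{coruex}. Your only addition is to make the computation $\Upsilon_{L(\lambda)}(X)=\mu$ explicit in the $(\Leftarrow)$ direction, which the paper leaves to the observation in Subsection~\ref{KMVirSchw} together with property $(\Upsilon 4)$.
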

\begin{proof}
Recall that, by~(\ref{M01}), $M(\lambda)^1$ is a maximal
proper submodule of $M(\lambda)$.
Retain notation of the proof of~\Cor{coruex}.
In order to deduce the assertion  from~\Cor{coruex}
we have to show that $\tilde{M}/\tilde{M}^2$
is an extension of $L(\lambda)$ by $L(\lambda)$ iff
$M(\lambda)^1=M(\lambda)^2$.
Recall that $M(\lambda)^i=\phi_t(\tilde{M}^i)$, where
$\phi_t: \tilde{M}\to \tilde{M}/t\tilde{M}=M(\lambda)$.
Thus $M(\lambda)/M(\lambda)^1$ is isomorphic to
$\tilde{M}/(\tilde{M}^1+t\tilde{M})=\tilde{M}/\tilde{M}^1$ and
so $\tilde{M}/\tilde{M}^1\cong L(\lambda)$ by (i).
One has the exact sequence of $\cU$-modules
$$0\to \tilde{M}^1/\tilde{M}^2\to
\tilde{M}/\tilde{M}^2\to \tilde{M}/\tilde{M}^1\cong L(\lambda)\to 0.$$
As a result, $\tilde{M}/\tilde{M}^2$ is an extension of
$L(\lambda)$ by $L(\lambda)$ iff $\tilde{M}^1/\tilde{M}^2\cong L(\lambda)$.
By above, $M(\lambda)^1/M(\lambda)^2$ is isomorphic to
the quotient of $\tilde{M}^1/\tilde{M}^2$ by
$(t\tilde{M}+\tilde{M}^2)/\tilde{M}^2$. One has
$$(t\tilde{M}+\tilde{M}^2)/\tilde{M}^2\cong
t\tilde{M}/(t\tilde{M}\cap\tilde{M}^2)=t\tilde{M}/t\tilde{M}^1\cong
\tilde{M}/\tilde{M}^1=L(\lambda),$$
because $t\tilde{M}\cap\tilde{M}^2=t\tilde{M}^1$.
Hence $\tilde{M}^1/\tilde{M}^2\cong L(\lambda)$ iff
$M(\lambda)^1/M(\lambda)^2=0$; this completes the proof.
\end{proof}

\subsection{Duality}\label{dual}
Assume that $\cU$ satisfies (U1)--(U6).

For a $\cU$-module $N$ we view $N^*$ as a left $\cU$-module
 via the antiautomorphism $\sigma$ (see (U6)): $af(n)=f(\sigma(a)n),\ a\in\cU,
f\in N^*, n\in N$.
If $N$ is a locally finite $\fh$-module with finite-dimensional
generalized weight spaces, $N^*$ has a submodule
$N^{\#}:=\oplus_{\nu\in\Omega(N)} N^*_{\nu}$. One has $(N^{\#})^{\#}\cong N$.
Clearly, $\Omega(N^{\#})=\Omega(N)$. This implies
$L(\nu)^{\#}\cong L(\nu)$ for any $\nu\in\fh^*$.

Retain notation of~\ref{mapUps}.
The following  criterion will be used in Subsection~\ref{UpsWL}.
\subsubsection{}
\begin{lem}{lemcri}
Let  $0\to M(\lambda)\to N\to M(\lambda)\to 0$
be a preimage of $\mu\in\fh^*$ under $\Upsilon_{M(\lambda)}$
in $\Ext^1(M(\lambda),M(\lambda))$. Then
 $\mu\not\in \im\Upsilon_{L(\lambda)}$ iff $N$ has a subquotient
which is isomorphic to a submodule of
$M(\lambda)^{\#}$ and is not isomorphic to $L(\lambda)$.
\end{lem}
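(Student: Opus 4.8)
The plan is to work with the tautological Verma-module extension $N$ above the element $\mu\in\fh^*$, and to relate the condition $\mu\notin\im\Upsilon_{L(\lambda)}$ to the failure of $M(\lambda)^1=M(\lambda)^2$ via \Prop{corally}, then translate that failure into the existence of a ``bad'' subquotient using the sum formula~(\ref{sumformula}) and duality. First I would fix, as in~\Cor{coruex} and \Prop{corally}, a $\lambda$-generic $\mu'$ and set $\tilde M:=M_R(\lambda+t\mu+t^2\mu')$, so that $N\cong\tilde M/t^2\tilde M$ and $\{M(\lambda)^i\}$ is the image in $M(\lambda)=\tilde M/t\tilde M$ of the Jantzen filtration $\{\tilde M^i\}$. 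By \Prop{corally}, $\mu\in\im\Upsilon_{L(\lambda)}$ iff $M(\lambda)^1=M(\lambda)^2$, so I must show: $M(\lambda)^1\ne M(\lambda)^2$ iff $N$ has a subquotient isomorphic to a submodule of $M(\lambda)^{\#}$ that is not isomorphic to $L(\lambda)$.

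The forward direction ($\Leftarrow$): suppose $M(\lambda)^1=M(\lambda)^2$. Then by the proof of \Prop{corally}, $N\cong\tilde M/\tilde M^2$ is an extension of $L(\lambda)$ by $L(\lambda)$; moreover $\tilde M^1/\tilde M^2\cong L(\lambda)$, and more generally the non-degeneracy of the filtration plus the sum formula forces $\tilde M^m/\tilde M^{m+1}$ to be zero for $m\ge 2$ in this situation — so $N$ is a length-two module with both composition factors $L(\lambda)$. Any subquotient of such an $N$ is a subquotient of $L(\lambda)\oplus$(something built from $L(\lambda)$), hence is isomorphic to $L(\lambda)$ or is $0$; in particular $N$ has no subquotient of the required kind, since the only submodules of $M(\lambda)^{\#}$ that could appear are $L(\lambda)$ itself (note $\Soc M(\lambda)^{\#}=L(\lambda)$). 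This gives the contrapositive of ($\Leftarrow$).

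For the reverse direction ($\Rightarrow$): suppose $M(\lambda)^1\ne M(\lambda)^2$, i.e. $M(\lambda)^1/M(\lambda)^2\ne 0$. By the general theory of the Jantzen filtration, $M(\lambda)^1/M(\lambda)^2$ embeds into the contragredient $M(\lambda)^{\#}$ — this is the standard fact (provable here from (U6) and the construction in~\ref{Jan}, cf.~\cite{Jan},\cite{Tony}) that the subquotients $M(\lambda)^m/M(\lambda)^{m+1}$ carry a nondegenerate contravariant form, hence are self-dual and embed in $M(\lambda)^{\#}$. Since $M(\lambda)^1=M'(\lambda)$ is the maximal proper submodule and $M(\lambda)^1/M(\lambda)^2$ is a nonzero proper submodule of $M(\lambda)/M(\lambda)^2$, this quotient $M(\lambda)^1/M(\lambda)^2$ is a subquotient of $N$ — indeed, working in $N\cong\tilde M/t^2\tilde M$ one checks, exactly as in the proof of \Prop{corally}, that $M(\lambda)^1/M(\lambda)^2$ is realized as $\tilde M^1/(t\tilde M+\tilde M^2)$, a subquotient of $N=\tilde M/t^2\tilde M$. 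It remains to extract from the (possibly reducible) module $M(\lambda)^1/M(\lambda)^2$ an \emph{irreducible} subquotient which is a submodule of $M(\lambda)^{\#}$ and is not $L(\lambda)$: since $M(\lambda)^1/M(\lambda)^2\hookrightarrow M(\lambda)^{\#}$ and its highest-weight space vanishes (as $M(\lambda)^1_\lambda=0$), every irreducible subquotient of it has highest weight $\ne\lambda$, hence is not isomorphic to $L(\lambda)$; picking its socle gives an irreducible submodule of $M(\lambda)^{\#}$ with the desired properties.

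\textbf{Main obstacle.} The delicate point is the bookkeeping that identifies $M(\lambda)^1/M(\lambda)^2$ (or a suitable subquotient of it) as an honest subquotient of $N=\tilde M/t^2\tilde M$, keeping track of the images of $\tilde M^1,\tilde M^2$ and $t\tilde M$ under $\tilde M\to\tilde M/t^2\tilde M$ and the identities $t\tilde M\cap\tilde M^2=t\tilde M^1$ used in \Prop{corally}; and, on the structural side, justifying the self-duality (contravariant-form nondegeneracy) of the Jantzen layers in the present axiomatic (U1)--(U6) setting rather than quoting it from the Lie algebra literature. Once those two ingredients are in place, the equivalence follows by combining \Prop{corally} with the duality $L(\nu)^{\#}\cong L(\nu)$ and $\Soc M(\lambda)^{\#}\cong L(\lambda)$.
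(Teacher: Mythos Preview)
Your approach has genuine gaps in both directions, stemming from two confusions.

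\textbf{The ``$\Leftarrow$'' direction.} You write that when $M(\lambda)^1=M(\lambda)^2$, ``$N\cong\tilde M/\tilde M^2$ is an extension of $L(\lambda)$ by $L(\lambda)$'' and hence $N$ has length two. But $N$ is the extension of $M(\lambda)$ by $M(\lambda)$, namely $\tilde M/t^2\tilde M$, not $\tilde M/\tilde M^2$. The latter is a proper quotient of the former (it is the $L(\lambda)$-self-extension of~\Cor{coruex}); in general $N$ has many composition factors and is far from length two. So the conclusion ``any subquotient of $N$ is $L(\lambda)$ or $0$'' is simply false. The paper avoids this by introducing $N'$, the maximal submodule of $N$ with $N'_\lambda=0$, and observing that any subquotient $N_2/N_1$ of $N$ which embeds in $M(\lambda)^{\#}$ must actually be a subquotient of $N/N'$: this uses that every nonzero submodule of $M(\lambda)^{\#}$ meets the $\lambda$-weight space, forcing $N_2\cap N'\subset N_1$.

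\textbf{The ``$\Rightarrow$'' direction.} Your key step is the claim that $M(\lambda)^1/M(\lambda)^2$ embeds in $M(\lambda)^{\#}$. This is not what the Jantzen theory gives: the standard fact is that each layer carries a nondegenerate contravariant form, hence is \emph{self-dual}, i.e.\ $(M(\lambda)^1/M(\lambda)^2)^{\#}\cong M(\lambda)^1/M(\lambda)^2$. That is not an embedding into $M(\lambda)^{\#}$. In fact such an embedding is impossible whenever the layer is nonzero: every nonzero submodule of $M(\lambda)^{\#}$ has nonzero $\lambda$-weight space (dual to the fact that $M(\lambda)$ is generated by its $\lambda$-weight space), whereas $(M(\lambda)^1/M(\lambda)^2)_\lambda=0$ since $M(\lambda)^1_\lambda=M'(\lambda)_\lambda=0$. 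This also kills your final step of ``picking the socle'', since that socle would have to be $L(\lambda)$.

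The paper's proof bypasses Jantzen filtrations entirely. It works directly with $N'$ as above and, for the forward implication, takes $E:=\phi^{-1}(M'(\lambda))/N'$ where $\phi:N\to M(\lambda)$ is the given surjection. The point is that $E_\lambda$ is one-dimensional (coming from the highest weight vector of the submodule $M(\lambda)\subset N$) and every submodule of $E$ meets $E_\lambda$; therefore $E^{\#}$ is a quotient of $M(\lambda)$, so $E\hookrightarrow M(\lambda)^{\#}$. Since $(N/N')/E\cong L(\lambda)$ and $N/N'$ is not an extension of $L(\lambda)$ by $L(\lambda)$, one gets $E\not\cong L(\lambda)$. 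The essential idea you are missing is that the desired subquotient must carry a nonzero $\lambda$-weight space in order to sit inside $M(\lambda)^{\#}$; the Jantzen layers do not, but $E$ does.
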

\begin{proof}
Set $M:=M(\lambda),\ L:=L(\lambda)$.
Let $N'$ be the maximal submodule of $N$ which intersects
$N_{\lambda}$ trivially.
By ($\Upsilon 3$) of ~\Lem{corUps}, $\mu\in \im\Upsilon_L$ iff $N/N'$
is an extension of $L$ by $L$.

Assume that $\mu\in \im\Upsilon_L$.
Suppose that $N_1\subset N_2$ are submodules of $N$ and $N_2/N_1$
is isomorphic to a submodule of $M^{\#}$.
Since $M$ is generated by its highest  weight  space,
 any submodule of $M^{\#}$ intersects its highest weight space
$M^{\#}_{\lambda}$ non-trivially.
For any $v\in N_2\cap N'$ a submodule generated by $v$ intersects
the highest weight space $N_{\lambda}$ trivially, so $v\in N_1$.
Thus $N_2\cap N'\subset N_1$ and so $N_2/N_1$ is a quotient
of $N_2/(N_2\cap N')$. In its turn, $N_2/(N_2\cap N')$
is a submodule of $N/N'$ so $N_2/N_1$ is a subquotient of $N/N'$. Since
$N/N'$ is an extension of $L$ by $L$, $N_2/N_1\cong L$
or $N_2/N_1\cong N/N'$. Since $\dim M^{\#}_{\lambda}=1$, $M^{\#}$
 does not have a submodule
isomorphic to an extension of $L$ by $L$. Hence $N_2/N_1\cong L$.

Now assume that $\mu\not\in \im\Upsilon_L$ so, by ($\Upsilon 3$),
$N/N'$ is not an extension of $L$ by $L$.
Write
$$0\to M\to^{\iota} N\to^{\phi} M\to 0.$$
Let $M'$ be a maximal proper submodule of $M$. Note that
$N'\subset \phi^{-1}(M')$. Let us show that
$E:=\phi^{-1}(M')/N'$ is the required subquotient of $N$.
Indeed, since $E$ is a submodule of $N/N'$,
any submodule of $E$ intersects $E_{\lambda}$ non-trivially.
Clearly, $E_{\lambda}$ is one-dimensional.
As a result, $E^{\#}$ is a quotient of $M$, so
$E$ is a submodule of $M^{\#}$.
One has
$$(N/N')/E= (N/N')/(\phi^{-1}(M')/N')\cong N/\phi^{-1}(M')\cong L.$$
By above, $N/N'$ is not an extension of $L$ by $L$. Hence
$E\not\cong L$ as required.
\end{proof}

\subsection{Weakly admissible modules}\label{weakad}
It is well known that representation theory of an affine Lie algebra
at the critical level $k=-h^{\vee}$ is much more complicated than for
a non-critical level $k\not=-h^{\vee}$. For any Kac-Moody algebra one has
a similar set of critical weights
$$C=\{\nu\in\fh^*|\ 2(\nu+\rho,\alpha)\in\mathbb{Z}_{>0}(\alpha,\alpha)
\ \text{ for some }
\alpha\in\Delta_+^{im}\}.$$

In our general setup, introduced in Subsection~\ref{assm}, fix a subset
$C\subset \fh^*$,
called the subset of {\em critical weights}.
We call $\lambda\in\fh^*$ and the corresponding irreducible $\cU$-module
$L(\lambda)$ {\em weakly admissible} if $\lambda\not\in C$
and for any $\nu\not=\lambda$, such that $\nu\not\in C$, $L(\lambda)$ is
not a subquotient of the Verma module $M(\nu)$.

For example, for  Kac-Moody algebras $C$ is as above,
for the Virasoro and  for the Neveu-Schwartz algebras,
$C$ is empty. For minimal $\cW$-algebras $W^k(\fg,e_{-\theta})$,
$C$ is empty if $k\not=-h^{\vee}$.

The following lemma is similar to the one proven in~\cite{DGK}.

\subsubsection{}
\begin{lem}{proplala}
Let $\lambda,\lambda'\in\fh^*$  be two distinct elements.
If the exact sequence $0\to L(\lambda')\to N\to L(\lambda)\to 0$
is non-splitting, then either $N$ is a quotient of $M(\lambda)$ or
     $N^{\#}$ is a quotient of $M(\lambda')$.
\end{lem}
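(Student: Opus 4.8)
The plan is to study the short exact sequence $0\to L(\lambda')\to N\to^{\phi} L(\lambda)\to 0$ by looking at the highest weight space $N_\lambda$ and at the action of $\cU'_+$ on it. Since $\lambda\neq\lambda'$ and $L(\lambda),L(\lambda')$ are $\fh$-diagonalizable, $N$ is $\fh$-diagonalizable; moreover $N_\lambda$ is one-dimensional, spanned by a vector $v$ mapping onto a highest weight vector of $L(\lambda)$ (here I use that $\lambda\notin\Omega(L(\lambda'))$ if $\lambda\not\le\lambda'$ — the symmetric situation $\lambda'\not\le\lambda$ will be handled by passing to $N^{\#}$, see below). The first step is to split into two cases according to whether $\cU'_+ v=0$ or $\cU'_+ v\neq 0$.

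\textbf{Case 1: $\cU'_+ v=0$.} Then $\dim N_\lambda=1$, $\cU'_+N_\lambda=0$, and $N$ is generated by $v$: indeed the submodule $\cU v$ surjects onto $L(\lambda)$ under $\phi$ (its image contains the highest weight vector, hence is all of $L(\lambda)$), so $N=\cU v + L(\lambda')$; but $L(\lambda')$ is irreducible and the sequence is non-split, so $\cU v\not\subset L(\lambda')$ forces $\cU v\cap L(\lambda')$ to be a proper submodule of $L(\lambda')$, i.e.\ zero — wait, that would split the sequence; instead $\cU v\cap L(\lambda') $ is either $0$ or all of $L(\lambda')$, and non-splitness rules out $0$, so $\cU v = N$. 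Thus $N$ is a module with highest weight $\lambda$ in the sense of Subsection on Verma modules, hence a quotient of $M(\lambda)$.

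\textbf{Case 2: $\cU'_+ v\neq 0$.} Then there is a positive-weight element killing nothing; more precisely, $\cU'_+v$ is a nonzero submodule-generating subspace contained in $L(\lambda')$ (since $\phi(\cU'_+v)=\cU'_+\cdot(\text{h.w.\ vector of }L(\lambda))=0$). Because $L(\lambda')$ is irreducible, $\cU v\supset\cU(\cU'_+v)=L(\lambda')$, so again $\cU v=N$ and $N$ is generated by $N_\lambda$. Now I want to produce the dual statement. Pass to $N^{\#}$: by the duality discussion in Subsection~\ref{dual} (available since in the relevant cases $\cU$ satisfies (U1)--(U6), and $N$ is $\fh$-diagonalizable with finite-dimensional weight spaces), $N^{\#}$ sits in the exact sequence $0\to L(\lambda)^{\#}\to N^{\#}\to L(\lambda')^{\#}\to 0$, i.e.\ $0\to L(\lambda)\to N^{\#}\to L(\lambda')\to 0$, still non-split. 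The highest weight space of $N^{\#}$ relevant here is $(N^{\#})_{\lambda'}$, which is dual to $N_{\lambda'}$; a calculation with $\sigma$ shows that $v\in N_\lambda$ with $\cU'_+v\neq0$ exactly corresponds, after dualizing, to the highest weight vector of $N^{\#}$ at weight $\lambda'$ being annihilated by $\cU'_+$. Hence $N^{\#}$ falls into Case 1 (with the roles of $\lambda,\lambda'$ swapped), so $N^{\#}$ is a quotient of $M(\lambda')$.

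\textbf{Main obstacle.} The delicate point is the bookkeeping of weights: one must know that $N_\lambda$ (resp.\ $(N^{\#})_{\lambda'}$) is one-dimensional and generated by a genuine $\fh$-eigenvector mapping onto the highest weight vector. This requires $\lambda\notin\Omega(L(\lambda'))$ or, failing that, the symmetric condition — and the whole point of the dichotomy in the statement is that at least one of the two must hold (if both $\lambda\le\lambda'$ and $\lambda'\le\lambda$ strictly, that is impossible; and if, say, $\lambda'<\lambda$ fails, then $\lambda\not\le\lambda'$... one needs the partial order argument here). I expect the genuine work to be in verifying that the case $\cU'_+v\neq 0$ forces, after dualization, the vanishing $\cU'_+v^{\#}=0$ for the appropriate $v^{\#}\in(N^{\#})_{\lambda'}$; this is where the antiautomorphism $\sigma$ and the identification $(N^{\#})^{\#}\cong N$ are used, and it is the heart of the ``either/or'' in the conclusion.
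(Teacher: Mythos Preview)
Your core idea—split on the partial order between $\lambda$ and $\lambda'$ and dualize in one branch—is exactly the paper's approach, but you have over-complicated the execution. Once you assume $\lambda\not\le\lambda'$ (which you do, in order to obtain $\dim N_\lambda=1$), your Case~2 is vacuous: if $\cU'_+v\ne 0$ then some $uv\in L(\lambda')_{\lambda+\nu}$ is nonzero for $\nu\in Q_+\setminus\{0\}$, giving $\lambda<\lambda+\nu\le\lambda'$ and hence $\lambda\le\lambda'$, a contradiction. So only Case~1 ever occurs under your hypothesis, and the conclusion that $N$ is a quotient of $M(\lambda)$ follows directly. The ``$\sigma$-calculation'' you flag as the main obstacle is therefore never needed; it is a red herring.

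The paper's proof is just this, stated without the detour: if $\lambda'-\lambda\notin Q_+$ then $\lambda$ is maximal in $\Omega(N)$, so $N_\lambda$ is one-dimensional with $\cU'_+N_\lambda=0$ automatically, and indecomposability of $N$ forces $N=\cU N_\lambda$, a quotient of $M(\lambda)$. If $\lambda'-\lambda\in Q_+$, dualize to get the non-split sequence $0\to L(\lambda)\to N^{\#}\to L(\lambda')\to 0$; now $\lambda-\lambda'\notin Q_+$ and the first case applies with the roles of $\lambda,\lambda'$ swapped. The dualization belongs to the order dichotomy itself, not to a further subcase inside it.
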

\begin{proof}
Consider the case $\lambda'-\lambda\not\in Q_+$. Then $\lambda$ is a maximal
element in $\Omega(N)$ and $N_{\lambda}$ is one-dimensional.
Therefore $N_{\lambda}$ generates
a submodule $N'$ which is isomorphic to a quotient of $M(\lambda)$.
Since $N$ is indecomposable,  $N'=N$ and
so $N$ is a quotient of $M(\lambda)$.

Let now $\lambda'-\lambda\in Q_+$. Recall that
$L(\nu)^{\#}=L(\nu)$, hence we have the dual  exact sequence
$0\to L(\lambda)\to N^{\#}\to L(\lambda')\to 0$.
By the first case, $N^{\#}$ is a quotient of $M(\lambda')$.
\end{proof}

\subsubsection{}
\begin{cor}{corlala}
If $\lambda\not=\lambda'$ are weakly admissible weights,
then $\Ext^1(L(\lambda),L(\lambda'))=0$.
\end{cor}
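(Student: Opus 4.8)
The plan is to deduce the corollary directly from \Lem{proplala}. Suppose $\lambda \neq \lambda'$ are both weakly admissible and, for contradiction, that $\Ext^1(L(\lambda),L(\lambda')) \neq 0$; pick a non-splitting extension
$$0 \to L(\lambda') \to N \to L(\lambda) \to 0.$$
By \Lem{proplala}, either $N$ is a quotient of $M(\lambda)$, or $N^{\#}$ is a quotient of $M(\lambda')$. I will treat the first case; the second is symmetric since $N^{\#}$ fits into the dual non-splitting sequence $0 \to L(\lambda) \to N^{\#} \to L(\lambda') \to 0$ (using $L(\nu)^{\#} \cong L(\nu)$), so the roles of $\lambda$ and $\lambda'$ are simply exchanged.

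So assume $N$ is a quotient of $M(\lambda)$. Then $L(\lambda')$ is a submodule of $N$, hence a subquotient of $M(\lambda)$. Since $\lambda' \neq \lambda$ and $L(\lambda')$ is weakly admissible, the definition of weak admissibility forces $\lambda \in C$ — but $\lambda$ being the highest weight of a weakly admissible module $L(\lambda)$, we have $\lambda \notin C$, a contradiction. (To be careful: weak admissibility of $L(\lambda')$ says $L(\lambda')$ is not a subquotient of $M(\nu)$ for any $\nu \neq \lambda'$ with $\nu \notin C$; here $\nu = \lambda \notin C$ and $\lambda \neq \lambda'$, so indeed $L(\lambda')$ cannot be a subquotient of $M(\lambda)$, contradicting that it is a submodule of the quotient $N$ of $M(\lambda)$.) This contradiction shows $\Ext^1(L(\lambda),L(\lambda')) = 0$.

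There is essentially no obstacle here: the content is entirely in \Lem{proplala}, and the corollary is a one-line unwinding of the definition of weak admissibility, using only that $\supp$ of a quotient of $M(\lambda)$ consists of subquotients of $M(\lambda)$. The only point to keep straight is the symmetry between the two cases of \Lem{proplala}, which is handled by passing to the $\#$-dual and invoking $L(\nu)^{\#} \cong L(\nu)$ together with $\Omega(M(\lambda)^{\#}) = \Omega(M(\lambda))$ so that weak admissibility of $\lambda$ rules out the second case just as weak admissibility of $\lambda'$ ruled out the first.
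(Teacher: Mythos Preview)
Your proof is correct and is exactly the argument the paper intends: the corollary is stated immediately after \Lem{proplala} with no separate proof, precisely because it is this one-line unwinding of the definition of weak admissibility. The only superfluous remark is the invocation of $\Omega(M(\lambda)^{\#}) = \Omega(M(\lambda))$ in your final paragraph --- it is not needed, since in the second case $N^{\#}$ is a quotient of $M(\lambda')$ (not of $M(\lambda)^{\#}$), and the contradiction comes directly from $L(\lambda)$ being a subquotient of $M(\lambda')$.
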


We will use the following criterion of weakly admissibility.
\subsubsection{}
\begin{lem}{critwa}
The weight $\lambda\in\fh^*$ is not weakly admissible iff there exists
$\lambda'\in\lambda +Q_+$ such that $\det B_{\lambda'-\lambda}(\lambda')=0$.
\end{lem}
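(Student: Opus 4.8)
The statement to prove is \Lem{critwa}: $\lambda$ is \emph{not} weakly admissible iff there is $\lambda'\in\lambda+Q_+$ with $\det B_{\lambda'-\lambda}(\lambda')=0$. The plan is to unwind both definitions in terms of the Shapovalov determinant criterion $\det B_{\nu}(\lambda)=0 \Leftrightarrow M'(\lambda)_{\lambda-\nu}\neq 0$ established in Subsection~\ref{shdet}, together with the description of subquotients of Verma modules. First recall from the definition in Subsection~\ref{weakad} that $\lambda$ fails to be weakly admissible precisely when $L(\lambda)$ is a subquotient of some Verma module $M(\nu)$ with $\nu\neq\lambda$; here, since $L(\lambda)$ has highest weight $\lambda$, any such $\nu$ must satisfy $\lambda\in\nu+Q_-$, i.e. $\nu\in\lambda+Q_+$ (weights of $M(\nu)$ lie in $\nu+Q_-$). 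So I may reformulate: $\lambda$ is not weakly admissible iff $L(\lambda)$ is a subquotient of $M(\lambda')$ for some $\lambda'\in\lambda+Q_+$, $\lambda'\neq\lambda$.

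\textbf{Key steps.} The forward direction: suppose $L(\lambda)$ is a subquotient of $M(\lambda')$ for some such $\lambda'$. Then the $\lambda$-weight space of $M'(\lambda')$ (the maximal proper submodule) is nonzero — indeed $L(\lambda)$ is a subquotient of $M(\lambda')$ but $L(\lambda')$ occurs only once and with highest weight $\lambda'\neq\lambda$, so the copy of $L(\lambda)$ lies inside $M'(\lambda')$, forcing $M'(\lambda')_{\lambda}\neq 0$; since $\lambda=\lambda'-\nu$ with $\nu:=\lambda'-\lambda\in Q_+$, this means $M'(\lambda')_{\lambda'-\nu}\neq 0$, hence by the criterion in Subsection~\ref{shdet}, $\det B_{\nu}(\lambda')=\det B_{\lambda'-\lambda}(\lambda')=0$. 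Conversely, suppose $\det B_{\lambda'-\lambda}(\lambda')=0$ for some $\lambda'\in\lambda+Q_+$. If $\lambda'=\lambda$ then $\det B_0(\lambda)$ is (up to scalar) $1\neq 0$, so necessarily $\lambda'\neq\lambda$. By the criterion, $M'(\lambda')_{\lambda}\neq 0$, so $M'(\lambda')$ has a nonzero vector of weight $\lambda$; picking a weight-$\lambda$ vector generating a highest-weight submodule (or, more carefully, passing to irreducible subquotients of $M(\lambda')$), there is an irreducible subquotient of $M(\lambda')$ with highest weight $\lambda$. Every irreducible subquotient of $M(\lambda')$ is some $L(\mu)$, and the one with highest weight $\lambda$ is $L(\lambda)$. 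Hence $L(\lambda)$ is a subquotient of $M(\lambda')$ with $\lambda'\neq\lambda$, so $\lambda$ is not weakly admissible. (Note that the ``critical weights'' set $C$ plays no role here: if $\lambda$ itself is critical it is by definition not weakly admissible, and one should check the Shapovalov side is consistent — but in the generic setup of Subsection~\ref{weakad} where this lemma is applied, the statement is understood with $C$ as given and the argument above identifies non-weak-admissibility with the Verma-subquotient condition; I would add a sentence addressing the case $\lambda\in C$ if needed, reducing it to the same determinant vanishing via $KK$-type results in the concrete cases.)

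\textbf{Main obstacle.} The delicate point is the converse direction: extracting from $M'(\lambda')_{\lambda}\neq 0$ the conclusion that $L(\lambda)$ — and not merely $L(\mu)$ for some $\mu<\lambda'$ with $\mu\neq\lambda$ — actually occurs as a subquotient. The cleanest way is to observe that among all irreducible subquotients $L(\mu)$ of $M(\lambda')$, their weight sets cover $\Omega(M(\lambda'))$ by~(\ref{subqu}), and $M'(\lambda')_\lambda \neq 0$ means $\lambda$ lies in $\Omega(M'(\lambda'))$; since a generalized weight $\lambda$ of a module is a weight of some irreducible subquotient, and a module of highest weight $\mu$ has $\lambda$ as a weight only if $\lambda\leq\mu$, one needs that $\lambda$ is actually a \emph{highest} weight among the subquotients whose weight sets contain it — which follows because $\dim M(\lambda')_\lambda$ counts multiplicities and one can induct on $\lambda'-\lambda\in Q_+$, or simply note that a nonzero weight-$\lambda$ vector in $M'(\lambda')$ generates a highest weight module (all weights in $\lambda+Q_-$, top weight $\lambda$, one-dimensional top space since $\dim M(\lambda')_{\lambda}$ bounds it) whose unique irreducible quotient is $L(\lambda)$. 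This last observation — that $\cU'_+$ annihilates a suitable weight-$\lambda$ vector — may require choosing the vector to be a highest weight vector inside $M'(\lambda')$, which exists because $M'(\lambda')$ is itself $\fh$-diagonalizable and $Q_+$-graded from above; I expect this to go through routinely using (U4) and the structure theory of Verma modules recalled in Subsection~\ref{shdet}.
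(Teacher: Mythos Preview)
Your forward direction is fine: if $L(\lambda)$ is a subquotient of $M(\lambda')$ with $\lambda'\neq\lambda$, then indeed $M'(\lambda')_\lambda\neq 0$ and hence $\det B_{\lambda'-\lambda}(\lambda')=0$.

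The converse direction, however, has a genuine gap. From $M'(\lambda')_\lambda\neq 0$ you cannot conclude that $L(\lambda)$ is a subquotient of $M(\lambda')$. Your proposed justification, that ``a nonzero weight-$\lambda$ vector in $M'(\lambda')$ generates a highest weight module (all weights in $\lambda+Q_-$, top weight $\lambda$)'', is simply false: if $v\in M'(\lambda')_\lambda$, then $\cU'_+v$ has weights strictly between $\lambda$ and $\lambda'$, and there is no reason these should vanish in $M'(\lambda')$. So $\cU v$ need not have $\lambda$ as its top weight. Your fallback, ``choose $v$ to be a highest weight vector inside $M'(\lambda')$'', does not help either: the maximal weights of $M'(\lambda')$ (where singular vectors live) may all be strictly larger than $\lambda$. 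All you can extract directly is that some $L(\mu)$ with $\lambda\leq\mu<\lambda'$ and $L(\mu)_\lambda\neq 0$ is a subquotient of $M(\lambda')$ --- but $\mu>\lambda$ is entirely possible, and knowing $L(\mu)_\lambda\neq 0$ does \emph{not} give $M'(\mu)_\lambda\neq 0$, so the naive induction on $\lambda'-\lambda$ stalls.

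This is precisely why the paper defers to the argument of Thm.~2 in~\cite{KK}. That argument is not a formal induction on the weight poset; it uses the explicit factorization of the Shapovalov determinant into linear factors together with the Jantzen filtration (deforming $\lambda'$ along a generic line) to produce, from each vanishing linear factor, an actual singular vector in $M(\lambda')$, and then chains these singular vectors down to $\lambda$. In other words, the converse relies on structural input --- the shape of $\det B_\nu$ in the cases at hand (Kac--Moody, Virasoro, Neveu--Schwarz) --- not on soft weight-space bookkeeping. Your proposal is missing exactly this ingredient.
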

The proof is similar to the proof of Thm. 2 in~\cite{KK}.

\subsection{Admissible modules}\label{admis}
Let $\cH$ be a category of $\cU$-modules. We say that
an irreducible module $L(\lambda)$ in $\cH$ is {\em $\cH$-admissible}
if it is weakly admissible and
$\Ext^1_{\cH}(L(\lambda),L(\lambda))=0$.

If $\cU=\cU(\fg)$, where $\fg$
is a Kac-Moody algebra, we call $L(\lambda)$ and the corresponding
highest weight $\lambda$ {\em admissible }
if it is admissible in the category of $\fg$-modules with a diagonal action
of $\fh''$, where $\fg=[\fg,\fg]\oplus\fh'',\ \fh''\subset\fh$.
For an affine Lie algebra $\fg$ (in this case $\fh''=\mathbb{C}D$),
we call $L(\lambda)$ and the corresponding
highest weight $\lambda$ {\em $k$-admissible}
($k\in\mathbb{C}$) if it is admissible in the category of
$\fg$-modules with a diagonal action of $D$,
which are annihilated by $K-k$.

We call a $\cU$-module
$M$  {\em bounded} if $\fh$ acts locally finitely on $M$ and
$\Omega(M)$ is bounded by a finite set of elements of $\fh^*$.
Denote by $\tilde{\CO}$ the category of all bounded
$\cU$-modules, and by $\tilde{\CO}_{adm}$ its subcategory, consisting
of all bounded modules lying in $\cH$,
all of whose irreducible subquotients are $\cH$-admissible.

\subsubsection{}
\begin{cor}{corCO'}
Any $\cU$-module from the category  $\tilde{\CO}_{adm}$
is completely reducible.
\end{cor}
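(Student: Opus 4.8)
The plan is to deduce \Cor{corCO'} from \Lem{lemcH} by verifying that the category $\tilde{\CO}_{adm}$ satisfies hypotheses (i), (ii), (iii) and (iv') of that lemma, working under the assumptions (U1)--(U6) (so that the duality of Subsection~\ref{dual} and \Cor{corlala} are available). First I would check (i): $\tilde{\CO}_{adm}$ is closed under subquotients because boundedness, membership in $\cH$, and the property that every irreducible subquotient is $\cH$-admissible are all visibly inherited by subquotients (every irreducible subquotient of a subquotient of $M$ is an irreducible subquotient of $M$). Next, (iii) is essentially the definition: every irreducible module in $\cH$ is of the form $L(\lambda)$ by the general setup of Subsection~\ref{assm} (modules with highest weight), and $\cH$-admissible modules are in particular irreducible of this form.

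For (ii), I would split into the two cases $L(\lambda)\cong L(\lambda')$ and $L(\lambda)\not\cong L(\lambda')$. In the diagonal (isomorphic) case, $\Ext^1_{\tilde{\CO}_{adm}}(L(\lambda),L(\lambda))=0$ is exactly the vanishing of self-extensions built into the definition of $\cH$-admissibility (Subsection~\ref{admis}). In the non-isomorphic case, $\lambda$ and $\lambda'$ are both weakly admissible weights, so \Cor{corlala} gives $\Ext^1(L(\lambda),L(\lambda'))=0$ already in the ambient category of all $\cU$-modules, hence a fortiori in $\tilde{\CO}_{adm}$. Thus (ii) holds.

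The remaining point is (iv'): each module in $\tilde{\CO}_{adm}$ is $\fh$-locally finite, $\cU$ satisfies (U4'), and there are only finitely many irreducible modules in the category. The first two are immediate — $\fh$-local finiteness is part of the definition of ``bounded'', and (U1)--(U6) imply (U4') as noted after the list of assumptions. The finiteness of the set of simple objects is the one genuinely nontrivial input, and I expect it to be the main obstacle: one must show that, given boundedness (so $\Omega(M)$ lies below a finite set $\lambda_1,\dots,\lambda_s$) together with weak admissibility of all simple subquotients, only finitely many highest weights $\lambda$ can occur. The idea is that weak admissibility forces each simple subquotient $L(\lambda)$ appearing in a bounded module to be ``extremal'' in a suitable sense: by \Lem{critwa}, non-weakly-admissible weights are exactly those detected by a vanishing Shapovalov determinant from above, and weak admissibility plus the boundedness constraint $\lambda\le\lambda_i$ pins $\lambda$ down to a finite set (essentially because the highest weight of a bounded weakly admissible module must be a maximal weight, and there are finitely many candidates bounded by the $\lambda_i$ and linked to them). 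Once (iv') is established, \Lem{lemcH} applies verbatim and yields semisimplicity of $\tilde{\CO}_{adm}$, i.e. complete reducibility of every module in it.

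I should note that if the intended argument instead routes through hypothesis (iv) rather than (iv') — for instance by first using boundedness to produce a highest weight vector and hence an irreducible submodule in every nonzero object — then the finiteness-of-simples issue is bypassed entirely, and the proof reduces to checking (i), (ii), (iii) as above plus the ``every module has an irreducible submodule'' statement, which follows from boundedness exactly as in the last paragraph of the proof of \Lem{lemcH}. Either way, the substantive content is the reduction to \Lem{lemcH}, and the proof is short modulo that lemma.
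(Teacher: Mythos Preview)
Your route through (iv') does not work, and the obstacle you flag is fatal rather than merely ``nontrivial'': the category $\tilde{\CO}_{adm}$ generally has \emph{infinitely} many irreducibles. For instance, in the Virasoro case with irrational $k$ (Section~\ref{Vir}), every $(h_{m,n}(k),c(k))$ with $m,n\in\mathbb{Z}_{>0}$ is $c$-admissible, giving infinitely many simple objects. So (iv') of \Lem{lemcH} is simply unavailable here. Your attempted argument --- that boundedness plus weak admissibility pins down finitely many highest weights --- conflates two different finiteness statements: (iv') demands finitely many simples in the \emph{category}, not in a given module.

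Your alternative via (iv) is the correct one and is exactly what the paper does. One small correction: the argument is not ``exactly as in the last paragraph of the proof of \Lem{lemcH}'', since that paragraph relies on finiteness of $\supp(N)$ to locate a maximal highest weight. Here one uses boundedness of $\Omega(N)$ directly: pick a maximal $\lambda\in\Omega(N)$, take an $\fh$-eigenvector $v\in N_\lambda$, and the submodule $N'$ it generates is a quotient of $M(\lambda)$. The point you should make explicit is why $N'$ is irreducible: if not, $N'$ (being a highest weight module) has an indecomposable length-two quotient with top $L(\lambda)$ and bottom $L(\mu)$ for some $\mu<\lambda$, hence $\mu\ne\lambda$; but this contradicts $\Ext^1(L(\lambda),L(\mu))=0$ from \Cor{corlala}. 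Note that only the off-diagonal Ext vanishing is needed for this step --- the self-extension vanishing (from $\cH$-admissibility) enters only through hypothesis (ii) of \Lem{lemcH}.
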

\begin{proof}
By~\Lem{lemcH}, it is enough to show that any module $N$ in
$\tilde{\CO}_{adm}$
has an irreducible submodule. Let $\lambda$ be a maximal element in $\Omega(N)$
(this exists since $\Omega(N)$ is bounded).
Let $v\in N_{\lambda}$ be an $\fh$-eigenvector. Then a submodule $N'$,
generated by $v$, is a quotient of $M(\lambda)$. A quotient
of $M(\lambda)$, which is not irreducible,
has a non-splitting quotient of length two. However,
$N'$ does not have such a quotient, since
$\Ext^1(L,L')=0$ for
irreducible modules $L\not\cong L'$ in $\tilde{\CO}_{adm}$.
Hence $N'$ is irreducible as required.
\end{proof}

\section{$\Ext^1(L(\lambda),L(\lambda))$ for the Lie superalgebra
$\fg(A,\tau)$}

In this section we prove Propositions~\ref{prop1} and~\ref{propalmin}.

\subsection{The construction of $\fg(A,\tau)$}\label{assmA}
Let $A=(a_{ij})$ be an $(n\times n)$-matrix over $\mathbb{C}$
 and let $\tau$ be a subset of $I:=\{1,\ldots,n\}$.
Let $\fg=\fg(A,\tau)=\fn_-\oplus\fh\oplus\fn_+$
be the associated  Lie superalgebra constructed as in~\cite{K77},\cite{Kbook}.
Recall that, in order to construct $\fg(A,\tau)$, one considers a
realization of $A$, i.e. a triple $(\fh,\Pi,\Pi^{\vee})$, where $\fh$
is a vector space of dimension $n+corank A$, $\Pi\subset\fh^*$
(resp. $\Pi^{\vee}\subset\fh$) is a linearly independent set of vectors
$\{\alpha_i\}_{i\in I}$ (resp. $\{\alpha_i^{\vee}\}_{i\in I}$),
such that $\langle \alpha_i,\alpha_j^{\vee}\rangle=a_{ji}$,
and constructs a Lie superalgebra $\tilde{\fg}(A,\tau)$
on generators $e_i,f_i,\fh$, subject to relations:
$$\begin{array}{l}
[\fh,\fh]=0,\ \ [h,e_i]=\langle \alpha_i,h\rangle e_i,\ \
[h,f_i]=-\langle \alpha_i,h\rangle f_i,\ \text{ for } i\in I,h\in\fh,\ \
[e_i,f_j]=\delta_{ij}\alpha_i^{\vee},\\
\ p(e_i)=p(f_i)=\ol{1} \text{ if } i\in\tau,\ \
p(e_i)=p(f_i)=\ol{0} \text{ if } i\not\in\tau,\ \ \ p(\fh)=\ol{0}.
\end{array}$$
Then $\fg(A,\tau)=\tilde{\fg}(A,\tau)/J=\fn_-\oplus\fh\oplus\fn_+$,
where $J$ is the maximal ideal of
$\tilde{\fg}(A,\tau)$, intersecting $\fh$ trivially, and
$\fn_+$ (resp. $\fn_-$) is the subalgebra generated by the images of the
$e_i$'s (resp. $f_i$'s). One readily sees that if $a_{ii}=0$
and $i\in\tau$ (i.e., $e_i,f_i$ are odd), then $e_i^2,f_i^2\in J$
so $e_i^2=f_i^2=0$ in $\fg(A,\tau)$.

We say that a simple root $\alpha_i$ is {\em even} (resp., {\em odd})
if $i\not\in\tau$ (resp., $i\in\tau$) and
that $\alpha_i$ is {\em isotropic} if $a_{ii}=0$.

Let $\cU$ (resp. $\cU_{\pm}$) be the universal enveloping algebra
of $\fg$ (resp. $\fn_{\pm}$).
Observe that $(\cU,\fh,\cU_{\pm})$ satisfies the assumptions (U1)--(U6)
of Subsection~\ref{assm} with $\sigma$ identical on
$\fh$ and $\sigma(e_i)=f_i$.
As before, we let $\fh'=\fh\cap[\fg,\fg]$. By the construction,
the simple coroots $\alpha_i^{\vee}, i\in I,$ form a basis of $\fh'$.

Note that, multiplying the $i$th row of the matrix $A$ by a non-zero number
corresponds to multiplying $e_i$ and $\alpha^{\vee}_i$ by this number,
thus giving an isomorphic Lie superalgebra. Hence we may assume
from now on that $a_{ii}=2$ or $0$ for all $i\in I$.

The above construction includes all Kac-Moody algebras, all basic
finite-dimensional Lie superalgebras and the associated affine
superalgebras, and the above assumption holds for all of them (but it does
not hold for generalized Kac-Moody Lie algebras).

\subsection{}
\begin{prop}{prop1}
Let $\fg=\fg(A,\tau)$ and
assume that $a_{ii}\not=0$ for all $i\not\in\tau$. Then:

(i) For any $\lambda\in\fh^*$ the image of $\Upsilon_{L(\lambda)}$ contains
the subspace $\{\mu\in\fh^*|\ \langle \mu,\fh'\rangle=0\}$.

(ii) Let $\lambda\in\fh^*$ be such that
$\langle \lambda,\alpha^{\vee}\rangle$ is a non-negative integer
(resp. an even non-negative integer, resp. zero) for any even
(resp. odd non-isotropic, resp. odd isotropic) simple root $\alpha$.
Then
$$\im\Upsilon_{L(\lambda)}=\{\mu\in\fh^*|\ \langle\mu,\fh'
\rangle=0\}.$$
In other words, under the above condition on $\lambda$,
$\fh'$ acts semisimply on any extension of $L(\lambda)$ by $L(\lambda)$.
Writing $\fh=\fh'\oplus\fh''$, we obtain that the extension splits
if $\fh''$ acts semisimply.
\end{prop}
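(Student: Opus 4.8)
The plan is to compute $\im\Upsilon_{L(\lambda)}$ by studying the Jantzen-type filtration of $M(\lambda)$ and invoking \Prop{corally}: a weight $\mu\in\fh^*$ lies in $\im\Upsilon_{L(\lambda)}$ if and only if, for the Jantzen filtration $\{M(\lambda)^i\}$ associated to $\tilde\lambda=\lambda+t\mu+t^2\mu'$ with $\mu'$ being $\lambda$-generic, one has $M(\lambda)^1=M(\lambda)^2$. So the whole computation reduces to controlling the terms $M(\lambda)^1$ and $M(\lambda)^2$ via the sum formula~(\ref{sumformula}), i.e. via the $t$-adic valuations of the Shapovalov determinants $\det B_\nu(\tilde\lambda)$.

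For part (i), take $\mu$ with $\langle\mu,\fh'\rangle=0$. Since the simple coroots $\alpha_i^\vee$ form a basis of $\fh'$ and every root is a $\mathbb Z$-combination of simple roots, evaluation of any weight against any coroot — in particular the arguments that enter the Shapovalov determinant factors (which are products of linear forms of the shape $\langle\lambda+\rho,\gamma^\vee\rangle-\text{const}$ together with contributions from $\Delta^{im}$, all living on $\fh'$) — is insensitive to adding $t\mu$. Concretely, $\det B_\nu(\lambda+t\mu+t^2\mu') = \det B_\nu(\lambda+t^2\mu'')$ for an appropriate $\mu''$, so $\upsilon(\det B_\nu(\lambda+t\mu+t^2\mu'))$ is even for every $\nu$; since each $\dim M(\lambda)^r_{\lambda-\nu}$ for $r=1,2$ records the corank of $B_\nu(\tilde\lambda)$ modulo $Rt^r$, one gets $\dim M(\lambda)^1_{\lambda-\nu}=\dim M(\lambda)^2_{\lambda-\nu}$ for all $\nu$ (the jump in the associated graded happens only at even filtration levels). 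Hence $M(\lambda)^1=M(\lambda)^2$ and $\mu\in\im\Upsilon_{L(\lambda)}$ by \Prop{corally}. Combined with ($\Upsilon1$)–($\Upsilon2$) of \Lem{corUps}, this is the claimed inclusion.

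For part (ii), the dominance/integrality hypothesis on $\lambda$ (non-negative integer, resp. even non-negative integer, resp. zero, on the three types of simple roots) is exactly the condition making $\lambda$ behave like a dominant weight: one must show that for this $\lambda$ the reverse inclusion holds, i.e. no $\mu$ with $\langle\mu,\fh'\rangle\neq0$ lies in $\im\Upsilon_{L(\lambda)}$. The idea is that for such dominant $\lambda$, the module $M'(\lambda)=M(\lambda)^1$ is generated by the singular vectors $f_i^{\langle\lambda,\alpha_i^\vee\rangle+1}v_\lambda$ (or $f_i^2 v_\lambda$ / nothing, in the odd cases), and one analyzes $M(\lambda)^2$ at the corresponding weights $\nu_i:=\lambda-(\langle\lambda,\alpha_i^\vee\rangle+1)\alpha_i$. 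At such a $\nu$, the relevant factor of $\det B_\nu$ vanishing at $\lambda$ is the single linear form $\langle\cdot+\rho,\alpha_i^\vee\rangle - m$; perturbing by $t\mu$ with $\langle\mu,\alpha_i^\vee\rangle\neq 0$ makes this factor vanish to order exactly $1$ in $t$ rather than being forced to even order, so $\upsilon$ becomes odd and $M(\lambda)^1_{\nu_i}\neq M(\lambda)^2_{\nu_i}$, giving $M(\lambda)^1\neq M(\lambda)^2$. By \Prop{corally} such $\mu\notin\im\Upsilon_{L(\lambda)}$. Since the coroots $\alpha_i^\vee$ span $\fh'$, ruling out all $\mu$ with $\langle\mu,\alpha_i^\vee\rangle\neq0$ for some $i$ pins down $\im\Upsilon_{L(\lambda)}$ to exactly the annihilator of $\fh'$. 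The final sentences about $\fh'$ acting semisimply, and the extension splitting when $\fh=\fh'\oplus\fh''$ with $\fh''$ semisimple, are then immediate from the definition of $\Upsilon$ (the nilpotent part of the $\fh$-action on $N_\lambda$ is measured precisely by $\mu=\Upsilon(N)$, which now annihilates $\fh'$).

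The main obstacle I expect is the careful bookkeeping in part (ii): one needs the precise factorization of the Shapovalov determinant $\det B_\nu$ near a dominant $\lambda$ (the Kac–Kazhdan factors, including the subtleties at isotropic odd simple roots where $f_i^2=0$ and at non-isotropic odd roots where the relevant power is even), and one must verify that at the critical weights $\nu_i$ the vanishing order of $\det B_{\nu_i}$ at $\lambda$ is exactly $1$ — i.e. that the hyperplane $\langle\cdot+\rho,\alpha_i^\vee\rangle=m$ is a reduced component and that $\mu$ is transversal to it — so that the parity argument bites. Handling the odd isotropic case (where the ``$+1$'' becomes ``$0$'', i.e. $L(\lambda)$ already kills $f_i$) needs separate attention: there the relevant weight is $\lambda-\alpha_i$ and the determinant factor is $\langle\lambda+\rho,\alpha_i^\vee\rangle$ itself, again a reduced linear form to which a suitable $\mu$ is transversal.
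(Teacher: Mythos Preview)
Your approach to (i) is close to the paper's but adds unnecessary machinery. The paper simply sets $\mu'=0$ and observes that for any $u_+\in\cU(\fn_+)$, $u\in\cU(\fn_-)$ one has $\HC(u_+u)=\HC([u_+,u])\in\cU([\fg,\fg])\cap\cS(\fh)=\cS(\fh')$; hence if $\langle\mu,\fh'\rangle=0$ then $\HC(u_+u)(\lambda+t\mu)=\HC(u_+u)(\lambda)$, so any $v\in M(\lambda)^1$ already lies in $M(\lambda)^i$ for all $i$. Your phrasing via ``products of linear forms of the shape $\langle\lambda+\rho,\gamma^\vee\rangle-\text{const}$'' invokes the Kac--Kazhdan factorization, which is only available in the symmetrizable case --- and Proposition~\ref{prop1} does \emph{not} assume symmetrizability. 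What you actually need (and what your corank argument really uses) is the much simpler fact that the \emph{entries} of the Shapovalov matrix lie in $\cS(\fh')$; this makes the matrix a function of $t^2$ alone once $\mu|_{\fh'}=0$, so its Smith normal form over $R$ has only even exponents.

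For (ii) the paper takes a genuinely different, and much more elementary, route that bypasses the Jantzen filtration entirely. Given a self-extension $N$ and a simple root $\alpha$ with $k:=\langle\lambda,\alpha^\vee\rangle+1$, the integrality hypothesis forces $L(\lambda)_{\lambda-k\alpha}=0$, hence $N_{\lambda-k\alpha}=0$, hence $f^kN_\lambda=0$. Since $\fn_+N_\lambda=0$, one gets $\HC(e^kf^k)N_\lambda=0$; but $\HC(e^kf^k)$ is an explicit polynomial in $h=\alpha^\vee$ with \emph{simple} roots (formulas~(\ref{HCef})), so $\alpha^\vee$ acts diagonally on $N_\lambda$. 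This forces $\langle\Upsilon(N),\alpha^\vee\rangle=0$ for every simple $\alpha$, and the $\alpha_i^\vee$ span $\fh'$. Your Jantzen-based plan can be salvaged, but only because at the weight $\nu_i=k\alpha_i$ the Shapovalov matrix is $1\times1$ with entry exactly $\HC(e_i^kf_i^k)$ --- so the ``determinant'' computation reduces to the very same polynomial. As written, though, your argument has two issues: you again lean on the general Kac--Kazhdan factorization (unavailable here), and the inference ``$\upsilon(\det B_{\nu_i})$ is odd $\Rightarrow M^1_{\nu_i}\neq M^2_{\nu_i}$'' is not valid in general --- an odd total $\sum_r\dim M^r$ does not by itself force a jump between levels $1$ and $2$. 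You need the stronger fact that the single diagonal entry has $t$-valuation exactly $1$, which is precisely the simple-roots statement for $\HC(e^kf^k)$.
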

\begin{proof}
For (i) fix $\mu$ such that $\langle\mu,\fh'\rangle=0$,
set $\mu'=0$ and
retain notation of~\ref{KMVirSchw}. In the light of~\Prop{corally},
it is enough to verify that $M(\lambda)^1=M(\lambda)^2$.
Denote the highest
weight vector of $M_R(\lambda+t\mu)$ by $v_{\lambda+t\mu}$
and its image, the highest weight vector of $M(\lambda)$,
by $v_{\lambda}$. Let $u\in\cU(\fn_-)$ be such that
 $uv_{\lambda}\in M(\lambda)^1$. Then for any $u_+\in\cU(\fn_+)$
one has $\HC(u_+u)(\lambda)=0$. Since $\HC(u_+u)=\HC([u_+,u])\in
\cU([\fg,\fg])$, we conclude that $\HC(u_+u)\in\cS(\fh')$ and
so $\HC(u_+u)(\lambda+t\mu)=\HC(u_+u)(\lambda)=0$.
Hence $uv_{\lambda+t\mu}\in M_R(\lambda+t\mu)^i$ for all $i$, and thus
$uv_{\lambda}\in M(\lambda)^2$, as required.

(ii) Assume that $N$ is an extension of $L(\lambda)$ by $L(\lambda)$.
Fix a simple root $\alpha$ and let
$e,h:=\alpha^{\vee}, f$ be the Chevalley generators of $\fg$
corresponding to $\alpha$. Recall that $e^2=f^2=0$
if $\alpha$ is odd and isotropic. Otherwise one has
\begin{equation}\label{HCef}
\begin{array}{ll}\HC(e^kf^k)=k!h(h-1)\ldots (h-(k-1)), &
\text{ if $\alpha$ is even},\\
\HC(e^{2k}f^{2k})=(-1)^k2^kk!h(h-2)\ldots (h-2(k-1)), &
\text{ if  $\alpha$ is odd non-isotropic},\\
\HC(e^{2k+1}f^{2k+1})=(-1)^k2^kk!h(h-2)\ldots (h-2k), &
\text{ if $\alpha$ is odd non-isotropic},\\
\HC(ef)=h, &
\text{ if $\alpha$ is odd isotropic},
\end{array}\end{equation}
where $\HC$ stands for the Harish-Chandra projection defined in
Subsection~\ref{HC}.
Set $k:=\langle \lambda,\alpha^{\vee}\rangle+1$.
From~(\ref{HCef}) it follows that
$L(\lambda)_{\lambda-k\alpha}=0$ so $N_{\lambda-k\alpha}=0$.
For $v\in N_{\lambda}$ one has $f^kv\in N_{\lambda-k\alpha}=0$
so $e^kf^kv=0$; however, $\fn_+v=0$ so
$e^kf^kv=\HC(e^kf^k)v$. Therefore the polynomial $P(h):=\HC(e^kf^k)
\in\mathbb{C}[h]$
annihilates $N_{\lambda}$.  From~(\ref{HCef}) one sees that
$P(h)$ does not
have multiple roots. Since $P(h)N_{\lambda}=0$, we conclude that
$h=\alpha^{\vee}$ acts diagonally on $N_{\lambda}$.
Since $\fh'$  is spanned by the simple coroots,
$\fh'$ acts diagonally on $N_{\lambda}$.
This means that $\langle\mu,\fh'\rangle=0$ for any
$\mu\in\im\Upsilon_{L(\lambda)}$.
\end{proof}

\subsubsection{}
\begin{rem}{remprop1}
The  exact sequence $0\to L(\lambda)\to N\to L(\lambda)\to 0$
splits over $[\fg,\fg]$ iff $\Upsilon_{L(\lambda)}$ maps
this sequence to $\mu\in \fh^*$  such that $\langle\mu,\fh'\rangle=0$.
\end{rem}

\subsection{Symmetrizable case}
\label{symmKm}
Assume that the matrix $A$ is symmetrizable, i.e.
for some invertible diagonal $n\times n$-matrix $D$
the matrix $DA$ is symmetric.
In this case the Lie superalgebra $\fg=\fg(A,\tau)$ admits
a non-degenerate invariant bilinear form $(-,-)$~\cite{Kbook}, Chapter 2.
We denote  by $(-,-)$ also the induced non-degenerate bilinear form on
$\fh^*$. For $X\subset \fh^*$ we set
$X^{\perp}:=\{\mu\in\fh^*| (\mu,X)=0\}$.

We denote by $\Delta_+$ (resp. $\Delta_-$) the
set of weights of $\ad\fh$ on $\fn_+$ (resp. $\fn_-$).
Introduce $\rho\in\fh^*$ by the conditions
$\langle \rho,\alpha^{\vee}_i\rangle=
a_{ii}/2$ for each simple coroot $\alpha^{\vee}_i$.
Notice that for any simple non-isotropic root $\alpha$
and any $\mu\in\fh^*$ one has
$\langle \mu,\alpha^{\vee}\rangle=2(\alpha,\mu)/(\alpha,\alpha)$.
For any non-isotropic $\alpha\in\Delta_+$ introduce $\alpha^{\vee}\in\fh$
by the condition
$\forall \mu\in\fh^*\ \
\langle \mu,\alpha^{\vee}\rangle=\frac{2(\mu,\alpha)}{(\alpha,\alpha)}$.

Recall that a $\fg$-module (or $[\fg,\fg]$-module)
$N$ is called {\em restricted}
if for every $v\in N$, we have $\fg_{\alpha}v=0$
for all but a finite number of positive roots $\alpha$.

Recall the standard construction of the Casimir operator
(\cite{Kbook}, Chapter 2). Choose the dual bases $\{h_i\}, \{h^i\}$
in $\fh$ (i.e., $(h_i,h^j)=\delta_{ij}$) and for each positive root
$\alpha\in\Delta_+$ choose the dual bases $\{a_i\}$ in $\fg_{\alpha}$
and $\{b_i\}\in\fg_{-\alpha}$ (i.e., $(a_i,b_j)=\delta_{ij}$).
Let $\rho^*\in\fh$ be such that $\langle \mu,\rho^*\rangle=(\mu,\rho)$
for each $\mu\in\fh^*$.
Define the Casimir operator by the formula
$$Cas:=2\rho^*+
\sum_i h_ih^i+2\sum_{\alpha\in\Delta_+}\sum_i b_ia_i$$
Note that  the Casimir operator
acts on any restricted $\fg$-module by a $\fg$-endomorphism.

\subsubsection{}\label{CC}
Recall the formula for Shapovalov determinant
for an arbitrary symmetrizable $\fg(A,\tau)$~\cite{GK}
(in the non-super case it was proven in~\cite{KK}):
\begin{equation}\label{shap}
det_{\nu}(\lambda)=\prod_{\alpha\in\Delta_+}\prod_{n=1}^{\infty}
\bigl(2(\lambda+\rho,\alpha)-n(\alpha,\alpha)\bigr)^{(-1)^{p(\alpha)
(n+1)}P(\nu-n\alpha)},
\end{equation}
where $p(\alpha)$ is the parity of $\alpha$ and $P$ is
the partition function for $\fg$, i.e. $P(\mu)$
is the number  of partitions of $\mu$ in a sum
of positive roots, where each odd root occurs at most once,
counting their multiplicities. Observe that, if
$\alpha, 2\alpha\in \Delta_+$ and $p(\alpha)=\ol{1}$, then
the factor corresponding to $(\alpha,2n)$ cancels with
the factor corresponding to $(2\alpha,n)$.

\subsubsection{}\label{setS}
Introduce $S\subset \fh^*$ by the condition: $\lambda\in S$ iff

(S1) if $\alpha\in\Delta_+$ is even non-isotropic
and $\alpha/2$ is not an odd root, then
$\langle\lambda+\rho, \alpha^{\vee}\rangle\not\in\mathbb{Z}_{<0}$;

(S2) if $\alpha\in\Delta_+$ is odd non-isotropic then
$\langle\lambda+\rho, \alpha^{\vee}\rangle$ is not a negative odd integer;

(S3) $(\lambda+\rho, \beta)\not=0$ for every isotropic root $\beta$.

\subsubsection{}
\begin{lem}{corsetS}
(i) If $\lambda,\lambda'\in S$ are such that
$[M(\lambda):L(\lambda')]\not=0$ then $\lambda=\lambda'$.

(ii) $\Ext^1(L(\lambda'),L(\lambda))=0$ for  $\lambda,\lambda'\in S$,
$\lambda\not=\lambda'$.
\end{lem}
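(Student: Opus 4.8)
The plan is to derive both statements from the Shapovalov determinant formula~(\ref{shap}) together with the criterion of weak admissibility~(\Lem{critwa}) and the vanishing results of Section~\ref{sectJan}. The key observation is that the conditions (S1)--(S3) defining $S$ are precisely tailored so that no factor in~(\ref{shap}) vanishes in a way that would force a nontrivial subquotient relation. Concretely, for part (i), suppose $[M(\lambda):L(\lambda')]\neq 0$ with $\lambda\neq\lambda'$; then $\lambda'<\lambda$, so $\lambda-\lambda'=\nu\in Q_+\setminus\{0\}$, and $M'(\lambda)_{\lambda'}\neq 0$, hence $\det B_\nu(\lambda)=0$ by Subsection~\ref{shdet}. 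By~(\ref{shap}) this means $2(\lambda+\rho,\alpha)=n(\alpha,\alpha)$ for some $\alpha\in\Delta_+$ and some $n\geq 1$ with $P(\nu-n\alpha)\neq 0$; I would then split into the even non-isotropic, odd non-isotropic, and isotropic cases, and in each case show that (S1), (S2), or (S3) respectively is violated, using the identity $\langle\lambda+\rho,\alpha^\vee\rangle = 2(\lambda+\rho,\alpha)/(\alpha,\alpha)$ for non-isotropic $\alpha$. Some care is needed with the cancellation remark in~\ref{CC} (when $\alpha,2\alpha\in\Delta_+$, $p(\alpha)=\bar 1$): one must check that the surviving factors are still controlled by (S1)--(S2), and that the case "$\alpha$ even non-isotropic but $\alpha/2$ an odd root" is handled through the $\alpha/2$ condition rather than the $\alpha$ condition.

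For part (ii), the strategy is to reduce to part (i) via~\Cor{corlala}, i.e. to show that every $\lambda\in S$ is weakly admissible; then for distinct $\lambda,\lambda'\in S$ the vanishing $\Ext^1(L(\lambda'),L(\lambda))=0$ follows immediately. To show $\lambda\in S$ is weakly admissible, I invoke~\Lem{critwa}: $\lambda$ fails to be weakly admissible iff there is $\lambda''\in\lambda+Q_+$ with $\det B_{\lambda''-\lambda}(\lambda'')=0$. Writing $\mu=\lambda''$, the vanishing of the Shapovalov determinant at $\mu$ with $\mu-\lambda\in Q_+$ again means, by~(\ref{shap}), that $2(\mu+\rho,\alpha)=n(\alpha,\alpha)$ for some $\alpha\in\Delta_+$, $n\geq 1$ with $P((\mu-\lambda)-n\alpha)\neq 0$. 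I would then argue, as in the proof of Thm.~2 of~\cite{KK}, that since $L(\mu)$ would be a subquotient of $M(\lambda)$ (this is the content of the KK-type argument), the real issue is the same determinant computation as in part (i), and conditions (S1)--(S3) applied at $\lambda$ (propagated along the relation $\mu\in\lambda+Q_+$) again yield a contradiction. Alternatively, and more cleanly, one can avoid the critical-weight subtleties here because for symmetrizable $\fg(A,\tau)$ the determinants~(\ref{shap}) are all nonzero, so the Jantzen filtration is non-degenerate and the structure of $M(\lambda)$ is governed entirely by the linear equations $2(\lambda+\rho,\alpha)=n(\alpha,\alpha)$.

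The main obstacle I anticipate is the bookkeeping in the odd and especially the isotropic cases: the partition function $P$ counting partitions "where each odd root occurs at most once" behaves differently from the classical one, and one must be careful that the exponent $(-1)^{p(\alpha)(n+1)}P(\nu-n\alpha)$ can in principle be negative (a pole rather than a zero of the "determinant"), so "$\det B_\nu(\lambda)=0$" must be read correctly — one wants the order of vanishing, i.e. $\upsilon(\det B_\nu(\tilde\lambda))>0$ in the notation of~(\ref{sumformula}), and this requires knowing that the contributions do not cancel. This is exactly where the cancellation remark following~(\ref{shap}) is relevant and must be used, and where the hypothesis $a_{ii}\neq 0$ for $i\notin\tau$ (so that even simple roots are non-isotropic) keeps the argument aligned with conditions (S1)--(S3). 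Once that is pinned down, the case analysis is routine and the reduction to~\Cor{corlala} in part (ii) is immediate.
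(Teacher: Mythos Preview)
Your approach to part (i) has a genuine gap. From $\det B_\nu(\lambda)=0$ and the formula~(\ref{shap}) you correctly extract a pair $(\alpha,n)$ with $2(\lambda+\rho,\alpha)=n(\alpha,\alpha)$, $n\geq 1$. But for non-isotropic $\alpha$ this says $\langle\lambda+\rho,\alpha^\vee\rangle=n\in\mathbb{Z}_{>0}$, which is perfectly compatible with (S1) and (S2): those conditions only forbid \emph{negative} integer values. (Think of $\fg=\fsl_2$ with $\langle\lambda+\rho,\alpha^\vee\rangle=2$: here $\lambda\in S$, $M(\lambda)$ is reducible, and nothing about $\lambda$ is contradicted.) The contradiction must be obtained at $\lambda'$, not at $\lambda$. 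This is exactly what the standard chain argument of~\cite{Jan},\cite{KK} provides: one gets a chain $\lambda=\nu_0>\nu_1>\cdots>\nu_r=\lambda'$ in which the \emph{last} step has the form $\lambda'=\nu'-m_\alpha\alpha$ with $\langle\nu'+\rho,\alpha^\vee\rangle=m_\alpha>0$ (non-isotropic case) or $\lambda'=\nu'-\alpha$ with $(\nu'+\rho,\alpha)=0$ (isotropic case); in either case one computes $\langle\lambda'+\rho,\alpha^\vee\rangle=-m_\alpha$, respectively $(\lambda'+\rho,\alpha)=0$, and \emph{that} violates (S1)/(S2)/(S3) at $\lambda'$. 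Your single-step determinant argument does not see this; the chain (equivalently, the remark in~\ref{CC'}) is not optional bookkeeping but the mechanism that transports the condition from the top to the bottom weight.

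For part (ii), your detour through weak admissibility and~\Cor{corlala} is more circuitous than needed and, as written, does not go through: you assert that the determinant condition at $\mu\in\lambda+Q_+$ can be ``propagated'' back to (S1)--(S3) at $\lambda$, but no mechanism for that propagation is given (and your parenthetical ``since $L(\mu)$ would be a subquotient of $M(\lambda)$'' has the inclusion the wrong way round). Once (i) is proved correctly, it already says that any $\lambda'\in S$ is weakly admissible in the strongest sense (no $M(\mu)$ with $\mu\neq\lambda'$ has $L(\lambda')$ as a subquotient), so~\Cor{corlala} does then apply. The paper's route is shorter still: apply~\Lem{proplala} directly --- a non-split extension forces either $[M(\lambda):L(\lambda')]\neq 0$ or $[M(\lambda'):L(\lambda)]\neq 0$, and (i) kills both.
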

\begin{proof}
For $\lambda\in S$ one has $(\lambda+\rho, \beta)\not=0$
for any isotropic root $\beta$. Using~(\ref{shap}) and the standard
argument~\cite{Jan},\cite{KK}, we conclude that
if $[M(\lambda):L(\nu)]\not=0$ and $\nu\not=\lambda$, then
either $\nu=\nu'-\alpha$, where $\alpha$ is isotropic
and $(\nu'+\rho,\alpha)=0$, or
 $\nu=\nu'-k\alpha$ for some non-isotropic root
$\alpha\in\Delta^+$ such that
$k:=\langle \nu'+\rho,\alpha^{\vee}\rangle$
is a positive (resp. odd positive)
integer if $\alpha$ is  even  such that
$\alpha/2$ is not an odd root (resp. $\alpha$ is odd).
 In the first case $(\nu+\rho,\alpha)=0$
 and in the second case $\langle \nu+\rho,\alpha^{\vee}\rangle=-k$.
This proves (i). Combining~\Lem{proplala} and (i) we get (ii).
\end{proof}

\subsubsection{}\label{CC'}
\begin{defn}{defmi}
For $\lambda\in\fh^*$ let $C'(\lambda)\subset \Delta_+\times \mathbb{Z}_{>0}$
consist of the pairs
$(\alpha,m)$, such that  $2(\lambda+\rho,\alpha)=m(\alpha,\alpha)$ and

(i) if $\alpha$ is even, then $\alpha/2$
is not an odd root;

(ii) if $\alpha$ is odd and non-isotropic, then
$m$ is odd.

By the argument of~\cite{KK} Thm. 2, if $(\alpha,m)\in C'(\lambda)$ and
$\alpha$ is not isotropic (resp. $\alpha$ is isotropic), then $M(\lambda)$
has a singular vector of weight $\lambda-m\alpha$ (resp. $\lambda-\alpha$),
and, moreover, if $[M(\lambda):L(\lambda-\nu)]\not=0$, then either
$\nu=0$ or $[M(\lambda-m\alpha):L(\lambda-\nu)]\not=0$
for some $(\alpha,m)\in C'(\lambda)$. In particular,
$M(\lambda)$ is irreducible iff $C'(\lambda)$ is empty.

Let $C(\lambda)$ be the projection of $C'(\lambda)$ to $\Delta_+$, i.e.
$C(\lambda):=\{\alpha\in \Delta_+|\ \exists m: (\alpha,m)\in C'(\lambda)\}$.
We call $\alpha\in C(\lambda)$ {\em $\lambda$-minimal} if for some pair
$(\alpha,m)\in C'(\lambda)$ one has
\begin{equation}\label{lambdamin}
\forall (\beta, n)\in C'(\lambda)\setminus\{(\alpha,m)\}\ \
[M(\lambda-n\beta):L(\lambda-m\alpha)]=0.\end{equation}
\end{defn}

Notice that,  if an isotropic root $\alpha$ is $\lambda$-minimal, then
formula~(\ref{lambdamin}) holds for the pair $(\alpha,1)$.
Observe that for a non-isotropic root $\alpha$ there exists at most one value
of $m$ such that $(\alpha,m)\in C'(\lambda)$.

Let  $\alpha$ be such that $m\alpha$ is a minimal element
of the set $\{n\beta| (\beta,n)\in C'(\lambda)\}$ with respect to the order
introduced in~\ref{Qpm}.
If $\alpha$ is such that $m\alpha=n\beta$ for $(\beta,n)\in C'(\lambda)$
forces $\alpha=\beta$, then $\alpha$ is $\lambda$-minimal.
Notice that for isotropic $\alpha$, if $m\alpha$ is a minimal element
of the set $\{n\beta| (\beta,n)\in C'(\lambda)\}$, then $m=1$
and $\alpha\not=n\beta$ for $(\beta,n)\in C'(\lambda)\setminus\{(\alpha,1)\}$
so $\alpha$ is $\lambda$-minimal. If $\alpha\in C(\lambda)$ is a simple root,
then $\alpha$ is  $\lambda$-minimal, because
$m\alpha=n\beta$ forces $\alpha=\beta$.

\subsubsection{}
\begin{prop}{propalmin}
Let $\fg=\fg(A,\tau)$, where $A$ is  a symmetrizable matrix,
such that $a_{ii}\not=0$ for $i\not\in\tau$.
Then for any $\lambda\in\fh^*$ one has
$$C(\lambda)^{\perp}\subset
\im\Upsilon_{L(\lambda)}\subset \{\mu\in\fh^*|\ (\mu,\alpha)=0\
\text{ for all $\lambda$-minimal $\alpha$}\}.$$
\end{prop}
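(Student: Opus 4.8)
The plan is to establish the two inclusions in \Prop{propalmin} separately, both by reducing to the characterization of $\im\Upsilon_{L(\lambda)}$ provided by \Prop{corally}: namely $\mu\in\im\Upsilon_{L(\lambda)}$ iff $M(\lambda)^1=M(\lambda)^2$ for the Jantzen-type filtration coming from $\tilde\lambda=\lambda+t\mu+t^2\mu'$ with $\mu'$ chosen $\lambda$-generic. The Shapovalov determinant formula~(\ref{shap}) and the sum formula~(\ref{sumformula}) will be the main computational tools, since $\dim M(\lambda)^1_{\lambda-\nu}-\dim M(\lambda)^2_{\lambda-\nu}$ measures the ``order-one part'' of the vanishing of $\det B_\nu$ along the curve $t\mapsto\lambda+t\mu+t^2\mu'$.

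\textbf{First inclusion} ($C(\lambda)^\perp\subset\im\Upsilon_{L(\lambda)}$). Fix $\mu$ with $(\mu,\alpha)=0$ for all $\alpha\in C(\lambda)$, and pick $\mu'$ to be $\lambda$-generic. By \Prop{corally} it suffices to show $M(\lambda)^1=M(\lambda)^2$, i.e. $\dim M(\lambda)^1_{\lambda-\nu}=\dim M(\lambda)^2_{\lambda-\nu}$ for every $\nu\in Q_+$. By the sum formula this difference equals the number of factors of $\det B_\nu(\tilde\lambda)$ that vanish to order exactly one in $t$; equivalently, using~(\ref{shap}), it is controlled by those roots $\alpha\in\Delta_+$ and positive integers $n$ for which the linear factor $2(\lambda+\rho,\alpha)-n(\alpha,\alpha)$ vanishes at $t=0$. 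Such an $(\alpha,n)$ gives a factor $2(\lambda+\rho,\alpha)-n(\alpha,\alpha)+2t(\mu,\alpha)+O(t^2)$. By definition of $C'(\lambda)$ and $C(\lambda)$, any such $\alpha$ (after matching up the $\alpha/2$ cancellation in~\ref{CC} and the parity constraints) lies in $C(\lambda)$, so $(\mu,\alpha)=0$ and the factor actually vanishes to order $\geq 2$. Hence $\det B_\nu(\tilde\lambda)$ has the same order of vanishing modulo $t^2$ as modulo $t$ — more precisely, $M(\lambda)^1_{\lambda-\nu}=M(\lambda)^2_{\lambda-\nu}$, because the linear-in-$t$ correction never removes a zero. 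The one subtlety to handle carefully is the bookkeeping of the exponents $P(\nu-n\alpha)$ and the isotropic/even-with-$\alpha/2$-odd cancellations from~\ref{CC}; this is why the definition of $C'(\lambda)$ was set up with clauses (i),(ii).

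\textbf{Second inclusion} ($\im\Upsilon_{L(\lambda)}\subset\{\mu\mid(\mu,\alpha)=0\text{ for all }\lambda\text{-minimal }\alpha\}$). Suppose $\alpha$ is $\lambda$-minimal via a pair $(\alpha,m)\in C'(\lambda)$ satisfying~(\ref{lambdamin}), and suppose $(\mu,\alpha)\neq 0$; we must show $\mu\notin\im\Upsilon_{L(\lambda)}$, i.e. $M(\lambda)^1\neq M(\lambda)^2$ in degree $\nu:=m\alpha$. The minimality condition~(\ref{lambdamin}) ensures that among the factors of $\det B_{m\alpha}$ vanishing at $\lambda$, the one coming from $(\alpha,m)$ is not ``shadowed'': the singular vector of weight $\lambda-m\alpha$ is not already in $M(\lambda-n\beta)$ for any other $(\beta,n)\in C'(\lambda)$, so it contributes a factor vanishing to order exactly one along our curve because $(\mu,\alpha)\neq 0$. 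Concretely, I would show $\dim M(\lambda)^1_{\lambda-m\alpha}>\dim M(\lambda)^2_{\lambda-m\alpha}$ by checking via~(\ref{sumformula}) that the sum $\sum_r\dim M(\lambda)^r_{\lambda-m\alpha}$ equals $\upsilon(\det B_{m\alpha}(\tilde\lambda))$ and that the contribution to this valuation increases by exactly one when one passes from the constant curve to $\tilde\lambda=\lambda+t\mu+t^2\mu'$, precisely because the factor $2(\lambda+\rho,\alpha)-m(\alpha,\alpha)+2t(\mu,\alpha)$ has a simple zero and is not cancelled (here one uses $\lambda$-minimality, plus the remarks after \Defn{defmi} reducing the isotropic case to $m=1$, and uniqueness of $m$ for non-isotropic $\alpha$). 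Then $M(\lambda)^1\neq M(\lambda)^2$, so by \Prop{corally} $\mu\notin\im\Upsilon_{L(\lambda)}$.

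\textbf{The main obstacle} will be the second inclusion: the first is essentially "a linear factor vanishing at $\lambda$ and orthogonal to $\mu$ keeps vanishing," which is soft, but the second requires genuinely extracting from the Jantzen sum formula that a specific nonzero linear term survives, and this depends on disentangling the combinatorics of the partition-function exponents in~(\ref{shap}) so that the $\lambda$-minimal factor's simple zero is visible in $\dim M(\lambda)^1_{\lambda-m\alpha}-\dim M(\lambda)^2_{\lambda-m\alpha}$ rather than being masked by higher-order vanishing of other factors in the same degree. I would isolate this as a lemma about $\upsilon(\det B_{m\alpha}(\lambda+t\mu+t^2\mu'))$ versus $\upsilon(\det B_{m\alpha}(\lambda+t^2\mu'))$, arguing that the two differ by exactly the number of $\lambda$-minimal factors $(\alpha,m)$ with $(\mu,\alpha)\neq 0$, and appealing to the singular-vector structure recalled in \Defn{defmi} to see that each such factor is "primitive" — not a product of lower ones — so it cannot be absorbed.
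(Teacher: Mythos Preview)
Your plan has a genuine gap that affects both inclusions. The sum formula~(\ref{sumformula}) gives only the \emph{total} $\sum_{r\geq 1}\dim M(\lambda)^r_{\lambda-\nu}=\upsilon(\det B_\nu(\tilde\lambda))$; it does not determine $\dim M(\lambda)^1_{\lambda-\nu}$ and $\dim M(\lambda)^2_{\lambda-\nu}$ separately. So knowing that every linear factor of $\det B_\nu$ vanishing at $\lambda$ does so to order $\geq 2$ along $\tilde\lambda$ (your first-inclusion argument) tells you the valuation of the determinant, but does not by itself force $M(\lambda)^1=M(\lambda)^2$: the Smith normal form of the Shapovalov matrix over $R$ could in principle have an odd invariant factor, since the matrix \emph{entries} still carry linear-in-$t$ contributions from $t\mu$ even when the determinant's vanishing factors do not. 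Likewise for the second inclusion, showing that one factor has a simple zero tells you the total valuation drops by $1$ compared to the $(\mu,\alpha)=0$ case, but not that this drop is reflected in $d^1-d^2$ rather than in some $d^r$ with $r\geq 3$.

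The paper closes this gap by two different devices. For the second inclusion it does not compare weight-space dimensions at all; instead it passes to Jordan--H\"older multiplicities. The character-level sum formula
\[
\sum_{i\geq 1}\ch M(\lambda)^i=\sum_{(\beta,n)\in C'(\lambda)}(-1)^{(n-1)p(\beta)}k(\beta)\,\ch M(\lambda-n\beta),
\qquad k(\beta)=\begin{cases}1&(\mu,\beta)\neq 0,\\2&(\mu,\beta)=0,\end{cases}
\]
yields, upon taking $[\,-:L(\lambda-m_\alpha\alpha)]$ and invoking $\lambda$-minimality~(\ref{lambdamin}) to kill every term on the right except $(\alpha,m_\alpha)$, that $\sum_i[M(\lambda)^i:L(\lambda-m_\alpha\alpha)]=1$. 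Since $M(\lambda)^1=M'(\lambda)$ already contributes $1$, this forces $[M(\lambda)^2:L(\lambda-m_\alpha\alpha)]=0$ and hence $M(\lambda)^1\neq M(\lambda)^2$. This is where $\lambda$-minimality enters, and it is much cleaner than tracking dimensions at the single weight $\lambda-m\alpha$, where many other subquotients contribute.

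For the first inclusion the paper introduces a real deformation parameter $s$, setting $\tilde\lambda(s)=\lambda+st\mu+t^2\mu'$ and $d^j(s):=\dim\mathcal{F}^j_s(M)_{\lambda-\nu}$. The sum formula shows $\sum_j d^j(s)$ is independent of $s$ (since $(\mu,\alpha)=0$ for all $\alpha\in C(\lambda)$ gives $k(\alpha)\equiv 2$ regardless of $s$), while each $d^j(s)$, being the corank of a matrix depending polynomially on $s$, takes its maximum for all but finitely many $s$; together these force every $d^j(s)$ to be constant in $s$. At $s=0$ all entries of the Shapovalov matrix lie in $\mathbb{C}[t^2]$, so the invariant factors are even powers of $t$ and $d^1(0)=d^2(0)$ holds trivially; hence $d^1(1)=d^2(1)$. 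This semi-continuity argument is the idea missing from your first-inclusion sketch, and contrary to your assessment it is the first inclusion, not the second, that demands the more delicate reasoning.
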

\begin{proof}
Let $\mu'$ be such that $(\mu',\beta)\not=0$ for all $\beta\in\Delta$.
Using the formula~(\ref{shap}) and the observation after  the formula
we obtain
$$det_{\nu}(\lambda+t\mu+t^2\mu')=a\prod_{(\alpha,m)\in C'(\lambda)}
(t(\mu,\alpha)+t^2(\mu',\alpha))^{(-1)^{p(\alpha)(n+1)P(\nu-n\alpha)}},
$$
where $a$ is an invertible element of the local ring $R$.
By~(\ref{sumformula}) the following sum formula holds
$$\sum_{i=1}^{\infty}\ch M(\lambda)^i=
\sum_{(\alpha,m)\in C'(\lambda)}
 (-1)^{(m-1)p(\alpha)}k(\alpha)\ch M(\lambda-m\alpha),
$$
where $k(\alpha)=1$ if $(\mu,\alpha)\not=0$ and $k(\alpha)=2$
if $(\mu,\alpha)=0$. Note that $(-1)^{(m-1)p(\alpha)}=-1$ forces
$\alpha$ to be odd isotropic.

To verify the second inclusion, assume that $\alpha$ is
$\lambda$-minimal and $(\mu,\alpha)\not=0$. The above formula implies
$$\sum_{i=1}^{\infty} [M(\lambda)^i:L(\lambda-m_{\alpha}\alpha)]=1,$$
where $m_{\alpha}:=2\langle \lambda+\rho,\alpha^{\vee}\rangle$
if $\alpha$ is not isotropic and $m_{\alpha}:=1$ otherwise. Therefore
$[M(\lambda)^1:L(\lambda-m_{\alpha}\alpha)]=1, \ \
[M(\lambda)^2:L(\lambda-m_{\alpha}\alpha)]=0$.
Hence, by~\Prop{corally}, $\mu\not\in\im\Upsilon_{L(\lambda)}$ as required.

Now take $\mu$ such that $(\mu,\alpha)=0\ \text{ for all }
\alpha\in C(\lambda)$. Retain notation of~\ref{KMVirSchw}. Set
$$M:=M(\lambda),\ \ \tilde{M}(s):=M_R(\lambda+st\mu+t^2\mu')\ \text{ for }
s\in\mathbb{R},$$
and identify $\tilde{M}(s)/t\tilde{M}(s)$ with $M$.
Let $\{\tilde{M}(s)^i\}$ be the Jantzen filtration for $\tilde{M}(s)$ and
let $\{\cF^i_s(M)\}$ be the image of this filtration in $M$. Then
$M(\lambda)^i=\cF^i_1(M)$. In the light of~\Prop{corally}
$$\mu\in\im\Upsilon_{L(\lambda)}\ \Longleftrightarrow\ \cF^1_1(M)=\cF^2_1(M).$$
Clearly, $\cF^1_0(M)=\cF^2_0(M)$; below we will deduce from this that
$\cF^1_1(M)=\cF^2_1(M)$.

Fix $\nu\in Q_+$ and set $d^j(s):=\dim \cF^j_s(M)_{\lambda-\nu}$.
Let us show that $d^j(s)$ are constant functions of $s$.
Indeed, since $(\alpha,s\mu)=0$ for all $\alpha\in C(\lambda)$,
the above sum formula  gives
$$\sum_{j=1}^{\infty}\ch \cF^j_s(M)=\sum_{(\alpha,m)\in C'(\lambda)}
(-1)^{(m-1)p(\alpha)}2\ch M(\lambda-m\alpha).
$$
Since the right-hand side does not depend on $s$,
the sum $\sum_{j=1}^{\infty} d^j(s)$ does not depend on $s$.
Denote by $S_{\nu}(s)$ the Shapovalov matrix of $\cU(\fn_-)_{-\nu}$
evaluated at $\lambda+st\mu+t^2\mu'$ (the entries of $S_{\nu}(s)$
lie in $R[s]$).
Then for each value $s_0\in\mathbb{R}$ one has:
$d^j(s_0)=\dim \cF^j_{s_0}(M)_{\lambda-\nu}$ is equal
the corank of $S_{\nu}(s_0)$ modulo $t^j$. Set $m_j:=\max_s d^j(s)$.
Since the corank of matrix, depending on one real parameter,
takes its maximal value for almost all values of the parameter, one has
$d^j(s)=m_j$ for each $s\in \mathbb{R}\setminus J_j$, where
$J_j\subset \mathbb{R}$ is a finite set. Thus for some $s\in\mathbb{R}$
one has $d^j(s)=m_j$ for all $j$. Since
 $\sum_{j=1}^{\infty} d^j(s)<\infty$ is a constant,
 $\sum_{j=1}^{\infty} d^j(s)=\sum_{j=1}^{\infty} m_j$ so
$d^j(s)=m_j$ for any $s$.  Hence $d^j(s)$ does not depend on $s$.

Now $\cF^1_0(M)=\cF^2_0(M)$ gives $d^1(s)=d^2(s)$, hence
$\dim \cF^1_s(M)_{\lambda-\nu}=\dim \cF^2_s(M)_{\lambda-\nu}$ for all $s$.
Therefore $\cF^1_1(M)=\cF^2_1(M)$ and so
$\mu\in\im\Upsilon_{L(\lambda)}$; this establishes the first inclusion.
\end{proof}

\subsubsection{}
\begin{rem}{rem2}
Recall that $\fh'=\fh\cap [\fg,\fg]$ is spanned by $\Pi^{\vee}$.
From~\Rem{remprop1} we see that the exact sequence
 $0\to L(\lambda)\to N\to L(\lambda)\to 0$ splits over
$[\fg,\fg]$ iff $\Upsilon_{L(\Lambda)}$ maps this sequence to
$\mu\in \Delta^{\perp}$.
By~\Prop{propalmin}, $\Delta^{\perp}\subset \im\Upsilon_{L(\lambda)}$;
by above, the equality means that any self-extension of $L(\lambda)$
splits over $[\fg,\fg]$.
\end{rem}

\subsubsection{}
\begin{exa}{exsl3}
Let $\fg=\fsl_3$. Take $\lambda$ such that
$m:=\langle\lambda+\rho,\alpha^{\vee}\rangle\in\mathbb{Z}_{>0},
\langle\lambda+\rho,\beta^{\vee}\rangle=0$.
In this case, $C'(\lambda)=\{(\alpha, m),\ (\alpha+\beta,m)\}$
and $\alpha$ is the only $\lambda$-minimal root. ~\Prop{propalmin}
gives $0\subset \Upsilon_{L(\Lambda)}\subset \{\alpha\}^{\perp}$.
One can deduce from~\Prop{corally} that,  in, fact one has
$\Upsilon_{L(\Lambda)}=\{\alpha\}^{\perp}$, see~\Exa{exsl23}.
\end{exa}

\subsection{Generalized Verma modules}\label{genVerma}
Let $\fg=\fg(A,\tau)$, where $A$ is  a symmetrizable matrix,
such that $a_{ii}\not=0$ for $i\not\in\tau$.
This assumption means that every triple of Chevalley generators
$e_i,f_i, \alpha_i^{\vee}$ with $i\not\in\tau$
(resp. $i\in\tau, a_{ii}\not=0$; resp. $i\in\tau, a_{ii}=0$)
spans $\fsl_2$ (resp. $\osp(1,2)$; resp. $\fsl(1,1)$).
Retain notation of~\ref{assmA}.
For $J\subset \Pi$ let $Q_J$ be the $\mathbb{Z}$-span of $J$. Set
$$\fn_{\pm,J}:=\displaystyle\oplus_{\alpha\in Q_J}\fg_{\pm\alpha}, \ \
\fh_J:=\sum_{\alpha\in J}\mathbb{C}\alpha^{\vee}.$$
Note that $\fh_J=[\fn_{+},\fn_{-,J}]\cap\fh=[\fn_{+,J},\fn_{-,J}]\cap\fh$.

Fix $\lambda\in\fh^*$ such that $\langle \lambda,\fh_J\rangle=0$.
Let $\mathbb{C}_{\lambda}$ be a one-dimensional $\fh$-module corresponding
to $\lambda$, endowed by the trivial action of $\fn_++\fh_J+\fn_{-,J}$.
The {\em generalized Verma module } $M_J(\lambda)$ is
$$M_J(\lambda):=\Ind_{\fn_++\fh+\fn_{-,J}}^{\fg}\mathbb{C}_{\lambda}.$$

Retain notation of~\ref{CC'} and observe that $J\subset C(\lambda)$
and that the elements of $J$ are $\lambda$-minimal. Therefore,
$\Delta^{\perp}\subset\im\Upsilon_{L(\lambda)}\subset J^{\perp}$,
by~\Prop{propalmin}.
Fix $\mu,\mu'\in J^{\perp}$.
The Jantzen-type filtrations on the generalized Verma module $M_J(\lambda)$
can be defined as in~\ref{Jan}. Namely, we define
the generalized Verma module  $N:=M_{J,R}(\lambda+t\mu+t^2\mu')$,
denote by  $v_0$ its canonical highest-weight generator, and set
$N^i:=\{v\in N|\
\cU(\fn_+)v\cap Rv_0\subset Rt^iv_0\}$. Then $\{N^i\}$ is a decreasing
filtration of $N$ and its image in
$M_J(\lambda)=N/tN$ is the Jantzen-type filtration $\{M_J(\lambda)^i\}$.
Repeating the reasonings of Subsection~\ref{mapUps} we obtain that
$M_J^1$ is the
maximal proper submodule of $M_J(\lambda)$, that is
$L(\lambda)=M_J/M_J^1$, and that
$$\mu\in\im\Upsilon_{L(\lambda)}\ \Longleftrightarrow\
M_J(\lambda)^1=M_J(\lambda)^2,$$
where the Jantzen-type filtration $\{M_J(\lambda)^i\}$ is induced from
$M_{J,R}(\lambda+t\mu+t^2\mu')$ (for any $\mu'\in J^{\perp}$).
Furthermore, \Prop{propalmin} gives
\begin{equation}\label{eqgen1}
\Delta^{\perp}\subset\im\Upsilon_{L(\lambda)}\subset J^{\perp}.
\end{equation}

Similar facts hold for other generalized Verma modules
(that is the modules induced from a finite-dimensional representation
of $\fn_++\fh_J+\fn_{-,J}$).

\section{Complete reducibility for a symmetrizable Kac-Moody algebra}

In this section we prove Theorems~\ref{thm01},\ref{thm03},\ref{thm04}
(see~\ref{propLie1},~\ref{corP+},~\ref{thmrestin} respectively).

In this section $\fg=\fg(A)$ is a symmetrizable Kac-Moody algebra~\cite{Kbook}.
In this case $\tau=\emptyset$ and $a_{ii}=2, a_{ij}\in\mathbb{Z}_{\leq 0}$
for each $i,j\in I$. We can (and will) normalize
$(-,-)$ in such a way that $(\alpha_i,\alpha_i)\in\mathbb{Z}_{>0}$
for all $i$, so that $(\rho,\alpha_i)>0$.

The results of this section
extend to symmetrizable Kac-Moody superalgebras $\fg(A,\tau)$,
such that $A$ has the same properties as in the non-super case,
and, in addition, $a_{ij}\in2\mathbb{Z}$ if $i\in\tau$. This includes
$\osp(1,n)$ and ${\osp}(1,n)\hat{ }$.

\subsection{}
We retain notation of~\ref{extlala}.
Recall that a $[\fg,\fg]$-module $N$ is called {\em integrable}
if, for each $i\in I$,
the Chevalley generators $e_i, f_i$ act locally nilpotently on $N$.
This condition implies that $N$ is $\fh'$-diagonalizable.

Recall~(\cite{Kbook}, Chapter 10) that
$L(\lambda)$ is integrable iff $\lambda\in P_+$, where
$$P_+:=\{\lambda\in\fh^*|\ \langle
\lambda,\alpha^{\vee}\rangle\in\mathbb{Z}_{\geq 0}\
\text{ for each simple root } \alpha\}.$$

\subsection{}\label{P+}
\begin{prop}{corP+}
Let $N$ be  a restricted $\fg$-module such that $\fh$
and the Casimir operator act locally finitely and
such that any irreducible subquotient
of $N$ is of the form $L(\lambda)$ with $\lambda\in P_+$. Then
$N$ is completely reducible over $[\fg,\fg]$.
\end{prop}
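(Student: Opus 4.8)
I would prove this by running the argument of Lemma~\ref{lemcH} for the derived algebra $[\fg,\fg]$; note that that argument uses only the triangular decomposition $\cU(\fg)=\cU(\fn_-)\cU(\fh)\cU(\fn_+)$ and the finite generation of $\cU(\fn_+)$, not the finiteness of weight multiplicities (which fails for $[\fg,\fg]$ in the affine case, so Lemma~\ref{lemcH} cannot be quoted verbatim). Since the Casimir operator $Cas$ is central and acts locally finitely on $N$, I would first replace $N$ by a single generalized $Cas$-eigenspace; so assume $Cas-c$ acts locally nilpotently, whence every irreducible subquotient $L(\lambda)$ of $N$ has $(\lambda+\rho,\lambda+\rho)=c+(\rho,\rho)=:c'$.

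Next I would record that $N$ is integrable and $\fh'$-diagonalizable over $[\fg,\fg]$: each $e_i,f_i$ acts locally nilpotently on the successive (transfinite) socle layers of $N$, since these layers are sums of integrable irreducibles, and local nilpotence passes to $N$ itself because $\Ext^1$ between finite-dimensional modules over the $\fsl_2$-triple $\langle e_i,f_i,\alpha_i^{\vee}\rangle$ vanishes; the usual $\fsl_2$-lemma then makes every $\alpha_i^{\vee}$, hence all of $\fh'$, act diagonally on $N$. In particular the central element $K$ has only nonnegative eigenvalues on $N$, and, by the linkage principle, every highest weight $[\fg,\fg]$-submodule of $N$ is isomorphic to some $L(\lambda)$ with $\lambda\in P_+$: it is a quotient of a generalized Verma module $M(\mu)|_{[\fg,\fg]}$ with $\mu$ dominant integral, and any proper subquotient of $M(\mu)$ is an $L(\mu')$ with $\mu'$ non-dominant, which is excluded by hypothesis.

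The heart of the argument is then to produce, in $N$ and in each of its subquotients, an $\fh'$-eigenvector $w$ with $\fn_+w=0$. Starting from a nonzero $\fh$-weight vector $v$ of weight $\mu_0$, I would show that $\cU(\fn_+)v$ has only finitely many $\fh$-weights: each weight $\mu_0+\beta$, $\beta\in Q_+$, occurring in $N$ lies in some subquotient $L(\lambda)$, so $(\mu_0+\rho+\beta,\mu_0+\rho+\beta)\le c'$; since $\langle\mu_0,K\rangle\ge 0$ makes the pairing of $\mu_0+\rho$ with the imaginary direction positive while the form is positive on the ``finite'' directions of $Q_+$, this inequality leaves only finitely many $\beta$. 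A maximal weight vector $w$ of $\cU(\fn_+)v$ then satisfies $\fn_+w=0$; taking $w$ to be an $\fh'$-eigenvector and invoking the previous paragraph, $\cU([\fg,\fg])w\cong L(\lambda)$ with $\lambda\in P_+$. Hence $\Soc_{[\fg,\fg]}(N)\ne 0$, which is precisely what is needed to start the proof of Lemma~\ref{lemcH}.

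Finally one feeds in the $\Ext^1$-vanishing over $[\fg,\fg]$. For $\lambda\ne\mu$ in $P_+$ a nonsplit extension of $L(\lambda)$ by $L(\mu)$ either splits along $K$-eigenspaces (when the levels differ) or, since $k+h^{\vee}\ne 0$, carries a $\fg$-module structure via the Sugawara operator $D:=-L_0$, whence it splits by Corollary~\ref{corlala}; the self-extension case follows the same way from Proposition~\ref{prop1}(ii) and Remark~\ref{remprop1}, which give $\im\Upsilon_{L(\lambda)}=(\fh')^{\perp}$, i.e.\ every self-extension of $L(\lambda)$ splits over $[\fg,\fg]$. Plugging these into the transfinite bookkeeping of Lemma~\ref{lemcH} yields complete reducibility of $N$ over $[\fg,\fg]$. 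I expect the main obstacle to be the production of the highest weight vector: a restricted module need not be bounded, and a single $Cas$-eigenspace may still contain infinitely many pairwise non-isomorphic integrable $L(\lambda)$, so bounding the set of weights lying above a given weight genuinely requires local finiteness of $Cas$ (for finite type this step is Weyl's theorem, and the weight estimate is cleanest in the affine case that dominates the rest of the paper).
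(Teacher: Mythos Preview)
Your overall plan (feed the problem into the machinery of Lemma~\ref{lemcH}) matches the paper's, but you take a substantially more complicated route at the two key steps, and both of your detours are affine-specific while the proposition is stated for an arbitrary symmetrizable Kac-Moody algebra.

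\textbf{Finding an irreducible submodule.} You start from an arbitrary weight $\mu_0$ and try to bound $\{\beta\in Q_+:\mu_0+\beta\in\Omega(N)\}$ via the inequality $(\mu_0+\beta+\rho,\mu_0+\beta+\rho)\le c'$. This inequality is true but nontrivial (it needs something like the positivity of the contravariant form on integrable $L(\lambda)$), and you then extract finiteness from it by an argument that only makes sense in affine type. The paper bypasses all of this with a one-line observation: after fixing the Casimir eigenvalue, pick any $\lambda$ with $L(\lambda)\in\supp(N)$ (not an arbitrary weight $\mu_0$). If $\lambda+\beta\in\Omega(N)$ for some nonzero $\beta\in Q_+$, then $\lambda+\beta$ is a weight of some $L(\lambda')$ with $\lambda'\in P_+$, so $\lambda'\ge\lambda+\beta>\lambda$; but for comparable $\lambda,\lambda'\in P_+$ one has $c_{\lambda'}-c_\lambda=(\lambda'-\lambda,\lambda'+\lambda+2\rho)>0$, contradicting the fixed Casimir eigenvalue. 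Hence $\Omega(N)\cap(\lambda+Q_+)=\{\lambda\}$, so $\fn_+N_\lambda=0$ and any $\fh$-eigenvector in $N_\lambda$ generates an irreducible submodule. This works uniformly for all symmetrizable Kac-Moody algebras, with no weight estimates needed.

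\textbf{Vanishing of $\Ext^1$ for $\lambda\ne\lambda'$.} You invoke Sugawara to promote a $[\fg,\fg]$-extension to a $\fg$-extension and then quote Corollary~\ref{corlala}. Sugawara is affine-only. The paper instead uses the same Casimir inequality: a nonsplit extension forces $\lambda,\lambda'$ to be comparable (Lemma~\ref{proplala}), and comparable distinct $\lambda,\lambda'\in P_+$ have distinct Casimir eigenvalues, so the extension (on which the Casimir acts locally finitely by hypothesis) splits. Note that the Casimir is available here precisely because $N$ is given as a $\fg$-module; you do not need to manufacture a $\fg$-action from a $[\fg,\fg]$-action.

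Your integrability digression is also unnecessary: the paper never needs to know that $N$ itself is integrable, only that its simple subquotients are $L(\lambda)$ with $\lambda\in P_+$.
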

\begin{proof}
In view of~\Lem{lemcH}, it is enough to show that

(i) $\Ext^1_{[\fg,\fg]}(L(\lambda),L(\lambda'))=0$
for $\lambda,\lambda'\in P_+$,

(ii) $N$ contains an irreducible submodule.

Recall that the Casimir operator acts on $L(\mu)$
by the scalar $c_{\mu}:=(\mu+\rho,\mu+\rho)-(\rho,\rho)$~(\cite{Kbook},
Chapter 2).
For any $\lambda,\lambda'\in P_+$ such that $\lambda'>\lambda$ one has
$c_{\lambda'}-c_{\lambda}=(\lambda'-\lambda,\lambda'+\lambda+2\rho)>0$.
Thus $\Ext^1(L(\lambda),L(\lambda'))=0$ for $\lambda,\lambda'\in P_+,
\lambda\not=\lambda'$. Moreover, $\Ext^1_{[\fg,\fg]}(L(\lambda),L(\lambda))=0$
for $\lambda\in P_+$, by~\Prop{prop1} (ii). Hence (i) holds.

For (ii)  we may (and will) assume that $N$ is indecomposable.
Since the Casimir operator of $\fg$ acts on $N$ locally finitely,
$N$ admits a decomposition into generalized eigenspaces with respect
to this action. Since $N$ is indecomposable, the Casimir operator
has a unique eigenvalue on $N$. In particular, the Casimir
operator acts on all irreducible subquotients
of $N$ by the same scalar.
Let $\supp(N)$ be the set of irreducible subquotients of $N$.
By the assumption, $L(\lambda')\in\supp(N)$
forces $\lambda'\in P_+$.  Take
$\lambda$ such that $L(\lambda)\in\supp(N)$. By above,
 $L(\lambda')\not\in\supp(N)$ if
$\lambda'>\lambda$ or $\lambda'<\lambda$. Thus
by~(\ref{subqu}), $\Omega(N)\cap (\lambda+Q_+)=\{\lambda\}$
and so  $\fn_+N_{\lambda}=0$. Let $v\in N_{\lambda}$ be an
eigenvector of $\fh$. The submodule generated by $v$
is a quotient of $M(\lambda)$ and so, by above, it is isomorphic
to $L(\lambda)$. Hence $N$ contains an irreducible submodule.
The assertion follows.
\end{proof}

\subsubsection{}
\begin{thm}{thmrestin}
Let $\fg$ be a symmetrizable Kac-Moody algebra. Any
 restricted integrable $[\fg,\fg]$-module
is completely reducible and its irreducible submodules are of the form
$L(\lambda)$ with $\lambda\in P_+$.
\end{thm}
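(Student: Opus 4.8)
The plan is to reduce the statement to~\Prop{corP+}, which treats $\fg$-modules. Since the class of restricted integrable $[\fg,\fg]$-modules is closed under submodules, quotients and arbitrary direct sums, and a quotient of a semisimple module is semisimple, it suffices to prove complete reducibility for a cyclic module $M=\cU([\fg,\fg])v$. For such $M$ I would first check that $M$ is bounded. Restrictedness gives $\fn_+^{\geq N}v=0$ for $N\gg 0$, where $\fn_+^{\geq N}:=\oplus_{\htt\alpha\geq N}\fg_\alpha$, and a short induction on word length upgrades this to $\fn_+^{\geq N}\cU(\fn_+)v=0$; thus $\cU(\fn_+)v$ is a cyclic module over the finite-dimensional nilpotent algebra $\cU(\fn_+/\fn_+^{\geq N})$, and since the finitely many Chevalley generators $e_i$ act locally nilpotently and the occurring $\fh'$-weights are constrained by integrability, $\cU(\fn_+)v$ is finite-dimensional. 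Consequently $\Omega(M)$ lies in the union of $\wt w+Q_-$ over the finitely many weights $w$ of $\cU(\fn_+)v$, so $M$ is bounded.

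Next I would extend the $[\fg,\fg]$-action on $M$ to a restricted $\fg$-action. Integrability makes $M$ an $\fh'$-semisimple module; decomposing $M$ according to the (semisimple) action of the centre of $[\fg,\fg]$, on each summand the central elements act by non-critical scalars (for affine $\fg$ the canonical central element $K$ acts with non-negative integral eigenvalues, whereas the critical value is $-h^\vee<0$). One then refines each $\fh'$-weight space of $M$ into a direct sum of $\fh$-weight spaces, compatibly with the $e_i$ and $f_i$: for affine $\fg$ this is exactly the Sugawara construction, with $D=-L_0$, and for a general symmetrizable Kac-Moody algebra one constructs such a refinement directly, using the boundedness of $M$. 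On the resulting restricted $\fg$-module $N$, boundedness forces both $\fh$ and the Casimir operator to act locally finitely.

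Finally I would identify the irreducible subquotients of $N$ and invoke~\Prop{corP+}. Any irreducible subquotient $S$ of $N$ is again a restricted integrable $[\fg,\fg]$-module, is bounded, and carries a scalar action of the Casimir (Schur's lemma); boundedness yields a weight $\lambda$ of $S$ with $\fn_+S_\lambda=0$, so $S=\cU(\fn_-)S_\lambda$ is a highest weight module, whence $S\cong L(\lambda)$ and integrability forces $\lambda\in P_+$. Then~\Prop{corP+} shows that $N$, and hence $M$, is completely reducible over $[\fg,\fg]$, and its irreducible submodules, being irreducible subquotients, are of the form $L(\lambda)$ with $\lambda\in P_+$ (the input $\Ext^1_{[\fg,\fg]}(L(\lambda),L(\lambda))=0$ for $\lambda\in P_+$ needed inside~\Prop{corP+} is the one supplied by~\Prop{prop1}(ii), and the strict monotonicity of the Casimir eigenvalue $(\lambda+\rho,\lambda+\rho)-(\rho,\rho)$ along $P_+$ is what drives the socle argument). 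The main obstacle is the middle step: promoting a $[\fg,\fg]$-module to a $\fg$-module with $\fh$ and the Casimir acting locally finitely — for affine $\fg$ the Sugawara construction does this, but for a general symmetrizable Kac-Moody algebra one must argue separately that the $\fh'$-semisimple structure refines to a restricted $\fg$-module structure with the required local finiteness.
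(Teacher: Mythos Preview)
Your first computation---that $\cU(\fn_+)v$ is finite-dimensional for any $v$ in a restricted integrable module---is correct and is exactly what the paper proves (it invokes \cite{Kbook}, Prop.~3.8 to pass from local nilpotence of the $e_i$ to local finiteness of every element of $\fm=\fn_+/\fn_+^{>m}$, and then uses cyclicity).  But from this point on your strategy diverges from the paper's and acquires a real gap.

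The paper does \emph{not} pass through~\Prop{corP+}.  Instead it applies~\Lem{lemcH} directly to the category of restricted integrable $[\fg,\fg]$-modules.  Conditions (i)--(iii) are immediate and the $\Ext^1$ vanishing is the one already established in the proof of~\Prop{corP+}.  The only thing to check is condition~(iv): every such module contains an irreducible submodule.  For this the paper takes the finite-dimensional $\fm$-module $\cU(\fn_+)v$ and applies Lie's theorem to produce a common eigenvector $v'$; since the generators $e_i$ act nilpotently, the eigenvalue is zero, so $\fn_+v'=0$.  An $\fh'$-eigenvector in $N^{\fn_+}$ then generates an integrable highest weight module, necessarily some $L(\lambda)$ with $\lambda\in P_+$.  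No extension to $\fg$ is needed.

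Your detour through ``boundedness $\Rightarrow$ extend to $\fg$ $\Rightarrow$ apply~\Prop{corP+}'' runs into two problems.  First, the word ``bounded'' is not meaningful for $[\fg,\fg]$-modules once $\corank A>0$: the restriction $Q_+\to(\fh')^*$ is not injective (for affine $\fg$ one has $\delta|_{\fh'}=0$), so the induced partial order on $(\fh')^*$ degenerates and ``$\Omega(M)\subset\bigcup(\wt w+Q_-)$'' carries no information.  Second---and you flag this yourself---there is no general construction that promotes a restricted integrable $[\fg,\fg]$-module to a $\fg$-module with locally finite $\fh$-action.  Sugawara handles the affine case at non-critical level, but for $\corank A\ge 2$ no analogue is available, and the bare boundedness you have does not let you split the $\fh'$-weight spaces into $\fh$-weight spaces compatibly with the root lattice.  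This is not a technicality to be patched; it is precisely the obstacle the paper's direct argument is designed to avoid.

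In short: keep your finite-dimensionality step, but replace everything after it by Lie's theorem on $\cU(\fn_+)v$ and a direct appeal to~\Lem{lemcH}.
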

\begin{proof}
By above, $\Ext^1_{[\fg,\fg]}(L(\lambda),L(\lambda'))=0$
for $\lambda,\lambda'\in P_+$. In the light of~\Lem{lemcH},
it is enough to show that
 each restricted integrable  module contains an irreducible submodule
of the form $L(\lambda)$ with $\lambda\in P_+$.

Let $\Pi=\{\alpha_1,\ldots,\alpha_l\}$ be the set of simple roots.
Write $\beta\in\Delta_+$
as $\beta=\sum_{i=1}^l m_i\alpha_i$ and set $\htt\beta=\sum m_i$.
For $m\in\mathbb{Z}_{\geq 0}$ set $\fn_{>m}:=
\sum_{\beta:\htt\beta>m} \fn_{\beta}$
and note that $\fn_{>m}$ is an ideal of $\fn$.
Take $v\in N$. Since $N$ is restricted, $\fn_{>m}v=0$
for some $m\in\mathbb{Z}_{\geq 0}$.
Set $N':=\cU(\fn)v$. By above, $\fn_{>m}N'=0$
so $N'$ is a module over $\fm:=\fn/\fn_{>m}$ which is
a finite-dimensional nilpotent Lie algebra.

Recall the following proposition~\cite{Kbook}, 3.8. If $\fp$
is a Lie algebra and $M$ is an $\fp$-module,
satisfying the following condition: $\fp$ is generated by
ad-locally finite elements which are also locally finite on $M$,
then  $\fp$ is a span of elements which act locally finitely
on $M$.  Since $N$ is integrable, $\{e_{\alpha_i}\}_{i=1}^l$
act locally finitely on $N'$ and thus
there exists a basis $\{u_j\}_{j=1}^{s}$
of $\fm$, where each $u_j$ acts locally finitely on $N'$.
Since $N'$ is a cyclic $\fm$-module (generated by $v$),
this implies that it is finite-dimensional. Then,
by the Lie Theorem, $\fm$ has an eigenvector $v'\in N'$.
Since $\fm$ is generated by $e_i, i\in I$ and these elements
act locally nilpotently on $N'$, one has $\fm v'=0$.
Thus $\fn v'=0$ that is $N^{\fn}\not=0$.
Recall that $\fh'$ acts diagonally
on $N$ so $N^{\fn}$ contains an eigenvector $v''$ for
$\fh'$. The vector $v''$ generates a submodule
which is a quotient of a Verma module. Since this
submodule is integrable, it is isomorphic to $L(\lambda)$
with $\lambda\in P_+$.
\end{proof}

\subsection{The set $\Delta(\lambda)$}
Retain notation of~\ref{symmKm}. Let $\Pi$ be the set of simple roots and
let $W$ be the Weyl group of $\fg$.
Recall that a root $\alpha$ is {\em real} if
$W\alpha\cap \Pi\not=\emptyset$;
a root is {\em imaginary} if it is not real.
One has: $\alpha$ is real iff $(\alpha,\alpha)>0$.

\subsubsection{}\begin{defn}{Deltal}
For $\lambda\in\fh^*$ let $\Delta(\lambda)$ to be the set  of
real roots $\alpha$ such that
$m_{\alpha}:=\langle\lambda+\rho, \alpha^{\vee}\rangle\in\mathbb{Z}$,
and   set
$\Delta(\lambda)^{\vee}:=\{\alpha^{\vee}|\ \alpha\in \Delta(\lambda)\}$.
\end{defn}

Notice that $\Delta(\lambda)^{\vee}\subset \fh'$. Recall that $\lambda$
is called rational iff $\Delta(\lambda)^{\vee}$ spans $\fh'$, which
is equivalent to $\mathbb{C}\Delta(\lambda)=\mathbb{C}\Delta$.

Set $\Delta_+(\lambda):=
\{\alpha\in\Delta(\lambda)|\ \alpha\in\Delta_+\}$.

\subsubsection{}\label{Wlambda}
Let $W(\lambda)$ be the subgroup of the Weyl group $W$ of $\fg$ generated
by the reflections $s_{\alpha}$ with $\alpha^{\vee}\in \Delta(\lambda)$.
Introduce the dot action by the usual formula
$w.\lambda:=w(\lambda+\rho)-\rho$ for any $w\in W$.
Notice that $s_{\alpha}(\Delta(\lambda))=\Delta(\lambda)$
for any $\alpha\in \Delta(\lambda)$. As a result,
$\Delta(\lambda)$ is $W(\lambda)$-invariant and
$\Delta(w.\lambda)=\Delta(\lambda)$ for any $w\in W(\lambda)$.
Thus $\fh^*$ is a disjoint union of $W(\lambda)$-orbits (with respect
to the dot action) and $\Delta(\lambda)$ is the same for
every $\lambda$ in a given orbit.

Recall that a weight $\lambda\in\fh^*$ is regular if
$(\lambda,\alpha)\not=0$ for all $\alpha\in\Delta^{re}$.
We call $\lambda\in\fh^*$ {\em shifted-regular} if
$\lambda+\rho$ is regular.
Observe that if $\lambda$ is shifted-regular, then $w.\lambda$
is shifted-regular for any $w\in W$.

\subsubsection{}
\begin{rem}{regmax}
For $\alpha\in \Delta_+(\lambda)$ one has $s_{\alpha}.\lambda>\lambda$
iff $m_{\alpha}=\langle\lambda+\rho, \alpha^{\vee}\rangle<0$.
In particular, a shifted-regular weight $\lambda$ is  maximal in its
 $W(\lambda).$-orbit iff $m_{\alpha}>0$ for all
$\alpha\in \Delta_+(\lambda)$. If $\lambda$ is  maximal in its
 $W(\lambda).$-orbit and is shifted-regular, then
$\Stab_{W(\lambda).}\lambda=\{\id\}$.
\end{rem}

\subsubsection{}\label{Pilambda}
Recall that a non-empty subset $\Delta'$ of a root system is called a
{\em root subsystem} if $s_{\alpha}\beta\in \Delta'$ for any
$\alpha,\beta\in \Delta'$. One readily sees that $\Delta(\lambda)$ is
a root subsystem of $\Delta$.  Let $\Pi(\lambda)$ be the set
of indecomposable elements of $\Delta_+(\lambda)$:
$\alpha\in \Pi(\lambda)$ iff
$\alpha\not\in\sum_{\beta\in \Delta(\lambda)_+\setminus\{\alpha\}}
\mathbb{Z}_{\geq 0}\beta$.
By~\cite{MP}, 5.7  the following properties hold:

($\Pi 1$) $W(\lambda)$ is generated by $s_{\alpha}$ with
$\alpha\in \Pi(\lambda)$ and $\Delta(\lambda)=W(\lambda)
\bigl(\Pi(\lambda)\bigr)$;

($\Pi 2$)
$\Delta_+(\lambda)\subset \sum_{\alpha\in \Pi(\lambda)}
\mathbb{Z}_{\geq 0}\alpha$;

($\Pi 3$) for any $\alpha\in \Pi(\lambda)$ the set
$\Delta(\lambda)_+\setminus\{\alpha\}$ is invariant
under the reflection $s_{\alpha}$.

Note that the elements of $\Pi(\lambda)$ can be linearly dependent.

\subsubsection{}\label{noncrit}
We recall that $\lambda\in\fh^*$ is {\em non-critical} if
for any positive imaginary root $\alpha$ one has
$2(\lambda+\rho, \alpha)\not\in\mathbb{Z}_{>0}(\alpha,\alpha)$.
Remark that, for a non-critical weight $\lambda$, the orbit
$W.\lambda$ consists of non-critical  weights, since the set
of positive imaginary roots is $W$-invariant (\cite{Kbook}, Chapter 5).

By~\cite{KK}, Thm. 2 if $\lambda$ is non-critical then all irreducible
subquotients of $M(\lambda)$ are of the form $L(w.\lambda)$ for
$w\in W(\lambda)$. In particular, a non-critical weight $\lambda$
is weakly admissible (see Subsection~\ref{admis})
iff $\lambda$ is maximal in its $W(\lambda).$-orbit, or, equivalently, iff
$\langle\lambda+\rho,\beta^{\vee}\rangle\in\mathbb{Z}_{\geq 0}$
for all $\beta\in \Pi(\lambda)$.

\subsection{}
\begin{thm}{propLie1}
Let $\fg$ be a symmetrizable Kac-Moody algebra.
Let $\lambda\in\fh^*$ be a non-critical shifted-regular weight, maximal in its
$W(\lambda).$-orbit. Then $\Upsilon_{L(\lambda)}$ induces a bijection
$$\Ext^1(L(\lambda),L(\lambda))\iso \Delta(\lambda)^{\perp}.$$
\end{thm}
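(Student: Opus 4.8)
The plan is to prove the two inclusions $\Delta(\lambda)^{\perp}\subset\im\Upsilon_{L(\lambda)}\subset\Delta(\lambda)^{\perp}$ separately, using the Jantzen-filtration criterion of~\Prop{corally} together with the sum formula~(\ref{sumformula}) and the Shapovalov determinant formula~(\ref{shap}). We already have from~\Prop{propalmin} that $C(\lambda)^{\perp}\subset\im\Upsilon_{L(\lambda)}\subset\{\mu\in\fh^*\mid(\mu,\alpha)=0\text{ for all }\lambda\text{-minimal }\alpha\}$, so the key is to show that, under the present hypotheses, both the outer and the inner bounds coincide with $\Delta(\lambda)^{\perp}$. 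Since $\fg$ is a Kac-Moody algebra, $\tau=\emptyset$, there are no isotropic roots, and $C'(\lambda)$ consists exactly of pairs $(\alpha,m_\alpha)$ with $\alpha\in\Delta_+$ real (by non-criticality, no imaginary root contributes), $m_\alpha=\langle\lambda+\rho,\alpha^\vee\rangle\in\mathbb{Z}_{>0}$; the condition $m_\alpha>0$ is precisely maximality of $\lambda$ in its $W(\lambda).$-orbit (\Rem{regmax},~\ref{noncrit}). Hence $C(\lambda)=\Delta_+(\lambda)$ and $C(\lambda)^{\perp}=\Delta_+(\lambda)^{\perp}=\Delta(\lambda)^{\perp}$, giving the first inclusion immediately.

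For the reverse inclusion it suffices, by the inner bound of~\Prop{propalmin}, to show that every $\alpha\in\Delta_+(\lambda)$ is $\lambda$-minimal, i.e.\ satisfies~(\ref{lambdamin}). Equivalently, I must show that for each such $\alpha$ the weight $\lambda-m_\alpha\alpha=s_\alpha.\lambda$ does not occur as an irreducible subquotient of any $M(\lambda-n\beta)=M(s_\beta.\lambda)$ with $\beta\in\Delta_+(\lambda)$, $\beta\neq\alpha$. This is where shifted-regularity and the structure of $W(\lambda)$ enter. By~\cite{KK}, Thm.~2 (see~\ref{noncrit}), $[M(s_\beta.\lambda):L(s_\alpha.\lambda)]\neq 0$ forces $s_\alpha.\lambda=w.(s_\beta.\lambda)$ for some $w\in W(\lambda)$, hence $s_\alpha=w s_\beta$ after using that $\lambda+\rho$ is regular so $\Stab_{W(\lambda).}\lambda=\{\id\}$ (\Rem{regmax}); moreover the subquotient condition forces $s_\alpha.\lambda\le s_\beta.\lambda<\lambda$, i.e.\ $m_\alpha\alpha\geq n\beta$ in the root order, with $m_\alpha\alpha=n\beta$ only if $\alpha=\beta$. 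Here I expect to use the Bruhat-order/length characterization of when $L(w.\lambda)$ is a subquotient of $M(w'.\lambda)$ for regular non-critical $\lambda$, together with the fact that $s_\alpha$ and $s_\beta$ are reflections in the (possibly infinite, but with the structure of a root system via $\Pi(\lambda)$, see~\ref{Pilambda}) group $W(\lambda)$: two distinct positive roots $\alpha,\beta\in\Delta_+(\lambda)$ cannot satisfy $s_\alpha.\lambda\leq s_\beta.\lambda$ with both $s_\alpha.\lambda,s_\beta.\lambda<\lambda$ unless the reflection hyperplanes force $\ell(s_\alpha)\leq\ell(s_\beta)$ in a way incompatible with $m_\alpha\alpha\geq n\beta$ — I would pin this down by noting that $s_\alpha.\lambda<\lambda$ already has minimal length one in the orbit below $\lambda$, so it can be a proper subquotient of $M(s_\beta.\lambda)$ only if $s_\alpha.\lambda=s_\beta.\lambda$ strictly above, forcing $\alpha=\beta$.

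Once $\lambda$-minimality of all $\alpha\in\Delta_+(\lambda)$ is established, \Prop{propalmin} gives $\im\Upsilon_{L(\lambda)}\subset\{\mu\mid(\mu,\alpha)=0,\ \alpha\in\Delta_+(\lambda)\}=\Delta(\lambda)^{\perp}$, which together with the first inclusion yields $\im\Upsilon_{L(\lambda)}=\Delta(\lambda)^{\perp}$. Finally, $\Upsilon_{L(\lambda)}$ is injective by ($\Upsilon 1$) of~\Lem{corUps}, so it restricts to a bijection $\Ext^1(L(\lambda),L(\lambda))\iso\Delta(\lambda)^{\perp}$, as claimed. The main obstacle is the combinatorial step: proving that each positive root in $\Delta_+(\lambda)$ is $\lambda$-minimal, i.e.\ carefully controlling which $s_\alpha.\lambda$ appear in which $M(s_\beta.\lambda)$ — this requires the regularity hypothesis in an essential way (without it, $\Stab_{W(\lambda).}\lambda$ is nontrivial and distinct reflections can give the same weight, so minimality fails, which is exactly why~\Thm{thm01} only asserts the bijection with the annihilator rather than the stronger statement, and why~\Exa{exsl3} is only an inclusion until regularity is imposed).
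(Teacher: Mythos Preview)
Your first inclusion is correct and matches the paper: since $\lambda$ is non-critical, shifted-regular, and maximal in its $W(\lambda).$-orbit, one has $C(\lambda)=\Delta_+(\lambda)$, hence $C(\lambda)^\perp=\Delta(\lambda)^\perp\subset\im\Upsilon_{L(\lambda)}$ by~\Prop{propalmin}. The reverse inclusion, however, contains a genuine error. You assert that \emph{every} $\alpha\in\Delta_+(\lambda)$ is $\lambda$-minimal, but this is false already for $\fg=\fsl_3$ with $\lambda$ dominant regular integral: take $\alpha=\alpha_1+\alpha_2$ and $\beta=\alpha_1$; then $s_\beta.\lambda-s_\alpha.\lambda=m_{\alpha_2}\alpha_1+(m_{\alpha_1}+m_{\alpha_2})\alpha_2\in Q_+$ and $[M(s_{\alpha_1}.\lambda):L(s_{\alpha_1+\alpha_2}.\lambda)]\neq 0$ by~\cite{KK}, Thm.~2 (or BGG). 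Your heuristic that ``$s_\alpha.\lambda$ already has minimal length one in the orbit below $\lambda$'' breaks down precisely because a reflection $s_\alpha$ with $\alpha\notin\Pi(\lambda)$ has length $>1$ in $W(\lambda)$.

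The correction --- which is exactly what the paper does --- is to observe that~\Prop{propalmin} only requires the set of $\lambda$-minimal roots to \emph{span} the same subspace as $C(\lambda)$, not to exhaust it. By ($\Pi 2$) of~\ref{Pilambda}, $\Pi(\lambda)$ spans $\mathbb{C}\Delta_+(\lambda)=\mathbb{C}C(\lambda)$, so it suffices to prove that each $\alpha\in\Pi(\lambda)$ is $\lambda$-minimal. This is~\Lem{lemmik}, whose argument is combinatorial rather than Bruhat-theoretic: if $[M(s_\beta.\lambda):L(s_\alpha.\lambda)]\neq 0$ for some $\beta\in\Delta_+(\lambda)\setminus\{\alpha\}$, the chain from~\cite{KK} yields $m_\alpha\alpha-m_\beta\beta=\sum_{\gamma\in\Pi(\lambda)}k_\gamma\gamma$ with all $k_\gamma\geq 0$; expanding $\beta=\sum_\gamma n_\gamma\gamma$ with $n_\gamma\geq 0$ and using that $\alpha$ is indecomposable in $\Pi(\lambda)$ forces $n_\gamma=0$ for $\gamma\neq\alpha$, so $\beta$ is proportional to the real root $\alpha$, a contradiction. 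Shifted-regularity enters only to guarantee $m_\alpha>0$ for every $\alpha\in\Pi(\lambda)$, so that~\Lem{lemmik} applies to each of them.
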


\subsubsection{}
\begin{cor}{corLie0}
(i) If  $\lambda$ is admissible, then $\lambda$ is rational.

(ii) A non-critical shifted-regular weight is
admissible iff it is rational and maximal in its $W(\lambda).$-orbit.
\end{cor}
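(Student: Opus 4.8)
The plan is to derive both assertions from \Thm{propLie1} together with the dictionary, recorded in \Rem{remprop1} and \Rem{rem2}, between $\im\Upsilon_{L(\lambda)}$ and self-extensions of $L(\lambda)$ over $[\fg,\fg]$. Write $\fh=\fh'\oplus\fh''$ as in the definition of admissibility (Subsection~\ref{admis}), so that $\lambda$ is admissible precisely when $L(\lambda)$ is weakly admissible and $\Ext^1_{\cH}(L(\lambda),L(\lambda))=0$, where $\cH$ is the category of $\fg$-modules on which $\fh''$ acts diagonally. The first step is to establish the equivalence
$$\Ext^1_{\cH}(L(\lambda),L(\lambda))=0\ \Longleftrightarrow\ \im\Upsilon_{L(\lambda)}=\Delta^{\perp},$$
where $\Delta^{\perp}$ is the set of functionals vanishing on $\fh'$ (indeed $\fh'$ is spanned by the simple coroots and $\mathbb{C}\Delta=\mathbb{C}\Pi$). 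One always has $\Delta^{\perp}\subset C(\lambda)^{\perp}\subset\im\Upsilon_{L(\lambda)}$ by \Prop{propalmin}. A self-extension $0\to L(\lambda)\to N\to L(\lambda)\to 0$ lies in $\cH$ exactly when $\mu:=\Upsilon_{L(\lambda)}(N)$ vanishes on $\fh''$, and it is trivial exactly when $\mu=0$ (injectivity of $\Upsilon_{L(\lambda)}$); hence if $\im\Upsilon_{L(\lambda)}=\Delta^{\perp}$, every such $\mu$ vanishes on $\fh'$ and on $\fh''$, so on $\fh$, forcing $\mu=0$, and the reverse implication follows. Conversely, if some $\mu\in\im\Upsilon_{L(\lambda)}$ does not vanish on $\fh'$, subtract from it the unique functional in $\Delta^{\perp}$ agreeing with $\mu$ on $\fh''$ (this functional lies in $\im\Upsilon_{L(\lambda)}$); the difference is a nonzero element of $\im\Upsilon_{L(\lambda)}$ vanishing on $\fh''$, i.e.\ a nontrivial self-extension in $\cH$.

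For part~(i), assume $\lambda$ is admissible, so in particular $\lambda$ is non-critical. Then $C(\lambda)$ contains no imaginary root, and every $\alpha\in C(\lambda)$ satisfies $\langle\lambda+\rho,\alpha^{\vee}\rangle\in\mathbb{Z}$, so $C(\lambda)\subset\Delta_+(\lambda)\subset\Delta(\lambda)$ and therefore $\Delta(\lambda)^{\perp}\subset C(\lambda)^{\perp}\subset\im\Upsilon_{L(\lambda)}$ by \Prop{propalmin}. On the other hand, admissibility together with the equivalence above gives $\im\Upsilon_{L(\lambda)}=\Delta^{\perp}$. Hence $\Delta(\lambda)^{\perp}\subset\Delta^{\perp}$; as $\Delta^{\perp}\subset\Delta(\lambda)^{\perp}$ is automatic, $\Delta^{\perp}=\Delta(\lambda)^{\perp}$, and nondegeneracy of the form on $\fh^*$ forces $\mathbb{C}\Delta=\mathbb{C}\Delta(\lambda)$, i.e.\ $\lambda$ is rational (cf.\ \Defn{Deltal}).

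For part~(ii), let $\lambda$ be non-critical and shifted-regular. If $\lambda$ is admissible it is weakly admissible, hence maximal in its $W(\lambda).$-orbit (by the criterion in~\ref{noncrit}), and rational by part~(i). Conversely, suppose $\lambda$ is rational and maximal in its $W(\lambda).$-orbit; then $\lambda$ is weakly admissible (again by~\ref{noncrit}), and all the hypotheses of \Thm{propLie1} are satisfied, so $\im\Upsilon_{L(\lambda)}=\Delta(\lambda)^{\perp}$. Rationality gives $\Delta(\lambda)^{\perp}=\Delta^{\perp}$, whence $\Ext^1_{\cH}(L(\lambda),L(\lambda))=0$ by the equivalence above; thus $\lambda$ is admissible.

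The substantive ingredient is \Thm{propLie1}; everything else is linear algebra with the splitting $\fh=\fh'\oplus\fh''$. The one point needing care is that in part~(i) the weight $\lambda$ is not assumed shifted-regular, so \Thm{propLie1} is not available and $C(\lambda)$ need not equal $\Delta_+(\lambda)$; however, part~(i) only uses the inclusion $\Delta(\lambda)^{\perp}\subset\im\Upsilon_{L(\lambda)}$, and this follows from $C(\lambda)\subset\Delta(\lambda)$ and \Prop{propalmin} alone, so the argument is unaffected.
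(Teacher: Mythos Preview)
Your proof is correct and follows essentially the same route as the paper's: both derive (i) from the inclusion $\Delta(\lambda)^{\perp}\subset C(\lambda)^{\perp}\subset\im\Upsilon_{L(\lambda)}$ (using $C(\lambda)\subset\Delta(\lambda)$ for non-critical $\lambda$ and \Prop{propalmin}) together with the fact that admissibility forces $\im\Upsilon_{L(\lambda)}=\Delta^{\perp}$, and both obtain (ii) from (i) and \Thm{propLie1}. The only difference is that you spell out in detail the equivalence $\Ext^1_{\cH}(L(\lambda),L(\lambda))=0\Longleftrightarrow\im\Upsilon_{L(\lambda)}=\Delta^{\perp}$, which the paper packages into \Rem{rem2} and then invokes tacitly.
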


\subsubsection{Proof of~\Thm{propLie1}}
Retain notation of~\ref{defmi}. Since $\lambda$ is non-critical and
maximal in its $W(\lambda).$-orbit, \Rem{regmax} implies that
$C(\lambda)=\Delta_+(\lambda)$, so, by
($\Pi 2$), $C(\lambda)$ and $\Pi(\lambda)$ span
the same subspace of $\fh^*$.
In the light of~\Prop{propalmin}, it is enough to verify that
 each $\alpha\in\Pi(\lambda)$ is $\lambda$-minimal in
the sense of~\Defn{defmi}. This follows from the following lemma.

\subsubsection{}\begin{lem}{lemmik}
Let $\lambda\in\fh^*$ be non-critical. Each $\alpha\in\Pi(\lambda)$ such that
$\langle \lambda+\rho,\alpha^{\vee}\rangle>0$ is $\lambda$-minimal in
the sense of~\Defn{defmi}.
 \end{lem}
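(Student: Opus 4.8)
Recall that $\alpha$ being $\lambda$-minimal means that there is a pair $(\alpha,m)\in C'(\lambda)$ (here $m=m_\alpha:=\langle\lambda+\rho,\alpha^\vee\rangle$ if $\alpha$ is non-isotropic, or $m=1$ if $\alpha$ is isotropic) such that for every $(\beta,n)\in C'(\lambda)\setminus\{(\alpha,m)\}$ one has $[M(\lambda-n\beta):L(\lambda-m\alpha)]=0$. So the whole issue is to show: the simple module $L(\lambda-m_\alpha\alpha)$ does \emph{not} occur as a subquotient of any $M(\lambda-n\beta)$ with $(\beta,n)\in C'(\lambda)$, $(\beta,n)\neq(\alpha,m_\alpha)$. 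First I would reduce to the non-isotropic case: if $\alpha$ is isotropic and $\lambda$-candidate, then $m=1$ and, as noted after Definition~\ref{defmi}, $\alpha=n\beta$ is impossible for other pairs, so the statement is automatic; hence assume $\alpha$ non-isotropic with $m_\alpha>0$.

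\textbf{Key steps.} The natural tool is the linkage principle / BGG-type reciprocity coming from \cite{KK}, Thm.~2, already invoked in~\ref{noncrit}: since $\lambda$ is non-critical, all simple subquotients of $M(\lambda)$ and of each $M(\lambda-n\beta)$ are of the form $L(w.\lambda)$ with $w\in W(\lambda)$, and $[M(\mu):L(w.\mu)]\neq 0$ forces $w.\mu\leq\mu$ in the standard order. So $[M(\lambda-n\beta):L(\lambda-m_\alpha\alpha)]\neq 0$ would require $\lambda-m_\alpha\alpha = w.(\lambda-n\beta)$ for some $w\in W(\lambda-n\beta)=W(\lambda)$, with $\lambda-m_\alpha\alpha\leq\lambda-n\beta$. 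Now observe that $\lambda-m_\alpha\alpha=s_\alpha.\lambda$ (this is exactly the content of $m_\alpha=\langle\lambda+\rho,\alpha^\vee\rangle$), and similarly, when $\beta$ is non-isotropic, $\lambda-n\beta=s_\beta.\lambda$; when $\beta$ is isotropic, $n=1$ and $\lambda-\beta$ is a ``non-integral'' link. Thus the identity to be excluded reads $s_\alpha.\lambda=w s_\beta.\lambda$ (non-isotropic $\beta$) inside the $W(\lambda).$-orbit of $\lambda$. Since $\lambda$ is maximal in its $W(\lambda).$-orbit \emph{and} shifted-regular, \Rem{regmax} gives $\Stab_{W(\lambda).}\lambda=\{\id\}$, so the map $w\mapsto w.\lambda$ is a bijection $W(\lambda)\to W(\lambda).\lambda$; translating the equation to the group, it becomes $s_\alpha = w s_\beta$, i.e.\ $w=s_\alpha s_\beta$, and the order constraint $s_\alpha.\lambda\leq s_\beta.\lambda$ together with $\alpha,\beta\in\Pi(\lambda)$ must be shown to force $\alpha=\beta$.

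\textbf{The heart of the argument.} The remaining combinatorial point is: for $\alpha\in\Pi(\lambda)$ with $\langle\lambda+\rho,\alpha^\vee\rangle>0$ (so $\ell(s_\alpha)=1$ relative to the simple system $\Pi(\lambda)$, using ($\Pi 1$)), and any other $\beta\in C(\lambda)=\Delta_+(\lambda)$, the weight $s_\alpha.\lambda$ cannot lie strictly below $s_\beta.\lambda$ (non-isotropic $\beta$) or below $\lambda-\beta$ (isotropic $\beta$) in the order. For the non-isotropic case: $\lambda-s_\alpha.\lambda=m_\alpha\alpha$ and $\lambda-s_\beta.\lambda=n_\beta\beta$, and since $\alpha\in\Pi(\lambda)$ is indecomposable in $\Delta_+(\lambda)$ while $\beta\in\sum_{\gamma\in\Pi(\lambda)}\mathbb Z_{\geq 0}\gamma$ by ($\Pi 2$), the inequality $m_\alpha\alpha\leq n_\beta\beta$ in $\sum_{\gamma\in\Pi(\lambda)}\mathbb Z_{\geq 0}\gamma$ forces $\beta$ to have a nonzero $\alpha$-coefficient, and indecomposability of $\alpha$ plus a short height/positivity bookkeeping (using ($\Pi 3$): $\Delta_+(\lambda)\setminus\{\alpha\}$ is $s_\alpha$-stable, so no element of it has the form $\alpha+(\text{positive})$ collapsing under $s_\alpha$) yields $\beta=\alpha$. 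For the isotropic case one likewise compares $m_\alpha\alpha$ with $\beta$ in the $\Pi(\lambda)$-cone and uses that an isotropic root of $\Delta(\lambda)$ cannot be a positive multiple of the real root $\alpha$. I expect the genuine obstacle to be precisely this last step—pinning down, via the properties ($\Pi 1$)--($\Pi 3$) of \cite{MP}, that $\alpha$ indecomposable with $\ell_{\Pi(\lambda)}(s_\alpha)=1$ cannot be ``linked below'' any distinct $\beta\in\Delta_+(\lambda)$ inside the orbit; everything else (non-critical linkage, regularity giving free orbit, reduction of isotropic roots) is bookkeeping on top of results already quoted in the excerpt.
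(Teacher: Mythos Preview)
Your argument has a genuine gap at the step you yourself flag as ``the heart'': you assert that the inequality holds ``in $\sum_{\gamma\in\Pi(\lambda)}\mathbb{Z}_{\geq 0}\gamma$'', but all you have actually derived from $[M(s_\beta.\lambda):L(s_\alpha.\lambda)]\neq 0$ is the partial order relation $s_\alpha.\lambda\leq s_\beta.\lambda$, i.e.\ $m_\alpha\alpha - m_\beta\beta\in Q_+=\sum_{\alpha_i\in\Pi}\mathbb{Z}_{\geq 0}\alpha_i$, the cone generated by the simple roots of the \emph{ambient} Kac-Moody algebra. This is strictly weaker than lying in the $\Pi(\lambda)$-cone (the elements of $\Pi(\lambda)$ are typically high positive roots of $\fg$, far from $\Pi$), and the weaker condition is not enough to force $\beta$ proportional to $\alpha$ by comparing $\Pi(\lambda)$-coefficients. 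The paper closes exactly this gap by using the full strength of \cite{KK}, Thm.~2, not merely linkage: if $[M(s_\beta.\lambda):L(s_\alpha.\lambda)]\neq 0$ there is a \emph{chain} $s_\beta.\lambda=\nu_1,\nu_2,\ldots,\nu_n=s_\alpha.\lambda$ with $\nu_{i+1}=\nu_i-k_i\beta_i$, $k_i>0$, and $\beta_i\in\Delta_+(\nu_i)$. Since each $\nu_i$ lies in the $W(\lambda).$-orbit of $\lambda$, one has $\Delta_+(\nu_i)=\Delta_+(\lambda)$, so $m_\alpha\alpha-m_\beta\beta=\sum_i k_i\beta_i$ is a non-negative integer combination of elements of $\Delta_+(\lambda)$, hence of $\Pi(\lambda)$ by ($\Pi 2$). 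Now writing $\beta=\sum_{\gamma\in\Pi(\lambda)} n_\gamma\gamma$ and comparing, indecomposability of $\alpha$ forces $n_\gamma=0$ for all $\gamma\neq\alpha$, so $\beta$ is proportional to $\alpha$, impossible for a real root distinct from $\alpha$.

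Two smaller points. First, the lemma does not assume $\lambda$ maximal in its $W(\lambda).$-orbit or shifted-regular; your appeal to a trivial stabilizer via \Rem{regmax} imports hypotheses that are absent (and, as the chain argument shows, unnecessary --- non-criticality alone suffices). Second, this lemma sits in Section~3 (symmetrizable Kac-Moody, $\tau=\emptyset$), so there are no isotropic roots: for non-critical $\lambda$ one has $C(\lambda)=\{\beta\in\Delta_+(\lambda)\mid m_\beta>0\}$ with all roots real, and your isotropic discussion is moot. (Also, the inequality you wrote is reversed: $s_\alpha.\lambda\leq s_\beta.\lambda$ gives $m_\alpha\alpha\geq m_\beta\beta$, not $\leq$.)
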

\begin{proof}
For $\beta \in \Delta_+(\lambda)$ set
$m_{\beta}:=\langle \lambda+\rho,\beta^{\vee}\rangle$.
By~\Rem{regmax},
$C(\lambda)=\{\beta\in \Delta_+(\lambda)|\ m_{\beta}>0\}$.
It is enough to verify that for any
$\beta\in\Delta_+(\lambda)\setminus\{\alpha\}$ such that $m_{\beta}>0$ one has
$[M(s_{\beta}.\lambda):L(s_{\alpha}.\lambda)]=0$.
Take $\beta\in\Delta_+(\lambda)\setminus\{\alpha\}$ such that $m_{\beta}>0$
and assume that $[M(s_{\beta}.\lambda):L( s_{\alpha}.\lambda)]>0$.
By~\cite{KK}, Thm. 2 we have
a chain $\nu_1=s_{\beta}.\lambda,\ \nu_2,\ldots, \nu_n=s_{\alpha}.\lambda$
such that $\nu_{i+1}=s_{\beta_i}.\nu_i=\nu_i-k_i\beta_i$, where
$\beta_i\in\Delta_+(\nu_i)$ and $k_i>0$. Since $\nu_i,\lambda$ lie
in the same $W(\lambda)$-orbit, one has $\Delta_+(\nu_i)=\Delta_+(\lambda)$.
Therefore
$$m_{\alpha}\alpha-m_{\beta}\beta=s_{\beta}.\lambda-s_{\alpha}.\lambda=
\sum_i k_i\beta_i=\sum_{\gamma\in \Pi(\lambda)} k_{\gamma}\gamma\ \
\text{ for some }k_{\gamma}\geq 0.
$$
Writing $\beta=\sum_{\gamma\in \Pi(\lambda)} n_{\gamma}\gamma$,
we obtain $(m_{\alpha}-k_{\alpha}-m_{\beta}n_{\alpha})\alpha=
\sum_{\gamma\in\Pi(\lambda)\setminus\{\alpha\}}
(k_{\gamma}+m_{\beta}n_{\gamma})\gamma$. Since $\alpha\in \Pi(\lambda)$,
this implies $k_{\gamma}=n_{\gamma}=0$ for $\gamma\not=\alpha$,
so $\beta$ is proportional to $\alpha$, which is impossible, since
$\alpha$ is a real root.
\end{proof}

\subsubsection{Proof of~\Cor{corLie0}}
\label{pfcorLie1}
Since $C(\lambda)\subset \Delta(\lambda)$
for non-critical $\lambda$,~\Prop{propalmin} implies that
$\Delta(\lambda)^{\perp}\subset \im \Upsilon_{L(\lambda)}$.
If $\lambda$ is admissible, then $\im \Upsilon_{L(\lambda)}=\Delta^{\perp}$
so $\Delta(\lambda)^{\perp}\subset\Delta^{\perp}$ that is
$\mathbb{C}\Delta(\lambda)=\mathbb{C}\Delta$; this gives (i).
Now (ii) follows from (i) and~\Thm{propLie1}.
\qed

\subsubsection{Example}\label{exasl2}
For the affine Lie algebra $\hat{{\fsl}}_2$
the weight $\lambda$ is $k$-admissible
for $k:=\langle \lambda,K\rangle$ iff $k\not=-2$,
$\lambda$ is maximal in its $W(\lambda).$-orbit and
$W(\lambda).\lambda\not=\lambda$. Indeed, $W(\lambda).\lambda\not=\lambda$
means that $M(\lambda)$ is not irreducible. If $\lambda$ is $k$-admissible,
then $k\not=-2$, $\lambda$ is maximal in its $W(\lambda).$-orbit and
$M(\lambda)$ is not irreducible. Assume that
$k\not=-2$, $\lambda$ is maximal in its $W(\lambda).$-orbit and
$M(\lambda)$ is not irreducible.
Combining~\Prop{propalmin} and~\Lem{lemmik},
we conclude that $\Upsilon_{L(\lambda)}\subset \alpha^{\perp}$
for some real root $\alpha$. Then $\Upsilon_{L(\lambda)}\cap \{\mu|\
\langle \mu,K\rangle=0\}\subset \{\mu|\ (\mu,\alpha)=(\mu,\delta)=0\}=
\mathbb{C}\delta$. Hence $\lambda$ is $k$-admissible.

\subsection{KW-admissible modules}\label{admKM}
We call $\lambda\in\fh^*$  {\em KW-admissible } if it is a non-critical
rational shifted-regular weakly admissible weight; this means that

(A0) for any positive imaginary root $\alpha$ one has
$2(\lambda+\rho, \alpha)\not\in\mathbb{Z}_{>0}(\alpha,\alpha)$
(i.e. $\lambda$ is  non-critical);

(A1) for any $\alpha\in \Delta_+$ one has
$\langle\lambda+\rho, \alpha^{\vee}\rangle\not\in\mathbb{Z}_{\leq 0}$
(i.e. $\lambda$ is dominant), or,
equivalently: $\langle\lambda+\rho, \alpha^{\vee}\rangle\in\mathbb{Z}_{>0}$
for each $\alpha\in\Pi(\lambda)$;

(A2) $\mathbb{C}\Delta(\lambda)^{\vee}=\fh'$ (i.e. $\lambda$ is rational).

The set of KW-admissible weights for an affine Lie algebra
$\fg$ was described in~\cite{KWmod2}.
From~\Cor{corLie0} one obtains

\subsubsection{}
\begin{cor}{coradm1}
The set of KW-admissible weights coincides with the set of
shifted-regular admissible weights.
\end{cor}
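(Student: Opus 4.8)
The plan is to deduce this directly from \Cor{corLie0}(ii), after unwinding both definitions. First I would record that, by definition, a KW-admissible weight is a non-critical, rational, shifted-regular, weakly admissible weight (Subsection~\ref{admKM}), while a shifted-regular admissible weight is a shifted-regular weight that is weakly admissible --- hence in particular non-critical, since weak admissibility requires $\lambda\notin C$ --- and has no non-trivial self-extension over $[\fg,\fg]$. Thus both notions already include ``shifted-regular, non-critical, weakly admissible'', and the only clause to reconcile is ``rational'' (in the first) versus ``no non-trivial self-extension over $[\fg,\fg]$'' (in the second).

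Next I would apply \Cor{corLie0}(ii): for a non-critical shifted-regular weight, being admissible --- i.e.\ weakly admissible with no non-trivial self-extension over $[\fg,\fg]$ --- is equivalent to being rational and maximal in its $W(\lambda).$-orbit. By the characterization of weak admissibility for non-critical weights recalled in \ref{noncrit} (weak admissibility $\iff$ maximality in the $W(\lambda).$-orbit), the phrase ``weakly admissible and maximal in its $W(\lambda).$-orbit'' is redundant: it is simply ``weakly admissible''. Hence, for a shifted-regular non-critical weakly admissible weight, ``no non-trivial self-extension over $[\fg,\fg]$'' is equivalent to ``rational''. Substituting this into the description of the previous paragraph shows that ``shifted-regular admissible'' and ``KW-admissible'' cut out exactly the same set of weights, which is the assertion.

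The only points needing care in the write-up are bookkeeping ones: that ``admissible'' entails ``weakly admissible'' and hence non-criticality, so that the hypotheses of \Cor{corLie0}(ii) are indeed met; and that for non-critical weights one may freely replace ``weakly admissible'' by ``weakly admissible and maximal in its $W(\lambda).$-orbit'' via \ref{noncrit}. There is no genuine obstacle here --- the statement is a translation of \Cor{corLie0}(ii) into the terminology of \cite{KWmod2} --- so the proof should occupy only a few lines.
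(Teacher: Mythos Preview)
Your proposal is correct and follows exactly the paper's own approach: the paper simply states that the corollary is obtained from \Cor{corLie0}, and your write-up spells out precisely how the two definitions match up via \Cor{corLie0}(ii) and the equivalence in~\ref{noncrit}. There is nothing to add.
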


\subsubsection{}
\begin{rem}{remadm}
We will use the following fact: if $\lambda$ is a shifted-regular admissible
weight and $\alpha$ is a simple root such that $\alpha\not\in\Delta(\lambda)$,
then $s_{\alpha}.\lambda$ is a shifted-regular admissible
weight. This follows from the above corollary and the equality
$s_{\alpha}(\Delta_+\setminus\{\alpha\})=\Delta_+\setminus\{\alpha\}$.
\end{rem}

\subsubsection{Example: $\hat{\fsl}_2, \fsl_3$}\label{exsl23}
In these cases any admissible weight is  shifted-regular (hence KW-admissible);
this follows from the fact that if $\lambda$ is not shifted-regular
 (and non-critical for $\hat{\fsl}_2$), then
the maximal proper submodule of $M(\lambda)$ is either zero or isomorphic
to a Verma module. Indeed, assume that $\lambda\in Adm$ is  not
shifted-regular.
It is easy to see that in this case $\Pi(\lambda)=\{\alpha,\beta\}$, where
$m:=\langle\lambda+\rho,\alpha^{\vee}\rangle\in\mathbb{Z}_{>0},
\langle\lambda+\rho,\beta^{\vee}\rangle=0$.
The Shapovalov determinant $\det S_{m\alpha}$ evaluated at
$\lambda+t\mu+t^2\mu'$ is proportional to $t(\mu,\alpha)+t^2(\mu',\alpha)$.
If $(\mu,\alpha)=0$ the sum formula gives
$\sum_{i=1}^{\infty}\dim M(\lambda)^i_{\lambda-m\alpha}=2$.
Since $M(\lambda)^1=M(\lambda-m\alpha)$,
one has $M(\lambda)^2=M(\lambda-m\alpha)$. Thus, by~\Prop{corally},
$\mu\in\im\Upsilon_{L(\lambda)}$ if $(\mu,\alpha)=0$.
Hence $L(\lambda)$ admits self-extensions over $[\fg,\fg]$ and
thus  $\lambda\not\in Adm$, a contradiction.

\subsubsection{Example: $\fsl_4, \hat{\fsl}_3$}
In these cases there are admissible weights, which are not shifted-regular;
for instance, $\lambda$ satisfying
$\langle\lambda+\rho,\alpha_1^{\vee}\rangle=
\langle\lambda+\rho,\alpha_3^{\vee}\rangle
=1,\ \langle\lambda+\rho,\alpha_2^{\vee}\rangle=0$, where
$\alpha_1,\alpha_2,\alpha_3$ are the simple roots.
In order to verify that $L(\lambda)$ does not have self-extensions over
$[\fg,\fg]$, note that, by~\Prop{propalmin},
$(\mu,\alpha_1)=(\mu,\alpha_3)$ for
$\mu\in \im\Upsilon_{L(\lambda)}$. The module
$M(\lambda)$ contains a subsingular vector $v$
of weight $\lambda-(\alpha_1+2\alpha_2+\alpha_3)$. It is not hard to show
$v\not\in M(\lambda)^2$ if $(\mu,\alpha_2)\not=0$,
hence $\mu\in \mathbb{C}\delta$ as required.

For $\hat{\fsl}_3$ such $\lambda$ is a particular case
of admissible weights described in~\Rem{remaffv}.

\section{Finiteness of $\Adm_k$ for affine Lie algebras}
In this section $\fg$ is an affine Lie algebra. (However
all results can be extended to ${\osp}(1,2l)\hat{ }$).
Recall that $\fg=[\fg,\fg]\oplus \mathbb{C}D$.
We denote by $K$ ($\in [\fg,\fg]$) the canonical central element of $\fg$.
One says that a $\fg$ (or $[\fg,\fg]$)-module
$N$ has {\em level} $k\in\mathbb{C}$ if
$K|_N=k\cdot\id$, and that $\lambda\in\fh^*$  has level $k$
if $\langle \lambda,K\rangle=k$.

Recall that a simple $[\fg,\fg]$-module
$L(\lambda)$ (and its highest weight  $\lambda$) is
$k$-admissible if it is weakly admissible of level $k$, and
each self-extension $N$ of $L(\lambda)$ satisfying
$K|_N=k\cdot\id$ splits over $[\fg,\fg]$.

\subsection{Main results}
In this section we deduce~\Thm{thm05} from~\Thm{thm01}.
A key fact  is that for rational $k$
the category $\Adm_k$ has finitely many irreducibles, see~\Cor{corkh}.
The semisimplicity follows from this fact and~\Lem{lemcH}.
Indeed, extend the action of  $[\fg,\fg]$ on modules in $\Adm_k$
to that of $\fg$ by letting $D=-L_0$, where $L_0$ is a Virasoro operator,
see~\ref{viraf}. The category $\Adm_k$, viewed as a category
of $\fg$-modules, satisfies the assumptions of~\Lem{lemcH}
and so it is semisimple.

The fact that for rational $k$
three sets: shifted-regular $k$-admissible weights,
shifted-regular admissible weights and KW-admissible weights coincide
is proven in~\Cor{corLie1}. In Subsection~\ref{polyh}
we prove the description of $k$-admissible weights given in
Subsection~\ref{introk}. In Subsection~\ref{vac}
we establish a criterion of $k$-admissibility for vacuum modules.

\subsection{Notation}\label{notaf}
Let $\delta\in\fh^*$ be the minimal imaginary root, i.e.
$\langle\delta,\fh'\rangle=0,\ \langle\delta,D\rangle=1$.
One has $\Delta^{\perp}=\mathbb{C}\delta$.

Let $r$ be the {\em tier number} of $\fg$ ($r=1,2,3$ is
such that $\fg$ is of the type $X_N^{(r)}$).
Let $A$ be the Cartan matrix of $\fg$ and let $r^{\vee}$ be the dual
tier number of $\fg$, i.e. the tier number of the affine Lie algebra
$\fg(A^t,\tau)$. Recall that $\Delta$ (resp., $\Delta^{\vee}$)
is invariant under the shift by $r\delta$ (resp., $r^{\vee}K$)

\subsubsection{}
We normalize the invariant form $(-,-)$ by the condition that
$(\alpha,\alpha)=2r$, where $r$ is the tier number of $\fg$, and
$\alpha$ is a long simple root (\cite{Kbook}, Chapter 6); note that
$(\beta,\beta)$ is a positive rational number
for any real root $\beta$. One has
$\langle\mu,K\rangle=(\mu,\delta)$ for
all $\mu\in\fh^*$. Recall that $(\delta,\delta)=0$ and that
any imaginary root is an integral multiple
of $\delta$. In particular, $\lambda$ is non-critical iff
$(\lambda+\rho,\delta)\not=0$.

One has $(\rho,\delta)=h^{\vee}$, where
$h^{\vee}$ is the dual Coxeter number.
By above, $\lambda$ is critical iff
the level of $L(\lambda)$ is  $-h^{\vee}$
(the {\em critical level}).

\subsubsection{}\label{viraf}
A restricted $[\fg,\fg]$-module $N$ (see Subsection~\ref{symmKm})
of a non-critical level $k\not=-h^{\vee}$
admits an action of Virasoro algebra
given by Sugawara operators $L_n, n\in\mathbb{Z}$,
see~\cite{Kbook}, Section~12.8. In this case
the Casimir operator $\hat{\Omega}$ takes the form
$\hat{\Omega}=2r(k+h^{\vee})(D+L_0)$, and it acts on $L(\lambda)$ by
$(\lambda+2\rho,\lambda)\id$ (see~\cite{Kbook}, Chapter 2).
Notice that $\hat{\Omega}$ acts by different
scalars on $L(\nu)$ and on $L(\nu+s\delta)$, if $\nu$
is non-critical and $s\not=0$.

\subsubsection{}
Recall that $L(\lambda)\cong L(\lambda')$ as $[\fg,\fg]$-modules
iff $\lambda|_{\fh'}=\lambda'|_{\fh'}$ that is
$\lambda'-\lambda\in\mathbb{C}\delta$. Moreover,
if $\lambda'-\lambda=s\delta$, then $\Delta(\lambda)=\Delta(\lambda')$,
and, taking tensor products by
the one-dimensional module $L(s\delta)$, we obtain isomorphisms
$\Ext^1(L(\lambda),L(\nu))\iso\Ext^1(L(\lambda+s\delta),L(\nu+s\delta))$
for any $\nu\in\fh^*$.
In this subsection we consider non-critical
 weights $\lambda\in(\fh')^*$, and denote
by $L(\lambda)$ the corresponding $[\fg,\fg]$-module. By above,
$\Delta(\lambda), \Upsilon_{L(\lambda)}$
is well defined for $\lambda\in(\fh')^*$.
Notice that $\Ext^1_{[\fg,\fg]}(L(\lambda),L(\nu))\not=0$ implies
that for each $\lambda'\in\fh^*$ satisfying
$\lambda'|_{\fh'}=\lambda$ there exists a unique $\nu'\in\fh^*$ satisfying
$\nu'|_{\fh'}=\nu$ such that $\Ext^1_{\fg}(L(\lambda'),L(\nu'))\not=0$; such
$\nu'$ is determined by the condition
$(\lambda'+2\rho,\lambda')=(\nu'+2\rho,\nu')$.
It is easy to see that
$\Ext^1_{\fg}(L(\lambda'),L(\nu'))=\Ext^1_{[\fg,\fg]}(L(\lambda),L(\nu))$
if $\lambda\not=\nu$.

\subsubsection{}
\begin{rem}{Remkrat}
Let $\lambda\in\fh^*$ be such that
$k:=\langle\lambda,K\rangle\in\mathbb{Q}\setminus\{-h^{\vee}\}$.
Write $k+h^{\vee}=\frac{p}{q}$ with coprime
$p,q\in\mathbb{Z}, p\not=0, q>0$.

Since $\Delta^{\vee}$  is invariant under the shift by $r^{\vee}K$,
$\Delta(\lambda)^{\vee}$ is invariant under the shift by $qr^{\vee}K$.
In particular, $\mathbb{Q}\Delta(\lambda)^{\vee}+\mathbb{Q}K=
\mathbb{Q}\Delta(\lambda)^{\vee}$ so
$\mathbb{Q} \Delta(\lambda)+\mathbb{Q}\delta=\mathbb{Q}\Delta(\lambda)$
if $\lambda$ is non-critical and has rational level.
\end{rem}

\subsection{}\label{43}
Recall that (see the introduction) a weight $\lambda\in(\fh')^*$ of
non-critical level $k$ is weakly admissible iff
\begin{equation}\label{16}
\langle\lambda+\rho,\alpha^{\vee}\rangle\in\mathbb{Z}_{\geq 0}\
\text{ for all }\alpha\in\Delta_+(\lambda).
\end{equation}
Denote the set of such weights by $wAdm_k$.

Retain notation of Subsection~\ref{Pilambda}. By~\Rem{Remkrat},
for $\beta\in\Pi(\lambda)$ one has
$\beta\pm qr\delta\in \Delta(\lambda)$. Moreover,
$\beta+qr\delta\in \Delta(\lambda)_+$, so
$qr\delta-\beta=s_{\beta}(\beta+qr\delta)\in \Delta(\lambda)_+$, by
($\Pi 3$) of~\ref{Pilambda}. Hence $\beta\in\Pi(\lambda)$ forces
$qr\delta-\beta\in \Delta(\lambda)_+$.
If $\lambda$ is weakly admissible, then, by~\Rem{Remkrat},
$(\lambda+\rho,\beta),\ (\lambda+\rho,qr\delta-\beta)\geq 0$,
so $k+h^{\vee}=(\lambda+\rho,\delta)\geq 0$.
Hence there are no weakly admissible weights of level $k$ for
$k+h^{\vee}\in\mathbb{Q}_{<0}$.

\Thm{propLie1} implies the following corollary.

\subsubsection{}
\begin{cor}{corLie1}
(i) If any self-extension of $L(\lambda)$ with a diagonal action of $K$
splits, then
 $\mathbb{C} \Delta(\lambda)+\mathbb{C}\delta=\mathbb{C} \Delta$;
if, in addition, $k\in\mathbb{Q}\setminus\{-h^{\vee}\}$, then
$\lambda$ is rational.

(ii) A shifted-regular weight $\lambda$ is $k$-admissible
iff it is maximal in its $W(\lambda).$-orbit and
$\mathbb{C} \Delta(\lambda)+\mathbb{C}\delta=\mathbb{C} \Delta$.

(iii) For rational $k$ three sets: shifted-regular $k$-admissible weights,
shifted-regular admissible weights and KW-admissible weights coincide.
\end{cor}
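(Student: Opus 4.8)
The plan is to route everything through the map $\Upsilon_{L(\lambda)}$ of Subsection~\ref{mapUps}, combined with the Sugawara translation between $[\fg,\fg]$-modules and $\fg$-modules. Since $k\neq -h^\vee$ (we are in the non-critical setting of~\ref{43}), by the Sugawara construction every $[\fg,\fg]$-module on which $K$ acts by the scalar $k$ carries a compatible $\fg$-action with $D=-L_0$; hence a $[\fg,\fg]$-self-extension $N$ of $L(\lambda)$ with $K|_N=k\cdot\id$ becomes a $\fg$-self-extension of $L(\lambda')$, where $\lambda'\in\fh^*$ restricts to $\lambda$ on $\fh'$ and is pinned down by the $L_0$-eigenvalue, and conversely every $\fg$-self-extension of $L(\lambda')$ whose $\Upsilon$-image kills $K$ restricts back to such an $N$. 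Note that $\Delta(\lambda')=\Delta(\lambda)$ (this set depends only on $\lambda|_{\fh'}$, since $(\delta,\alpha)=0$ for real $\alpha$), and that $\lambda'$ inherits from $\lambda$ the properties ``non-critical'', ``shifted-regular'' and ``maximal in its $W(\lambda).$-orbit''. Writing $\Upsilon_{L(\lambda')}(N)=\mu\in\fh^*$: the condition $K|_N=k\cdot\id$ is exactly $\langle\mu,K\rangle=0$, and by~\Rem{remprop1} the sequence splits over $[\fg,\fg]$ iff $\langle\mu,\fh'\rangle=0$, i.e. iff $\mu\in\mathbb C\delta$ (because $\mathbb C\delta=\{\mu\in\fh^*\mid\langle\mu,\fh'\rangle=0\}$, as $\fh=\fh'\oplus\mathbb C D$ and $\langle\delta,\fh'\rangle=0$, $\langle\delta,D\rangle=1$). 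Consequently: \emph{every} such self-extension splits over $[\fg,\fg]$ iff $\im\Upsilon_{L(\lambda')}\cap\{\mu\mid\langle\mu,K\rangle=0\}\subseteq\mathbb C\delta$.

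The second ingredient is a short computation inside the finite-dimensional $\fh^*$. Since $\Delta(\lambda)$ consists of real roots, $(\mu,\alpha)=0\Leftrightarrow\langle\mu,\alpha^\vee\rangle=0$, so $\Delta(\lambda)^\perp=\{\mu\mid\langle\mu,\Delta(\lambda)^\vee\rangle=0\}$ is the annihilator of $\mathbb C\Delta(\lambda)^\vee\subseteq\fh'$; together with $\mathbb C\delta=\{\mu\mid\langle\mu,\fh'\rangle=0\}$ and dualizing the pairing $\fh^*\times\fh$, this gives
$$\Delta(\lambda)^\perp\cap\{\mu\mid\langle\mu,K\rangle=0\}\subseteq\mathbb C\delta\ \Longleftrightarrow\ \fh'=\mathbb C\Delta(\lambda)^\vee+\mathbb C K .$$
Finally, the isomorphism $\fh\iso\fh^*$ induced by the form $(-,-)$ sends $K\mapsto\delta$ and $\alpha^\vee\mapsto\tfrac{2}{(\alpha,\alpha)}\alpha$, hence carries $\fh'=\mathbb C\Delta^\vee$ onto $\mathbb C\Delta$ and $\mathbb C\Delta(\lambda)^\vee$ onto $\mathbb C\Delta(\lambda)$; thus $\fh'=\mathbb C\Delta(\lambda)^\vee+\mathbb C K$ is equivalent to $\mathbb C\Delta(\lambda)+\mathbb C\delta=\mathbb C\Delta$.

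Now (i): $\lambda$ is non-critical, so $C(\lambda)\subseteq\Delta(\lambda)$ and \Prop{propalmin} yields $\Delta(\lambda)^\perp\subseteq\im\Upsilon_{L(\lambda')}$ (no shifted-regularity is needed here). If all self-extensions split, the first paragraph and this inclusion force $\Delta(\lambda)^\perp\cap\{\langle\mu,K\rangle=0\}\subseteq\mathbb C\delta$, whence $\mathbb C\Delta(\lambda)+\mathbb C\delta=\mathbb C\Delta$ by the second paragraph. If moreover $k\in\mathbb Q\setminus\{-h^\vee\}$, then \Rem{Remkrat} gives $\mathbb C\Delta(\lambda)+\mathbb C\delta=\mathbb C\Delta(\lambda)$, so the equality becomes $\mathbb C\Delta(\lambda)=\mathbb C\Delta$, i.e. $\lambda$ is rational. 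For (ii): a shifted-regular $k$-admissible $\lambda$ is weakly admissible, hence maximal in its $W(\lambda).$-orbit by~\ref{noncrit}, and $\mathbb C\Delta(\lambda)+\mathbb C\delta=\mathbb C\Delta$ by (i). Conversely, if $\lambda$ is maximal in its $W(\lambda).$-orbit it is weakly admissible of level $k$ by~\ref{noncrit}; and if $\mathbb C\Delta(\lambda)+\mathbb C\delta=\mathbb C\Delta$, then $\lambda'$ meets the hypotheses of~\Thm{propLie1}, so $\im\Upsilon_{L(\lambda')}=\Delta(\lambda)^\perp$, and every $\mu$ arising from a self-extension with $K=k\cdot\id$ lies in $\Delta(\lambda)^\perp\cap\{\langle\mu,K\rangle=0\}=\mathbb C\delta$ by the second paragraph, so such extensions split by~\Rem{remprop1}; thus $\lambda$ is $k$-admissible.

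Finally (iii): by (ii) together with~\Rem{Remkrat} (which, for rational $k$, converts $\mathbb C\Delta(\lambda)+\mathbb C\delta=\mathbb C\Delta$ into rationality of $\lambda$), the shifted-regular $k$-admissible weights are precisely the non-critical, rational, shifted-regular weights that are maximal in their $W(\lambda).$-orbit; by~\ref{noncrit} and~\ref{admKM} this is exactly the set of KW-admissible weights of level $k$, and by~\Cor{coradm1} the latter coincides with the set of shifted-regular admissible weights. I expect the main obstacle to be organizational rather than computational: keeping the $[\fg,\fg]$-versus-$\fg$ translation consistent with the constraint $K=k\cdot\id$, and passing cleanly between $\Delta(\lambda)^\perp\subseteq\fh^*$ and $\mathbb C\Delta(\lambda)^\vee\subseteq\fh'$ despite the form being degenerate on $\fh'$, which is precisely why one must dualize with the non-degenerate form on all of $\fh$ and carry $K$ (equivalently $\delta$) along explicitly.
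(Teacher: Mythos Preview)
Your proof is correct and follows essentially the same route as the paper's: both reduce the splitting condition to $\im\Upsilon_{L(\lambda)}\cap\{\mu\mid\langle\mu,K\rangle=0\}\subseteq\mathbb C\delta$, use the inclusion $\Delta(\lambda)^\perp\subseteq\im\Upsilon_{L(\lambda)}$ from~\Prop{propalmin} for (i), the equality $\im\Upsilon_{L(\lambda)}=\Delta(\lambda)^\perp$ from~\Thm{propLie1} for (ii), and then combine~\Rem{Remkrat} with~\Cor{coradm1} for (iii). The only differences are presentational: you make the $[\fg,\fg]$-to-$\fg$ lift explicit via Sugawara (the paper handles this in the preamble to the subsection), and you carry out the orthogonality step directly in $\fh^*$ via the pairing with $\mathbb C\Delta(\lambda)^\vee+\mathbb C K\subseteq\fh'$, whereas the paper passes to perps and writes the same condition as $(\im\Upsilon_{L(\lambda)})^\perp+\mathbb C\delta=\mathbb C\Delta$.
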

\begin{proof}
Recall that $\fh'$ is spanned by $\{\alpha^{\vee}|\ \alpha\in\Pi\}$,
and that $\lambda$ is rational iff
$\mathbb{C}\Delta(\lambda)^{\vee}=\fh'$,
that is $\mathbb{C} \Delta(\lambda)=\mathbb{C} \Delta$.
Observe that $\xi\in\fh^*$ vanishes on $\fh'$
iff $\xi\in\Delta^{\perp}=\mathbb{C}\delta$.

For (i) assume that any self-extension of $L(\lambda)$
with a diagonal action of $K$
splits over $[\fg,\fg]$. In the light of~\Rem{rem2}, this means that
\begin{equation}\label{som}
\im \Upsilon_{L(\lambda)}\cap\{\mu|\ \langle\mu,K\rangle=0\}\subset
\Delta^{\perp}=\mathbb{C}\delta,
\end{equation}
By~\Prop{propalmin}, $\delta\in\im \Upsilon_{L(\lambda)}$.
Thus~(\ref{som}) is equivalent to
$\im \Upsilon_{L(\lambda)}\cap\{\mu|\ \langle\mu,K\rangle=0\}
=\mathbb{C}\delta$, which can be rewritten as
$(\im \Upsilon_{L(\lambda)})^{\perp}+\mathbb{C}\delta=\mathbb{C}\Delta$.
Hence any self-extension of $L(\lambda)$ with a diagonal action of $K$
splits over $[\fg,\fg]$ iff
\begin{equation}\label{som1}
(\im \Upsilon_{L(\lambda)})^{\perp}+\mathbb{C}\delta=\mathbb{C}\Delta.
\end{equation}
By~\Prop{propalmin}, $(\im\Upsilon_{L(\lambda)})^{\perp}\subset
\mathbb{C}\Delta(\lambda)$.
This gives (i).

Recall that $ k$-admissibility of $\lambda$
means that $\langle\lambda,K\rangle=k$, that $\lambda$ is weakly admissible,
that is $\lambda$ is maximal in its $W(\lambda).$-orbit, and that
any self-extension of $L(\lambda)$ with a diagonal action of $K$ splits
over $[\fg,\fg]$. By~\Thm{propLie1}, if $\lambda$ is shifted-regular
and maximal in its $W(\lambda).$-orbit,
then $(\im \Upsilon_{L(\lambda)})^{\perp}=\mathbb{C}\Delta(\lambda)$.
Now~(\ref{som1}) implies (ii) and
(iii) follows from (ii), ~\Rem{Remkrat}, and~\Cor{coradm1}.
\end{proof}

\subsection{Case of rational level}\label{polyh}
By~\Rem{Remkrat}, the set of weakly admissible weight of level $k$ is empty,
if $k+h^{\vee}\in\mathbb{Q}_{<0}$. Take $k+h^{\vee}\in\mathbb{Q}_{>0}$
and write $k+h^{\vee}=p/q$ for coprime $p,q\in\mathbb{Z}_{>0}$. Set
$$X_k:=\{\lambda\in (\fh')^*|\ \lambda\in wAdm_k\ \&\
\mathbb{C} \Delta(\lambda)^{\vee}=\fh'\}.$$
By~\Cor{corLie1}, the set of $k$-admissible weights is a subset of
$X_k$.

Recall that $\Delta(\lambda)^{\vee}_+:=\{\beta^{\vee}\in\Delta^{\vee}_+|\
\langle \lambda+\rho,\beta^{\vee}\rangle\in\mathbb{Z}\}$. Set
$$X_k(\Gamma):=\{\lambda\in wAdm_k|\
\Delta(\lambda)^{\vee}_+=\Gamma\}\ \text{ for }\Gamma\subset\Delta^{\vee}_+,\ \
B_k:=\{\Gamma\subset\Delta_+^{\vee}|\ \mathbb{C} \Gamma=\fh'\ \& \
X(\Gamma)\not=\emptyset\}.$$
One has
$$X_k=\coprod_{\Gamma\in B_k} X_k(\Gamma).$$

By~\Rem{Remkrat}, if $\lambda$ has level $k$, then
$\Delta(\lambda)$ is invariant under the shift by $rq\delta$ ($r\in\{1,2,3\}$).
Since $\Delta$ has finitely many orbits
modulo $\mathbb{Z}rq\delta$,  $\Delta$ has finitely many subsets
which are stable under the shift by $rq\delta$.
Hence there are finitely many possibilities for $\Delta(\lambda)$, so
the set $B_k$ is finite.

For each $\Gamma\in B_k$ define
the polyhedron $P(\Gamma)\subset (\fh')^*$ by
$$\cP^k(\Gamma):
=\{\lambda\in (\fh')^*|\ \langle\lambda,K\rangle=k\ \text{ and }\
\forall \beta^{\vee}\in\Gamma,\
\langle \lambda+\rho,\beta^{\vee}\rangle\in\mathbb{R}_{\geq 0}\}.$$
We will show that each $\cP^k(\Gamma)$ is compact.

We call $\lambda\in  \cP^k(\Gamma)$ {\em integral}
if $\Gamma=\Delta(\lambda)^{\vee}_+$. Clearly, if
$\lambda\in  \cP^k(\Gamma)$ is integral, then
$\langle \lambda+\rho,\beta^{\vee}\rangle\in
\mathbb{Z}_{\geq 0}$ for each $\beta^{\vee}\in\Gamma$. Moreover,
if $\lambda\in  X_k(\Gamma)$, then $\lambda\in  \cP^k(\Gamma)$ and is integral.
Thus $X_k(\Gamma)$ is the set of integral points of the polyhedron
$\cP^k(\Gamma)$. If $\lambda\in X_k(\Gamma)$ is shifted-regular, then
$\langle \lambda+\rho,\beta^{\vee}\rangle\in
\mathbb{Z}_{>0}$ for each $\beta^{\vee}\in\Gamma$.
Thus the set of KW-admissible weights of a level $k$
is a finite union of the interior integral points  of the polyhedra
$\cP^k(\Gamma)$, $\Gamma\in B_k$. The set of $k$-admissible weights
lies in the  union of integral points of these polyhedra
and contains all their interior integral points (see~\Cor{corLie1}).

For each $\Gamma\in B_k$ the set $\Gamma\coprod (-\Gamma)\subset\Delta^{\vee}$
is a root  subsystem. By~\Rem{Remkrat}, $\Gamma\coprod (-\Gamma)$
is invariant under the shift by $qr^{\vee}K$.
We denote by $\Pi(\Gamma)$ the corresponding set of simple roots, i.e.
$\Gamma=\Delta(\lambda)^{\vee}_+,
\Pi(\Gamma)=\Pi(\lambda)$ for $\lambda\in X_k(\Gamma)$,
see~\ref{Pilambda} for the notation. One has
$\cP^k(\Gamma)=\{\lambda\in (\fh')^*|\ \langle \lambda,K\rangle=k\ \&\
\forall \beta^{\vee}\in\Pi(\Gamma)\
\langle \lambda+\rho,\beta^{\vee}\rangle\in\mathbb{R}_{\geq 0}\}$.

Let us show that $\cP^k(\Gamma)$ is compact for any $\Gamma\in B_k$.
Indeed, by Subsection~\ref{43},
$qr^{\vee}K-\beta^{\vee}\in\Gamma$ if $\beta\in \Pi(\Gamma)$
(because $\Gamma=\Delta(\lambda)^{\vee}, \Pi(\Gamma)=\Pi(\lambda)$
for $\lambda\in X_k(\Gamma)$). Then for each $\beta\in \Pi(\Gamma)$
and any $\lambda'\in \cP^k(\Gamma)$ one has
$0\leq\langle \lambda'+\rho,qr^{\vee}K-\beta^{\vee}\rangle
=-\langle \lambda'+\rho,\beta^{\vee}\rangle
+pr^{\vee}$, that is
$\langle \lambda'+\rho,\beta^{\vee}\rangle\in [0; pr^{\vee}]$.
Since $\mathbb{C}\Pi(\Gamma)=\fh'$, $\cP^k(\Gamma)$ is compact.
For any $\lambda\in X_k(\Gamma)$ and for each $\beta\in \Pi(\Gamma)$ the value
$\langle \lambda+\rho,\beta^{\vee}\rangle$ is an integer
in the interval $[0; pr^{\vee}]$. Thus  $X_k(\Gamma)$ is a finite set.

\subsubsection{}
\begin{cor}{corkh}
For $k+h^{\vee}\in \mathbb{Q}_{\leq 0}$, the category $\Adm_k$ is empty.
For $k+h^{\vee}\in \mathbb{Q}_{>0}$,
the category $\Adm_k$ contains finitely many
irreducibles.
\end{cor}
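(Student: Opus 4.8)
The plan is to deduce \Cor{corkh} directly from the analysis of $k$-admissible weights carried out in Subsections~\ref{43} and~\ref{polyh}, after first identifying the simple objects of $\Adm_k$. I would begin by observing that the isomorphism classes of simple objects of $\Adm_k$ are in bijection with the set of $k$-admissible weights in $(\fh')^*$. Indeed, every nonzero module in $\Adm_k$ has an irreducible subquotient (take a nonzero cyclic submodule and quotient it by a maximal proper submodule), and such a subquotient carries $K=k\cdot\id$ and is $k$-admissible by the very definition of $\Adm_k$; conversely, a $k$-admissible $L(\lambda)$ has level $k\ne -h^{\vee}$, so it is a restricted highest weight $[\fg,\fg]$-module on which the Sugawara operator $L_0$ of~\ref{viraf} acts locally finitely, whence $L(\lambda)\in\Adm_k$; and $L(\lambda)\cong L(\lambda')$ over $[\fg,\fg]$ iff $\lambda=\lambda'$ in $(\fh')^*$. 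So ``$\Adm_k$ is empty'' amounts to ``there are no $k$-admissible weights'', and ``$\Adm_k$ has finitely many irreducibles'' amounts to ``there are finitely many $k$-admissible weights''.

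For $k+h^{\vee}\in\mathbb{Q}_{\le 0}$ I would show there are none. If $k=-h^{\vee}$, then by~\ref{notaf} every weight of level $k$ is critical, hence not weakly admissible (Subsection~\ref{weakad}), hence not $k$-admissible. If $k+h^{\vee}\in\mathbb{Q}_{<0}$, I would invoke the computation of Subsection~\ref{43}: writing $k+h^{\vee}=p/q$, for any weakly admissible $\lambda$ of level $k$ and any $\beta\in\Pi(\lambda)$ both $\beta$ and $qr\delta-\beta$ lie in $\Delta_+(\lambda)$ by~\Rem{Remkrat}, and weak admissibility forces $(\lambda+\rho,\beta)\ge 0$ and $(\lambda+\rho,qr\delta-\beta)\ge 0$, so $k+h^{\vee}=(\lambda+\rho,\delta)\ge 0$, a contradiction. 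Either way $\Adm_k$ is empty.

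For $k+h^{\vee}=p/q\in\mathbb{Q}_{>0}$ with coprime $p,q>0$, it suffices to show the set of $k$-admissible weights is finite. By~\Cor{corLie1}(i) this set lies in the set $X_k=\coprod_{\Gamma\in B_k}X_k(\Gamma)$ of Subsection~\ref{polyh}, so I would check that the index set and each piece are finite. The index set $B_k$ is finite because $\Delta(\lambda)^{\vee}$ is stable under the shift by $qr^{\vee}K$ (by~\Rem{Remkrat}) while $\Delta^{\vee}$ has only finitely many orbits modulo $\mathbb{Z}qr^{\vee}K$, leaving finitely many possibilities for $\Delta(\lambda)^{\vee}$. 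For each $\Gamma\in B_k$, $X_k(\Gamma)$ is the set of integral points of the polyhedron $\cP^k(\Gamma)$, which is compact: for $\beta\in\Pi(\Gamma)$ one has $\beta^{\vee},\ qr^{\vee}K-\beta^{\vee}\in\Gamma$, so $\langle\lambda+\rho,\beta^{\vee}\rangle\in[0,pr^{\vee}]$ for all $\lambda\in\cP^k(\Gamma)$, while $\mathbb{C}\Pi(\Gamma)=\fh'$; hence $X_k(\Gamma)$ is a finite set of lattice points and $X_k$ is finite.

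Since the substantive estimates are already contained in Subsections~\ref{43} and~\ref{polyh}, I do not expect a deep obstacle. The one point I would be careful about is the first step — verifying that a $k$-admissible $L(\lambda)$ genuinely lies in $\Adm_k$, which is precisely where the hypothesis $k\ne -h^{\vee}$ and the Sugawara construction of~\ref{viraf} are needed to secure local finiteness over $\fh'\oplus\mathbb{C}L_0$ — together with the compactness of the polyhedra $\cP^k(\Gamma)$, the geometric core of the finiteness, which is however already established in~\ref{polyh}.
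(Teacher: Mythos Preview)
Your proposal is correct and follows essentially the same approach as the paper: the corollary is deduced directly from the arguments already laid out in Subsections~\ref{43} and~\ref{polyh}, namely the nonexistence of weakly admissible weights for $k+h^{\vee}\in\mathbb{Q}_{<0}$ and the containment of the $k$-admissible weights in the finite set $X_k=\coprod_{\Gamma\in B_k}X_k(\Gamma)$ for $k+h^{\vee}\in\mathbb{Q}_{>0}$. Your write-up is in fact more explicit than the paper's (which simply records the corollary without further argument), in that you spell out the identification of simple objects of $\Adm_k$ with $k$-admissible weights via the Sugawara construction; note, however, that the case $k=-h^{\vee}$ is moot since $\Adm_k$ is only defined for $k\ne-h^{\vee}$.
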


\subsubsection{}
Consider the case, when $\Gamma\in B_k$ corresponds
to an irreducible root subsystem ($\Pi(\Gamma)$ is the set of simple roots
of an irreducible root system). We claim that the interior integral
points of each face of the polyhedron corresponding to
$X_k(\Gamma)$ are $k$-admissible weights (by interior points of a face we
mean the ones that  do not lie on the faces of codimension $2$).

Indeed, the polyhedron corresponding to
$X_k(\Gamma)$ is given by $\langle \lambda+\rho,\beta^{\vee}\rangle\in
\mathbb{Z}_{\geq 0}$ for each $\beta\in\Pi(\Gamma)$.
The faces are parameterized
by the elements of $\Pi(\Gamma)$. Consider the face corresponding to some
$\beta\in\Pi(\Gamma)$. The interior integral points of this face are $\lambda$s
such that $\langle \lambda+\rho,\beta^{\vee}\rangle=0,\
\langle \lambda+\rho,\gamma^{\vee}\rangle\in\mathbb{Z}_{>0}$
for each $\gamma\in\Pi(\Gamma)\setminus\{\beta\}$. By~\Lem{lemmik},
$\gamma\in \Pi(\Gamma)$ is $\lambda$-minimal for
$\lambda\in X_k(\Gamma)$ in the sense of~\ref{CC'} iff
$\langle \lambda+\rho,\gamma^{\vee}\rangle\in\mathbb{Z}_{>0}$.
In particular, if $\lambda$ is an interior integral point of the face
corresponding to $\beta$, then  each $\gamma\in\Pi(\Gamma)\setminus\{\beta\}$
is $\lambda$-minimal. By~\Prop{propalmin},
$\im\Upsilon_{L(\lambda)}\subset (\Pi(\Gamma)\setminus\{\beta\})^{\perp}$.
By~(\ref{som}), in order to show that $\lambda$ is $k$-admissible it is
enough to verify that
$((\Pi(\Gamma)\setminus\{\beta\})^{\perp}\cap
\delta^{\perp})\subset \mathbb{C}\delta$. One has
$(\Pi(\Gamma)\setminus\{\beta\})^{\perp}\cap
\delta^{\perp}=(\Pi(\Gamma)\setminus\{\beta\}\cup\{\delta\})^{\perp}$,
so it suffices to show that the span of
$(\Pi(\Gamma)\setminus\{\beta\})\cup\{\delta\}$ coincides with
the span of $\Delta$.
By above, $qr\delta=\sum_{\gamma\in \Pi(\Gamma)} a_{\gamma}\gamma$;
since $\Pi(\Gamma)$ is the set of simple roots of
an irreducible root subsystem, $a_{\gamma}\not=0$
for all $\gamma$. Hence $a_{\beta}\not=0$, so the span of
$(\Pi(\Gamma)\setminus\{\beta\})\cup\{\delta\}$ coincides with the span of
$\Gamma$, which is equal to the span of $\Delta$ as required.

\subsubsection{}\label{Lambda0}
Let us describe $X_k(\Gamma)$, where
$\Gamma\in B_k$ is such that the root system $\Gamma\cup(-\Gamma)$
is isomorphic to $\Delta^{\vee}$. Note that for a type $A_n^{(1)}$ all
$\Gamma\in B_k$ have this property.
Let $\Pi=\{\alpha_i\}_{i=0}^n$, where $\alpha_0$ is the affine root.
Let
$$M:=\{\nu\in\sum_{i=1}^n \mathbb{Q}\alpha_i|\
(\nu,\alpha_i)\in\mathbb{Z}\ \forall i=1,\ldots, n\}$$
and let $\tilde{W}$  be the semidirect product of the finite Weyl group
(generated by $\{s_{\alpha_i}\}_{i=1}^n$) and translations $t_{\alpha},
\alpha\in M$ (see~\cite{Kbook}, Chapter 6 or~\cite{KWmod2} for notation).
One has: $\tilde{W}=\tilde{W}^+\ltimes W$, where $W$ is the Weyl group of $\fg$
and $\tilde{W}^+(\Pi)=\Pi$.

Let $\Lambda_0\in\fh^*$ be such that
$\langle\Lambda_0,\alpha_0^{\vee}\rangle=1,
\langle\Lambda_0,\alpha_i^{\vee}\rangle=0$ for $i>0$.
For $u\in\mathbb{Z}_{>0}$ set
$$S_{(u)}:=
\{(u-1)K+\alpha_0^{\vee},\alpha_1^{\vee},\ldots,\alpha_n^{\vee}\}.$$

The proof of the following result (and, moreover, a complete description of
$B_k$ and $X_k(\Gamma)$ for $\Gamma\in B_k$)  is
similar to that Thm. 2.1 (resp. Thm. 2.2)  of~\cite{KWmod2}.
\subsubsection{}
\begin{thm}{}
Let $k+h^{\vee}=\frac{p}{q}$, where $p,q$ are coprime positive integers.
Let $\Gamma\in B_k$ be such that the root system $\Gamma\cup(-\Gamma)$
is isomorphic to $\Delta^{\vee}$. Then

(i) $\gcd (q,r^{\vee})=1$;

(ii) $\Pi(\Gamma)^{\vee}=y(S_{(q)})$ for some $y\in \tilde{W}$
such that $y(S_{(q)})\subset \Delta_+^{\vee}$;

(iii) $X_k(\Gamma)=\{y.(\lambda-\frac{p(q-1)}{q}\Lambda_0)|\
\langle\lambda+\rho,K\rangle=p \ \&\
\langle\lambda+\rho,\alpha^{\vee}\rangle\in \mathbb{Z}_{\geq 0}
\text{ for all }\alpha\in\Pi\}$.
\end{thm}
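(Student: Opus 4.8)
The plan is to follow the strategy of~\cite{KWmod2}, Theorems 2.1 and 2.2, adapted to the present setting. The heart of the matter is a classification of the root subsystems of $\Delta^\vee$ that can occur as $\Delta(\lambda)^\vee=\Gamma\cup(-\Gamma)$ for $\lambda\in X_k(\Gamma)$. By~\Rem{Remkrat} such a subsystem is invariant under the shift by $qr^\vee K$; it is \emph{integral}, being cut out by conditions $\langle\lambda+\rho,-\rangle\in\mathbb{Z}$; it has full rank, since $\mathbb{C}\Gamma=\fh'$; and by hypothesis it is abstractly isomorphic to the affine root system $\Delta^\vee$. First I would prove that a subsystem of $\Delta^\vee$ with these properties is, after applying a suitable element of the extended affine Weyl group $\tilde{W}$ and choosing a system of positive roots, the subsystem generated by $S_{(u)}$ for a unique $u\in\mathbb{Z}_{>0}$ (the ``principal'' subsystems); this is the analogue of~\cite{KWmod2}, Thm. 2.1.

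Granting this, parts (i) and (ii) follow. Comparing the minimal imaginary root of the subsystem generated by $S_{(u)}$ with the invariance of $\Gamma\cup(-\Gamma)$ under the shift by $qr^\vee K$, and using that $q$ is the exact denominator of the level $k+h^\vee=p/q$ (because $\gcd(p,q)=1$), identifies $u$ with $q$. The requirement that $\Gamma\cup(-\Gamma)$ be abstractly isomorphic to $\Delta^\vee$ itself, rather than to another affine root system --- which is what the subsystem generated by $S_{(u)}$ becomes when $\gcd(u,r^\vee)\neq1$ --- forces $\gcd(q,r^\vee)=1$; this is exactly the ``lacety'' dichotomy of~\cite{KW4}, and proves (i). For (ii) I then choose $y\in\tilde{W}$ with $y(S_{(q)})=\Pi(\Gamma)^\vee$; since by definition (see~\ref{Pilambda}) $\Pi(\Gamma)$ consists of the indecomposable elements of $\Delta_+(\Gamma)$, the set $y(S_{(q)})$ automatically lies in $\Delta_+^\vee$.

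For (iii) I would make the change of variables explicit. Writing $\lambda=y.(\mu-\frac{p(q-1)}{q}\Lambda_0)$, so that $\lambda+\rho=y\bigl(\mu+\rho-\frac{p(q-1)}{q}\Lambda_0\bigr)$, and evaluating on $\beta^\vee\in\Pi(\Gamma)^\vee=y(S_{(q)})$, one gets $\langle\lambda+\rho,\beta^\vee\rangle=\bigl\langle\mu+\rho-\frac{p(q-1)}{q}\Lambda_0,\,y^{-1}(\beta^\vee)\bigr\rangle$ with $y^{-1}(\beta^\vee)\in S_{(q)}$. Using the definition of $\Lambda_0$, the value of $\Lambda_0$ on $K$, and the $\tilde{W}$-invariance of $K$ and $\delta$, a routine computation shows that the contribution of $\alpha_i^\vee$ ($i\geq1$) is $\langle\mu+\rho,\alpha_i^\vee\rangle$, that of $(q-1)K+\alpha_0^\vee$ is $\langle\mu+\rho,\alpha_0^\vee\rangle$, and that the level of $\lambda$ equals $\langle\mu+\rho,K\rangle-\frac{p(q-1)}{q}$. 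Hence $\lambda$ has level $k$ and satisfies the weak-admissibility inequalities~(\ref{16}) exactly when $\langle\mu+\rho,K\rangle=p$ and $\langle\mu+\rho,\alpha^\vee\rangle\in\mathbb{Z}_{\geq 0}$ for all $\alpha\in\Pi$. Since by~\ref{polyh} the set $X_k(\Gamma)$ consists precisely of the integral points of $\cP^k(\Gamma)$, this matches $X_k(\Gamma)$ with the set in (iii); the only remaining point is that these $\lambda$ satisfy $\Delta(\lambda)^\vee_+=\Gamma$ and not merely $\supseteq$, which again follows from $\gcd(p,q)=1$ (a strictly larger full-rank integral subsystem would be principal, generated by $S_{(u')}$ with $u'$ a proper divisor of $q$).

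I expect the main obstacle to be the classification of the first paragraph: showing that a full-rank integral root subsystem of $\Delta^\vee$ abstractly isomorphic to $\Delta^\vee$ is, up to $\tilde{W}$, generated by $S_{(q)}$, and in particular that this forces $\gcd(q,r^\vee)=1$. This is the technical core of~\cite{KWmod2}, Thm. 2.1, and requires a case-by-case analysis of the affine types, untwisted and twisted, in which the interplay between the dual tier number $r^\vee$ and the arithmetic of $q$ is subtle. Once it is in hand, (ii) and (iii) amount to bookkeeping with the extended affine Weyl group.
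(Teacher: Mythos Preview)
Your proposal is correct and matches the paper's approach exactly: the paper gives no proof beyond the remark that it ``is similar to that Thm.~2.1 (resp.\ Thm.~2.2) of~\cite{KWmod2}'', and you have outlined precisely how that adaptation goes, including the identification $u=q$, the lacety condition yielding (i), and the change-of-variables computation for (iii). Your explicit verification that the shift by $\frac{p(q-1)}{q}\Lambda_0$ converts the inequalities on $\Pi(\Gamma)^\vee=y(S_{(q)})$ into the standard dominance inequalities on $\Pi$ at level $p-h^\vee$ is correct.
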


\subsubsection{}\begin{rem}{}
If $0<p<h^{\vee}$, then the polyhedra $\cP^k(\Gamma)$ do not have
interior integral points, since, by Thm. 2.1 of~\cite{KWmod2},
there are no KW-admissible weights of  level $\frac{p}{q}-h^{\vee}$
for such $p$.
\end{rem}

\subsubsection{Example}\label{exaaff}
Consider the example $\fg=\hat{\fsl}_2,\ k=-2+\frac{p}{q}$, where
$p,q$ are coprime positive integers. Let $\alpha,\delta-\alpha$ be the
simple roots. One has
$$B_k=\{\Gamma_r\}_{r=1}^q,\ \text{ where } \Gamma_r=\{(r-1)K+\alpha^{\vee};
(q-r+1)K-\alpha^{\vee}\};\ X_k(\Gamma_r)=\{\lambda_{r,s}\}_{s=0}^p,$$
where $\langle\lambda_{r,s}+\rho,(r-1)K+\alpha^{\vee}\rangle=s,\ \
 \langle\lambda_{r,s}+\rho,(q-r+1)K-\alpha^{\vee}\rangle=p-s$.
The interior integral points of $\Gamma_r$ are $\{\lambda_{r,s}\}_{s=1}^{p-1}$,
so the KW-admissible weights are $\lambda_{r,s}, r=1,\ldots, q;
s=1,\ldots, p-1$ (this set is empty for $p=1$). The face of
$\Gamma_r$ corresponding to $(r-1)\delta+\alpha$ (resp. to
$(q-r+1)\delta-\alpha$) is
$\lambda_{r,0}$ (resp. $\lambda_{r,p}$), and these points are interior
in this face. Thus $\lambda_{r,0},\lambda_{r,p}$ are $k$-admissible, and
the $k$-admissible weights are $\lambda_{r,s}, r=1,\ldots, q; s=0,\ldots, p$.

\subsection{Vacuum modules}\label{vac}
Retain notation of Subsection~\ref{genVerma}.
Let $\dot{\fg}$ be a simple finite-dimensional Lie algebra, and let
$\fg$ be its (non-twisted) affinization.
Let $l$ be the lacety of $\dot{\fg}$, i.e., the ratio of the lengths
squared of a long and a short root of $\dot{\fg}$.
Let $\dot{\Pi}$
be the set of simple  roots of $\dot{\fg}$, and let
$\theta$ be its highest root and $\theta'$ its highest short root.
The set of  simple roots of $\fg$ is
 $\{\alpha_0\}\cup \dot{\Pi}$, where $\alpha_0=\delta-\theta$.
We normalize the form $(-,-)$ in such a way that
$(\alpha_0,\alpha_0)=(\theta,\theta)=2$.
Let $\Delta$ (resp. $\dot{\Delta}$) be the set of roots
of $\dot{\fg}$ (resp. of $\fg$). Let $\rho$ (resp. $\dot{\rho}$)
be the Weyl vector for $\fg$ (resp. for $\dot{\fg}$).
Recall that $h^{\vee}=(\rho,\theta)+1$ is the dual Coxeter number
and $h=(\rho,\theta')+1$ is the  Coxeter number of $\dot{\fg}$,
and that both numbers are positive integers.

Retain notation of Subsections~\ref{genVerma} and~\ref{Lambda0}.
The space $\dot\Pi^{\perp}$ is the span of $\Lambda_0,\delta$.
The generalized Verma module $M_J(k\Lambda_0)$, where $J=\dot{\Pi}$
is called a {\em vacuum module} and is denoted by $V^k$;
the irreducible {\em vacuum module} $V_k$ is its quotient, i.e.
$V_k=L(k\Lambda_0)$.

By~\cite{GK}, $V^k$ is not irreducible iff $l(k+h^{\vee})$ is a non-negative
rational number, which is not the inverse of an integer. In particular,
if $k$ is irrational, then $V^k$ is irreducible. If $k$, hence
$k+h^{\vee}$, is rational, we write $k+h^{\vee}$ in minimal terms:
$$k+h^{\vee}=\frac{p}{q}, \ \text{ where }p,q\in\mathbb{Z},\ q>0,\
\gcd (p,q)=1.$$

\subsubsection{}
\begin{prop}{propvac1}
(i) $L(k\Lambda_0)$ is not weakly admissible iff
$k+h^{\vee}=\frac{p}{q}$ is rational with $p<h^{\vee}-1$ and $gcd(q,l)=1$,
or $p<h-1$ and $gcd(q,l)=l$.

(ii) $L(k\Lambda_0)$ is  KW-admissible iff $k$ is rational with
$p\geq h^{\vee}$ and $gcd(q,l)=1$
or $p\geq h$ and $gcd(q,l)=l$.

(iii) Any self-extension of $L(k\Lambda_0)$ with
a diagonal action of $K$ splits over $[\fg,\fg]$;

(iv) $L(k\Lambda_0)$ is not admissible for all irrational $k$ and for
$k+h^{\vee}=\frac{p}{q}$ rational with $p<h^{\vee}-1,\ \gcd(q,l)=1$, or
$p<h-1,\ \gcd(q,l)=l$.
\end{prop}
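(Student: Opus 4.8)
The plan is to reduce (i)--(iv) to an explicit determination of the integral root system $\Delta(k\Lambda_0)$, together with the general results of Sections~1--2 applied to the vacuum module $V^k=M_J(k\Lambda_0)$ with $J=\dot\Pi$. For (i): every real root of $\fg$ has the form $\dot\alpha+n\delta$ with $\dot\alpha\in\dot\Delta$, $n\in\mathbb{Z}$, and $(\dot\alpha+n\delta)^\vee=\tfrac{2(\dot\alpha+n\delta)}{(\dot\alpha,\dot\alpha)}$, whence $\langle k\Lambda_0+\rho,(\dot\alpha+n\delta)^\vee\rangle=\langle\dot\rho,\dot\alpha^\vee\rangle+\tfrac{2n(k+h^\vee)}{(\dot\alpha,\dot\alpha)}$. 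For irrational $k$ this lies in $\mathbb{Z}$ only when $n=0$, so $\Delta(k\Lambda_0)=\dot\Delta$, $\Pi(k\Lambda_0)=\dot\Pi$, the values on $\dot\Pi$ equal $1$, and $k\Lambda_0$ is weakly admissible by~\ref{noncrit}. For rational $k$ write $k+h^\vee=p/q$ in lowest terms (the case $k=-h^\vee$ is $p=0$, critical, hence not weakly admissible); the integrality condition reads $q\mid n$ on long roots and $\tfrac{q}{\gcd(q,l)}\mid n$ on short roots, so $\Pi(k\Lambda_0)$ consists of $\dot\Pi$ together with one affine node $\beta_0$, equal to $q\delta-\theta$ if $\gcd(q,l)=1$ and to $\tfrac{q}{l}\delta-\theta'$ if $l\mid q$ (this is the computation underlying Thm.~2.1 of~\cite{KWmod2}, which I would redo directly). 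A direct evaluation gives $\langle k\Lambda_0+\rho,\beta_0^\vee\rangle=p-h^\vee+1$ (resp.\ $p-h+1$), while the values on $\dot\Pi$ and on all other roots of $\Delta_+(k\Lambda_0)$ are automatically $\geq 0$ once this one is (using $|\langle\dot\rho,\dot\alpha^\vee\rangle|\leq h^\vee-1$, resp.\ $\leq h-1$, on long, resp.\ short, roots and the elementary inequality $2h^\vee\geq h$). By~\ref{noncrit} and~\ref{43}, $k\Lambda_0$ is weakly admissible iff $p\geq h^\vee-1$ (resp.\ $p\geq h-1$); negating this is exactly (i).

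For (ii): for rational non-critical $k$ the weight $k\Lambda_0$ is always rational, since $\dot\Pi^\vee$ together with $\beta_0^\vee$ span $\fh'$; hence by the definition recalled in~\ref{admKM} a KW-admissible $k\Lambda_0$ is precisely a weakly admissible shifted-regular one. The boundary value $p=h^\vee-1$ (resp.\ $p=h-1$) is not shifted-regular, because there $\langle k\Lambda_0+\rho,\beta_0^\vee\rangle=0$, so $(k\Lambda_0+\rho,\beta_0)=0$; and for $p\geq h^\vee$ (resp.\ $p\geq h$) one checks $(k\Lambda_0+\rho,\dot\alpha+n\delta)\neq 0$ for every real root, again from the bounds $|\langle\dot\rho,\dot\alpha^\vee\rangle|\leq h^\vee-1$ (long), $\leq h-1$ (short) and $2h^\vee\geq h$. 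Together with (i) this gives (ii); note that for irrational $k$ the weight is not rational, hence not KW-admissible, consistently with the statement.

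For (iii): apply the inclusion~(\ref{eqgen1}) to $V^k=M_J(k\Lambda_0)$ with $J=\dot\Pi$ (so $\fh_J=\dot\fh$ and $\langle k\Lambda_0,\dot\fh\rangle=0$), obtaining $\Delta^\perp\subseteq\im\Upsilon_{L(k\Lambda_0)}\subseteq\dot\Pi^\perp$. Now $\dot\Pi^\perp=\mathbb{C}\Lambda_0\oplus\mathbb{C}\delta$ while $\Delta^\perp=\mathbb{C}\delta$, and $\langle\delta,K\rangle=0$, $\langle\Lambda_0,K\rangle=1$; therefore $\{\mu\in\im\Upsilon_{L(k\Lambda_0)}\mid\langle\mu,K\rangle=0\}\subseteq\mathbb{C}\delta=\Delta^\perp$. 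By~\Rem{rem2} this says exactly that every self-extension of $L(k\Lambda_0)$ on which $K$ acts diagonally splits over $[\fg,\fg]$, which is (iii).

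For (iv): admissibility implies weak admissibility, so by (i) the module $L(k\Lambda_0)$ is not admissible whenever $p<h^\vee-1$, $\gcd(q,l)=1$, or $p<h-1$, $l\mid q$. For irrational $k$ the module is weakly admissible, but the computation of the first paragraph shows $C(k\Lambda_0)=\dot\Delta_+$ (no affine roots occur, integrality failing for $n\neq 0$), hence $C(k\Lambda_0)^\perp=\dot\Pi^\perp$, and~\Prop{propalmin} combined with~(\ref{eqgen1}) forces $\im\Upsilon_{L(k\Lambda_0)}=\dot\Pi^\perp\supsetneq\Delta^\perp$; taking $\mu=\Lambda_0\in\im\Upsilon_{L(k\Lambda_0)}\setminus\Delta^\perp$ and invoking~\Rem{rem2} produces a non-split $[\fg,\fg]$-self-extension, so $L(k\Lambda_0)$ is not admissible. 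The main work is concentrated in (i)--(ii): identifying $\Pi(k\Lambda_0)$ and the boundary quantity $\langle k\Lambda_0+\rho,\beta_0^\vee\rangle$ in each of the two $\gcd$-cases, where the distinction between the Coxeter and the dual Coxeter numbers enters and the argument parallels that of~\cite{KWmod2}; once this is in hand, (iii) and (iv) are short formal consequences of~(\ref{eqgen1}), \Prop{propalmin}, and~\Rem{rem2}.
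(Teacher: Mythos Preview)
Your proposal is correct and follows essentially the same route as the paper: identify $\Pi(k\Lambda_0)$ (equal to $\dot\Pi$ for irrational $k$, and $\dot\Pi\cup\{\beta_0\}$ with $\beta_0=q\delta-\theta$ or $\frac{q}{l}\delta-\theta'$ for rational $k$), evaluate $\langle k\Lambda_0+\rho,\beta_0^\vee\rangle=p+1-h^\vee$ (resp.\ $p+1-h$), and read off (i),(ii) from Subsections~\ref{noncrit},~\ref{admKM}; then use the inclusion $\im\Upsilon_{L(k\Lambda_0)}\subset\dot\Pi^\perp=\mathbb{C}\Lambda_0+\mathbb{C}\delta$ together with~\Rem{rem2} for (iii),(iv). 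The only cosmetic difference is that for (iii) you cite~(\ref{eqgen1}) from the generalized Verma module subsection, whereas the paper invokes~\Prop{propalmin} directly after noting that the elements of $\dot\Pi$ are $k\Lambda_0$-minimal --- but~(\ref{eqgen1}) is itself derived in~\ref{genVerma} by exactly that observation, so the arguments coincide.
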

\begin{proof}
Retain notation of~\ref{Pilambda}. One has $\Pi(k\Lambda_0)=\dot{\Pi}$
 for irrational $k$.  For $k+h^{\vee}=\frac{p}{q}$ one has
$\Pi(k\Lambda_0)=\{q\delta-\theta\}\cup \dot{\Pi}$,
if $gcd(q,l)=1$, and
$\Pi(k\Lambda_0)=\{\frac{q}{l}\delta-\theta'\}\cup \dot{\Pi}$
 if $gcd(q,l)=l$. One has $\langle k\Lambda_0+\rho,\alpha^{\vee}\rangle=1$
for $\alpha\in \dot{\Pi}$,
$\langle k\Lambda_0+\rho,(q\delta-\theta)^{\vee}\rangle=p+1-h^{\vee}$, and
$\langle k\Lambda_0+\rho,(\frac{q}{l}\delta-\theta')^{\vee}\rangle=p+1-h$
if $gcd(q,l)=l$.
Now (i), (ii) follow from Subsection~\ref{noncrit},\ref{admKM} respectively.

Since the elements of $\dot{\Pi}$ are $k\Lambda_0$-minimal
in the sense of~\ref{CC'}, \Prop{propalmin} gives
\begin{equation}\label{vac3}
\im\Upsilon_{L(k\Lambda_0)}\subset \dot{\Pi}^{\perp}=\mathbb{C}\Lambda_0
+\mathbb{C}\delta
\end{equation}
for any $k$, and the above inclusion is equality for $k\not\in\mathbb{Q}$.
By~\Rem{rem2}, (iii) is equivalent to
$\im\Upsilon_{L(k\Lambda_0)}\cap \{\mu\in\fh^*|\ \langle\mu,K\rangle=0\}
=\mathbb{C}\delta$, which follows from the above inclusion.

For irrational $k$, from~\Rem{rem2} and the above equality
$\im\Upsilon_{L(k\Lambda_0)}=\mathbb{C}\Lambda_0+\mathbb{C}\delta$,
we conclude that $L(k\Lambda_0)$
admits  self-extensions which do not split over $[\fg,\fg]$,
so it is not admissible. Combining with (i) we obtain  (iv).
\end{proof}

\subsubsection{}
\begin{cor}{corvac1}
$L(k\Lambda_0)$ is $k$-admissible iff $k$ is irrational
or $k+h^{\vee}=\frac{p}{q}$ is rational with
 $p\geq h^{\vee}-1,\ \gcd(q,l)=1$, or
$p\geq h-1,\ \gcd(q,l)=l$.
\end{cor}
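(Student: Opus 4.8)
The plan is to deduce \Cor{corvac1} directly from \Prop{propvac1}. Recall from Subsection~\ref{admis} and Subsection~\ref{notaf} that $L(k\Lambda_0)$ is $k$-admissible precisely when it is weakly admissible of level $k$ \emph{and} every self-extension $N$ of $L(k\Lambda_0)$ with $K|_N=k\cdot\id$ splits over $[\fg,\fg]$. By \Prop{propvac1}(iii) the second requirement holds automatically for the vacuum module, for \emph{every} $k$. Hence $k$-admissibility of $L(k\Lambda_0)$ is equivalent to its weak admissibility, and the latter is characterized by \Prop{propvac1}(i).

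It then remains to rewrite the negation of the condition in \Prop{propvac1}(i) in the stated form. If $k$ is irrational, then $V^k$, hence $L(k\Lambda_0)$, is irreducible by~\cite{GK}, so $L(k\Lambda_0)$ is weakly admissible; this is the first alternative of \Cor{corvac1}, and it is consistent with (i), whose condition forces $k$ rational. Now suppose $k+h^{\vee}=p/q$ with $q>0$ and $\gcd(p,q)=1$. Since the lacety satisfies $l\in\{1,2,3\}$, one has $\gcd(q,l)\in\{1,l\}$, so exactly one of the two cases in \Prop{propvac1}(i) can occur. When $\gcd(q,l)=1$, the module fails to be weakly admissible iff $p<h^{\vee}-1$, i.e. it is weakly admissible iff $p\geq h^{\vee}-1$; when $\gcd(q,l)=l$, it is weakly admissible iff $p\geq h-1$. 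The degenerate possibility $p\le 0$ (non-positive level shift) is covered as well: since $h,h^{\vee}\geq 2$ we then have $p<h^{\vee}-1$ and $p<h-1$, so $L(k\Lambda_0)$ is not weakly admissible, in agreement with Subsection~\ref{43}. Collecting the irrational case together with the two rational cases gives precisely the disjunction in \Cor{corvac1}.

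There is no real obstacle here: the substance is entirely contained in \Prop{propvac1}, and the corollary amounts to negating a boolean condition together with the elementary observation that $\gcd(q,l)\in\{1,l\}$ for $l\in\{1,2,3\}$.
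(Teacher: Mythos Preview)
Your argument is correct and matches the paper's intended reasoning: the corollary is stated immediately after \Prop{propvac1} without a separate proof precisely because it is the combination of (iii) (the self-extension condition is automatic) with the negation of (i), together with the elementary fact $\gcd(q,l)\in\{1,l\}$. One small remark: the clause ``$V^k$, hence $L(k\Lambda_0)$, is irreducible, so $L(k\Lambda_0)$ is weakly admissible'' is not a valid inference (irreducibility of $L(k\Lambda_0)$ is tautological and does not by itself give weak admissibility); the correct justification for irrational $k$ is exactly what you write next, namely that the hypothesis of (i) forces $k\in\mathbb{Q}$, so you should simply drop the appeal to \cite{GK}.
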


\subsubsection{}
\begin{rem}{remaffv}
By~\Rem{rem2}, $\Upsilon$ maps the self-extensions splitting over $[\fg,\fg]$
to $\mathbb{C}\delta$, so $L(k\Lambda_0)$ admits a non-splitting extension
over $[\fg,\fg]$ iff $\im\Upsilon_{L(k\Lambda_0)}\not=\mathbb{C}\delta$
which is equivalent to $\Lambda_0\in \im\Upsilon_{L(k\Lambda_0)}$, because
$\im\Upsilon_{L(k\Lambda_0)}\subset\mathbb{C}\delta+\mathbb{C}\Lambda_0$
by~(\ref{vac3}). From Subsection~\ref{genVerma},
$\Lambda_0\in \im\Upsilon_{L(k\Lambda_0)}$ iff
$M_J(k\Lambda_0)^1=M_J(k\Lambda_0)^2$, where $\{M_J(k\Lambda_0)^i\}$
is the Jantzen-type filtration constructed
for $\mu=\Lambda_0, \mu'=0$. This filtration is, by definition,
the Jantzen filtration $\{(V^k)^j\}$. We conclude that
\begin{equation}\label{vac1}
L(k\Lambda_0)\ \text{ admits a non-splitting extension
over } [\fg,\fg] \ \Longleftrightarrow\ (V^k)^1=(V^k)^2.
\end{equation}

For $\dot{\fg}=\fsl_2$ the module $L(k\Lambda_0)$ is admissible
iff $p\geq 2$, since in the sets of  admissible weights and
KW-admissible weights coincide (see Example~\ref{exsl23}).
We conjecture that, for $\dot{\fg}\not=\fsl_2$,
$L(k\Lambda_0)$ is admissible iff $k+h^{\vee}=\frac{p}{q}$,
where $p\geq h^{\vee}-1,\ \gcd(q,l)=1$,
$p\geq h-1,\ \gcd(q,l)=l$.
We verified this conjecture for simply-laced $\dot{\fg}$ using
formula~(\ref{vac1}) and the sum formula for the Jantzen filtration
given in~\cite{GK}.
\end{rem}

\section{Admissible modules for the Virasoro algebra}
\label{Vir}
In this section we prove~\Thm{thm06} and other results, stated in
Subsection~\ref{sect06}.

In this section $\cU$ is the universal enveloping algebra
of the Virasoro algebra.
Recall that $\fh$ is spanned by the central element $C$ and $L_0$;
$\fn_+$ (resp. $\fn_-$)
is spanned by $L_j$ (resp. $L_{-j}$) with $j\in\mathbb{Z}_{>0}$.
Note that the eigenvalues of $\ad L_0$ on $\fn_+$ are {\em negative}.

We consider only modules with a diagonal action of $C$.
We write $\mu\in\fh^*$ as $\mu=(h,c)$, where $\langle \mu,L_0\rangle=h,\
\langle \mu,C\rangle=c$ and we write a Verma module $M(\mu)$ as
$M(h,c)$. If $v$ is a weight vector of $M(h,c)$, its weight
is of the form $(h+j,c)$ for $j\in\mathbb{Z}_{\geq 0}$; the integer
$j$ is called the {\em level} of $v$;
we denote the corresponding weight space by $M(h,c)_{j}$
(instead of $M(h,c)_{(h+j,c)}$).

Let $\cH_c$ the category of $\cU$-modules with a fixed central charge $c$,
i.e. the category of $\cU/(C-c)$-modules.
We write $\Ext^1_c(N,N'):=\Ext^1_{\cH_c}(N,N')$.

We use the following parametrization of $c$:
\begin{equation}\label{ck}
c(k)=1-\frac{6(k+1)^2}{k+2},\ \ k\in\mathbb{C}\setminus\{-2\}.
\end{equation}

\subsection{Kac determinant}
Recall that
$$\dim M(h,c)_m=P(m),\ \text{ where }\
\sum_m P(m)q^{m}:=\prod_{j=1}^{\infty} (1-q^{j})^{-1}.$$
The Kac determinant $\det B_n$ is the Shapovalov determinant,
described in~\ref{shdet}, for the weight space $\cU(\fn_-)_n$.
The Kac determinant formula (see~\cite{KR},\cite{KWdet})
 can be written as follows
$$\det B_N(h,c(k))=\prod_{m,n\in \mathbb{Z}_{>0}}
(h-h_{m,n}(k))^{P(N-mn)},$$
where $h_{m,n}(k)$ for $m,n,k\in\mathbb{C}, k\not=-2$ is given by
$$h_{m,n}(k)=\frac{1}{4(k+2)}\bigl((m(k+2)-n)^2-
(k+1)^2\bigr).$$

Let $b(h,k)\in\mathbb{C}$ be such that $b(h,k)^2:=4(k+2)h+(k+1)^2$.
Observe that for $h=h_{m,n}(k)$, the straight line $x(k+2)-y=b(h,k)$
contains a point $(m,n)$ or $(-m,-n)$. Conversely, if
the straight line $x(k+2)-y=b(h,k)$ contains an integral point $(m,n)$,
then $h=h_{m,n}(k)=h_{-m,-n}(k)$.

\subsubsection{}
\begin{lem}{lemVir}
(i) The Verma module $M(h,c(k))$ is irreducible iff
the straight line $x(k+2)-y=b(h,k)$
does not contain integral points $(m,n)$ with $mn>0$.

(ii) A weight $(h,c(k))$ is weakly admissible
iff the straight line $x(k+2)-y=b(h,k)$
does not contain integral points $(m,n)$ with $mn<0$.
\end{lem}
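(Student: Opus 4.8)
The plan is to read off both statements from the Kac determinant formula, using the geometric dictionary recorded just before the lemma. Write $\ell_h$ for the line $x(k+2)-y=b(h,k)$; it depends on the choice of sign of $b(h,k)$, but the two choices are interchanged by the central symmetry $(m,n)\mapsto(-m,-n)$, which preserves the sign of $mn$, so both conditions in the lemma are well posed. An integral point $(m,n)$ lies on $\ell_h$ precisely when $h=h_{m,n}(k)$, and since $h_{m,n}(k)=h_{-m,-n}(k)$ and $mn>0$ (resp. $mn<0$) means exactly that $m,n$ have the same (resp. opposite) sign, $\ell_h$ carries an integral point with $mn>0$ iff $h=h_{m,n}(k)$ for some $m,n\in\mathbb Z_{>0}$, and one with $mn<0$ iff $h=h_{m,-n}(k)$ for some $m,n\in\mathbb Z_{>0}$. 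I would also record the elementary identity $h_{m,-n}(k)=h_{m,n}(k)+mn$ for $m,n\in\mathbb Z$, immediate from the formula for $h_{m,n}(k)$ since $m(k+2)+n$ and $m(k+2)-n$ differ by $2n$ while $h_{m,n}(k)$ depends only on $(m(k+2)-n)^2$.

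For (i): by~\ref{shdet}, $M(h,c(k))$ is irreducible iff $\det B_N(h,c(k))\neq 0$ for every $N$, and by the Kac determinant formula this fails for some $N$ iff $h=h_{m,n}(k)$ for some $m,n\in\mathbb Z_{>0}$ (take $N=mn$, so that the exponent $P(N-mn)=P(0)=1$). By the dictionary above this says exactly that $\ell_h$ contains an integral point with $mn>0$, which is~(i).

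For (ii): I would invoke~\Lem{critwa}, so that $(h,c(k))$ is not weakly admissible iff $\det B_{\lambda'-\lambda}(\lambda')=0$ for some $\lambda'\in\lambda+Q_+$ with $\lambda=(h,c)$. Since $\ad L_0$ acts on $\fn_+$ with negative eigenvalues, $Q_+$ consists of the non-positive integer multiples of the weight dual to $L_0$, so $\lambda'=(h-N,c)$ with $N\in\mathbb Z_{\geq0}$ and $\det B_{\lambda'-\lambda}(\lambda')$ is the Kac determinant at level $N$ evaluated at $(h-N,c)$; by the Kac determinant formula its vanishing means $h-N=h_{m,n}(k)$ for some $m,n\in\mathbb Z_{>0}$ with $mn\leq N$, i.e.\ $h=h_{m,n}(k)+N$ with $N\geq mn$. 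If $h=h_{m,-n}(k)$ with $m,n\in\mathbb Z_{>0}$, the identity gives $h=h_{m,n}(k)+mn$, so this holds with $N=mn$ and $h$ is not weakly admissible; for the converse I must show the case $N>mn$ adds nothing, and here I would argue that $M(h-N,c)=M(h_{m,n}(k),c)$ contains the sub-Verma $M(h_{m,-n}(k),c)$ generated by its level-$mn$ singular vector, so that the weight $(h,c)$, now at level $N-mn$ of this sub-Verma, is still non-weakly-admissible relative to a strictly larger Verma module; iterating this one-step reduction — precisely the inductive device in the proof of~\Lem{critwa}, patterned on~\cite{KK}, Thm.~2, equivalently the Feigin--Fuchs description of the submodule lattice of a Virasoro Verma module — one reaches $N=mn$, hence $h=h_{m',-n'}(k)$ with $m',n'\in\mathbb Z_{>0}$, which by the dictionary is~(ii). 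The hard part will be exactly this last reduction, from an arbitrary vanishing level $N\geq mn$ of the Kac determinant down to the singular-vector level $N=mn$; everything else is a direct unwinding of the Kac determinant formula,~\ref{shdet},~\Lem{critwa}, and the displayed identity.
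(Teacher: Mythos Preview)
Your argument for (i) matches the paper's. For (ii) both you and the paper invoke \Lem{critwa} together with the dictionary between $h_{m,n}(k)$ and integral points on $\ell_h$; the paper's version is the one-line computation $(m(k+2)-n)^2-(k+1)^2=4(k+2)(h-mn)\iff m(k+2)+n=\pm b(h,k)$, which is your identity $h_{m,-n}(k)=h_{m,n}(k)+mn$ in different clothes, so the overall route is the same.

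Where you diverge is in reading \Lem{critwa} as the raw determinant criterion $\det B_N(h-N,c)=0$ for some $N$, i.e.\ $h-N=h_{m,n}(k)$ with $mn\leq N$, and then reducing $N$ down to $mn$. As you describe it --- passing to the sub-Verma $M(h_{m,-n}(k),c)$ and continuing at level $N-mn$ --- this step cannot be run on the determinant alone. Take $k=0$, $h=2$, $N=2$, $(m,n)=(1,1)$: then $\det B_2(0,c(0))=0$ via the factor $h_{1,1}(0)=0$, but the next step gives $\det B_1(1,c(0))=1-h_{1,1}(0)=1\neq0$ and the iteration stalls; and indeed $(2,c(0))$ \emph{is} weakly admissible, since $b(2,0)^2=17$ is not a square and $\ell_h$ carries no integral points at all. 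So the bare vanishing $\det B_N(h-N,c)=0$ is strictly weaker than $[M(h-N,c):L(h,c)]>0$, and it is the latter that must be the induction invariant. The \cite{KK}/Feigin--Fuchs device you cite does exactly this: from $[M(h',c):L(h,c)]>0$ with $h'<h$, the Jantzen sum formula produces \emph{some} factor $(m',n')$ with $h'=h_{m',n'}(k)$ and $[M(h'+m'n',c):L(h,c)]>0$ --- not necessarily the $(m,n)$ you first picked from the determinant --- and iterating on the strictly smaller level terminates at $h=h_{m'',-n''}(k)$. You are citing the right black box; the fix is that the quantity carried along the induction is the composition multiplicity, not $\det B$, and the factor at each step is dictated by the sum formula rather than chosen freely.
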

\begin{proof}
The first statement follows from the fact that $h=h_{m,n}(k)$
iff one of the points $(m,n), (-m,-n)$
lies on the straight line $x(k+2)-y=b(h,k)$.
For (ii), observe that, by~\Lem{critwa}, $(h,c(k))$
is not weakly admissible iff there exist $m,n\in \mathbb{Z}_{>0}$ such that
$(m(k+2)-n)^2-(k+1)^2=4(k+2)(h-mn)$, which is equivalent to
$m(k+2)+n=\pm b(h,k)$.
\end{proof}

\subsubsection{}\label{corVir1}
If the straight line $x(k+2)-y=b$ does not contain integral points $(m,n)$
with $mn\not=0$, then $M(h,c(k))$ is irreducible and weakly admissible.

If $k$ is irrational,
the line $x(k+2)-y=b$ contains at most one integral point. Thus
$M(h,c(k))$ is irreducible or weakly admissible.

If $k+2\in\mathbb{Q}_{<0}$, there are two cases:
the line does not  contain integral points or
contains infinitely many integral points $(m,n)$ with $mn<0$,
so $M(h,c(k))$ is not weakly admissible.

If $k+2\in\mathbb{Q}_{>0}$, there are two cases:
the line does not  contain integral points or
contains infinitely many integral points $(m,n)$ with $mn>0$;
in this case $M(h,c(k))$ has an infinite length.

\subsubsection{}\label{vir2}
Assume  that $M(h,c(k))$ is reducible and
the weight $(h,c(k))$ is weakly admissible.
By~\ref{corVir1} , this happens if
$h=h_{m,n}(k)$ for some $m,n\in\mathbb{Z}_{>0}$ and either
$k$ is irrational,  or $k+2\in\mathbb{Q}_{>0}$
and the line $x(k+2)-y=b(h,k)$
does not contain integral points $(m',n')$ with $m'n'<0$.

Take $k+2\in\mathbb{Q}_{>0}$ and write $k+2=\frac{p}{q}$ in minimal terms.
Take $h=h_{m,n}(k)$ for $m,n\in\mathbb{Z}_{>0}$.
The equation $x(k+2)-y=b(h,k)$ can be rewritten as
$px-qy=pm-qn$. By above, the line $px-qy=pm-qn$
does not contain integral points $(m',n')$ with $m'n'<0$.
Since the integral points of this line are of the form
$(m+qj;n+pj), j\in\mathbb{Z}$, this holds iff the line contains
an integral point $(r,s)$ in the rectangle $0\leq r\leq q, 0\leq s\leq p$.
We conclude that  the set of weakly admissible weights $(h,c(k))$, such that
$M(h,c(k))$ is reducible, is $\{(h_{r,s}(k),c(k))\}$,
where $r,s\in\mathbb{Z},\ 0\leq r\leq q, 0\leq s\leq p$.
Notice that $h_{r,s}(k)=h_{q-r,p-s}(k)$.

\subsection{Jantzen-type filtration}
Define the map $\Upsilon_{L(\lambda)}$ as in Subsection~\ref{mapUps}.
The equality $\Ext^1_c(L(\lambda),L(\lambda))=0$ is equivalent to
$\mu\not\in \im\Upsilon_{L(\lambda)}$, where
$\mu\in\fh^*$ is defined by $\mu(L_0)=1,\ \mu(C)=0$.
Denote by $\{M(h,c)^i\}$ the Jantzen-type filtration introduced
in~\ref{Jan}, which is the image of the Jantzen filtration on
$M(\lambda+t\mu)$. We have $\langle C,\lambda+t\mu\rangle=c,\
\langle L_0,\lambda+t\mu\rangle=h+t$, hence
$$\det B_N(h+t,c)=\prod_{m,n\in \mathbb{Z}_{>0}}(h+t-h_{m,n}(k))^{P(N-mn)}.$$

\subsubsection{}\label{niceVir}
We call an integral point $(m,n)$ on the line
$x(k+2)-y=b(h,k)$ {\em minimal for $(h,c(k))$}
if the product $mn$ is positive and minimal among the
positive products: $mn>0$ and for any integral point $(m',n')$ on the line
$x(k+2)-y=b(h,k)$ one has $m'n'\leq 0$ or $m'n'\geq mn$.

\subsubsection{}
\begin{lem}{lemwaVir}
If for a weight $\lambda=(h,c)$ there is  exactly one minimal point
or exactly two minimal points of the form $(m,n),\ (-m,-n)$, then
$\Ext^1_c(L(\lambda),L(\lambda))=0$.
\end{lem}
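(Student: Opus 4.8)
The plan is to use \Prop{corally} to reduce the statement to a dimension count. Recall that $\Ext^1_c(L(\lambda),L(\lambda))=0$ is equivalent to $\mu\notin\im\Upsilon_{L(\lambda)}$, where $\mu(L_0)=1,\ \mu(C)=0$; and by \Prop{corally} applied with this $\mu$ and $\mu'=0$, this holds iff $M(h,c)^1\neq M(h,c)^2$ for the Jantzen-type filtration on $M(\lambda+t\mu)$. So it suffices to produce a single level $N$ at which $\dim M(h,c)^1_N>\dim M(h,c)^2_N$. First I would take $N:=d$, where $d:=m_0n_0$ for a minimal point $(m_0,n_0)$ of $(h,c)$; note $d>0$.

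The main computation is the sum formula~(\ref{sumformula}). Since
$$\det B_N(h+t,c)=\prod_{m,n\in\mathbb{Z}_{>0}}\bigl(h+t-h_{m,n}(k)\bigr)^{P(N-mn)}$$
is a product of nonzero linear polynomials in $t$, it is a nonzero element of $R$, so the Jantzen-type filtration is non-degenerate and~(\ref{sumformula}) applies. Each factor $h+t-h_{m,n}(k)$ has $t$-adic order $1$ if $h_{m,n}(k)=h$ and $0$ otherwise, so
$$\sum_{r\ge 1}\dim M(h,c)^r_N=\upsilon\bigl(\det B_N(h+t,c)\bigr)=\sum_{\substack{m,n\in\mathbb{Z}_{>0}\\ h_{m,n}(k)=h}}P(N-mn).$$

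Next I would identify the set $\{(m,n)\in\mathbb{Z}_{>0}^2:\ h_{m,n}(k)=h\}$ with the set of integral points $(m',n')$ on the line $x(k+2)-y=b(h,k)$ having $m'n'>0$, via $(m',n')\mapsto(m',n')$ when $m',n'>0$ and $(m',n')\mapsto(-m',-n')$ when $m',n'<0$ (using $h_{m,n}(k)=h_{-m,-n}(k)$); this is product-preserving. Hence every pair in that set has $mn\ge d$, and the hypothesis — exactly one minimal point, or exactly two minimal points of the form $(m,n),(-m,-n)$, which collapse to the \emph{same} element of $\mathbb{Z}_{>0}^2$ — forces exactly one pair to realize $mn=d$. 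Therefore $\sum_{r\ge 1}\dim M(h,c)^r_d=P(0)=1$; since $\dim M(h,c)^1_d\ge\dim M(h,c)^2_d\ge\cdots\ge 0$ are integers with sum $1$, we get $\dim M(h,c)^1_d=1>0=\dim M(h,c)^2_d$, so $M(h,c)^1\neq M(h,c)^2$ and $\Ext^1_c(L(\lambda),L(\lambda))=0$.

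The hard part will be the last paragraph's bookkeeping: correctly converting the two cases of the hypothesis into the single assertion that exactly one pair $(m,n)\in\mathbb{Z}_{>0}^2$ with $h_{m,n}(k)=h$ attains the minimal product $d$ — in particular keeping track of which quadrant a minimal point lies in, and of the $2$-to-$1$ collapse that occurs precisely when $b(h,k)=0$. Everything else (the non-degeneracy of the filtration, the $t$-order of the Kac-determinant factors, and the monotonicity of the Jantzen layers) is routine.
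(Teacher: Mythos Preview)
Your proof is correct and follows essentially the same approach as the paper: both apply the sum formula at the level $N=mn$ (for a minimal point) to get $\sum_{j\ge 1}\dim M(h,c)^j_{mn}=1$, conclude $M(h,c)^1\neq M(h,c)^2$, and invoke \Prop{corally}. You have simply spelled out the bookkeeping (the bijection with positive pairs, the $2$-to-$1$ collapse when $b(h,k)=0$, and non-degeneracy of the filtration) that the paper leaves implicit in its one-line proof.
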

\begin{proof}
In both cases the sum formula~(\ref{sumformula}) gives
$\sum_{j=1}^{\infty} \dim M(h,c)^j_{mn}=1$. Therefore
$M(h,c)^1\not=M(h,c)^2$, and, by~\Prop{corally},
$\mu\not\in \im\Upsilon_{L(\lambda)}$ as required.
\end{proof}

\subsection{Admissible weights}
Let us describe the $c$-admissible weights.

If the weight $(h,c)$ is $c$-admissible, then it is weakly admissible and
$M(h,c)$ is reducible. The weights with these properties are
described in~\ref{vir2}.

If $k$ is irrational, these weights are of the form
 $h=h_{m,n}(k)$ for $m,n\in\mathbb{Z}_{>0}$. The line
$x(k+2)-y=b(h,k)$ contains a unique integral point ($(m,n)$ or $(-m,-n)$),
so, by~\Lem{lemwaVir}, this weight is $c$-admissible.

Consider the case $k+2\in\mathbb{Q}_{>0}$. By~\ref{vir2},
the weights with the above properties are of the form
$(h_{m,n}(k),c(k))$ with $0\leq n\leq p, 0\leq m\leq q$.

If $0<n<p, 0<m<q$ and $(r,s)$ is minimal
for $(h_{m,n}(k),c(k))$, then $(r,s)\in\{(m,n), (m-q,n-p)\}$.
Since $mn\not=(m-q)(n-p)$, this minimal point is unique.
For $0<m<q$ (resp. for $0<n<p$)  the minimal point
for $(h_{m,p}(k),c(k))=(h_{q-m,0}(k),c(k))$
(resp. for $(h_{q,n}(k),c(k))=(h_{0,q-n}(k),c(k))$)
is $(m,p)$  (resp. $(q,n)$), and it is  unique.
For $(h_{0,0}(k),c(k))=(h_{q,p}(k),c(k))$ there are two
 minimal points: $(q,p)$ and $(-q,-p)$. Hence, by~\Lem{lemwaVir},
$(h_{m,n}(k),c(k))$ is $c$-admissible for $0\leq n\leq p, 0\leq m\leq q,
(m,n)\not=(0,p),(q,0)$.

For $(h_{0,p}(k),c(k))=(h_{q,0}(k),c(k))$ there are two
minimal points: $(q,2p)$ and $(-2q,-p)$.
By~\cite{Ast}, in this case the maximal proper submodule
of $M(h,c)$ is generated by a singular vector $v$ of level $2pq$.
Since $h_{2p,q}(k)=h_{p,2q}(k)$,
the sum formula implies that $\sum_{j=1}^{\infty}\dim  M(h,c)^j_{2pq}=2$.
By above, $\dim  M(h,c)^1_{2pq}=1$ so $\dim  M(h,c)^2_{2pq}=1$, that is
$v\in M(h,c)^2$. Thus $M(h,c)^1=M(h,c)^2$, so
$(h_{2p,q}(k),c(k))$ is not $c$-admissible.

\subsubsection{}
For $k=\frac{p}{q}-2$, where $p,q\in\mathbb{Z}_{>0},\ gcd(p,q)=1$,
write
$$c^{p,q}:=c(k)=1-\frac{6(p-q)^2}{pq},\ \
h^{p,q}_{r,s}:=h_{r,s}(k)=\frac{(pr-qs)^2-(p-q)^2}{4pq}.$$

\subsubsection{}
\begin{cor}{corVir}
(i) The weakly admissible weights $(h,c(k))$ with the property that
$M(h,c(k))$ is reducible are of the form $h=h_{r,s}(k)$,
where $r,s$ are integers, and

if $k$ is not rational, then  $r,s>0$;

if $k+2=\frac{p}{q}$  is rational, then
$0\leq r\leq q, 0\leq s\leq p$.

There are no such weights, if $k+2\in\mathbb{Q}_{<0}$.

(ii) All these weakly admissible weights, except  for the weights
$(h_{p,0}^{p,q},c^{p,q})=(h_{0,q}^{p,q},c^{p,q})$ are $c$-admissible.
\end{cor}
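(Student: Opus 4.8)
Part (i) is essentially a restatement of what was established in~\ref{vir2} together with~\ref{corVir1}; the plan is simply to collect those facts. By definition, a $c$-admissible weight must be weakly admissible and have $M(h,c)$ reducible, so it suffices to determine which weakly admissible $(h,c(k))$ have $M(h,c(k))$ reducible. By~\Lem{lemVir}(i), reducibility of $M(h,c(k))$ means the line $x(k+2)-y=b(h,k)$ carries an integral point $(m,n)$ with $mn>0$, equivalently $h=h_{m,n}(k)$ for some $m,n\in\mathbb{Z}_{>0}$. By~\ref{corVir1}, if $k+2\in\mathbb{Q}_{<0}$ such a line automatically also carries integral points with $mn<0$, so no weakly admissible reducible weights exist; if $k$ is irrational the line has a unique integral point, so $(h,c(k))$ is automatically weakly admissible and $r=m,s=n>0$; and if $k+2=p/q$ in minimal terms, the analysis of~\ref{vir2} shows weak admissibility forces the line $px-qy=pm-qn$ to meet the rectangle $0\le r\le q,\ 0\le s\le p$, i.e. $h=h_{r,s}(k)$ with $0\le r\le q,\ 0\le s\le p$. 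This gives (i).

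For part (ii), the plan is to go through exactly the case division already carried out in the Subsection ``Admissible weights'' and record its conclusion. When $k$ is irrational, the line $x(k+2)-y=b(h,k)$ has a single integral point (either $(m,n)$ or $(-m,-n)$), hence a single minimal point, so by~\Lem{lemwaVir} the weight is $c$-admissible. When $k+2=p/q\in\mathbb{Q}_{>0}$, I would split into: interior points $h_{r,s}^{p,q}$ with $0<r<q$, $0<s<p$, where the minimal point lies in $\{(r,s),(r-q,s-p)\}$ and is unique since $rs\ne(r-q)(s-p)$; boundary points $h_{r,p}^{p,q}=h_{q-r,0}^{p,q}$ with $0<r<q$ and $h_{q,s}^{p,q}=h_{0,q-s}^{p,q}$ with $0<s<p$, each having the unique minimal point $(r,p)$ resp.\ $(q,s)$; and the corner $h_{0,0}^{p,q}=h_{q,p}^{p,q}$, with the two minimal points $(q,p),(-q,-p)$. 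In all of these, \Lem{lemwaVir} applies and yields $c$-admissibility. This covers every weight in the list of (i) except $h_{0,p}^{p,q}=h_{q,0}^{p,q}$.

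The one point requiring genuine work — and the main obstacle — is the excluded weight $(h_{p,0}^{p,q},c^{p,q})=(h_{0,q}^{p,q},c^{p,q})$, for which one must show it is \emph{not} $c$-admissible. Here there are two minimal points $(q,2p)$ and $(-2q,-p)$, not of the symmetric form $(m,n),(-m,-n)$, so~\Lem{lemwaVir} does not apply; instead I would invoke~\cite{Ast} for the structure of $M(h,c)$ in this (boundary of the minimal series) case, namely that the maximal proper submodule is generated by a singular vector $v$ of level $2pq$. Using $h_{2p,q}(k)=h_{p,2q}(k)$ the sum formula~(\ref{sumformula}) gives $\sum_{j\ge1}\dim M(h,c)^j_{2pq}=2$; combined with $\dim M(h,c)^1_{2pq}=1$ (which follows since $M(h,c)^1=M'(h,c)$ has a one-dimensional weight space at the level of its generating singular vector) this forces $\dim M(h,c)^2_{2pq}=1$, i.e.\ $v\in M(h,c)^2$, hence $M(h,c)^1=M(h,c)^2$. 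By~\Prop{corally}, $\mu\in\im\Upsilon_{L(h,c)}$, so $\Ext^1_c(L(h,c),L(h,c))\ne0$ and the weight is not $c$-admissible. This completes (ii).
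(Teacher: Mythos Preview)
Your proposal is correct and follows essentially the same argument as the paper: part (i) is collected from~\ref{corVir1} and~\ref{vir2}, and part (ii) is exactly the case division in the subsection ``Admissible weights,'' applying~\Lem{lemwaVir} to all points except the corner $h_{0,p}^{p,q}=h_{q,0}^{p,q}$, where the structure result from~\cite{Ast} combined with the sum formula gives $M(h,c)^1=M(h,c)^2$ and hence non-admissibility via~\Prop{corally}. Your justification of $\dim M(h,c)^1_{2pq}=1$ (that $M'(h,c)$ is generated by a singular vector at level $2pq$, so this is its top weight space) is the content of the paper's ``by above.''
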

\subsubsection{}
\begin{rem}{}
Notice that $h_{1,1}(k)=0$ for all $k$. Hence the weight
$(0,c(k))$ is $c$-admissible iff $k+2\not\in\mathbb{Q}_{<0}$.
\end{rem}

\subsubsection{Minimal models}\label{mmVir}
Recall that $L(h,c)$ is called a {\em minimal model} if
$c=c^{p,q}$ and $h=h^{p,q}_{r,s}$,
where  $p,q$ are  coprime positive integers $\geq 2$, and $r,s$
are positive  integers with $r<q,\ s<p$.

Consider the set of integral points of the rectangle
$0\leq r\leq q, 0\leq s\leq p$.
Recall that $h^{p,q}_{r,s}=h^{p,q}_{q-r,p-s}$ so the symmetrical
points of this rectangle with respect to the rotation by $180^{\circ}$
give the same value of $h$.
We see that the $h^{p,q}_{r,s}$, corresponding to weakly admissible weights
with $c=c^{p,q}$, where $p$ and $q$ are coprime positive integers,
are parameterized by the integral points in the rectangle; the
$h^{p,q}_{r,s}$, corresponding to $c$-admissible weights, are parameterized
by the integral points of this rectangle, except for
$(0,p), (q,0)$ (which give the same $h$);
the $h^{p,q}_{r,s}$, corresponding to minimal models,
are parameterized by the inner integral points of this rectangle, namely
$1\leq r\leq q-1, 1\leq s\leq p-1$.

\subsection{Self-extensions of $L(h,c)$}
Using~\Lem{lemwaVir} and the description of the Jantzen
filtration given in~\cite{Ast}, it is not hard to show that
$\Ext^1_c(L(h,c),L(h,c))=\mathbb{C}$ in the following cases:

(a) $M(h,c)$ is irreducible;

(b) $b(h,k),\frac{b(h,k)}{k+2}\in\mathbb{Z}$ and $k+2\not=\pm 1, b(h,k)\not=0$.

(c)  $c=1, 25$ and $h\not=0$.

In all other cases $\Ext^1_c(L(h,c),L(h,c))=0$.

\section{Admissible modules for the Neveu-Schwarz superalgebra}
In this section $\cU$ is the universal enveloping algebra
of the Neveu-Schwarz superalgebra.
Recall that $\fh$ is spanned by the central element $C$ and $h:=L_0$;
$\fn_+$ (resp. $\fn_-$)
is spanned by $L_j$ (resp. $L_{-j}$) with $j\in\frac{1}{2}\mathbb{Z}_{>0}$.
Note that the eigenvalues of $L_0$ on $\fn_+$ are {\em negative}.

We consider only modules with a diagonal action of $C$.
We use the same notation for Verma modules as in Sect.~\ref{Vir}.
Let $\cH_c$ be the category of $\cU$-modules with a fixed central charge $c$,
i.e. the category of $\cU/(C-c)$-modules.
We write $\Ext^1_c(N,N'):=\Ext^1_{\cH_c}(N,N')$.

We use the following parameterization of $c$:
$$c(k)=\frac{3}{2}-\frac{12(k+1)^2}{2k+3},\ \
k\in\mathbb{C}\setminus\{-\frac{3}{2}\}.$$

\subsection{Kac determinant}
One has $\dim M(h,c)_N=P(N)$, where the partition function $P(N)$
is given by $\sum_{n\in \frac{1}{2}\mathbb{Z}} P(N)q^N=\prod_{j=1}^{\infty}
(1-q^j)^{-1}(1+q^{j-1/2})$.

The Kac determinant formula  for
$N\in\frac{1}{2}\mathbb{Z}_{\geq 0}$ can be written as follows
(see~\cite{KWdet}):

$$\det B_N(h,c(k))=\prod_{\scriptsize{\begin{array}{c}
m,n\in \mathbb{Z}_{>0}\\ m\equiv n\mod 2\end{array}}}
(h-h_{m,n}(k))^{P(N-\frac{mn}{2})},$$
where $h_{m,n}(k)$ for $m,n,k\in\mathbb{C}, k\not=-\frac{3}{2}$ is given by
$$h_{m,n}(k)=\frac{1}{2(2k+3)}\bigl((m(k+\frac{3}{2})-\frac{n}{2})^2-
(k+1)^2\bigr).$$

We call a point $(m,n)\in\mathbb{C}^2$  {\em nice} if
$m,n\in \mathbb{Z},\ m\equiv n\mod 2$.

Let $b(h,k)\in\mathbb{C}$ be such that
$b(h,k)^2:=4(2(2k+3)h+(k+1)^2)$. For $h=h_{m,n}(k)$,
the straight line $x(2k+3)-y=b(h,k)$
contains a point $(m,n)$ or $(-m,-n)$. Conversely, if
the straight line $x(2k+3)-y=b(h,k)$ contains a point $(m,n)$,
then $h=h_{m,n}(k)=h_{-m,-n}(k)$.
The following lemma is similar to~\Lem{lemVir}.

\subsubsection{}
\begin{lem}{lemNS}
(i) The Verma module $M(h,c(k))$ is irreducible iff
the straight line $x(2k+3)-y=b(h,k)$
does not contain nice points with $mn>0$.

(ii) A weight $(h,c(k))$ is weakly admissible
iff the straight line $x(2k+3)-y=b(h,k)$
does not contain nice points $(m,n)$ with $mn<0$.
\end{lem}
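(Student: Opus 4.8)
The proof will run exactly parallel to that of Lemma~\ref{lemVir}, the only new bookkeeping being the parity restriction $m\equiv n\bmod 2$ built into the Neveu--Schwarz Kac determinant. Throughout one uses that $m\equiv n\bmod 2$ forces $\tfrac{mn}{2}\in\tfrac12\mathbb{Z}_{\geq 0}$, that $P(j)>0$ for all $j\in\tfrac12\mathbb{Z}_{\geq 0}$ and $P(j)=0$ for $j<0$, and that both ``$(m,n)$ is nice'' and the sign condition ``$mn>0$'' (resp.\ ``$mn<0$'') are invariant under $(m,n)\mapsto(-m,-n)$; the last point matters because $b(h,k)$ is only defined up to sign, so replacing $b(h,k)$ by $-b(h,k)$ merely reflects the line $x(2k+3)-y=b(h,k)$ through the origin.

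For part (i): $M(h,c(k))$ is reducible iff $M'(h,c(k))\neq 0$, i.e.\ by Subsection~\ref{shdet} iff $\det B_N(h,c(k))=0$ for some $N\in\tfrac12\mathbb{Z}_{\geq 0}$. By the determinant formula this happens iff $h=h_{m,n}(k)$ for some nice pair $(m,n)$ with $m,n>0$: the ``if'' by taking $N=\tfrac{mn}{2}$, so that the relevant exponent equals $P(0)=1$; the ``only if'' because the only factors occurring with positive exponent are indexed by such pairs. Directly from $b(h,k)^2=4\bigl(2(2k+3)h+(k+1)^2\bigr)$ one gets $h=h_{m,n}(k)$ iff $m(2k+3)-n=\pm b(h,k)$, i.e.\ iff one of $(m,n),(-m,-n)$ lies on the line $x(2k+3)-y=b(h,k)$. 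Combining this with $h_{m,n}(k)=h_{-m,-n}(k)$ and the invariance remarks above, $M(h,c(k))$ is reducible iff that line carries a nice point with $mn>0$, which is (i).

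For part (ii): by Lemma~\ref{critwa}, $(h,c(k))$ fails to be weakly admissible iff there is $\lambda'\in\lambda+Q_+$ with $\det B_{\lambda'-\lambda}(\lambda')=0$, i.e.\ iff $\det B_N(h-N,c(k))=0$ for some $N\in\tfrac12\mathbb{Z}_{>0}$; unwinding the determinant formula, iff there is a nice pair $(m,n)$ with $m,n>0$, $\tfrac{mn}{2}\leq N$, and $h-N=h_{m,n}(k)$. Exactly as in the Virasoro case this reduces to the clean equation $h=h_{m,n}(k)+\tfrac{mn}{2}$ for some nice $(m,n)$ with $m,n>0$ (take $N$ minimal; then $N=\tfrac{mn}{2}$). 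A one-line computation gives $h_{m,n}(k)+\tfrac{mn}{2}=h_{m,-n}(k)$, so $(h,c(k))$ is not weakly admissible iff $h=h_{m,-n}(k)$ for some nice $(m,n)$ with $m,n>0$, equivalently iff $h=h_{a,b}(k)$ for some nice $(a,b)$ with $ab<0$. By the same line dictionary as in part (i), this says precisely that the line $x(2k+3)-y=b(h,k)$ contains a nice point $(a,b)$ with $ab<0$, which is (ii).

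The only step that is more than a transcription of the Virasoro argument is the reduction in (ii) from ``$\det B_N(h-N,c(k))=0$ for some $N$'' to the single equation ``$h=h_{m,n}(k)+\tfrac{mn}{2}$''. This rests on the BGG-type structure of Neveu--Schwarz Verma modules: if $L(h,c)$ is a subquotient of $M(h-N,c)$ then in fact $M(h,c)$ embeds into $M(h-N,c)$, and such an embedding factors through a chain of elementary singular-vector embeddings, each raising the lowest $L_0$-weight by a quantity of the form $\tfrac{mn}{2}$ with $(m,n)$ nice and positive; taking the first such embedding produces the stated equation. Everything else — the determinant formula, the identity $h_{m,n}(k)+\tfrac{mn}{2}=h_{m,-n}(k)$, and the translation into lines in the plane — is mechanical, the only care needed being the parity bookkeeping $m\equiv n\bmod 2$ absent from Lemma~\ref{lemVir}.
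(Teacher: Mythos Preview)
Your proof is correct and follows the paper's route exactly: the paper gives no separate argument for this lemma, only the remark that it is analogous to Lemma~\ref{lemVir}, and you have carried out that transcription with the appropriate parity bookkeeping and the identity $h_{m,n}(k)+\tfrac{mn}{2}=h_{m,-n}(k)$.

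One correction worth making: the parenthetical ``(take $N$ minimal; then $N=\tfrac{mn}{2}$)'' is not valid. For generic irrational $k$ and $h=2$, the minimal $N$ with $\det B_N(h-N,c(k))=0$ is $N=2$, coming from the factor $h_{1,1}(k)=0$; here $\tfrac{mn}{2}=\tfrac12\neq N$, and in fact $(2,c(k))$ \emph{is} weakly admissible. So minimality of $N$ alone does not force $N=\tfrac{mn}{2}$. What actually justifies the reduction from ``$\det B_N(h-N,c(k))=0$ for some $N$'' to ``$h=h_{m,n}(k)+\tfrac{mn}{2}$ for some nice $(m,n)$ with $m,n>0$'' is precisely the BGG/KK chain argument you spell out in your final paragraph, and that argument is correct. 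The paper's own proof of Lemma~\ref{lemVir} makes the same jump --- citing Lemma~\ref{critwa} and immediately writing the equation $h-mn=h_{m,n}(k)$ --- relying implicitly on the KK-type argument underlying the proof of Lemma~\ref{critwa} rather than on its literal statement. Your explicit discussion of the singular-vector chain is therefore, if anything, more complete than what the paper writes down; just drop the misleading parenthetical.
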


\subsubsection{}\label{corns1}
If the straight line $x(2k+3)-y=b$ does not contain nice points $(m,n)$
with $mn\not=0$, then $M(h,c(k))$ is irreducible and weakly admissible.

If $k$ is irrational,
the line $x(2k+3)-y=b$ contains at most one integral point. Thus
$M(h,c(k))$ is irreducible or weakly admissible.

If $k+\frac{3}{2}\in\mathbb{Q}_{<0}$, there are two cases:
the line does not  contain nice points or
contains infinitely many nice points $(m,n)$ with $mn<0$,
so $M(h,c(k))$ is not weakly admissible.

If $k+\frac{3}{2}\in\mathbb{Q}_{>0}$, there are two cases:
the line does not  contain integral points or
contains infinitely many nice points $(m,n)$ with $mn>0$; in this
case $M(h,c(k))$ has an infinite length.

\subsubsection{}\label{ns2}
Assume  that $M(h,c(k))$ is reducible and
the weight $(h,c(k))$ is weakly admissible.
By~\ref{corns1} , this happens if
$h=h_{m,n}(k)$ for some nice $(m,n)$ with $mn>0$, and either
$k$ is irrational,  or $k+\frac{3}{2}\in\mathbb{Q}_{>0}$
and the line $x(2k+3)-y=b(h,k)$
does not contain nice points $(m',n')$ with $m'n'<0$.

Take $k+\frac{3}{2}\in\mathbb{Q}_{>0}, h=h_{m,n}(k)$ and write
\begin{equation}\label{pqNS}
2k+3=\frac{p}{q},\ \ p,q\in\mathbb{Z}\setminus\{0\},\ q>0, \
p\equiv q\mod 2,\ gcd(\frac{p-q}{2},p)=1.
\end{equation}
Take $h=h_{m,n}(k)$ for nice $(m,n)$ with $mn>0$.
The equation $x(2k+3)-y=b(h,k)$ can be rewritten as
$px-qy=pm-qn$. The nice points of this line are of the form
$(m+qj;n+pj),\ j\in\mathbb{Z}$.
By above, the line $px-qy=pm-qn$
does not contain nice points $(m',n')$ with $m'n'<0$.
This is equivalent to the condition that this line contains
a nice point $(r,s)$ in the rectangle $0\leq r\leq q, 0\leq s\leq p$.
We conclude that  the set of weakly admissible weights $(h,c(k))$, such that
$M(h,c(k))$ is reducible, is $\{(h_{r,s}(k),c(k))\}$,
where $0\leq r\leq q, 0\leq s\leq p,\ r\equiv s\mod 2$.
Notice that $h_{r,s}(k)=h_{q-r,p-s}(k)$.

\subsection{Jantzen-type filtration}
Define the map $\Upsilon_{L(\lambda)}$ as in Subsection~\ref{mapUps}.
The formula $\Ext^1_c(L(\lambda),L(\lambda))=0$ means that
$\mu\not\in \im\Upsilon_{L(\lambda)}$, where
$\mu\in\fh^*$ is defined by $\mu(L_0)=1,\ \mu(C)=0$.
Denote by $\{M(h,c)^i\}$ the Jantzen-type filtration introduced
in~\ref{Jan}, which is the image of the Jantzen filtration on
$M(\lambda+t\mu)$. We have $\langle C,\lambda+t\mu\rangle=c,\
\langle L_0,\lambda+t\mu\rangle=h+t$, hence
$$\det B_N(h+t,c)=\prod_{\scriptsize{\begin{array}{c}
m,n\in \mathbb{Z}_{>0}\\ m\equiv n\mod 2\end{array}}}
(h+t-h_{m,n}(k))^{P(N-\frac{mn}{2})}.$$

\subsubsection{}\label{nicens}
Call a nice point $(m,n)$ belonging to one of the lines $x(2k+3)-y=\pm b(h,k)$
{\em minimal for $(h,c(k))$} if the product $mn$ is positive and
minimal among the positive products: $mn>0$ and for any nice point
$(m',n')$ on the line
$x(2k+3)-y=b(h,k)$ one has $m'n'\leq 0$ or $m'n'\geq mn$.

\subsubsection{}
\begin{lem}{lemwaNS}
If for a weight $\lambda=(h,c)$ there is exactly one minimal point
or exactly two minimal points of the form $(m,n),\ (-m,-n)$, then
$\Ext^1_c(L(\lambda),L(\lambda))=0$.
\end{lem}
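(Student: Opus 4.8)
The plan is to follow verbatim the proof of~\Lem{lemwaVir}, the only real input being the sum formula~(\ref{sumformula}) for the Jantzen-type filtration $\{M(h,c)^i\}$ attached to $\tilde\lambda=\lambda+t\mu$ (with $\mu(L_0)=1,\ \mu(C)=0$) together with~\Prop{corally}. Recall that $\Ext^1_c(L(\lambda),L(\lambda))=0$ is equivalent to $\mu\notin\im\Upsilon_{L(\lambda)}$, which by~\Prop{corally} is equivalent to $M(h,c)^1\neq M(h,c)^2$; so it suffices to produce one level at which these two submodules differ.

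Write $\lambda=(h,c(k))$ and let $(m,n)$ be a minimal point for $(h,c(k))$ in the sense of~\ref{nicens}; then $h=h_{m,n}(k)$ and the relevant level is $N:=mn/2$. First I would compute $\upsilon(\det B_N(h+t,c))$ from the Kac determinant formula $\det B_N(h+t,c)=\prod_{m',n'}(h+t-h_{m',n'}(k))^{P(N-m'n'/2)}$, the product being over $m',n'\in\mathbb{Z}_{>0}$ with $m'\equiv n'\bmod 2$. A factor contributes to $\upsilon$ only when it vanishes at $t=0$, i.e. $h_{m',n'}(k)=h_{m,n}(k)$, which forces $(m',n')$ to be a nice point on one of the lines $x(2k+3)-y=\pm b(h,k)$; and its exponent $P(N-m'n'/2)$ is nonzero only if $m'n'\le mn$, while minimality of $(m,n)$ forces $m'n'\ge mn$ for every such point. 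Hence only nice points with product exactly $mn$ survive, and under the hypothesis — either $(m,n)$ is the unique minimal point, or the only minimal points are $(m,n)$ and $(-m,-n)$ — the only one having positive coordinates (hence the only one occurring as an index in the product) is $(m,n)$ itself. Its contribution is $P(0)\cdot\upsilon(t)=1$, so $\upsilon(\det B_N(h+t,c))=1$.

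Then~(\ref{sumformula}) gives $\sum_{r\ge 1}\dim M(h,c)^r_N=1$. Since $\{M(h,c)^r\}$ is a decreasing filtration with $M(h,c)^1=M'(h,c)$ and the numbers $\dim M(h,c)^r_N$ form a non-increasing sequence of non-negative integers, this forces $\dim M(h,c)^1_N=1$ and $\dim M(h,c)^2_N=0$, whence $M(h,c)^1\neq M(h,c)^2$ and we conclude by~\Prop{corally}. The one step deserving care is the counting in the previous paragraph: one must verify that neither line $x(2k+3)-y=\pm b(h,k)$ carries a further nice point of positive coordinates with product $mn$, so that the coefficient extracted from the sum formula is $1$ and not $2$ or more — precisely the degeneracies excluded by the hypothesis, and harmless here because the determinant formula is indexed only by $m',n'>0$, so the partner $(-m,-n)$ never appears.
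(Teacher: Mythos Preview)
Your proof is correct and is exactly the argument the paper has in mind: the paper simply writes ``The proof is the same as in~\Lem{lemwaVir}'', and you have faithfully unpacked that proof in the Neveu--Schwarz setting, using the sum formula~(\ref{sumformula}) together with~\Prop{corally} at the level $N=mn/2$. The care you take in the final paragraph about the counting of contributing factors is appropriate and matches the paper's reasoning.
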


The proof is the same as in~\Lem{lemwaVir}.

\subsection{Admissible weights}
Let us describe the $c$-admissible weights.

If the weight $(h,c)$ is $c$-admissible, then it is weakly admissible and
$M(h,c)$ is reducible. The weights with these properties are
described in~\ref{ns2}.

If $k$ is irrational, these weights are of the form
 $h=h_{m,n}(k)$ for nice $(m,n)$ with $m,n>0$. The line
$x(2k+3)-y=b(h,k)$ contains a unique integral point ($(m,n)$ or $(-m,-n)$),
so, by~\Lem{lemwaNS}, this weight is $c$-admissible.

Consider the case $k+\frac{3}{2}\in\mathbb{Q}_{>0}$. By~\ref{ns2},
the weights with the above properties are of the form
$(h_{m,n}(k),c(k))$ with $0\leq n\leq p, 0\leq m\leq q, m\equiv n\mod 2$.
If $0<m<q, 0<n<p$ and
$(r,s)$ is minimal for $(h_{m,n}(k),c(k))$, then
$(r,s)\in\{(m,n), (m-q,n-p)\}$.
One has $mn=(m-q)(n-p)$ iff $(m,n)=(\frac{q}{2},\frac{p}{2})$,
which is impossible, since $(\frac{q}{2},\frac{p}{2})$
is not nice. Hence $(h_{m,n}(k),c(k))$ has a unique minimal point.
For $0<m<q$ (resp. for $0<n<p$)  the minimal point
for $(h_{m,p}(k),c(k))=(h_{q-m,0}(k),c(k))$
(resp. for $(h_{q,n}(k),c(k))=(h_{0,q-n}(k),c(k))$)
is $(m,p)$  (resp. $(q,n)$), and it is  unique.
For $(h_{0,0}(k),c(k))=(h_{q,p}(k),c(k))$ there are two
 minimal points: $(q,p)$ and $(-q,-p)$. Hence, by~\Lem{lemwaNS},
$(h_{m,n}(k),c(k))$ is $c$-admissible for nice $(m,n)$, where
$0\leq n\leq p, 0\leq m\leq q,\ (m,n)\not=(0,p),(q,0)$.

By~\cite{Ast},\cite{IK1},\cite{IK2}, the Verma modules over the Neveu-Schwarz
algebra have no subsingular vectors. For the weight
$(h_{q,0}^{(p,q)},c^{(p,q)})=(h_{0,p}^{(p,q)},c^{(p,q)})$,
where $p,q$  are even, there are two
 minimal points: $(q,2p)$ and $(-2q,-p)$. The absence of
subsingular vectors implies that the maximal proper submodule
of $M(h_{q,0}^{(p,q)},c^{(p,q)})$ is generated by a singular vector of
level $pq$ and the sum  formula implies that this vector belongs
to $M(h_{q,0}^{(p,q)},c^{(p,q)})^2$. Thus
$M(h_{q,0}^{(p,q)},c^{(p,q)})^1=M(h_{q,0}^{(p,q)},c^{(p,q)})^2$
and so $L(h_{p,0}^{(p,q)},c^{(p,q)})$ is not $c$-admissible.

\subsection{}
\begin{cor}{corNS}
(i) The weakly admissible weights $(h,c(k))$ such that
$M(h,c(k))$ is reducible are of the form $h=h_{r,s}(k)$, where
$r$ and $s$ are integers and

if $k$ is not rational, then  $r,s>0, r\equiv s\mod 2$;

if $k+\frac{3}{2}\in\mathbb{Q}_{>0}$ and $p,q$
are chosen as in~(\ref{pqNS}), then $0\leq r\leq q, 0\leq s\leq p$
and $r\equiv s\mod 2$.

There are no such weights, if $k+\frac{3}{2}\in\mathbb{Q}_{<0}$.

(ii) All these weakly admissible
weights, except for the weights $h_{q,0}(k)=h_{0,p}(k)$
(when $p,q$ are even) are $c$-admissible.
\end{cor}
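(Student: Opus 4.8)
The plan is to deduce Corollary~\ref{corNS} directly from the line‑geometry of the Kac determinant developed in Subsections~\ref{corns1} and~\ref{ns2}, together with the case analysis of minimal points preceding the corollary.

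For part~(i), suppose $(h,c(k))$ is weakly admissible with $M(h,c(k))$ reducible. By~\Lem{lemNS} the line $x(2k+3)-y=b(h,k)$ then carries a nice point $(m,n)$ with $mn>0$ and carries no nice point with $mn<0$; in particular $h=h_{m,n}(k)$ for some nice pair, and replacing $(m,n)$ by $(-m,-n)$ if necessary we may take $m,n>0$ (nicety and the value $h_{m,n}$ are sign‑symmetric). If $k$ is irrational, such a line meets $\mathbb{Z}^2$ in at most one point, giving the first case $r,s>0$, $r\equiv s\bmod 2$. If $k+\tfrac32\in\mathbb{Q}_{<0}$, then by~\ref{corns1} any line carrying a nice point carries infinitely many nice points with negative product, contradicting weak admissibility; so there are no such weights. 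If $k+\tfrac32\in\mathbb{Q}_{>0}$, writing $2k+3=p/q$ as in~(\ref{pqNS}) and using that the nice points of $px-qy=pm-qn$ form the progression $(m+qj,\,n+pj)$, $j\in\mathbb{Z}$, the absence of nice points with $m'n'<0$ is equivalent to this progression meeting the rectangle $0\le r\le q$, $0\le s\le p$; this is precisely~\ref{ns2} and yields the stated parametrization $h=h_{r,s}(k)$ with $0\le r\le q$, $0\le s\le p$, $r\equiv s\bmod 2$ (and $h_{r,s}(k)=h_{q-r,p-s}(k)$).

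For part~(ii) I would run through each of the weights in~(i) and, in each case, count the minimal points in the sense of~\ref{nicens} and then apply~\Lem{lemwaNS} together with~\Prop{corally}. For irrational $k$ the unique integral point on the line is the unique minimal point, so the weight is $c$‑admissible. For $k+\tfrac32\in\mathbb{Q}_{>0}$ and an interior weight $0<m<q$, $0<n<p$, the only candidates for minimal points are $(m,n)$ and $(m-q,n-p)$, and $mn=(m-q)(n-p)$ would force $(m,n)=(q/2,p/2)$, which is not nice; hence the minimal point is unique. For a non‑corner boundary weight the unique minimal point is visibly $(m,p)$ or $(q,n)$. At the corner $(h_{0,0}(k),c(k))=(h_{q,p}(k),c(k))$ the two minimal points are $(q,p)$ and $(-q,-p)$, still of the symmetric form covered by~\Lem{lemwaNS}. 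So all of these weights are $c$‑admissible except possibly the one remaining corner.

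The genuinely delicate point — and the main obstacle — is the corner weight $h_{q,0}(k)=h_{0,p}(k)$ with $p,q$ even, whose two minimal points are $(q,2p)$ and $(-2q,-p)$; these are \emph{not} of the symmetric form $(m,n),(-m,-n)$, so~\Lem{lemwaNS} does not apply. Here I would invoke the theorem of~\cite{Ast},\cite{IK1},\cite{IK2} that Verma modules over the Neveu--Schwarz algebra have no subsingular vectors, conclude that the maximal proper submodule of $M(h_{q,0}(k),c(k))$ is generated by a single singular vector, locate it at level $pq$ (using the symmetry of $h_{m,n}(k)$ and the Kac determinant formula), and then read off from the sum formula~(\ref{sumformula}) that this vector already lies in $M(h_{q,0}(k),c(k))^2$. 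Thus $M(h_{q,0}(k),c(k))^1=M(h_{q,0}(k),c(k))^2$, so by~\Prop{corally} one has $\mu\notin\im\Upsilon_{L(h_{q,0}(k),c(k))}$ and this weight is not $c$‑admissible. Assembling the cases proves~(ii).
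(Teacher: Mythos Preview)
Your approach is essentially identical to the paper's: part~(i) is read off from~\ref{corns1} and~\ref{ns2}, and part~(ii) is the same minimal-point case analysis together with~\Lem{lemwaNS}, with the corner $h_{q,0}(k)=h_{0,p}(k)$ handled via the absence of subsingular vectors and the sum formula.

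There is one slip in your last paragraph: the direction of~\Prop{corally} is reversed. From $M(h_{q,0}(k),c(k))^1=M(h_{q,0}(k),c(k))^2$ one concludes $\mu\in\im\Upsilon_{L(h_{q,0}(k),c(k))}$ (not $\mu\notin$), hence $\Ext^1_c(L,L)\neq 0$, so the weight is not $c$-admissible. Your final conclusion is correct; only the intermediate assertion about $\im\Upsilon$ needs the sign flipped.
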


\subsubsection{}
\begin{rem}{}
Notice that $h_{1,1}(k)=0$ for all $k$. Hence the weight
$(0,c(k))$ is $c$-admissible iff $k+\frac{3}{2}\not\in\mathbb{Q}_{<0}$.
\end{rem}

\subsubsection{}
\begin{rem}{}
If $k$ is rational and $p,q$ are chosen as in~(\ref{pqNS}), then
$$c^{(p,q)}:=c(k)=\frac{3}{2}(1-\frac{2(p-q)^2}{pq}),\ \ \
 h_{r,s}^{(p,q)}:=h_{r,s}(k)=\frac{(pr-qs)^2-(p-q)^2}{8pq}.$$
Notice that $c^{(p,q)}<3/2$ and that $h_{r,s}^{(p,q)}=h_{q-r,p-s}^{(p,q)}$.
\end{rem}

\subsection{Minimal models}\label{minmodNS}
Recall that the minimal series are the modules
$L(h_{r,s}^{(p,q)},c^{(p,q)})$,
where  $p,q\in\mathbb{Z}_{\geq 2}$, $p\equiv q \mod 2,\
\gcd(\frac{p-q}{2},p)=1$, and $r,s$ are integers,
$0<r<q, 0<s<p$, $r\equiv s\mod 2$.

Consider the rectangle $0\leq r\leq q, 0\leq s\leq p$.
Recall that $h^{(p,q)}_{r,s}=h^{(p,q)}_{q-r,p-s}$ so the symmetrical
points of this rectangle with respect to the rotation by $180^{\circ}$
give the same value of $h$.
We see that the $h^{(p,q)}_{r,s}$, corresponding to weakly admissible weights
with $c=c^{(p,q)}$, where $p$ and $q$ are positive integers of the same parity
and $\gcd(\frac{p-q}{2},p)=1$,
are parameterized by the integral points $(r,s)$ such that
$r\equiv s\mod 2$ in this rectangle; the $h^{(p,q)}_{r,s}$,
corresponding to $c$-admissible weights, are parameterized by
these points except for $(0,p), (q,0)$ (which give the same $h$);
the $h^{(p,q)}_{r,s}$, corresponding to minimal models, are parameterized by
these points except  for the ones on the boundary of the rectangle.

\section{Vertex algebras}\label{sectva}
Recall that a vertex algebra $V$ is a vector space with a vacuum vector
$\vac$,  and a linear map $V\to (\End V)[[z,z^{-1}]]$,
$a\mapsto a(z)=\sum_{n\in\mathbb{Z}} a_{(n)}z^{-n-1}$, such that
$a(z)v\in V((z))$ for any $v\in V$, subject to the vacuum axiom
$\vac(z)=\id_V$, and the Borcherds identity~\cite{Kbook2}.

A vertex algebra $V$ is called {\em graded} if it is endowed with
 a  diagonalizable operator $L_0$ with non-negative
real eigenvalues such that $\Delta(a_{(j)}b)=\Delta(a)+\Delta(b)-j-1$,
where $\Delta(a)$ stands for the $L_0$-eigenvalue of $a$.
If $\Gamma$ is the additive subgroup of $\mathbb{R}$
generated by the eigenvalues of $L_0$, we call $V$ {\em $\Gamma$-graded}.
Given an $L_0$-eigenvector $a\in V$ we write the corresponding field
as $a(z)=\sum_{j\in\mathbb{Z}-\Delta(a)} a_jz^{-j-\Delta(a)}$.
Then one has the following commutator formula~\cite{Kbook2}, 4.6.3
\begin{equation}\label{commab}
[a_{m},b_{n}]=\sum_{j\in\mathbb{Z}_{\geq 0}}  \binom{m+\Delta(a)}{j}
(a_{(j)}b)_{m+n}.
\end{equation}
One also has
\begin{equation}\label{Ta}
(Ta)_n=-(n+\Delta(a))a_n,
\end{equation}
where $T\in\End(V)$ is the translation operator defined by $Ta:=a_{(-2)}\vac$,
and one has the following identity
\begin{equation}\label{borch1}\begin{array}{l}
(a_{(-1)}b)_n=\sum_{j=-1}^{-\infty} a_{j-\Delta(a)+1} b_{n-j-\Delta(b)}
+(-1)^{p(a)p(b)}\sum_{j=0}^{\infty} b_{n-j-\Delta(b)}a_{j-\Delta(a)+1},
\end{array}
\end{equation}
where $p(a)$ stands for the parity of $a$. Recall that the
identities~(\ref{commab}) and~(\ref{borch1}) together
are equivalent to the Borcherds identity (\cite{Kbook2} Section~4.8).

Recall that $V$ is called {\em $C_2$-cofinite} if $V_{(-2)}V$
has a finite codimension in $V$.

\subsection{Universal enveloping algebra}\label{UV}
The universal enveloping algebra of $V$,
introduced in~\cite{FZ}, can be defined as follows.

Let $Lie V$ be the quotient of the vector space with the basis
consisting of the formal symbols
$a_{m}$, where $a\in V_{\Delta(a)},m\in\mathbb{Z}-\Delta(a)$
subject to the following linearity
relations $(a+\gamma b)_{m}=a_{m}+\gamma b_{m}$
for $a\in V_{\Delta(a)}, b\in V_{\Delta(b)},
\gamma\in\mathbb{C}$ and relation~(\ref{Ta}). Then the
commutator formula~(\ref{commab}) induces on $Lie V$ a well defined
Lie superalgebra structure. Note that $Lie V$  is
a $\Gamma$-graded Lie superalgebra, where
$\deg a_j=j$.

Let $\tilde{\cU}$ be the universal enveloping algebra of the Lie superalgebra
$Lie V$ subject to the relation
$\vac_{(-1)}=1$. Extend the $\Gamma$-grading to $\tilde{\cU}$:
$\tilde{\cU}=\oplus_N\tilde{\cU}_N$, and define a
system of fundamental neighborhoods of zero in $\tilde{\cU}_N$ by
$\tilde{\cU}^s_N:=\{\sum_{\gamma\geq s} u_{-\gamma}u_{N+\gamma}|\
u_{\gamma}\in \tilde{\cU}_{\gamma}\}$.
Denote by $\hat{\tilde{\cU}}_N$ the completion of $\tilde{\cU}_N$;
the direct sum
$\hat{\tilde{\cU}}:=\oplus_{N\in\Gamma}\hat{\tilde{\cU}}_N$
is a $\Gamma$-graded complete topological algebra.
The {\em universal enveloping algebra} $\cU(V)$ of $V$
is the quotient of $\hat{\tilde{\cU}}$ by the  relations~(\ref{borch1}).
Since these relations are homogeneous, $\cU(V)$ inherits
 the $\Gamma$-grading.

\subsubsection{}
For example, if $V=V^k$ (resp. $V=\Vir^c$ or $V=\NS^c$)
is the affine vertex algebra (resp.  Virasoro or
Neveu-Schwarz vertex algebras), then
$\cU(V)$ is a completion of the universal enveloping algebra
$\cU(\fhg)/(K-k)$ (resp. of $\cU(\Vir)/(C-c)$ or of
 $\cU(\NS)/(C-c)$).

\subsection{Modules over vertex algebras}
Recall that a representation of a vertex algebra $V$ in a vector space $M$ is
a linear map $V\to (End M)[[z,z^{-1}]]$,
$a\mapsto a^M(z)=\sum_{n\in-\Delta(a)+\mathbb{Z}} a_n^M z^{-n-\Delta(a)}$,
such that $a(z)v\in M((z))$ for each $a\in V,v\in M$,
$\vac^M(z)=\id_M$, and the Borcherds identity holds.
Note that this is the same as a continuous representation
of the topological algebra $\cU(V)$ in $M$, endowed
with the discrete topology.

\subsection{Toric subalgebras}\label{toric}
For an associative superalgebra $B$ we call
$\fh\subset B$ {\em a toric (Lie) subalgebra}  if

(T1) $\fh$ is an even commutative Lie algebra:
$\forall a,b\in\fh\ p(a)=0\ \&\ [a,b]:=ab-ba=0$;

(T2) for any $a\in\fh$ the map $b\mapsto [a,b]$ is a diagonalizable
endomorphism of $B$.

The following proposition is similar to a part of Thm. 2.6 in~\cite{M}.

\subsubsection{}
\begin{prop}{hlocfin}
Let $V$ be a graded vertex algebra such that its vertex
subalgebra $W:=\oplus_{x\in\mathbb{Z}} V_x$
is $C_2$-cofinite. Let $\fh$ be a toric subalgebra of $\cU(W)$ such that

(T3) $\fh$ is spanned by elements of the form $a_0$ with $a\in W$.

Then every  $V$-module is $\fh$-locally finite.
\end{prop}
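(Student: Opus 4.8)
The plan is to reduce $\fh$-local finiteness to the behaviour of the energy operator $L_0=\omega_0$ on $M$ (it acts on every $V$-module, $\omega$ being the conformal vector of $V$, which lies in $V_2\subseteq W$), and then to appeal to the structure theory of $C_2$-cofinite vertex algebras in the spirit of~\cite{M}, Thm.~2.6. First I would record the reduction: since $\fh$ is commutative (condition (T1)), a $V$-module $M$ is $\fh$-locally finite as soon as every $v\in M$ lies in a finite-dimensional subspace $U\subseteq M$ with $a_0U\subseteq U$ for all $a\in W$; indeed, by (T3) such a $U$ is $\fh$-stable, hence $\cU(\fh)v\subseteq U$ is finite-dimensional. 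The elementary ingredient is that, by the module axioms and the conventions of Section~\ref{sectva}, one has $[L_0,a_j]=-j\,a_j$ for every $a\in W$ and every $j\in\mathbb{Z}$ (the $\Gamma$-grading of $\cU(W)$ introduced in~\ref{UV} being implemented by $\ad L_0$); in particular every zero mode $a_0$, and so all of $\fh$, commutes with $L_0$. Consequently, if $M$ is the direct sum of its generalized $L_0$-eigenspaces, each of them is $\fh$-invariant, and one may take for $U$ the finite sum of those generalized $L_0$-eigenspaces that appear nontrivially in a decomposition $v=\sum_\mu v_{[\mu]}$ --- \emph{provided} these eigenspaces are finite-dimensional.

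It remains to prove that claim about $L_0$. Since it only involves the action of $\fh\subseteq\cU(W)$ it suffices to work with the $W$-module obtained from $M$ by restriction, and, replacing $M$ by the submodule $N:=\cU(W)v$ generated by a fixed $v$, to show that a \emph{cyclic} (hence finitely generated) weak $W$-module over the $C_2$-cofinite algebra $W$ is a direct sum of finite-dimensional generalized $L_0$-eigenspaces. Here I would follow the argument of~\cite{M}. By the spanning-set theorem for $C_2$-cofinite vertex algebras there are finitely many homogeneous $u^1,\dots,u^r\in W$ with the property that $W$ is spanned by the ordered monomials $u^{i_1}_{(-n_1)}\cdots u^{i_s}_{(-n_s)}\vac$ with $n_1\ge\cdots\ge n_s\ge 1$; expanding the modes of a normally ordered product through the modes of its factors by means of~(\ref{borch1}), one gets that $N$ is spanned by vectors $u^{j_1}_{m_1}\cdots u^{j_t}_{m_t}\,v$. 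Choosing $N_0$ with $u^i_mv=0$ for all $i$ and all $m\ge N_0$ (possible, as there are finitely many $i$ and $u^i(z)v\in M((z))$), one reorders the modes in each such vector by repeated use of the commutator formula~(\ref{commab}) together with~(\ref{Ta}); the technical core of~\cite{M} is that $C_2$-cofiniteness forces this reordering to leave, in each $\ad L_0$-homogeneous component, only finitely many spanning vectors, which simultaneously yields the decomposition $N=\bigoplus_\mu N_{[\mu]}$ and $\dim N_{[\mu]}<\infty$ for all $\mu$. Combining this with the reduction of the first paragraph completes the proof.

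I expect the main obstacle to be precisely this last input: the assertion that over a $C_2$-cofinite $W$ every finitely generated weak $W$-module has finite-dimensional generalized $L_0$-eigenspaces, which is the substance of~\cite{M}, Thm.~2.6 and rests on the spanning-set machinery together with the mode-reordering bookkeeping indicated above. Everything else --- the reduction to an $\fh$-stable finite-dimensional subspace, and the commutation $[L_0,a_0]=0$ --- is formal, the latter being immediate from~(\ref{commab}) and~(\ref{Ta}) once one checks that the zero modes of elements of $W$ preserve the energy grading.
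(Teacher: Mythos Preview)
Your argument rests on a hypothesis that is not present: that $V$ carries a conformal vector $\omega$, so that $L_0=\omega_0$ is the zero mode of a field and therefore acts on every $V$-module. In Section~\ref{sectva} a \emph{graded} vertex algebra is merely one equipped with a diagonalizable operator $L_0$ on $V$; this operator is not assumed to arise from any element of $V$, so it need not act on a $V$-module $M$ at all. Your reduction to generalized $L_0$-eigenspaces of $M$ is therefore unavailable in the stated generality. That the extra hypothesis is genuinely absent is confirmed by Remark~\ref{remconf} immediately following the proof, where conformal vectors enter only as one source of \emph{examples} of toric subalgebras satisfying (T3).

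The paper's proof sidesteps this by working not with an $L_0$-grading on $M$ but with the $\Gamma$-grading on $\cU(W)$, which exists regardless. One lets $M'\subset M$ be the $\fh$-locally finite part (a submodule, by (T2)), picks a nonzero $v\in M/M'$, and sets $M''(m):=\cU(W)_m v$; each such subspace is $\cU(W)_0$-stable simply because $\cU(W)_0\cdot\cU(W)_m\subset\cU(W)_m$. Lemma~2.4 of~\cite{M} gives $M''(m)=0$ for $m\ll0$, so there is a minimal $m$; on $M''(m)$ every $a_j$ with $j<0$ vanishes, whence Zhu's algebra $A(W)$ acts, the image of $a$ acting as $a_0$. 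Since $C_2$-cofiniteness forces $\dim A(W)<\infty$, the associative subalgebra of $\End(M''(m))$ generated by all zero modes --- and hence, by (T3), the one generated by $\fh$ --- is finite-dimensional. Thus any nonzero vector of $M''(m)$ is $\fh$-locally finite, contradicting the choice of $M/M'$. Note that the argument never claims $\dim M''(m)<\infty$; it only needs that the zero-mode action on the bottom piece factors through a finite-dimensional algebra, which is strictly weaker than the finite-dimensionality of $L_0$-eigenspaces you aim for and is exactly what the Zhu-algebra machinery delivers.
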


\begin{proof}
Notice that any continuous $\cU(V)$-module, viewed as a $\cU(W)$-module,
is continuous. Thus it is enough to prove that any $\cU(W)$-module is
$\fh$-locally finite. Hence we may (and will) assume that $W=V$.

Denote by $U'$ the unital commutative associative subalgebra of $\cU(V)$
generated by $\fh$. Let $M'$ be the $\fh$-locally finite part of $M$, i.e.
$M':=\{v\in M| \dim (U' v)<\infty\}$.
Since $\fh$ is commutative, $M'$ is the direct sum of
generalized $\fh$-weight spaces:
$$M'=\oplus_{\nu\in\fh^*} M'_{\nu},\ \  M'_{\nu}:=\{v\in M|\ \
\forall h\in\fh\
\exists N: (h-\langle \nu,h\rangle)^Nv=0\}.$$
From (T2) it follows that $M'$ is a $\cU(V)$-submodule of $M$.
Clearly, $M/M'$ is a continuous $\cU(V)$-module and
its $\fh$-locally finite part is trivial: for any non-zero
$v\in M/M'$ the space $U' v$ is infinite-dimensional.

Suppose that $M/M'\not=0$.
Fix a non-zero $v\in M/M'$ and let $M''$ be a cyclic $\cU(V)$-submodule
generated by $v$. For $m\in\mathbb{Z}$ introduce
$$M''(m):=\cU(V)_m v,$$
where $\cU(V)_m$ is the subspace of elements of degree $m$ in $\cU(V)$.
By Lemma 2.4 of~\cite{M}, $C_2$-cofiniteness of $V$ forces
$M''(m)=0$ for $m<<0$. Let $m\in\mathbb{Z}$
be minimal such that $M''(m)\not=0$.  Then for any $a\in V$
one has $a_{j}M''(m)=0$ for $j<0$. By~\cite{Zh}, there is a
well defined action of the Zhu algebra $A(V)$ on $M''(m)$:
for $a\in V$ its image in $A(V)$ acts on $M''(m)$ as $a_0$.
The $C_2$-cofiniteness of $V$ means that
$A(V)$ is finite-dimensional so the subalgebra of $\End(M''(m))$
generated by $a_0, a\in V$, is finite-dimensional.
By (T3), $\fh$ is spanned by elements of the form $a_0$ for some $a\in V$.
 Hence $\dim (U' v)<\infty$, a contradiction.
As a result, $M/M'=0$ as required.
\end{proof}

\subsubsection{}
\begin{rem}{remconf}
Recall that $\omega\in W$ is called  a {\em conformal vector},
if the field $\omega(z)$
is the Virasoro field $L(z)=\sum_{n\in\mathbb{Z}} L_nz^{-n-2}$,
(i.e. the $L_n$ satisfy the Virasoro commutation relations),
where  $L_{-1}=T$ and $L_0$ is a diagonalisable
operator (in this case $L_0$ defines a grading of $V$).

The examples of toric subalgebras of $\cU(W)$
satisfying (T3) can be constructed as follows.
First, if $W$ contains a conformal vector and
$\sum_{n\in\mathbb{Z}} L_nz^{-n-2}$ is the corresponding
Virasoro field, then  $\mathbb{C}L_0\subset\cU(W)$
is toric.

Second, let $a^1,\ldots, a^r\in W_1=V_1$ be even vectors such
that every $a^j_0$
is a diagonalizable endomorphism of $W$ and $a^j_0a^i=0$ for
all $i,j=1,\ldots,r$. Then $\{a^j_0\}_{j=1}^r$ span
a toric subalgebra of $\cU(W)$
satisfying (T3). These  follow from the formula
$[a_{0},b_{n}]=(a_{0}b)_{n}$.

Finally, if $\omega\in W$ is a conformal vector and $a^1,\ldots, a^r\in V_1$
are as in the second example, then  $\{L_0; a^j_{0}\}_{j=1}^r$
span a toric subalgebra of $\cU(W)$
satisfying (T3).
\end{rem}

\section{Admissible modules for the minimal $W$-algebras}
\label{Wa}
In this section $\fg=\fn_-\oplus\fh\oplus\fn_+$
is a finite-dimensional simple Lie algebra with a triangular decomposition,
$\theta$ is its maximal root, $h^{\vee}$ is its dual Coxeter number
 and $k\not=-h^{\vee}$ is a scalar. In this section we study self-extensions
of irreducible representations $\bL(\nu)$ of the vertex algebra
$\cW:=\cW^k(\fg,e_{-\theta})$. Recall that $\cW$ is a
$\frac{1}{2}\mathbb{Z}_{\geq 0}$-graded vertex algebra, called a {\em minimal
$W$-algebra}, constructed in~\cite{KRW},\cite{KWdet}. The results of this
section extend without difficulty to the case
of the Lie superalgebra $\fg=\osp(1,n)$.
The Virasoro and Neveu-Schwarz vertex algebras
 are particular cases of $W^k(\fg,e_{-\theta})$:
Virasoro case corresponds to $\fg=\fsl_2$ and Neveu-Schwarz case
to $\fg=\mathfrak{osp}(1,2)$.

The main results of this section are Theorems~\ref{thmadmW},\ \ref{thmmuW}.

\subsection{Structure of $\cW$}
Denote by $\Delta_+$ the set of positive roots of $\fg$. Define
$\rho$ and the bilinear form  $(-,-)$ on $\fh^*$ as in~\ref{symmKm};
we  normalize the form by the condition $(\theta,\theta)=2$.

\subsubsection{Notation}\label{notW}
Let $e\in\fg_{\theta}, h\in\fh, f\in\fg_{-\theta}$ be an $\fsl_2$-triple, and
let $x:=h/2$.
Let $\fg_j:=\{a\in\fg|\ [x,a]=ja\}$ be the $j$th eigenspace of $\ad x$.
One has
$$\fg=\fg_{-1}\oplus\fg_{-1/2}\oplus\fg_0\oplus\fg_{1/2}\oplus\fg_1,\ \
\fg_1=\mathbb{C}e,\ \fg_{-1}=\mathbb{C}f,\ \
\fg_{\pm 1/2}\subset \fn_{\pm}.$$

For a subspace $\fm\subset \fg$ we set $\fm^f:=\{a\in\fm|\ [f,a]=0\}$.
The centralizer of the triple $\{e,x,f\}$ is
$\fg^{\natural}:=\{a\in\fg_0|\ (x|a)=0\}=\fg_0^f$. One has
$$\fg_0=\fg^{\natural}\oplus
\mathbb{C}x, \ \
\fg^f=\mathbb{C}f\oplus \fg_{-1/2}\oplus \fg^{\natural},\ \
\fh=\fh^f\oplus\mathbb{C}x,\ \
\fh^f=\fg^{\natural}\cap\fh.$$

Then $\fg^{\natural}$ is a reductive Lie algebra,
$\fh^f$  is a Cartan subalgebra of $\fg^{\natural}$
and $\fg^{\natural}=(\fg_0\cap \fn_-)\oplus
\fh^f\oplus(\fg_0\cap \fn_+)$ is a triangular decomposition.
Let $P^{\natural}$ be a weight lattice for $\fg^{\natural}$
and $Q_+^{\natural}$ be the positive part of its root lattice
with respect to the above triangular decomposition. Note that $\fh^{\natural}$
acts diagonally on $\fg^f$ and the weights lie in $P^{\natural}$.

\subsubsection{Generators of $\cW$}
The structure of the vertex algebra
$\cW$ is explicitly described in~\cite{KWdet}.
Recall that $\cW$ contains a conformal vector $\omega$, for which
$\omega(z)$ is a Virasoro field $L(z)=\sum_n L_nz^{-n-2}$,
such that the action of $L_0$ endows  $\cW$ with a
$\frac{1}{2}\mathbb{Z}_{\geq 0}$-grading
$\cW=\oplus_{i\in \mathbb{Z}_{\geq 0}} \cW_{i/2}$.
Recall that a vector $a\in\cW$ and the corresponding
field $a(z)$ is called {\em primary of conformal
weight $\Delta$} if $L_0a=\Delta a$ and $L_ja=0$ for $j>0$.

The vertex algebra $\cW$ has primary fields $J^a(z)$ of conformal
weight $1$ (resp. $3/2$), labeled by $a\in \fg^{\natural}$
 (resp. $a\in\fg_{-1/2}$). We view the Fourier
coefficients $J^a_n$ as elements of the enveloping
algebra $\cU(\cW)$, see Subsection~\ref{UV}. One has
\begin{equation}\label{J0}
[J^{a}_{0}\, ,J^{b}_n]=J^{[a,b]}_n\ \text{ for }
a\in\fg^{\natural}, b\in\fg^{\natural}\cup\fg_{-1/2}.
\end{equation}

\subsubsection{The algebra $\cU(\cW)$}\label{UW}
Let $\{a_1,\ldots, a_s\}$ be a basis of
$\fg_{-1/2}\oplus \fg^{\natural}$ consisting of root vectors;
set $J^i(z):=J^{a_i}(z)$ for $i=1,\ldots,s$ and $J^0(z):=L(z)$.
Let $I:=\{ (i,n)|\ i=0,\ldots,s;\  n\in\mathbb{Z}-\Delta(J^i)\}$
and fix a total order on $I$ in such a way that $(i,m)<(j,n)$ for $m<n$.

By~\cite{KWdet}, Thm. 5.1, the vertex algebra $\cW$ is strongly generated by
the fields $J^{i}(z), i=0,\ldots,s$.
This means that the universal enveloping algebra $\cU(\cW)$,
has a topological PBW-basis, which consists of
the monomials of the form
$\prod_{(i,m)} (J_m^i)^{k_{m,i}}$, where $(i,m)\in I,
k_{m,i}\in\mathbb{Z}_{>0}$ with $k_{m,i}=1$ if $a_i$ is odd,
and the factors
are ordered with respect to the total order on $I$.

\subsubsection{The algebra $\fh_{\cW}$}
Write
$$\begin{array}{l}I=I_-\coprod I_0\coprod I_+,\ \text{
where }\\
I_{-}:=\{(i,n)\in I|\ n<0\text{ or }\ n=0\ \& \ a_i\in\fn_{-}\cap \fg_0\},\\
I_{+}:=\{(i,n)\in I|\ n>0\text{ or }\ n=0\ \&\  a_i\in\fn_+\cap \fg_0\},\\
I_0:=\{(i,0)|\ i=0\ \text{ or }a_i\in\fh^f\}.
\end{array}$$
Set
$$\fh_{\cW}:=\spn \{J^i_m, (i,m)\in I_0\}=\mathbb{C}L_0\oplus \{ J^a_0|\ a\in
\fh^f\}\subset \cU(\cW).$$

By~(\ref{J0}), the elements of $\fh_{\cW}$ commute  and
\begin{equation}\label{relW}
 [L_0,J^{a}_n]=-nJ^{a}_n,\ \
[J^{a}_0, J^{b}_n]=\langle a, \wt b\rangle J^{b}_n\ \text{ if }
 a\in\fh^f,\  b\in\fg^{\natural}\cup\fg_{-1/2},
\end{equation}
where $\wt b\in\fh^*$ stands for the weight of $b$.
We identify $\fh_{\cW}$ with $\mathbb{C}L_0\oplus \fh^f$
via the map $J^a_0\mapsto a$.
By above,  $\fh_{\cW}$ acts semisimply on $\cW$: for every
$a\in\fg^{\natural}+\fg_{-1/2}$ of weight $\beta\in\fh^*$ the weight of
 $J^a_n$ is
\begin{equation}\label{Jann}
\wt J^a_n=\mu\in\fh_{\cW}^*=(\mathbb{C}L_0\oplus \fh^f)^*,\ \text{ where }
\langle\mu, L_0\rangle=n,\ \ \mu|_{\fh^f}=\beta.
\end{equation}

We consider the adjoint action of $\fh_{\cW}$ on $\cU(\cW)$ given by
$h.u:=hu-uh,\ \ h\in \fh_{\cW}, u\in \cU(\cW)$;
this action is semisimple and for each $\nu\in\fh^*_{\cW}$
we denote by $\cU_{\nu}$ the corresponding weight space.
Note that $\fh_{\cW}$ is a toric Lie subalgebra of $\cU(\cW)$,
see Subsection~~\ref{toric}.

\subsubsection{}\label{order}
Let $Q_+\subset \fh_{\cW}$ be the semigroup generated by the weights of
$J^i_n$ with $(i,n)\in I_+$. One has
$$\nu\in Q_+\ \Rightarrow\ \
\langle \nu,L_0\rangle\in \frac{1}{2}\mathbb{Z}_{<0},\
\nu|_{\fh^f}\in P^{\natural}\ \text{ or }
\ \langle\nu, L_0\rangle=0,\ \nu|_{\fh^f}\in Q^{\natural}_+.$$
Note that $-Q_+\cap Q_+=\{0\}$.
Introduce a partial order on $\fh^*_{\cW}$ by
setting $\nu'\geq \nu$ iff $\nu'-\nu\in Q_+$.

\subsubsection{Triangular decomposition}\label{triW}
Since $\cW$ is strongly generated by $J^i(z)$, the subalgebra generated
by $\{J_m^i, (m,i)\in I_0\}$ is the symmetric algebra  $\cS(\fh_{\cW})$.

Fix a PBW-basis in $\cU(\cW)$ as in Subsection~\ref{UW}.
Denote by $U_{\pm}$ the span of elements of the PBW-basis
which are product of elements $\{J_m^i,(m,i)\in I_{\pm}\}$.
Notice that $U_{\pm}$
are closed subspaces of the topological algebra $\cU(\cW)$
and $U_+\cap U_-=\mathbb{C}$.

Denote by $U_+'$ (resp. $U_-'$) the closure of the left (resp. right)
ideal generated by the monomials
$\{J_m^i,(m,i)\in I_+\}$ (resp.  $\{J_m^i,(m,i)\in I_-\}$).
From the existence of PBW-basis, we conclude that
 the multiplication map induces embeddings
$U_-\otimes \cS(\fh_{\cW})\to \cU(\cW),
\ \cS(\fh_{\cW})\otimes U_+\to \cU(\cW)$ and that
\begin{equation}\label{eqtriW}
\cU(\cW)=(U_-\otimes \cS(\fh_{\cW}))\oplus U_+'=U'_-\oplus
(\cS(\fh_{\cW})\otimes U_+).\end{equation}

Clearly, $\cS(\fh_{\cW}), U_{\pm}$ are  $\fh_{\cW}$-submodules of $\cU(\cW)$;
we set $U_{\pm;\nu}:=\cU(\cW)_{\nu}\cap U_{\pm}$. One has
$$\Omega(U_+)\subset Q_+,\ \ \Omega(U_-)\subset -Q_+,\ \
\Omega(\cS(\fh_{\cW}))=\{0\}.$$

\subsubsection{}
\begin{lem}{U+nu}
One has $\dim U_{+;\nu}=\dim U_{-;-\nu}<\infty$ for any $\nu\in\fh^*_{\cW}$.
\end{lem}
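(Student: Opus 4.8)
The plan is to deduce the claim directly from the PBW description of $\cU(\cW)$ given in Subsection~\ref{UW} together with the explicit structure of the triangular decomposition~(\ref{eqtriW}). Recall that by Subsection~\ref{UW} a topological PBW-basis of $\cU(\cW)$ consists of ordered monomials $\prod_{(i,m)} (J^i_m)^{k_{m,i}}$ with $(i,m)\in I$, $k_{m,i}\in\mathbb{Z}_{>0}$, and $k_{m,i}=1$ when $a_i$ is odd; restricting to monomials using only factors $J^i_m$ with $(i,m)\in I_+$ (resp.\ $(i,m)\in I_-$) gives a basis of $U_+$ (resp.\ $U_-$). The weight of such a monomial is the sum of the weights $\wt J^i_m$ of its factors, which by~(\ref{Jann}) have the form $\mu$ with $\langle\mu,L_0\rangle=m\in\frac12\mathbb{Z}$ and $\mu|_{\fh^f}=\wt a_i\in P^{\natural}$.

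First I would establish the finite-dimensionality of each $U_{+;\nu}$. Fix $\nu\in\fh^*_{\cW}$; a basis monomial of $U_+$ of weight $\nu$ is determined by a multiset of pairs from $I_+$ whose weights sum to $\nu$. Since for $(i,m)\in I_+$ one has $\langle\wt J^i_m,L_0\rangle=m\in\frac12\mathbb{Z}_{>0}$ except for the finitely many pairs with $m=0$ (those with $a_i\in\fn_+\cap\fg_0$, whose $\fh^f$-weights lie in $Q_+^{\natural}\setminus\{0\}$), any such monomial can involve only finitely many factors (bounded by $2\langle\nu,L_0\rangle$ plus a correction from the $m=0$ part, which is controlled because $Q_+^{\natural}$ is a finitely generated semigroup in which no nonzero element is a sum of arbitrarily many generators of fixed total $\fh^f$-weight). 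Hence only finitely many monomials occur, giving $\dim U_{+;\nu}<\infty$; the same argument applies to $U_{-;-\nu}$.

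For the equality $\dim U_{+;\nu}=\dim U_{-;-\nu}$, the natural approach is to exhibit a weight-reversing bijection between the PBW-bases of $U_+$ and $U_-$. The set $I$ decomposes as $I_-\coprod I_0\coprod I_+$, and by the definition of $I_{\pm}$ in Subsection~\ref{UW}, negating the $L_0$-grading $m\mapsto -m$ together with the involution $a\mapsto$ (the opposite root vector in $\fg^{\natural}$) sends $I_+$ to $I_-$; concretely, a factor $J^i_m$ with $(i,m)\in I_+$ of weight $\mu$ ($\langle\mu,L_0\rangle=m$, $\mu|_{\fh^f}=\wt a_i$) is matched with a factor whose weight is $-\mu$ and which lies in $I_-$, using that $\fg^{\natural}$ is reductive with triangular decomposition $\fg^{\natural}=(\fg_0\cap\fn_-)\oplus\fh^f\oplus(\fg_0\cap\fn_+)$, so its root vectors come in opposite pairs of opposite weights, and parities are preserved. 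Extending multiplicatively to monomials gives a bijection between PBW-basis monomials of $U_+$ of weight $\nu$ and those of $U_-$ of weight $-\nu$, whence the dimensions agree.

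The main obstacle is the bookkeeping around the weight-zero part of $I$: the factors $J^a_0$ with $a\in\fg_0$ split into those with $a\in\fh^f$ (which land in $I_0$, not $I_{\pm}$) and those with $a\in\fn_{\pm}\cap\fg_0$ (which land in $I_{\pm}$), and one must check that the proposed involution on root vectors respects exactly this partition — i.e.\ that it sends $\fn_+\cap\fg_0$ to $\fn_-\cap\fg_0$ and fixes $\fh^f$ — and that the resulting weight identification is compatible with~(\ref{Jann}). Once that compatibility is verified, finiteness and the equality both follow formally from the PBW theorem for $\cU(\cW)$; no genuinely hard analysis is required, only careful tracking of the $\fg^{\natural}$-root data.
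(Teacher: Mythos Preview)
Your approach is exactly the paper's: establish $\dim U_{+;\nu}=\dim U_{-;-\nu}$ by a weight-reversing bijection on PBW monomials, and prove finiteness by writing each monomial as $y_0y_+$ with $y_+$ built from factors $J^i_m$ with $m>0$ (finitely many once $\langle\nu,L_0\rangle$ is fixed) and $y_0$ built from $J^{a}_0$ with $a\in\fn_+\cap\fg_0$ (finitely many of each $\fh^f$-weight, since $\fg^{\natural}$ is reductive finite-dimensional). The finiteness half is correct, if tersely stated.

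The bijection, however, has a real gap. The strong generators $J^{a_i}$ are labelled by a basis of $\fg_{-1/2}\oplus\fg^{\natural}$, not of $\fg^{\natural}$ alone, and your rule ``$a\mapsto$ the opposite root vector in $\fg^{\natural}$'' says nothing about $a_i\in\fg_{-1/2}$. For such $a_i$, with $\fg$-root $\mu$ satisfying $(\mu,\theta)=-1$, merely negating $m$ sends $J^{a_i}_m$ to $J^{a_i}_{-m}$, whose $\fh^f$-component is still $\mu|_{\fh^f}$; and $\mu|_{\fh^f}\neq 0$ (else $\mu\in\mathbb{C}\theta$, impossible for a root with $(\mu,\theta)=-1$), so the full $\fh_{\cW}$-weight is \emph{not} negated and your map does not carry $U_{+;\nu}$ into $U_{-;-\nu}$. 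What is missing is an $\fh^f$-weight-reversing involution on $\fg_{-1/2}$ itself: since $\theta|_{\fh^f}=0$, the assignment $\mu\mapsto -\theta-\mu$ negates the $\fh^f$-restriction, and $-\theta-\mu$ is again a root with $(-\theta-\mu,\theta)=-1$ because $(\mu,\theta)<0$ forces $\mu+\theta\in\Delta$. Sending $(i,m)\mapsto(i',-m)$ with $a_{i'}\in\fg_{-\theta-\mu}$ then completes the required involution on $I_+\leftrightarrow I_-$.
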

\begin{proof}
Retain notation of~\ref{UW}.
The weight spaces $U_{\pm;\nu}$
are spanned by subsets of the PBW basis. In order to show that
$\dim U_{+;\nu}=\dim U_{-;-\nu}$ we construct an involution
on the PBW-basis which interchanges the vectors of weight $\nu$
with the vectors of weight $-\nu$ and the vectors lying in
$U_+$ with the vectors lying in $U_-$.
Introduce an involution
on the set $I$ as follows: for $a_i\in\fg_{\mu}$, where
$\mu\in\fh^*\setminus\{0\}$ set
$(i,n)\mapsto (i',-n)$, where $a_{i'}\in \fg_{-\mu}$ (such $i'$ is unique,
since $\fg_{\mu}$ is one-dimensional for $\mu\not=0$);
otherwise (if $i=0$ or $a_i\in\fh$) set $(i,n)\mapsto (i,-n)$.
Then the involution maps $I_+$ onto $I_-$.
Define the corresponding involution $\sigma$ on the set
$\{J^i_n,\ (i,n)\in I\}$ and extend $\sigma$
to the PBW basis of $\cU(\cW)$. Then $\sigma$ is
a required involution and this establishes the equality
$\dim U_{+;\nu}=\dim U_{-;-\nu}$.

It remains to verify that $\dim U_{+;\nu}<\infty$.
A monomial of the form $\prod (J_m^i)^{k_{m,i}}$ lying in $U_+$
can be written as $y_0y_+$, where $y_0$ is of the form
$\prod (J_0^i)^{k_{i}},\ a_i\in\fg_0\cap\fn_+$ and
$y_+$  is of the form $\prod (J_m^i)^{k_{m,i}}$
with all $m>0$.
Suppose that the monomial $y_0y_+$ has weight $\nu$.
Then  $\sum mk_{m,i}=-\langle \nu,L_0\rangle$.
Thus there are finitely many possibilities for $y_+$.
For a given monomial $y_+$ the weight of $y_0$ is fixed (it is $\nu-\wt y$).
Assign to the monomial $y_0=\prod (J_0^i)^{k_{i}}$ the element
$\prod a_i^{k_{i}}\in \cU(\fg_0\cap \fn_+)$.
Clearly, the images of the monomials of weight $\mu\in\fh^*_{\cW}$
form a PBW basis in $\cU(\fg_0\cap \fn_+)_{\mu'}$,
where  $\mu'\in (\fh^f)^*$ is the restriction of
$\mu\in\fh^*_{\cW}=(\fh^f\oplus\mathbb{C}L_0)^*$
to $\fh^f$. By~\ref{notW},
$(\fg_0\cap \fn_-)\oplus \fh^f\oplus (\fg_0\cap \fn_+)$
is a triangular decomposition of a reductive Lie algebra. As a result,
$\cU(\fg_0\cap \fn_+)_{\mu'}$ is finite-dimensional for any
$\mu'\in (\fh^f)^*$. Thus for a given $y_+$ there are finitely
many possibilities for $y_0$. Hence $U_{+;\nu}$ is finite-dimensional.
\end{proof}

Thus, the pair $(\cU(\cW),\fh_{\cW})$ satisfies the conditions
(U1)--(U5) of Subsection~\ref{assm} with (U2) replaced by~(\ref{eqtriW}).
In fact, it follows from~\cite{KWmod2} that (U6) holds as well,
but we will not need it.

\subsection{The category $\tilde{\CO}(\cW)$}
Let $\bN$ be a $\cU(\cW)$-module.
For $\nu\in\fh^*_{\cW}$ define $\bN_{\nu}$ and the set of generalized
weights as usual:
\begin{equation}\label{bNnu}
\bN_{\nu}:=\{w\in \bN|\ \forall u\in \fh_{\cW}\ (u-\langle \nu, u\rangle)^nw=0
\text{ for } n>>0\},\
 \end{equation}
$\Omega(\bN):=\{\nu\in\fh^*_{\cW}|\ \bN_{\nu}\not=0\}$.
Clearly, $\cU(\cW)_{\mu}\bN_{\nu}\subset \bN_{\mu+\nu}$.

Let $\tilde{\CO}(\cW)$ be the full subcategory of  the category
of $\cU(\cW)$-modules such that

(O1)  $\bN=\oplus_{\nu\in \Omega(\bN)} \bN_{\nu}$;

(O2) $\exists \nu_1,\ldots,\nu_m\in \fh^*_{\cW}\
\text{ s.t. } \forall \nu\in \Omega(\bN)\ \ \exists i\text{ for which }
\nu\leq \nu_i$.

Notice that any module in $\tilde{\CO}(\cW)$
is a continuous $\cU(\cW)$-module and that $\tilde{\CO}(\cW)$ is closed
with respect to  the extensions. This category is an analogue of the
category $\tilde{\CO}$ of modules over a Kac-Moody algebra.

\subsubsection{Verma modules over $\cW$}\label{verW}
For $\nu\in \fh^*_{\cW}$
a Verma module $\bM(\nu)$ can be defined as follows:
extend $\nu$ to an algebra homomorphism $\nu:\cS(\fh_{\cW})\to\mathbb{C}$
and let $\Ker\nu$ be its kernel; then
$$\bM(\nu)=\cU(\cW)/\bigl(\cU(\cW)\Ker\nu+U'_+\bigr),$$
where $U'_+$ is the left ideal introduced in Subsection~\ref{triW}.
A Verma module is a cyclic $\cU(\cW)$-module generated by $v$
of weight $\nu$ such that $U'_+v=0$. Using~(\ref{eqtriW})
we can identify  $\bM(\nu)$ with $U_-$ as vector spaces:
the preimage of the weight space $U_{-,\mu}$ is the weight space
$\bM(\nu)_{\nu+\mu}$; in particular,
$$\Omega(\bM(\nu))\subset (\nu-Q_+),\
\ \dim \bM(\nu)_{\nu}=1.$$
By~\Lem{U+nu}, the weight spaces of $\bM(\nu)$ are finite-dimensional so
$\bM(\nu)$ lies in $\tilde{\CO}(\cW)$.
The module $\bM(\nu)$ has a unique irreducible quotient which we denote by
$\bL(\nu)$; this is an irreducible module with highest weight $\nu$.
 Any irreducible module in $\tilde{\CO}(\cW)$
is an irreducible  highest module.

\subsection{Functor $H$}
Let $\fhg$ be the affinization of $\fg$, $\fhh=\fh\oplus\mathbb{C}K\oplus
\mathbb{C}D$ be the Cartan subalgebra of $\fhg$; define $\hat{\rho}\in\fhh^*$
for $\fhg$ as usual (see Subsection~\ref{symmKm}).
Let $\delta$ be the minimal imaginary root and let $\alpha_0=\delta-\theta$.
We denote by $s_0$ the reflection with respect to
$\alpha_0$ ($s_0\in W\subset GL(\fhh^*)$).
For $\lambda\in\fhh^*$ we denote by $M(\lambda)$
the $\fhg$-module with highest
weight $\lambda$ and by $L(\lambda)$ its irreducible quotient.
Consider the category $\CO$ for the affine Kac-Moody algebra $\fhg$
and let $\CO_k$ be the full subcategory of the category $\CO$
with the objects $V$ such that $K|_V=k\cdot\id$. Set
$$\fhh^*_k:=\{\lambda\in\fhh^*|\ \langle\lambda, K\rangle=k\}.$$

For $N\in\Obj\CO$ we denote by $N^{\#}$ its graded dual, see~\ref{dual};
one has $L(\lambda)^{\#}\cong L(\lambda)$.

\subsubsection{}\label{redH}
Reduction functors $V\mapsto H^i(V), i\in\mathbb{Z}$, from the category
$\CO_k$ to the category of continuous $\cW$-modules were
introduced in~\cite{KRW}; in Subsection~\ref{Hi} we describe these functor for
$\fg=\fsl_2$. The following properties
are proven in~\cite{KRW},\cite{KWdet},\cite{Ar}:

(H1) for any $V\in \Obj\CO_k$ one has $H^i(V)=0$ for $i\not=0$;
the functor $V\mapsto H^0(V)$ is exact;

(H2) for  $\lambda\in\fhh^*_k$ one has
$H^0(M(\lambda))=\bM(\lambda_{\cW})$, where $\lambda_{\cW}\in\fh^*_{\cW}$
is given by
\begin{equation}\label{lambdaW}
\lambda_{\cW}|_{\fh^f}=\lambda|_{\fh^f},\ \
\lambda_{\cW} (L_0)=\frac{(\lambda+2\hat{\rho},
\lambda)}{2(k+h^{\vee})}-
\langle\lambda,x+D\rangle;
\end{equation}

(H3) for  $\lambda\in\fhh^*_k$ the continuous $\cW$-module
$H^0(L(\lambda))$ is irreducible ($\cong\bL(\lambda_{\cW})$) if
$(\lambda,\alpha_0)\not\in\mathbb{Z}_{\geq 0}$ and
$H^0(L(\lambda))=0$ if
$(\lambda,\alpha_0)\in\mathbb{Z}_{\geq 0}$;

(H4) for any $N\in \Obj\CO_k$ and any irreducible
module $L\in\CO_k$ such that $H^0(L)\not=0$ one has
$$\ \ \ \ \ \ \ [H^0(N): H^0(L)]=\sum_{\lambda: H^0(L(\lambda))=H^0(L)}
[N: L(\lambda)];$$

(H5) for  $\lambda\in\fhh^*_k$
any non-zero submodule of $H^0(M(\lambda)^{\#})$ intersects non-trivially
the weight space $H^0(M(\lambda)^{\#})_{\lambda_{\cW}}$
(\cite{Ar}, Thm. 6.6.2).

From (H2) it follows that any Verma  $\cW$-module
is the image of $M(\lambda)$ with $\lambda\in\fhh^*_k$.
From this and~\Lem{lamW} below,
all simple objects in $\tilde{\CO}(\cW)$ are of the form
$H^0(L(\lambda))$ for $\lambda\in\fh^*_k$.

\subsubsection{}\label{COO}
Let $\CO_k'$ be the full category of $\fhg$-modules $N$ admitting finite
filtrations with the subquotients belonging to the category $\CO_k$
and with the condition $K|_N=k\cdot\id$.

Extend the functors $N\mapsto H^i(N)$ from $\CO_k$
to $\CO_k'$ (define the differential $d$  by the same formula).
The property (H1) ensures that $H^i(N)=0$ for $i\not=0$ and that
the functor $N\mapsto H^0(N)$ is exact.

\subsubsection{}
\begin{lem}{lamW}
For $\lambda,\lambda'\in\fhh^*_k$ one has
$$H^0(M(\lambda))\cong H^0(M(\lambda')) \ \Longleftrightarrow\
\lambda'-\lambda\in\mathbb{C}\delta\ \text{ or }
\lambda'-s_0.\lambda\in\mathbb{C}\delta.$$
\end{lem}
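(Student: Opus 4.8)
The plan is to reduce the statement to an explicit computation with the formula~\eqref{lambdaW} for $\lambda_{\cW}$. By (H2) one has $H^0(M(\lambda))=\bM(\lambda_{\cW})$, and a Verma module $\bM(\nu)$ over $\cW$ is determined by its highest weight: $\bM(\nu)_\nu$ is one-dimensional, generates $\bM(\nu)$, and every other weight of $\bM(\nu)$ lies strictly below $\nu$ in the order of~\ref{order}, so $\nu$ is recovered as the unique maximal weight. Hence $H^0(M(\lambda))\cong H^0(M(\lambda'))$ if and only if $\lambda_{\cW}=\lambda'_{\cW}$ in $\fh_{\cW}^*=(\fh^f\oplus\mathbb{C}L_0)^*$, i.e. if and only if (a) $\lambda|_{\fh^f}=\lambda'|_{\fh^f}$, and (b) the two $L_0$-components produced by~\eqref{lambdaW} agree.

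Before imposing (b), I would record that $\delta$ and $\theta$ both vanish on $\fh^f$: this is immediate for $\delta$, and for $\theta$ it holds because $\langle\theta,a\rangle=(h\,|\,a)=2(x\,|\,a)=0$ for every $a\in\fh^f\subset\fg^{\natural}$, where $h=\theta^{\vee}$ is the semisimple element of the $\fsl_2$-triple and $x=h/2$ (using $(\theta,\theta)=2$, so that $h$ represents $\theta$ under $(-,-)$). Since $\fhh=\fh^f\oplus\mathbb{C}x\oplus\mathbb{C}K\oplus\mathbb{C}D$, the subspace $\{\mu\in\fhh^*:\ \mu|_{\fh^f}=0,\ \langle\mu,K\rangle=0\}$ is two-dimensional, and as $\theta,\delta$ lie in it and are linearly independent (they differ on $x$), it equals $\mathbb{C}\theta\oplus\mathbb{C}\delta$. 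Because $\langle\lambda,K\rangle=\langle\lambda',K\rangle=k$, condition (a) is therefore \emph{equivalent} to $\lambda-\lambda'=p\delta+q\theta$ for some $p,q\in\mathbb{C}$. I expect this to be the only real subtlety of the argument: (a) by itself does \emph{not} force $\lambda-\lambda'\in\mathbb{C}\delta$, since the highest weight of $\cW$ only remembers $\fh^f$, on which $\theta$ restricts trivially; this is exactly why $s_0$ enters the statement.

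With $\lambda'=\lambda-p\delta-q\theta$ I would then plug into (b). Using $(\delta,\delta)=0$, $(\theta,\theta)=2$, $(\delta,\theta)=0$, $(\lambda+\hat{\rho},\delta)=\langle\lambda+\hat{\rho},K\rangle=k+h^{\vee}$, and $\langle\delta,x+D\rangle=\langle\theta,x+D\rangle=1$, the equality (b) collapses after a short manipulation to $q\bigl(q-(\lambda+\hat{\rho},\theta)+(k+h^{\vee})\bigr)=0$, that is $q=0$ or $q=-\langle\lambda+\hat{\rho},\alpha_0^{\vee}\rangle$ (here $\langle\mu,\alpha_0^{\vee}\rangle=(\mu,\delta)-(\mu,\theta)$ because $(\alpha_0,\alpha_0)=2$). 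In the first case $\lambda'-\lambda=-p\delta\in\mathbb{C}\delta$. In the second case, computing $s_0.\lambda=s_0(\lambda+\hat{\rho})-\hat{\rho}=\lambda+\langle\lambda+\hat{\rho},\alpha_0^{\vee}\rangle(\theta-\delta)$ gives $\lambda'-s_0.\lambda\in\mathbb{C}\delta$. This proves the forward implication.

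For the converse one checks that each of the two stated conditions forces $\lambda_{\cW}=\lambda'_{\cW}$. From~\eqref{lambdaW}, adding $c\delta$ to $\lambda$ changes the numerator of the $L_0$-component by $2c(k+h^{\vee})$ and the subtracted term $\langle\lambda,x+D\rangle$ by $c$, so the two contributions cancel and $\lambda_{\cW}$ is unchanged; thus it only remains to see $(s_0.\lambda)_{\cW}=\lambda_{\cW}$. Here (a) holds since $\theta,\delta$ vanish on $\fh^f$, and the $L_0$-component is preserved because $s_0$ is an orthogonal transformation of $\fhh^*$, so $(s_0(\lambda+\hat{\rho}),s_0(\lambda+\hat{\rho}))=(\lambda+\hat{\rho},\lambda+\hat{\rho})$, while $\langle s_0.\lambda-\lambda,x+D\rangle=\langle\lambda+\hat{\rho},\alpha_0^{\vee}\rangle\langle\theta-\delta,x+D\rangle=0$. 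The only genuine computation is the reduction of (b), which is routine once the identities above are in place.
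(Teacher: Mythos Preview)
Your proof is correct and follows essentially the same route as the paper: reduce via (H2) to $\lambda_{\cW}=\lambda'_{\cW}$, observe that the $\fh^f$-condition forces $\lambda-\lambda'$ into the two-dimensional space $\mathbb{C}\delta\oplus\mathbb{C}\theta$, and then reduce the $L_0$-condition to a quadratic in the $\theta$-coefficient whose two roots give the two alternatives. The paper parametrizes with $\{\delta,\alpha_0\}$ rather than your $\{\delta,\theta\}$ and treats both implications at once as an equivalence, but since $\alpha_0=\delta-\theta$ the computations are the same.
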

\begin{proof}
From (H2)  one sees that $H^0(M(\lambda))\cong H^0(M(\lambda'))$
is equivalent to the conditions
\begin{equation}\label{shum}
(\lambda'-\lambda)|_{\fh^f}=0\ \text{ and }\ \
\frac{(\lambda'-\lambda,\lambda'+\lambda+2\hat{\rho})}{2(k+h^{\vee})}-
\langle \lambda'-\lambda,x+D\rangle=0.
\end{equation}
Write $\lambda'-\lambda$ in the form
$$\lambda'-\lambda=a\delta+a_0\alpha_0+\mu,\
\text{ where } a,a_0\in\mathbb{C},\text{ and }
\langle\mu,K\rangle=\langle\mu,D\rangle=\langle\mu,x\rangle=0.$$
Then $(\lambda'-\lambda)|_{\fh^f}=0$ means that
$\mu|_{\fh^f}=0$, hence $\mu=0$.
Now the second condition of~(\ref{shum}) takes the form
$$\frac{(2\lambda+2\hat{\rho}+a\delta+a_0\alpha_0,
a\delta+a_0\alpha_0)}{2(k+h^{\vee})}-
\langle a\delta+a_0\alpha_0,x+D\rangle=0,$$
which is equivalent to $(2\lambda+2\hat{\rho}+a_0\alpha_0,a_0\alpha_0)=0$.
The assertion follows.
\end{proof}

\subsubsection{}
Define $W(\lambda)\subset\hat{W}$ as in Subsection~\ref{Wlambda}.

\begin{cor}{corlamW}
If $N$ is a subquotient of $M(\lambda)$, then any irreducible subquotient
of $H^0(N)$ is isomorphic to $H^0(L(w.\lambda))$ for some $w\in W(\lambda)$,
and
$[H^0(N): H^0(L(w.\lambda))]=[N: L(w\lambda)]$, provided that
$H^0(L(w.\lambda))\not=0$.
\end{cor}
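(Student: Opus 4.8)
The strategy is to transport everything through the exact functor $H^0$ (property (H1), extended to $\CO'_k$ as in~\ref{COO}), using the multiplicity formula (H4) together with the classification of composition factors of an affine Verma module. Since $k\neq-h^\vee$, every $\lambda\in\fhh^*_k$ is non-critical, so by \cite{KK}, Thm.~2 (cf. Subsection~\ref{noncrit}) the irreducible subquotients of $M(\lambda)$ are precisely the $L(w.\lambda)$, $w\in W(\lambda)$, each occurring with finite multiplicity; as $N$ is a subquotient of $M(\lambda)$, the same holds for $N$, with $[N:L(w.\lambda)]\leq[M(\lambda):L(w.\lambda)]<\infty$. I also recall that, by the remark following~\ref{redH}, every simple object of $\tilde\CO(\cW)$ is of the form $H^0(L(\mu))$ with $\mu\in\fhh^*_k$, that $H^0(L(\mu))\cong\bL(\mu_{\cW})$ when non-zero by (H3), and that non-isomorphic irreducible $\cW$-modules have distinct highest weights.

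For the first assertion, suppose $\bL(\nu)$ is an irreducible subquotient of $H^0(N)$, so $[H^0(N):\bL(\nu)]\geq1$. Choosing $\mu$ with $H^0(L(\mu))\cong\bL(\nu)$ and applying (H4) to $L=L(\mu)$ produces a weight $\mu'$ with $[N:L(\mu')]\geq1$ and $H^0(L(\mu'))\cong\bL(\nu)$. Then $L(\mu')$ is a subquotient of $M(\lambda)$, hence $\mu'=w.\lambda$ for some $w\in W(\lambda)$, and $\bL(\nu)\cong H^0(L(w.\lambda))$, as claimed.

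For the multiplicity equality, fix $w\in W(\lambda)$ with $H^0(L(w.\lambda))\neq0$. By (H4), $[H^0(N):H^0(L(w.\lambda))]$ equals the sum of $[N:L(\mu)]$ over all $\mu$ with $H^0(L(\mu))\cong H^0(L(w.\lambda))$; since $[N:L(\mu)]=0$ unless $\mu\in W(\lambda).\lambda$, it suffices to show that $w.\lambda$ is the only element of $W(\lambda).\lambda$ whose $H^0$-image is $H^0(L(w.\lambda))$. By the remarks above this means $\mu_{\cW}=(w.\lambda)_{\cW}$, and by \Lem{lamW} (equivalently, by formula~(\ref{lambdaW})) this holds iff $\mu-w.\lambda\in\mathbb C\delta$ or $\mu-s_0.(w.\lambda)\in\mathbb C\delta$. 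In the first case, writing $\mu=w'.\lambda$, the equality $w'(\lambda+\hat\rho)=w(\lambda+\hat\rho)+c\delta$ together with $W$-invariance of the form and $(\delta,\delta)=0$ gives $2c(\lambda+\hat\rho,\delta)=2c(k+h^\vee)=0$, hence $c=0$ and $\mu=w.\lambda$. In the second case: if $s_0\notin W(\lambda)$ then $s_0.(w.\lambda)\notin W(\lambda).\lambda$ and the case is vacuous; if $s_0\in W(\lambda)$ then $(w.\lambda,\alpha_0)\in\mathbb Z$, while $H^0(L(w.\lambda))\neq0$ forces $(w.\lambda,\alpha_0)\notin\mathbb Z_{\geq0}$ and $H^0(L(\mu))\neq0$ forces $(\mu,\alpha_0)=(s_0.(w.\lambda),\alpha_0)=-(w.\lambda,\alpha_0)-2\notin\mathbb Z_{\geq0}$; the only integer satisfying both is $(w.\lambda,\alpha_0)=-1$, but then $s_0.(w.\lambda)=w.\lambda$ and we are back to the first case. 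Hence $[H^0(N):H^0(L(w.\lambda))]=[N:L(w.\lambda)]$.

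The main obstacle is this last step: $H^0$ collapses weights differing by a multiple of $\delta$ and weights conjugate under the dot-action of $s_0$ (exactly \Lem{lamW}), so one must check by hand that neither ambiguity actually yields a second composition-factor weight of $M(\lambda)$ with the same $H^0$-image — the $\mathbb C\delta$-ambiguity being killed by non-criticality of $\lambda$ and the $s_0$-ambiguity by the vanishing criterion (H3). Everything else is routine manipulation with the exactness of $H^0$ and with (H4).
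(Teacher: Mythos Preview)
Your proof is correct and follows essentially the same route as the paper's: both apply (H4), restrict to composition factors $L(y.\lambda)$ with $y\in W(\lambda)$, invoke \Lem{lamW} to reduce the ambiguity to $\mathbb C\delta$-shifts and the $s_0$-flip, kill the $\delta$-shift via invariance of $(\lambda+\hat\rho,\lambda+\hat\rho)$ and non-criticality, and dispose of the $s_0$-ambiguity using (H3). One small point: in your second case you should also invoke the norm argument to force $c=0$ before concluding that the case with $s_0\notin W(\lambda)$ is vacuous (since a priori $\mu=s_0.(w.\lambda)+c\delta$); the paper handles both cases at once by first proving $a=0$ and only then splitting on $(w.\lambda,\alpha_0)$.
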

\begin{proof}
By (H4), $[H^0(N): H^0(L)]=\sum_{\lambda': H^0(L(\lambda'))=H^0(L)}
[N: L(\lambda')]$. Since $N$ is a subquotient of $M(\lambda)$,
$[N: L(\lambda')]\not=0$ forces $\lambda'=w\lambda$ for some $w\in W(\lambda)$.
One has
$$[H^0(N): H^0(L(w.\lambda))]=\sum_{y\in W(\lambda): H^0(L(y.\lambda))=
H^0(L(w.\lambda))}[N: L(y.\lambda)].$$
By~\Lem{lamW}, $H^0(L(y.\lambda))=H^0(L(w.\lambda))$ forces
$y.\lambda=w.(\lambda+a\delta)$ or $y.\lambda=s_0w.(\lambda+a\delta)$
for some $a\in\mathbb{C}$. Recall that the value
$(\lambda'+\hat{\rho},\lambda'+\hat{\rho})$ is invariant of a $W.$-orbit.
Notice that $(\lambda+\hat{\rho}+a\delta,\lambda+\hat{\rho}+a\delta)=
(\lambda+\hat{\rho},\lambda+\hat{\rho})$ forces $a=0$ or
$(\lambda+\hat{\rho},\delta)=0$. Therefore
$H^0(L(y.\lambda))=H^0(L(w.\lambda))$ forces
$y.\lambda=w.\lambda$ or $y.\lambda=s_0w.\lambda$.
If $(w.\lambda,\alpha_0)\not\in\mathbb{Z}$, then $s_0w\not\in W(\lambda)$.
Since $H^0(L(w.\lambda))\not=0$,
$(w.\lambda,\alpha_0)\not\in\mathbb{Z}_{\geq 0}$.
If $(w.\lambda,\alpha_0)\in\mathbb{Z}_{\leq -2}$, then
$(s_0w.\lambda,\alpha_0)\in\mathbb{Z}_{\geq 0}$, so
$H^0(L(s_0w.\lambda))=0$. Finally, if $(w.\lambda,\alpha_0)=-1$,
then $s_0w.\lambda=w.\lambda$. Hence
$H^0(L(y.\lambda))=H^0(L(w.\lambda))$ forces $y.\lambda=w.\lambda$
and so $[H^0(N): H^0(L(w.\lambda))]=[N: L(w\lambda)]$ as required.
\end{proof}

\subsection{Admissible modules}
Define the admissible weights for $\cW$ in the same way as before. That is
$\lambda\in\fh^*_{\cW}$ is called {\em weakly admissible}
if $[\bM(\lambda'):\bL(\lambda)]\not=0$ forces
$\lambda'=\lambda$, and $\lambda$ is  called {\em admissible}
if it is  weakly admissible and $\Ext^1(\bL(\lambda),\bL(\lambda))=0$.

\subsubsection{}
\begin{defn}{}
Let $\Adm$ be the full category of  $\cW$-modules $\bN$ which
are locally finite $\fh_{\cW}$-modules and are
such that every irreducible subquotient of $\bN$ is admissible.
\end{defn}

\subsection{Main results}
One of our main results is~\Thm{thmmuW}, which relates self-extensions
of a $\cW$-module $\bL(\lambda_{\cW})$ to self-extensions
of the $\fhg$-module $L(\lambda)$.
Using  this theorem we obtain important information
on admissible weights for $\cW$ for rational $k$ in
Subsection~\ref{admisW}. The main results here are the following.

Recall that the set of rational weakly admissible weights  $X_k$ for $\fhg$
is the union of the sets of all integral points of all the polyhedra
$\mathcal{P}^k(\Gamma), \ \Gamma\in B_k$, whereas that of
KW-admissible weights is the union of the sets of
all integral points in the interiors of all the polyhedra
$\mathcal{P}^k(\Gamma), \ \Gamma\in B_k$. In~\Prop{propadmW} (iii)
we show that the set of admissible weights in $\fh^*_{\cW}$ lies
in the image of $X_k$ under the map $\lambda\mapsto \lambda_{\cW}$.
On the other hand, by~\Cor{corpolyh}, the  set of
admissible weights in $\fh^*_{\cW}$
contains the image of the set of  KW-admissible weights.

Another main result is the following theorem.

\subsubsection{}
\begin{thm}{thmadmW}
(i) If $k+h^{\vee}\in\mathbb{Q}_{>0}$ the category $\Adm$
is semisimple with finitely many irreducibles and, in particular,
is a subcategory of $\tilde{\CO}(\cW)$.

(ii) If $k+h^{\vee}\in\mathbb{Q}_{<0}$ the category $\Adm$
is empty.
\end{thm}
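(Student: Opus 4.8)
The plan is to reduce everything to the affine Lie algebra case via the reduction functor $H^0$, together with the abstract semisimplicity criterion \Lem{lemcH} applied to $\cU(\cW)$. Recall that $(\cU(\cW),\fh_{\cW})$ satisfies (U1)--(U5) (with (U2) replaced by~(\ref{eqtriW})), and that by \Lem{U+nu} the weight spaces of $U_\pm$ are finite-dimensional; moreover $\cU_+$ is (topologically) finitely generated since $\cW$ is strongly generated by the fields $J^i(z)$. So the hypotheses of \Lem{lemcH} are in force once we verify (i)--(iii) and either (iv) or (iv'). The key inputs are: (H2), which says every Verma $\cW$-module is $H^0(M(\lambda))$ for some $\lambda\in\fhh^*_k$; (H4) and \Cor{corlamW}, which control the composition factors of $H^0(M(\lambda))$; and \Thm{thmmuW} (stated but not yet proved in the excerpt, which I may assume), relating $\Ext^1(\bL(\lambda_\cW),\bL(\lambda_\cW))$ to the affine side.

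For part (ii), suppose $k+h^{\vee}\in\mathbb{Q}_{<0}$. First I would show $\Adm$ is empty by arguing that there are no admissible weights in $\fh^*_\cW$. By (H2) every Verma $\cW$-module is $H^0(M(\lambda))$; a weakly admissible $\lambda_\cW$ forces (via \Cor{corlamW} and the description of $\Ext^1$ and subquotients of Verma modules over $\fhg$) the affine weight $\lambda$ to be essentially weakly admissible of level $k$. But by Subsection~\ref{43}, there are no weakly admissible $\fhg$-weights of level $k$ when $k+h^{\vee}\in\mathbb{Q}_{<0}$. The only subtlety is the possibility that $\lambda_\cW$ is weakly admissible as a $\cW$-weight even though the corresponding affine weight is not weakly admissible — this is where one must use \Cor{corlamW} carefully, keeping track of which $W(\lambda).$-translates $w.\lambda$ satisfy $H^0(L(w.\lambda))\ne 0$. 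I expect this to go through because $H^0$ is exact and faithful enough on composition factors that a subquotient relation $[\bM(\lambda'_\cW):\bL(\lambda_\cW)]\ne 0$ lifts to $[M(\lambda'):L(w\lambda)]\ne 0$ for suitable $w$.

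For part (i), assume $k+h^{\vee}\in\mathbb{Q}_{>0}$. The finiteness of the set of irreducibles in $\Adm$ follows from \Cor{corkh} on the affine side: there are finitely many $k$-admissible irreducible $\fhg$-modules, hence finitely many possible $H^0(L(\lambda))$, and by \Thm{thmmuW} the admissible $\cW$-weights are among the images of those. Property (i) of \Lem{lemcH} (closure under subquotients) is built into the definition of $\Adm$ via $\tilde\CO(\cW)$; property (iii) (every irreducible is some $\bL(\nu)$) holds in $\tilde\CO(\cW)$ by Subsection~\ref{verW}. The self-extension vanishing $\Ext^1_\Adm(\bL(\nu),\bL(\nu))=0$ is the definition of admissibility; the vanishing $\Ext^1(\bL(\nu),\bL(\nu'))=0$ for $\nu\ne\nu'$ comes from \Cor{corlala} once one knows the weights are weakly admissible. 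Since $\cU(\cW)$ satisfies (U4') (it satisfies (U1)--(U5), and (U1)--(U5) imply (U4')) and $\Adm$ has finitely many irreducibles, condition (iv') of \Lem{lemcH} applies, giving semisimplicity; semisimplicity then forces $\Adm\subset\tilde\CO(\cW)$ since each object is a finite (as the weight-multiplicity and support constraints are inherited) direct sum of highest weight modules, each of which lies in $\tilde\CO(\cW)$.

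The main obstacle, I expect, is establishing the two-way correspondence between weak admissibility / non-split self-extensions on the $\cW$-side and the affine side cleanly enough to invoke \Cor{corkh} and Subsection~\ref{43} — i.e. making precise that $H^0$ not only sends Verma modules to Verma modules but reflects the relevant homological data. This is exactly the content of \Thm{thmmuW}, so modulo that theorem the argument is a matter of assembling \Lem{lemcH}, \Cor{corlala}, \Cor{corkh}, and the emptiness statement of Subsection~\ref{43}. The secondary point requiring care is the bookkeeping in \Cor{corlamW} distinguishing $w.\lambda$ from $s_0 w.\lambda$, which is needed to be sure no spurious admissible $\cW$-weight appears.
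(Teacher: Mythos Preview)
Your architecture is the paper's: reduce admissibility on the $\cW$-side to the affine side, invoke the finiteness/emptiness of rational weakly admissible affine weights (Subsection~\ref{polyh}, \Cor{corkh}), and then apply \Lem{lemcH} via (iv'). Two points where your sketch diverges from what the paper actually does deserve comment.

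First, you identify the transfer step as ``exactly the content of \Thm{thmmuW}'', but it is not. \Thm{thmmuW} only compares \emph{self-extensions}; its part (ii) moreover carries the hypothesis $(\lambda,\alpha_0)\notin\mathbb{Z}$, so by itself it cannot handle all $\lambda$. The paper packages the full transfer as \Prop{propadmW}, whose proof has two independent pieces: \Lem{lemwadm} shows $wAdm_{\cW}=\varphi_{\cW}(wAdm_k)$ (this is your ``bookkeeping in \Cor{corlamW}'', and it is a genuine case analysis on whether $\alpha_0\in\Delta(\lambda)$, not just bookkeeping); and \Lem{lemWY} uses \Thm{thmmuW}(i) together with \Prop{propalmin} to extract from $\Ext^1(\bL(\lambda_{\cW}),\bL(\lambda_{\cW}))=0$ the rationality condition $\mathbb{C}\Delta(\lambda)+\mathbb{C}\delta=\mathbb{C}\hat{\Delta}$. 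This last point matters for your part~(ii): weak admissibility alone is not enough, since affine weights $\lambda$ with $\Delta(\lambda)=\emptyset$ are vacuously weakly admissible even when $k+h^{\vee}<0$; it is the self-extension vanishing that forces $\lambda$ into $X_k$, which is then empty by Subsection~\ref{43}.

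Second, for $\Ext^1(\bL,\bL')=0$ with $\bL\not\cong\bL'$ you cite \Cor{corlala}, which rests on (U6). The paper explicitly avoids (U6) and instead proves the $\cW$-specific \Cor{lemlalaW} via \Lem{Wdual}, using $H^0(M(\lambda)^{\#})$ in place of the abstract dual $N^{\#}$. Your route is legitimate since the paper remarks that (U6) does hold for $\cU(\cW)$, but it is not the argument given.
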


Our strategy of the proof of~\Thm{thmadmW} is as follows.
In~\Prop{propadmW} (iii) we show that if a $\cW$-module
$\bL(\lambda')$ is admissible for $\lambda'\in\fh^*_{\cW}$,
then there exists  weakly admissible rational
$\lambda\in\fhh^*_k$ such that $\lambda'=\lambda_{\cW}$.
The set of weakly admissible rational weights of level $k$
is described in Subsection~\ref{polyh} (it is denoted by $X_k$);
this set is finite for $k+h^{\vee}\in\mathbb{Q}_{>0}$ and is empty for
$k+h^{\vee}\in\mathbb{Q}_{<0}$.
Therefore the category $\Adm$  is empty for $k+h^{\vee}\in\mathbb{Q}_{<0}$
and has finitely many irreducibles for $k+h^{\vee}\in\mathbb{Q}_{>0}$.
The formula $\Ext^1(\bL',\bL)=0$ for non-isomorphic
weakly admissible $\cW$-modules $\bL$ and $\bL'$
is established in~\Cor{lemlalaW}.
Finally, using~\Lem{lemcH}, we obtain the semisimplicity.

\subsection{$\Ext^1(\bL,\bL')=0$}
In this subsection we show that  $\Ext^1(\bL,\bL')=0$ if
 $\bL\not\cong\bL'$ are irreducible  weakly admissible  $\cW$-modules.

\subsubsection{}
\begin{lem}{Wdual}
Let $\bN\in\tilde{\CO}(\cW)$ be a module with finite-dimensional
generalized weight spaces $\bN_{\nu}$. Suppose that for some
$\nu\in\fh^*_{\cW}$ one has

(i) $\Omega(\bN)\subset \{\nu'\in\fh^*_{\cW}|\
\nu'\leq \nu\}$;

(ii) each submodule of $\bN$ intersects  $\bN_{\nu}$ non-trivially
and $\dim \bN_{\nu}=1$.

Then $\bN$ is isomorphic to a submodule of $H^0(M(\lambda)^{\#})$
for $\lambda\in\fh^*_k$  such that $\lambda_{\cW}=\nu$.
\end{lem}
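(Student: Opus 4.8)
The plan is to deduce this from the contragredient duality of Subsection~\ref{dual}, available for $\cW$-modules since $(\cU(\cW),\fh_{\cW})$ satisfies (U1)--(U6) (Subsection~\ref{UW}): for a module with finite-dimensional generalized weight spaces one has $\bN^{\#\#}\cong\bN$, $\Omega(\bN^{\#})=\Omega(\bN)$, $\bN^{\#}_{\mu}\cong(\bN_{\mu})^{*}$, and submodules of $\bN^{\#}$ correspond order-reversingly to submodules of $\bN$ via $\bX\mapsto\bX^{\perp}$ (with $\bX$ proper iff $\bX^{\perp}\neq 0$). The key step I would isolate is: \emph{if $\mathbf{P}\in\tilde{\CO}(\cW)$ has finite-dimensional weight spaces, $\nu$ is maximal in $\Omega(\mathbf{P})$, $\dim\mathbf{P}_{\nu}=1$, and every non-zero submodule of $\mathbf{P}$ meets $\mathbf{P}_{\nu}$, then $\mathbf{P}^{\#}$ is a quotient of $\bM(\nu)$.} Our $\bN$ satisfies these hypotheses by (i) and (ii).

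Granting the key step, here is how the proof finishes. By Subsection~\ref{redH} every Verma $\cW$-module is $H^{0}$ of some $M(\lambda)$ with $\lambda\in\fhh^{*}_{k}$, and $H^{0}(M(\lambda))=\bM(\lambda_{\cW})$ by (H2), so I may fix $\lambda$ with $\lambda_{\cW}=\nu$. The reduction functor $H^{0}$ is computed from a complex with fixed auxiliary factors, and $H^{i}=0$ for $i\neq 0$ by (H1), so $\ch H^{0}(V)$ depends only on $\ch V$; since $\ch M(\lambda)^{\#}=\ch M(\lambda)$, the module $\mathbf{P}:=H^{0}(M(\lambda)^{\#})$ has $\ch\mathbf{P}=\ch\bM(\nu)$. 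In particular $\nu$ is maximal in $\Omega(\mathbf{P})$ with $\dim\mathbf{P}_{\nu}=1$, and by (H5) every non-zero submodule of $\mathbf{P}$ meets $\mathbf{P}_{\nu}$; so the key step applies to $\mathbf{P}$, giving a surjection $\bM(\nu)\twoheadrightarrow\mathbf{P}^{\#}$, which comparison of characters upgrades to $\mathbf{P}^{\#}\cong\bM(\nu)$, i.e.\ $H^{0}(M(\lambda)^{\#})\cong\bM(\nu)^{\#}$. Finally, dualizing a surjection $\bM(\nu)\twoheadrightarrow\bN^{\#}$ yields $\bN\cong\bN^{\#\#}\hookrightarrow\bM(\nu)^{\#}\cong H^{0}(M(\lambda)^{\#})$, which is the assertion.

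For the key step itself: first $U'_{+}\mathbf{P}^{\#}_{\nu}=0$, since by~(\ref{eqtriW}) every element of $U'_{+}$ is a combination of PBW monomials with a non-trivial $I_{+}$-part, and such an $I_{+}$-part raises the weight by a non-zero element of $Q_{+}$, hence kills any vector of the maximal weight $\nu$. As $\dim\mathbf{P}^{\#}_{\nu}=\dim\mathbf{P}_{\nu}=1$, a vector spanning $\mathbf{P}^{\#}_{\nu}$ then generates a highest weight submodule, which is a quotient of $\bM(\nu)$ (Subsection~\ref{verW}). To see this submodule is all of $\mathbf{P}^{\#}$, observe that a proper submodule $\bX\subsetneq\mathbf{P}^{\#}$ corresponds to a non-zero submodule $\bX^{\perp}\subseteq\mathbf{P}$, and — because $\dim\mathbf{P}_{\nu}=1$ — one has $\bX_{\nu}=0$ iff $\mathbf{P}_{\nu}\subseteq\bX^{\perp}$; by hypothesis the latter always holds, so $\mathbf{P}^{\#}_{\nu}$ is contained in no proper submodule and therefore generates $\mathbf{P}^{\#}$.

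The step I expect to require the most care is the character invariance $\ch H^{0}(M(\lambda)^{\#})=\ch\bM(\lambda_{\cW})$ for the non Verma-filtered module $M(\lambda)^{\#}$: it should follow from the construction of $H^{0}$ together with (H1), but one must check that the completions defining $\cU(\cW)$ and the possibly infinite composition-series sums in $\CO_{k}$ cause no trouble. Everything else is formal bookkeeping with the contragredient duality and the properties (H1)--(H5).
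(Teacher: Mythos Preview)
Your argument is correct and follows essentially the same route as the paper: show that $\bN^{\#}$ is a quotient of a ``Verma-type'' object, identify $H^0(M(\lambda)^{\#})^{\#}$ with that object via a character comparison together with (H5), and dualize. Two minor differences are worth noting: the paper works with $\bN^{\#}$ as a \emph{right} $\cU(\cW)$-module (so that the relevant universal object is $\cU(\cW)/J$ with $J=U'_-+\Ker\lambda_{\cW}\cdot\cU(\cW)$), deliberately avoiding the antiautomorphism (U6) that you invoke; and the character equality $\ch_{\cW}H^0(M(\lambda)^{\#})=\ch_{\cW}H^0(M(\lambda))$ that you flagged as delicate is obtained in the paper directly from (H4) and $[M(\lambda):L(\lambda')]=[M(\lambda)^{\#}:L(\lambda')]$, which cleanly sidesteps any worry about completions or infinite composition series.
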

\begin{proof}
For a continuous $\cW$-module $M=\oplus_{\nu\in\Omega(M)} M_{\nu}$
with finite-dimensional generalized weight spaces $M_{\nu}$, we
view $M^{\#}:=\oplus M_{\nu}^*$
as a right $\cU(\cW)$-module via the action $(fu)(v):=f(uv)$, $f\in M^{\#},
u\in\cU, v\in M$. Take $\bN$ satisfying the assumptions (i), (ii)
and let $\lambda\in\fh^*_k$ be such that $\lambda_{\cW}=\nu$.
Set
$$\bM':=H^0(M(\lambda)^{\#}).$$
View $\bN^{\#}, (\bM')^{\#}$ as right $\cU(\cW)$-modules.
Let us  show that $\bN^{\#}$ is a quotient of $(\bM')^{\#}$.

View $\bN^*_{\lambda_{\cW}}$ as a subspace of $\bN^{\#}$.
The assumption (ii) implies that $\bN^*_{\lambda_{\cW}}$ is one-dimensional
and it generates $\bN^{\#}$.
For each weight element $a\in\cU(\cW)$ with $\wt a<0$ one has
$\bN_{\lambda_{\cW}}^*a=0$, because $a\bN\cap \bN_{\lambda_{\cW}}=0$
by (i).
Therefore $\bN_{\lambda_{\cW}}^*U'_-=0$, where $U'_-$ is the right ideal
introduced in Subsection~\ref{triW}.
Extend $\lambda_{\cW}\in\fh_{\cW}^*$ to an algebra
homomorphism $\lambda_{\cW}:\cS(\fh_{\cW})\to\mathbb{C}$
and let $\Ker\lambda_{\cW}$ be its kernel.
By above, $\bN_{\lambda_{\cW}}^*$ is annihilated by
a right ideal $J:=U'_-+\Ker\lambda_{\cW}\cdot\cU(\cW)$.
Thus $\bN^{\#}$ is a quotient of a right cyclic $\cU(\cW)$-module
$\cU(\cW)/J$. We will show below that
\begin{equation}\label{rigU}
(\bM')^{\#}\cong \cU(\cW)/J\ \text{ as right
$\cU(\cW)$-modules}.
\end{equation}
 This implies that $\bN$ is isomorphic to
a submodule of $\bM'$, as required.

It remains to verify~(\ref{rigU}).
By~(\ref{eqtriW}), we can identify $\cU(\cW)/J$ with
$U_+$ as vector spaces; for every $\mu\in\fh^*_{\cW}$
the preimage of a weight space
$U_{+,\mu}$ is the weight space of weight $\lambda_{\cW}-\mu$
(since the weight of $U_{+,\mu}$ with respect to the {\em right}
adjoint action $u.h:=uh-hu$ is $-\mu$). Thus
$$\dim (\cU(\cW)/J)_{\lambda_{\cW}-\mu}=\dim U_{+,\mu}
\overset
{\text{~\ref{U+nu}}}{=}
\dim  U_{-,-\mu}=
\dim \bM(\lambda_{\cW})_{\lambda_{\cW}-\mu}<\infty.$$
As usual, define the character of a $\cW$-module $M$ as
 $\ch_{\cW} M:=\sum_{\mu\in\fh^*_{\cW}}\dim M_{\mu} e^{\mu}$.
 From (H4) it follows that
\begin{equation}\label{chuhk}
\ch_{\cW} H^0(M(\lambda))=\sum_{\lambda'\in\fhh^*} [M(\lambda): L(\lambda')]
\cdot\ch_{\cW} H^0(L(\lambda'))=
\ch_{\cW} H^0(M(\lambda)^{\#}),
\end{equation}
since $[M(\lambda): L(\lambda')]=[M(\lambda)^{\#}: L(\lambda')]$
for any $\lambda'\in\fhh$.
As a result,  for each $\mu\in\fh^*_{\cW}$
\begin{equation}\label{chuhkk}
\dim\bM'_{\mu}=\dim \bM(\lambda_{\cW})_{\mu}=
\dim(\cU(\cW)/J)_{\mu}<\infty.\end{equation}
Observe that $\bM'$ satisfies the assumption (i),
by~(\ref{chuhk}), and  the assumption (ii),
by (H5).
Thus, by above, $(\bM')^{\#}$ is isomorphic to a quotient of $\cU(\cW)/J$,
and, by~(\ref{chuhkk}), $(\bM')^{\#}\cong \cU(\cW)/J$ as required.
\end{proof}

The following corollary of~\ref{verW} and \ref{Wdual}
is an analogue of~\Lem{proplala}.

\subsubsection{}\begin{cor}{lemlalaW}
 (i) If $0\to \bL(\lambda_{\cW})\to \bN\to \bL(\lambda'_{\cW})\to 0$
is a non-splitting extension  and $\lambda_{\cW}\not=\lambda'_{\cW}$,
then either $\bN$ is isomorphic to a quotient of
$\bM(\lambda'_{\cW})=H^0(M(\lambda'))$ or
$\bN$ is isomorphic to a submodule of $H^0(M(\lambda)^{\#})$.

(ii) If $\bL\not\cong\bL'$ are irreducible  weakly admissible  $\cW$-modules,
then $\Ext^1(\bL,\bL')=0$.
\end{cor}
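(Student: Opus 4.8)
The plan is to imitate the proof of \Lem{proplala}, using \Lem{Wdual} as the substitute for the duality argument: in \Lem{proplala} one dualizes the extension via the antiautomorphism of (U6), which we do not have for $\cW$, so instead we will recognize a suitable $\bN$ as a submodule of the ``dual Verma module'' $H^0(M(\lambda)^{\#})$ by verifying the hypotheses of \Lem{Wdual}. Part (ii) will then be a formal consequence of (i) together with weak admissibility and the exactness/multiplicity properties of the reduction functor. Throughout we fix $\lambda,\lambda'\in\fhh^*_k$ with the given $\lambda_{\cW}\ne\lambda'_{\cW}$; these exist by (H2) and \Lem{lamW}, and $\bM(\lambda'_{\cW})=H^0(M(\lambda'))$.

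For (i), let $0\to\bL(\lambda_{\cW})\to\bN\to\bL(\lambda'_{\cW})\to 0$ be non-splitting. Since $\tilde{\CO}(\cW)$ is closed under extensions and both ends lie in it (being irreducible highest weight modules, hence quotients of Verma modules, which lie in $\tilde{\CO}(\cW)$ by~\ref{verW}), $\bN\in\tilde{\CO}(\cW)$ has finite-dimensional generalized weight spaces and $\Omega(\bN)\subseteq(\lambda_{\cW}-Q_+)\cup(\lambda'_{\cW}-Q_+)$. First I would treat the case $\lambda_{\cW}-\lambda'_{\cW}\notin Q_+$. Using $Q_+\cap(-Q_+)=\{0\}$, one checks that $\lambda'_{\cW}$ is a maximal element of $\Omega(\bN)$ and that $\bL(\lambda_{\cW})_{\lambda'_{\cW}}=0$, so $\dim\bN_{\lambda'_{\cW}}=1$ and, since $U'_+$ raises weights, $U'_+\bN_{\lambda'_{\cW}}=0$. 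Hence $\bN_{\lambda'_{\cW}}$ generates a highest weight submodule $\bN''$, which is a quotient of $\bM(\lambda'_{\cW})$; moreover $\bN''$ surjects onto $\bL(\lambda'_{\cW})$, so if $\bN''\cap\bL(\lambda_{\cW})=0$ then $\bN''\iso\bL(\lambda'_{\cW})$ and the sequence splits, a contradiction, whence $\bN''\supseteq\bL(\lambda_{\cW})$ and $\bN''=\bN$. Thus $\bN$ is a quotient of $\bM(\lambda'_{\cW})=H^0(M(\lambda'))$.

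The remaining (and main) case of (i) is $\lambda_{\cW}-\lambda'_{\cW}\in Q_+$, i.e.\ $\lambda'_{\cW}<\lambda_{\cW}$; here I would invoke \Lem{Wdual} with $\nu=\lambda_{\cW}$. Its hypotheses must be checked: $\Omega(\bN)\subseteq\lambda_{\cW}-Q_+$ because $\lambda'_{\cW}-Q_+\subseteq\lambda_{\cW}-Q_+$; $\dim\bN_{\lambda_{\cW}}=1$ because $\bL(\lambda'_{\cW})_{\lambda_{\cW}}=0$; and every nonzero submodule $\bN_1$ of $\bN$ meets $\bN_{\lambda_{\cW}}$ nontrivially — for if $\bN_1\cap\bL(\lambda_{\cW})=0$ then $\bN_1$ embeds in $\bL(\lambda'_{\cW})$, so $\bN_1\iso\bL(\lambda'_{\cW})$ and the sequence splits, a contradiction, hence $\bN_1\supseteq\bL(\lambda_{\cW})\ni\bN_{\lambda_{\cW}}$. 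This last verification, which is exactly where non-splitness enters, is the crux; with it, \Lem{Wdual} yields that $\bN$ is isomorphic to a submodule of $H^0(M(\lambda)^{\#})$. This proves (i).

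Finally, for (ii) suppose $\bL=\bL(\nu_{\cW})\not\cong\bL'=\bL(\mu_{\cW})$ are weakly admissible with a non-splitting $0\to\bL(\mu_{\cW})\to\bN\to\bL(\nu_{\cW})\to 0$. By (i), either $\bN$ is a quotient of $\bM(\nu_{\cW})$, in which case $\bL(\mu_{\cW})$ is a subquotient of $\bM(\nu_{\cW})$ with $\mu_{\cW}\ne\nu_{\cW}$, contradicting weak admissibility of $\mu_{\cW}$; or $\bN$ is a submodule of $H^0(M(\mu)^{\#})$, in which case $\bL(\nu_{\cW})$ is a subquotient of $H^0(M(\mu)^{\#})$. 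In the latter case I would use (H4): since $[M(\mu)^{\#}:L(\lambda')]=[M(\mu):L(\lambda')]$ for all $\lambda'$, one gets $[H^0(M(\mu)^{\#}):H^0(L)]=[H^0(M(\mu)):H^0(L)]$ for every irreducible $L$ (both finite by (H4)), so $\bL(\nu_{\cW})$ is a subquotient of $H^0(M(\mu))=\bM(\mu_{\cW})$ with $\nu_{\cW}\ne\mu_{\cW}$, again contradicting weak admissibility (now of $\nu_{\cW}$). Hence $\Ext^1(\bL,\bL')=0$.
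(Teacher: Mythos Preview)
Your proof is correct and follows exactly the route the paper indicates: the paper states this as a corollary of~\ref{verW} and~\ref{Wdual}, calling it the analogue of~\Lem{proplala}, and you have filled in precisely those details---using the universal property of the Verma module in the first case and~\Lem{Wdual} (in lieu of the antiautomorphism~$\sigma$) in the second, then deducing (ii) as in~\Cor{corlala} via (H4) and the equality $[M(\mu)^{\#}:L(\lambda')]=[M(\mu):L(\lambda')]$.
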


\subsection{Self-extensions of irreducible $\cW$-modules}
\label{UpsWL}
Retain notation of~\ref{extlala}. For a quotient
$\bM'$ of $\bM(\lambda)$ introduce the natural map
 $\Upsilon_{\bM'}:\Ext^1_{\cW}({\bM}',{\bM}')\to \fh_{\cW}^*$
similarly to $\Upsilon_{M'}$ in~\ref{extlala}.
As in~\ref{extlala},
if $N'$ is an extension of $\bM'$ by $\bM'$ (i.e., $\bN'/\bM'\cong \bM'$)
denote by $\Upsilon_{\bM'}(\bN')$ the one-dimensional subspace
of $\fh^*_{\cW}$ spanned by the image of an exact sequence
$0\to \bM'\to\bN'\to\bM'\to 0$. For $\Upsilon_{\bM'}$ the
the  properties ($\Upsilon 1$)-($\Upsilon 4$) of Subsection~\ref{corUps}
can be deduces along the same lines, using~(\ref{eqtriW}).

Recall that $\fhh^*_0=\{\mu\in\fhh^*|\ \langle\mu,K\rangle=0\}$.
Note that if $M'\in \CO_k$ is a quotient of $M(\lambda)$ and
$N'\in\CO_k'$ (see Subsection~\ref{COO} for the notation) is an extension of
$M'$ by $M'$, then $\Upsilon_{M'}(N')\subset \fhh^*_0$.

\subsubsection{}
\begin{thm}{thmmuW}
Fix $\lambda\in\fhh^*_k$ such that $H^0(L(\lambda))\not=0$.
Introduce the linear map
$$\begin{array}{l}
\phi_{\lambda}: \fhh^*_0\to \fh^*_{\cW}:\ \
\phi_{\lambda}(\mu)|_{\fh^f}:=\mu|_{\fh^f}, \ \
\langle \phi_{\lambda}(\mu), L_0\rangle:=
\frac{(\mu,\lambda+\hat{\rho})}{k+h^{\vee}}-
\langle \mu,x+D\rangle.
\end{array}$$
 Then

(i) If $N\in\CO_k'$ is a self-extension of the $\hat{\fg}$-module
$L(\lambda)$, then
$$\ \ \ \
\Upsilon_{H^0(L(\lambda))} (H^0(N))=
\phi_{\lambda}\bigl(\Upsilon_{L(\lambda)}(N)\bigr).$$

In particular,
 $\ \phi_{\lambda}\bigl(\im\Upsilon_{L(\lambda)}\cap \fhh^*_0\bigr)\subset
\im\Upsilon_{H^0(L(\lambda))}$.

(ii) If $(\lambda,\alpha_0)\not\in\mathbb{Z}$, then
$\ \phi_{\lambda}\bigl(\im\Upsilon_{L(\lambda)}\cap \fhh^*_0\bigr)
=\im\Upsilon_{H^0(L(\lambda))}$.

(iii)
$\begin{array}{lll}
\text{ If }(\lambda+\hat{\rho},\alpha_0)\not=0, & \text{ then } &
\ker\phi_{\lambda}=\mathbb{C}\delta,\ \im\phi_{\lambda}=\fh^*_{\cW};\\
\text{ if }(\lambda+\hat{\rho},\alpha_0)=0, & \text{ then } &
\ker\phi_{\lambda}=\mathbb{C}\delta+\mathbb{C}\alpha_0,\
\dim\im\phi_{\lambda}=\dim\fh^*_{\cW}-1.
\end{array}$
\end{thm}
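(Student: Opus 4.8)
The proof of Theorem~\ref{thmmuW} splits naturally into its three parts, and part (iii) is the easiest, so I would dispatch it first; parts (i) and (ii) then build on the functoriality and exactness of $H^0$ together with the bijectivity of $\Upsilon$ on Verma modules (property ($\Upsilon 2$)).

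\emph{Part (iii).} This is a pure linear algebra computation about the map $\phi_\lambda$. Fix the decomposition $\fhh^*=\fh^f\oplus \mathbb{C}x\oplus\mathbb{C}K\oplus\mathbb{C}D$ refined to $\fhh^*_0=\fh^f\oplus\mathbb{C}x\oplus\mathbb{C}D$ (after imposing $\langle\mu,K\rangle=0$), and note that $\delta,\alpha_0$ both lie in $\fhh^*_0$. Since $\phi_\lambda(\mu)|_{\fh^f}=\mu|_{\fh^f}$, any $\mu\in\ker\phi_\lambda$ must vanish on $\fh^f$, hence be a combination of $\delta$ and $\alpha_0$ (as $\delta|_{\fh^f}=0$, $\langle\delta,K\rangle=0$, and $\alpha_0=\delta-\theta$ with $\theta|_{\fh^f}=0$). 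One checks directly $\phi_\lambda(\delta)=0$ from $\langle\delta,x+D\rangle=1$ and $(\delta,\lambda+\hat\rho)=\langle\lambda+\hat\rho,K\rangle=k+h^\vee$, so that $\langle\phi_\lambda(\delta),L_0\rangle=1-1=0$. Then $\phi_\lambda(\alpha_0)=0$ iff $(\alpha_0,\lambda+\hat\rho)=(k+h^\vee)\langle\alpha_0,x+D\rangle=0$, i.e.\ iff $(\lambda+\hat\rho,\alpha_0)=0$ (using $\langle\alpha_0,x\rangle=-1$, $\langle\alpha_0,D\rangle=1$). A dimension count (the target $\fh^*_{\cW}=(\fh^f\oplus\mathbb{C}L_0)^*$ has dimension $\dim\fh^f+1$, and $\dim\fhh^*_0=\dim\fh^f+2$) then gives the stated kernel and image in each case; surjectivity in the generic case follows since $\ker$ is one-dimensional and $\phi_\lambda$ restricted to $\fh^f$ is already onto the $\fh^f$-part.

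\emph{Parts (i) and (ii).} Given a self-extension $0\to L(\lambda)\to N\to L(\lambda)\to 0$ in $\CO_k'$, let $v_\lambda$ be the highest-weight vector of the sub and $v'$ a lift of the highest-weight vector of the quotient, so $h v'=\lambda(h)v'+\mu(h)v_\lambda$ with $\mu=\Upsilon_{L(\lambda)}(N)\in\fhh^*_0$. By exactness (H1), $0\to H^0(L(\lambda))\to H^0(N)\to H^0(L(\lambda))\to 0$ is exact, and since $H^0(L(\lambda))=\bL(\lambda_{\cW})$ (H3) is irreducible of one-dimensional top weight space $\lambda_{\cW}$, the map $\Upsilon_{\bL(\lambda_{\cW})}$ is defined on it. The key point is to identify the images of $v_\lambda,v'$ in $H^0(N)$ with vectors of the correct generalized $\fh_{\cW}$-weight. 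The cleanest route: realize both sides through Verma modules. Pick the universal self-extension $\tilde N$ of $M(\lambda)$ with $\Upsilon_{M(\lambda)}(\tilde N)=\mu$ (via ($\Upsilon 2$)); $\tilde N\cong M_R(\lambda+t\mu)/t^2M_R(\lambda+t\mu)$ by the construction in~\ref{extlala}. Apply $H^0$: using (H2), $H^0(M_R(\lambda+t\mu))$ is the Verma $\cW$-module over $R$ with highest weight obtained by applying the \emph{$R$-linear} extension of the formula~(\ref{lambdaW}) to $\lambda+t\mu$. Because~(\ref{lambdaW}) is polynomial in $\lambda$, its linearization in the $t$-direction is exactly $\phi_\lambda(\mu)$ up to the $t^2$-term; hence $\Upsilon_{\bM(\lambda_{\cW})}(H^0(\tilde N))=\phi_\lambda(\mu)$. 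Now $N$ is a quotient of $\tilde N$ (by ($\Upsilon 3$)), so by exactness $H^0(N)$ is a quotient of $H^0(\tilde N)$, and the image of $\mu$ under $\Upsilon$ is preserved under such quotients that keep the top weight space two-dimensional --- which holds here since $H^0(L(\lambda))\ne 0$ forces the top of $H^0(N)$ to remain two-dimensional. This proves (i). For (ii), when $(\lambda,\alpha_0)\notin\mathbb{Z}$ one uses (H3)/(H4): $H^0$ is then "faithful enough" on the relevant block, i.e.\ the only $\cW$-module self-extensions of $\bL(\lambda_{\cW})$ arise as $H^0$ of $\fhg$-module self-extensions of $L(\lambda)$ --- concretely, lift a given $\cW$-extension $\bN$ back using~\Lem{lamW} and~\Lem{Wdual} to an $\fhg$-extension, then apply (i). So $\im\Upsilon_{\bL(\lambda_{\cW})}=\phi_\lambda(\im\Upsilon_{L(\lambda)}\cap\fhh^*_0)$.

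\emph{Main obstacle.} The delicate step is the commutation of $\Upsilon$ with $H^0$ in part (i): one must verify that the highest-weight data of the extension (the "off-diagonal" entry $\mu$) transforms through $H^0$ precisely by the \emph{linear} map $\phi_\lambda$, not by some $t^2$-corrected or merely set-theoretic map. This requires being careful that (a) the formula~(\ref{lambdaW}) for $\lambda\mapsto\lambda_{\cW}$ has a well-defined $R$-linear counterpart (which it does, being a polynomial identity valid over any commutative base, so it extends to $\fhh^*_R$), and (b) that passing from the universal Verma self-extension to the given self-extension $N$ of $L(\lambda)$, and then applying $H^0$, does not collapse the two-dimensional top weight space --- this is exactly where $H^0(L(\lambda))\ne 0$, equivalently $(\lambda,\alpha_0)\notin\mathbb{Z}_{\ge0}$ (H3), is used. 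The reverse inclusion in (ii) is the second subtle point, relying on~\Lem{Wdual} to realize an arbitrary $\cW$-extension as a sub of $H^0(M(\lambda)^{\#})$ and on~\Lem{lamW} to control which $\fhg$-weights can occur, so that the extension descends from the $\fhg$-side.
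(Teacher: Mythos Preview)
Your treatment of part~(iii) is correct and matches the paper, which simply declares it ``straightforward.''

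For part~(i), your route via base change to $R=\mathbb{C}[t]_{(t)}$ and linearization of formula~(\ref{lambdaW}) is different from the paper's. The paper argues directly on the given self-extension $N$ of $L(\lambda)$: taking a basis $v,v'$ of $N_\lambda$, it uses the explicit field formulas from~\cite{KWdet} for $J^h(z)$ and $L(z)$ (the latter through the Casimir operator $\hat\Omega$) to compute the action of $J^h_0$ and $L_0$ on the images of $v\otimes\vac,\ v'\otimes\vac$ in $H^0(N)$, invoking~\cite{Ar}~4.7.1, 4.8.1 to ensure these images are linearly independent. Your approach is conceptually cleaner, and your derivative computation is correct, but you have not justified that the BRST functor $H^0$ commutes with the extension of scalars to $R$ (so that (H2) applies over $R$); this is plausible by flat base change and the vanishing in (H1), but it is an extra step the paper avoids.

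The genuine gap is in part~(ii). Your plan to ``lift a $\cW$-extension $\bN$ back to an $\fhg$-extension'' does not work: there is no such lifting, and Example~\ref{exaW} later in the paper produces self-extensions of irreducible Verma $\cW$-modules that do \emph{not} lie in $H^0(\CO_k')$. Neither~\Lem{lamW} (a comparison of highest weights) nor~\Lem{Wdual} (an embedding into $H^0(M(\lambda)^{\#})$) furnishes a preimage in $\CO_k'$. The paper's argument for the reverse inclusion is entirely different and never leaves the $\fhg$-side. Assuming $\mu\notin\im\Upsilon_{L(\lambda)}$ but $\phi_\lambda(\mu)\in\im\Upsilon_{\bL}$, it takes the \emph{Verma} self-extension $N$ of $M(\lambda)$ with $\Upsilon_{M(\lambda)}(N)=\mathbb{C}\mu$ and applies~\Lem{lemcri} (which you did not invoke): $N$ then has a subquotient $E\not\cong L(\lambda)$ embedding in $M(\lambda)^{\#}$. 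The hypothesis $(\lambda,\alpha_0)\notin\mathbb{Z}$ ensures $H^0$ kills no irreducible subquotient of $M(\lambda)^{\#}$, so $H^0(E)\not\cong\bL$; on the other hand, by~($\Upsilon 3$) there is $\bN'\subset H^0(N)$ with $H^0(N)/\bN'$ a self-extension of $\bL$, and one checks (using (H5) and $\dim H^0(M(\lambda)^{\#})_{\lambda_{\cW}}=1$) that $H^0(E)$ is forced to be a subquotient of $H^0(N)/\bN'$, giving a contradiction. The missing ingredient in your sketch is precisely~\Lem{lemcri}.
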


\begin{rem}{}
The assumption $(\lambda,\alpha_0)\not\in\mathbb{Z}$
in (ii) gives $H^0(L)\not=0$ for
any subquotient $L$ of $M(\lambda)$. \Exa{exaW} shows that
$\im\phi_{\lambda}\not=\im\Upsilon_{H^0(L(\lambda))}$ in general.
\end{rem}

\begin{proof}
Set
$$M:=M(\lambda),\ L:=L(\lambda), \ \bM:=H^0(M(\lambda)),
\ \bL:=H^0(L(\lambda)),\ \bN:=H^0(N).$$
Consider an exact sequence $0\to L\to N\to L\to 0$
and let $\Upsilon_{L}:\Ext^1(L,L)\to\fhh^*$
maps  this  exact sequence to $\mu$.
Let $0\to\bL\to \bN\to \bL\to 0$ be the image of this sequence under $H^0$.
For (i) let us verify that
\begin{equation}\label{upsups}
\Upsilon_{\bL}\bigl(0\to\bL\to \bN\to \bL\to 0\bigr)=\phi_{\lambda}(\mu)
\end{equation}

The space $N_{\lambda}$ has a basis $v,v'$ such that
$$hv'=\langle \lambda,h\rangle v', \ \ hv=\langle \lambda,h\rangle v+
\langle \mu,h\rangle v'\ \text{ for any } h\in\fhh.$$
By the assumption $(\lambda,\alpha_0)\not\in\mathbb{Z}_{\geq 0}$.
Let $\vac$ be the vacuum vector of
$F^{ch}\otimes F^{ne}$, see~\cite{KWdet} for notation.
Using the fact that $\Omega(N)\subset \Omega(M(\lambda))$,
it is easy to show by an explicit computation or to deduce
from~\cite{Ar}, 4.7.1, 4.8.1, that the images of $v\otimes\vac,\
v'\otimes\vac$ lie in $\bN$ and are linearly independent.
Since $\bN/\bL\cong \bL$ one has $\dim\bN_{\lambda_{\cW}}=2$
and thus $v\otimes\vac,\ v'\otimes\vac$ form a basis of
$\bN_{\lambda_{\cW}}$.

The explicit formulas for $J^a$ and $L(z)$ given in~\cite{KWdet} imply that
for $h\in\fh^f$ one has
$$\begin{array}{l}
J^{h}(0)(v'\otimes\vac)= \langle \lambda,h\rangle v'\otimes\vac,\ \
J^{h}(0)(v\otimes\vac)=\langle \lambda,h\rangle v\otimes\vac
+\langle \mu,h\rangle v'\otimes \vac;\\
L_0(v''\otimes \vac)=\bigl(\frac{\hat{\Omega}}{2(k+h^{\vee})}-
(x+D)\bigr)v''\otimes \vac, \text{ for any }
v''\in N_{\lambda},
\end{array}$$
where $\hat{\Omega}$ is the Casimir operator of $\fhg$.
One has
$$\hat{\Omega}v'=(\lambda,\lambda+2\hat{\rho})v',\ \
\hat{\Omega}v=(\lambda,\lambda+2\hat{\rho})v+
2(\mu,\lambda+\hat{\rho})v'$$
and thus
$$\begin{array}{l}
L_0(v'\otimes \vac)=av'\otimes \vac,\
L_0 (v\otimes \vac)=av\otimes \vac+bv'\otimes \vac,
\text{ where }\\
a:=\frac{(\lambda,\lambda+2\hat{\rho})}{2(k+h^{\vee})}
-\langle \lambda, x+D\rangle,\ \
b:=\frac{(\lambda+\hat{\rho},\mu)}{k+h^{\vee}}-\langle \mu, x+D\rangle.
\end{array}$$
This establishes~(\ref{upsups}) and proves (i).

For (ii) fix $\lambda\in\fhh^*_k$ such that
$(\lambda,\alpha_0)\not\in\mathbb{Z}$.
Observe that for any subquotient $L'$ of  $M^{\#}$ one has $H^0(L')\not=0$
since, by~\ref{Wlambda}, one has $(w.\lambda,\alpha_0)\not\in\mathbb{Z}$
for any $w\in W(\lambda)$.

Suppose that $\mu\not\in\im\Upsilon_L$ and
$\phi_{\lambda}(\mu)\in \im\Upsilon_{\bL}$. Let $N$ be an extension
of $M$ by $M$ such that $\Upsilon_{M}(N)=\mathbb{C}\mu$.
Set $\bN:=H^0(N)$. By (i), $\Upsilon_{\bM}(\bN)=\mathbb{C}\phi_{\lambda}(\mu)$.
By the property $(\Upsilon 3$), the assumption
$\phi_{\lambda}(\mu)\in \im\Upsilon_{\bL}$ implies the existence
of $\bN'\subset \bN$ such that  $\bN/\bN'$ is an extension
of $\bL$ by $\bL$. Notice that $\bN'_{\lambda_{\cW}}=0$ since
$\dim\bN_{\lambda_{\cW}}=2$  and $\dim\bL_{\lambda_{\cW}}=1$.

In the light of~\Lem{lemcri}, $N$ has a subquotient $E\not\cong L$
which  is isomorphic to a submodule of $M^{\#}$.
Clearly, $H^0(E)$ is isomorphic to a submodule
of $H^0(M^{\#})$ so, by~\ref{redH}, any submodule
of $H^0(E)$ contains a vector of weight $\lambda_{\cW}$.
By the above observation, $H^0(E)\not\cong \bL$. Since $H^0$ is exact, $\bN$
has submodules $\bN_1\subset \bN_2$ such that $\bN_2/\bN_1\cong H^0(E)$.

Since $\bN'_{\lambda_{\cW}}=0$,
each $v\in \bN_2\cap \bN'$ generates a submodule which intersects trivially
$\bN_{\lambda_{\cW}}$. Thus the image of $\bN_2\cap \bN'$
in $\bN_2/\bN_1\cong H^0(E)$ is zero. Therefore $\bN_2\cap \bN'\subset \bN_1$
and so $\bN_2/\bN_1$ is a quotient
of ${\bN}_2/({\bN}_2\cap {\bN}'')\subset {\bN}/{\bN'}$. Hence
$H^0(E)$   is a subquotient of ${\bN}/{\bN'}$.
However, since ${\bN}/{\bN'}$
is an extension of ${\bL}$ by ${\bL}$, the only subquotients of
${\bN}/{\bN'}$ are $\bL$ and ${\bN}/{\bN'}$ itself.
By above, $H^0(E)\not\cong \bL$. By~(\ref{chuhk})
$\dim H^0(M^{\#})_{\lambda_{\cW}}=1$ so
$\dim H^0(E)_{\lambda_{\cW}}=1=\dim \bL_{\lambda_{\cW}}$
and thus $H^0(E)$ is not an extension of ${\bL}$ by ${\bL}$,
a contradiction. This establishes (ii). The proof of
(iii) is straightforward.
\end{proof}

\subsection{Admissible weights}\label{admisW}
In this subsection we study the admissible weights in $\fh^*_{\cW}$.

Denote by $\varphi_{\cW}:\fhh^*_k\to \fh^*_{\cW}$ the map given by
$\lambda\mapsto \lambda_{\cW}$. Recall that $\varphi_{\cW}$
is surjective and that $\varphi_{\cW}(\lambda)=\varphi_{\cW}(\lambda')$
iff $\lambda'\in\{\lambda+\mathbb{C}\delta, s_0.\lambda+\mathbb{C}\delta\}$,
by~\Lem{lamW}. Let $\hat{\Delta}$ be the set of roots of $\fhg$ and
$wAdm_k$ be the set of weakly admissible weights  in $\fhh^*_k$.
By~\Cor{corLie1}, $k\text{-}Adm\subset \{\lambda\in wAdm_k|\
\mathbb{C} \hat{\Delta}(\lambda)+\mathbb{C}\delta=\mathbb{C}\hat{\Delta}\}$;
moreover,  by~\Rem{Remkrat}, the  $k$-admissible weights for rational $k$ are
the  rational weakly admissible weights in $\fhh^*_k$.
In Subsection~\ref{pfpropadmW} we prove the following proposition.

\subsubsection{}
\begin{prop}{propadmW}
(i) If $(\lambda,\alpha_0)\not\in\mathbb{Z}$, then
$\lambda_{\cW}\in Adm_{\cW}$ iff $\lambda\in k\text{-}Adm$.

(ii) $\ \ Adm_{\cW}\subset \varphi_{\cW}\bigl(\{\lambda\in wAdm_k|\
\mathbb{C} \Delta(\lambda)+\mathbb{C}\delta=\mathbb{C}\hat{\Delta}\}
\bigr)$.

(iii) If $k$ is rational, then $\ Adm_{\cW}\subset \varphi_{\cW}\bigl(
\{\lambda\in wAdm_k\ \&\ \lambda\text{ is rational}\}\bigr)$.

(iv) The set $Adm_{\cW}$ is empty if $k+h^{\vee}\in\mathbb{Q}_{<0}$
and is finite if $k+h^{\vee}\in\mathbb{Q}_{>0}$.
\end{prop}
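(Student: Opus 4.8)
The plan is to reduce everything to the affine Lie algebra $\fhg$ through the exact functor $H^0$ and the comparison of self-extensions in \Thm{thmmuW}, and then to feed in the Section~4 description of $k$-admissible and KW-admissible weights. Throughout I fix $\nu\in\fh^*_{\cW}$, pick $\lambda\in\fhh^*_k$ with $\varphi_{\cW}(\lambda)=\nu$, and recall from \Lem{lamW} that the fibre $\varphi_{\cW}^{-1}(\nu)$ equals $(\lambda+\mathbb{C}\delta)\cup(s_0.\lambda+\mathbb{C}\delta)$, on which the quantity $(\cdot,\alpha_0)$ takes the two values $(\lambda,\alpha_0)$ and $-(\lambda,\alpha_0)-2$ (since $(\delta,\alpha_0)=0$); by~\ref{redH} one may always arrange $H^0(L(\lambda))=\bL(\nu)\neq 0$, i.e. $(\lambda,\alpha_0)\notin\mathbb{Z}_{\geq 0}$.

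\emph{Part (i).} Assume $(\lambda,\alpha_0)\notin\mathbb{Z}$, so $\alpha_0\notin\Delta(\lambda)$, $s_0\notin W(\lambda)$, and $H^0$ kills no simple subquotient of any $M(w.\lambda)$. \textbf{Weak admissibility.} If $L(\lambda)$ is a subquotient of $M(\mu)$ for a level-$k$ weight $\mu\neq\lambda$, exactness of $H^0$ gives $\bL(\nu)\hookrightarrow\bM(\mu_{\cW})$, so weak admissibility of $\nu$ forces $\mu_{\cW}=\nu$, whence $\mu\in(\lambda+\mathbb{C}\delta)\cup(s_0.\lambda+\mathbb{C}\delta)$; comparing the values of $(\cdot+\hat\rho,\cdot+\hat\rho)$ excludes $\mu=\lambda+c\delta$ with $c\neq0$ and forces $\mu=s_0.\lambda$, impossible because $s_0\notin W(\lambda)$ means $L(\lambda)$ is not linked to $M(s_0.\lambda)$ (\cite{KK}, Thm.~2). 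Conversely, if $\lambda\in wAdm_k$, the composition factors of $\bM(\mu_{\cW})=H^0(M(\mu))$ are the $H^0(L(w.\mu))$ by \Cor{corlamW}; if one of them is $\cong\bL(\nu)$ with $\mu_{\cW}\neq\nu$, then $w.\mu\in(\lambda+\mathbb{C}\delta)\cup(s_0.\lambda+\mathbb{C}\delta)$ is again a weakly admissible level-$k$ weight ($s_0$ permutes $\Delta_+\setminus\{\alpha_0\}\supset\Delta(\lambda)_+$, hence preserves maximality in the linkage orbit, cf.~\Rem{remadm}; $\delta$-shifts are harmless), so $\mu=w.\mu$, contradicting $\mu_{\cW}\neq\nu$. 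Thus $\nu$ is weakly admissible for $\cW$ iff $\lambda$ is weakly admissible for $\fhg$. \textbf{Self-extensions.} By \Thm{thmmuW}(ii), $\im\Upsilon_{\bL(\nu)}=\phi_{\lambda}\bigl(\im\Upsilon_{L(\lambda)}\cap\fhh^*_0\bigr)$, and $(\lambda+\hat\rho,\alpha_0)=(\lambda,\alpha_0)+1\neq0$ gives $\ker\phi_{\lambda}=\mathbb{C}\delta$ by \Thm{thmmuW}(iii). Using ($\Upsilon 1$) for $\cW$, $\Ext^1_{\cW}(\bL(\nu),\bL(\nu))=0\Leftrightarrow\im\Upsilon_{\bL(\nu)}=0\Leftrightarrow\im\Upsilon_{L(\lambda)}\cap\fhh^*_0\subset\mathbb{C}\delta$, which by~(\ref{som}) means precisely that every self-extension of $L(\lambda)$ with diagonal $K$-action splits over $[\fg,\fg]$. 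Combining the two equivalences proves (i).

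\emph{Parts (ii)--(iv).} For (ii), take $\nu\in Adm_{\cW}$. If the fibre $\varphi_{\cW}^{-1}(\nu)$ contains a representative $\lambda$ with $(\lambda,\alpha_0)\notin\mathbb{Z}$, part~(i) gives $\lambda\in k\text{-}Adm\subset wAdm_k$ and $\mathbb{C}\Delta(\lambda)+\mathbb{C}\delta=\mathbb{C}\hat\Delta$ by \Cor{corLie1}(i). Otherwise every representative has $(\cdot,\alpha_0)\in\mathbb{Z}$; taking $\lambda$ with $H^0(L(\lambda))=\bL(\nu)$ we get $(\lambda,\alpha_0)\in\mathbb{Z}_{\leq-1}$. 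If $(\lambda,\alpha_0)\leq-2$ then $L(\lambda)$ is a subquotient of $M(s_0.\lambda)$ with $s_0.\lambda>\lambda$, so $\lambda\notin wAdm_k$ and I replace $\lambda$ by $\lambda':=s_0.\lambda$, for which $\Delta(\lambda')=\Delta(\lambda)$ and $(\lambda'+\hat\rho,\alpha_0)\neq0$; the vanishing of $\im\Upsilon_{\bL(\nu)}$ together with \Thm{thmmuW}(i), $\ker\phi_{\lambda'}=\mathbb{C}\delta$, and~(\ref{som}) shows $\lambda'$ has no $[\fg,\fg]$-self-extensions, while weak admissibility of $\lambda'$ follows from that of $\nu$ via \Lem{Wdual} and \Cor{lemlalaW} (used, as in part~(i), to control which linked weights survive $H^0$), so $\lambda'\in k\text{-}Adm$ and \Cor{corLie1}(i) applies. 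If $(\lambda,\alpha_0)=-1$ (so $s_0.\lambda=\lambda$ and $\ker\phi_{\lambda}=\mathbb{C}\delta+\mathbb{C}\alpha_0$), the same inputs give $\im\Upsilon_{L(\lambda)}\cap\fhh^*_0\subset\mathbb{C}\delta+\mathbb{C}\alpha_0$; combined with $(\im\Upsilon_{L(\lambda)})^{\perp}\subset\mathbb{C}\Delta(\lambda)$ (from \Prop{propalmin}, cf.~\ref{pfcorLie1}) and $\alpha_0\in\Delta(\lambda)$ (as $(\lambda,\alpha_0^{\vee})\in\mathbb{Z}$), this yields $\mathbb{C}\Delta(\lambda)+\mathbb{C}\delta=\mathbb{C}\hat\Delta$, and weak admissibility of $\lambda$ again comes from that of $\nu$. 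This proves (ii). Part (iii) follows from (ii) and \Rem{Remkrat}: for a non-critical weight of rational level $\mathbb{C}\Delta(\lambda)+\mathbb{C}\delta=\mathbb{C}\Delta(\lambda)$, so the condition $\mathbb{C}\Delta(\lambda)+\mathbb{C}\delta=\mathbb{C}\hat\Delta$ reduces to rationality of $\lambda$. Part (iv) follows from (ii)--(iii): for $k+h^{\vee}\in\mathbb{Q}_{<0}$ there are no weakly admissible level-$k$ weights (Subsection~\ref{43}), hence $Adm_{\cW}=\emptyset$; for $k+h^{\vee}\in\mathbb{Q}_{>0}$ the set $X_k$ of rational weakly admissible level-$k$ weights is finite (Subsection~\ref{polyh}, \Cor{corkh}), hence $Adm_{\cW}\subset\varphi_{\cW}(X_k)$ is finite.

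\emph{Main obstacle.} The genuinely delicate point is the integral case $(\lambda,\alpha_0)\in\mathbb{Z}$ in (i)/(ii): there $H^0$ annihilates the simple objects $L(w.\lambda)$ with $(w.\lambda,\alpha_0)\in\mathbb{Z}_{\geq0}$, and $\phi_{\lambda}$ may acquire the extra kernel direction $\mathbb{C}\alpha_0$, so the clean bijective dictionary \Thm{thmmuW}(ii) degenerates into the one-sided inclusion \Thm{thmmuW}(i). Handling it requires exploiting the $s_0$-symmetry of the fibres of $\varphi_{\cW}$ (\Lem{lamW}) to pass to the ``$\alpha_0$-dominant'' representative, and using \Lem{Wdual} and \Cor{lemlalaW} to decide exactly which linked subquotients survive $H^0$; everything else is a formal consequence of exactness of $H^0$ and of the Section~4 description of $X_k$ and $k\text{-}Adm$.
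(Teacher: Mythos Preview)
Your overall architecture is right and very close to the paper's: reduce to $\fhg$ via $H^0$, compare self-extensions through \Thm{thmmuW}, and then invoke the Section~4 results. Parts (iii) and (iv) are fine once (ii) is established, and your treatment of (i) is essentially a correct inline reproof of (part of) the paper's \Lem{lemwadm} together with \Thm{thmmuW}(ii)--(iii). The argument for the $(\lambda,\alpha_0)=-1$ subcase of (ii) also goes through.

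There is, however, a genuine gap in your case $(\lambda,\alpha_0)\in\mathbb{Z}_{\leq -2}$ in part (ii). You pass to $\lambda':=s_0.\lambda$ and invoke \Thm{thmmuW}(i) with $\phi_{\lambda'}$ to conclude that $\lambda'$ has no nontrivial self-extensions. But \Thm{thmmuW} is stated under the hypothesis $H^0(L(\cdot))\neq 0$, and here $(\lambda',\alpha_0)=-(\lambda,\alpha_0)-2\in\mathbb{Z}_{\geq 0}$, so $H^0(L(\lambda'))=0$ and the theorem simply does not apply to $\lambda'$. The remedy is the one the paper uses in \Lem{lemWY}: apply \Thm{thmmuW}(i) to $\lambda$ itself (where $H^0(L(\lambda))=\bL(\nu)\neq 0$ and $(\lambda+\hat\rho,\alpha_0)\neq 0$, so $\ker\phi_{\lambda}=\mathbb{C}\delta$), obtain $\im\Upsilon_{L(\lambda)}\cap\fhh^*_0\subset\mathbb{C}\delta$, hence $\mathbb{C}\,C(\lambda)+\mathbb{C}\delta=\mathbb{C}\hat\Delta$ via \Prop{propalmin}, and then transfer this to $\lambda'$ using $\Delta(\lambda')=\Delta(\lambda)$ (since $\alpha_0\in\Delta(\lambda)$ gives $s_0\in W(\lambda)$). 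Note also that (ii) only asks for $\lambda'\in wAdm_k$ with $\mathbb{C}\Delta(\lambda')+\mathbb{C}\delta=\mathbb{C}\hat\Delta$; you are trying to prove the stronger $\lambda'\in k\text{-}Adm$, which is neither needed nor established by your argument.

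A second, smaller gap: in the integral cases you assert that weak admissibility of the chosen $\fhg$-preimage ``follows from that of $\nu$ via \Lem{Wdual} and \Cor{lemlalaW}'', but those are statements about $\cW$-modules and do not by themselves produce a weakly admissible $\fhg$-weight in the fibre. Your inline argument in part (i) for the weak-admissibility equivalence genuinely used $(\lambda,\alpha_0)\notin\mathbb{Z}$ (to rule out $\mu=s_0.\lambda$), so it does not carry over. The paper handles this with a separate lemma (\Lem{lemwadm}), whose second half treats precisely the case $\alpha_0\in\Delta(\lambda)$ and shows $wAdm_{\cW}=\varphi_{\cW}(wAdm_k)$; you need an analogue of that argument here.
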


\subsubsection{}\label{polyhW}
Let  $k$ be rational. Retain notation of~\ref{polyh} and recall that the
set of rational weakly admissible weights of level $k$
is the set of integral points of
a finite union of polyhedra $X_{\Gamma}$. The image of
each polyhedron in $\fhh^*/\mathbb{C}\delta$ is finite.
The set of $k$-admissible weights is a subset of
integral points of these polyhedra; it contains the set of
 interior integral points
of each polyhedra. We see that $Adm_{\cW}$ lies in the image (under
$\varphi_{\cW}$) of the set of integral points of the polyhedra $X_{\Gamma}$.

Let $\Gamma$ be such that $\alpha_0\not\in\Gamma$. Set
$s_0\Gamma:=\{(s_0\alpha)^{\vee}|\ \alpha^{\vee}\in \Gamma\}$. Then
$X_{s_0\Gamma}=s_0X_{\Gamma}$ and so
$\varphi_{\cW}(X_{\Gamma})=\varphi_{\cW}(X_{s_0\Gamma})$; each point from
$\varphi_{\cW}(X_{\Gamma})\in\fh^*_{\cW}$ has exactly two preimages
in $\fhh^*_k/\mathbb{C}\delta$ (one in $X_{\Gamma}$ and another one
in $X_{s_0\Gamma}$). By~\Prop{propadmW} (i),
admissible weights in $\varphi_{\cW}(X_{\Gamma})$ are the images
of $k$-admissible weights in $X_{\Gamma}$.

Let $\alpha_0\in\Gamma$.  In~\Prop{extW0} we will show that
$Adm_{\cW}$ contains the image of the set of interior
integral points of the polyhedron $X(\Gamma)$ and
the image of the set of interior points of
the face $(\lambda+\hat{\rho},\alpha_0)=0$ of the polyhedron $X_{\Gamma}$.

\subsubsection{}
\begin{cor}{corvacW}
The irreducible vacuum module over the vertex algebra
$\cW=\cW^k(\fg,e_{-\theta})$
is admissible for the following values of $k$: $k\not\in\mathbb{Q}$,
or $k+h^{\vee}=\frac{p}{q}>0, \gcd(p,q)=1$,
where $p\geq h^{\vee}-1$ and $\gcd(q,l)=1$, or
$p\geq h-1$ and $\gcd(q,l)=l$. The irreducible vacuum module is
not admissible for other values of $k$, except for, possibly, the case
$k=-2$ and $\fg\not=C_n$.
\end{cor}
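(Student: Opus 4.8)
The plan is to read off the admissibility of the irreducible $\cW$-vacuum module $\bL((k\Lambda_0)_{\cW})$ --- which coincides with $H^0(L(k\Lambda_0))$ whenever the latter is nonzero, i.e.\ (by (H3)) whenever $(k\Lambda_0,\alpha_0)=k\notin\mathbb{Z}_{\geq 0}$ --- from the classification of $k$-admissible affine vacuum modules in~\Prop{propvac1} and~\Cor{corvac1}, transported through the reduction functor via~\Prop{propadmW}. Since $(k\Lambda_0,\alpha_0)=k$, the clean correspondence of~\Prop{propadmW} (i) applies precisely when $k\notin\mathbb{Z}$, and it is the integral levels that will require extra care.

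First I would treat $k\notin\mathbb{Z}$. Here $(k\Lambda_0,\alpha_0)=k\notin\mathbb{Z}$, so by~\Prop{propadmW} (i) the weight $(k\Lambda_0)_{\cW}$ is admissible for $\cW$ if and only if $k\Lambda_0$ is $k$-admissible for $\fhg$, which by~\Cor{corvac1} means: $k$ irrational, or $k+h^{\vee}=p/q$ in lowest terms with $p\geq h^{\vee}-1,\ \gcd(q,l)=1$, or $p\geq h-1,\ \gcd(q,l)=l$. For rational $k\notin\mathbb{Z}$ one has $q>1$, so these are exactly the non-integral levels appearing in the statement; and if $k$ is rational non-integral with none of the inequalities holding, then~\Prop{propvac1} (i) shows that $k\Lambda_0$ is not even weakly admissible, whence $(k\Lambda_0)_{\cW}$ is not admissible.

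It remains to handle $k\in\mathbb{Z}$, where $(k\Lambda_0,\alpha_0)=k\in\mathbb{Z}$. If $k+h^{\vee}\leq 0$ (i.e.\ $k\leq-h^{\vee}$, the critical level being excluded), then $Adm_{\cW}$ is empty by~\Prop{propadmW} (iv), so nothing is admissible. If $k\in\mathbb{Z}_{\geq 0}$, then $L(k\Lambda_0)$ is integrable and $H^0(L(k\Lambda_0))=0$, so the irreducible $\cW$-vacuum module is the simple quotient of $\cW^k=H^0(V^k)$, which is still $\bL((k\Lambda_0)_{\cW})=H^0(L(s_0.(k\Lambda_0)))$ (note $(s_0.(k\Lambda_0),\alpha_0)=-k-2\in\mathbb{Z}_{<0}$); one checks by hand, using (H4) and the explicit description of the orbit $\hat W.(k\Lambda_0)$ for weak admissibility, together with the inclusion $\im\Upsilon_{H^0(L(k\Lambda_0))}\supseteq\phi_{k\Lambda_0}\bigl(\im\Upsilon_{L(k\Lambda_0)}\cap\fhh^*_0\bigr)$ of~\Thm{thmmuW} (i),~\Prop{propvac1} (iii) and an upper bound for $\im\Upsilon$ from the $\cW$-sum formula of~\cite{GK}, that this vacuum module is admissible; these levels belong to the admissible list since then $k+h^{\vee}\geq h^{\vee}>h^{\vee}-1$. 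For $-h^{\vee}<k<0$ write $k+h^{\vee}=p\in\{1,\dots,h^{\vee}-1\}$ with $q=1$: when $p=h^{\vee}-1$, i.e.\ $k=-1$, the level lies in the admissible list and is the boundary case $(\lambda+\hat{\rho},\alpha_0)=0$ of~\Thm{thmmuW} (iii), where admissibility must be proved by a direct computation with the $\cW$-sum formula; when $p<h^{\vee}-1$, i.e.\ $k\leq-2$, the affine weight $k\Lambda_0$ is not weakly admissible by~\Prop{propvac1} (i), and one has to check that its reduction is still not weakly admissible.

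The main obstacle is precisely this integral case: when $(\lambda,\alpha_0)\in\mathbb{Z}$ the equivalence ``admissible over $\cW$ $\Longleftrightarrow$ $k$-admissible over $\fhg$'' is lost, since the minus reduction can annihilate a subquotient (as for integrable $k$) or improve its position, so one must analyse $H^0$ of the relevant Verma and self-extension modules directly rather than quoting~\Prop{propadmW}. This is also the source of the residual ambiguity recorded in the statement: at $k=-2$ the reduction of the non-weakly-admissible $L(-2\Lambda_0)$ might a priori become admissible over $\cW$, and one expects to exclude this for $\fg=C_n$ by a separate sum-formula computation based on~(\ref{vac1}), while for $\fg\neq C_n$ the question is left open.
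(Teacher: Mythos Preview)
Your treatment of non-integral $k$ is correct and is exactly the paper's argument: invoke~\Prop{propadmW}~(i) to reduce to the affine question and read off the answer from~\Cor{corvac1}.

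The integral case, however, is handled in the paper by two clean strokes that you are missing. For $k\in\mathbb{Z}_{\geq -1}$ one has $\langle k\Lambda_0+\hat\rho,\alpha_0^{\vee}\rangle=k+1\geq 0$ and $\langle k\Lambda_0+\hat\rho,\alpha_i^{\vee}\rangle=1$ for $i>0$, so $k\Lambda_0$ is rational, weakly admissible, and satisfies the hypothesis of~\Prop{extW0}; that proposition gives admissibility of $(k\Lambda_0)_{\cW}$ at once, with no ad hoc sum-formula bound needed. Your proposed route through~\Thm{thmmuW}~(i) only supplies a \emph{lower} bound for $\im\Upsilon_{\bL}$, which is the wrong direction; the upper bound you want is precisely the content of~\Prop{extW0} (whose proof uses the $\cW$-determinant of~\cite{KWdet}, not~\cite{GK}).

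For $k\in\mathbb{Z}_{\leq -2}$ the gap is that you only look at the preimage $k\Lambda_0$. By~\Lem{lamW} the fibre $\varphi_{\cW}^{-1}(0)$ is $\{k\Lambda_0,\ s_0.k\Lambda_0\}+\mathbb{C}\delta$, and by~\Prop{propadmW}~(ii) (or equivalently~\Lem{lemwadm}) admissibility of $0\in\fh^*_{\cW}$ forces one of these to be weakly admissible. Since $k\Lambda_0$ is not, one must examine $s_0.k\Lambda_0$: the condition $\langle s_0.k\Lambda_0+\hat\rho,\beta^{\vee}\rangle=1-(k+1)\langle\alpha_0,\beta^{\vee}\rangle\geq 0$ for all $\beta\in\Pi$ fails for $k<-2$, while for $k=-2$ it holds iff $\langle\alpha_0,\beta^{\vee}\rangle\geq -1$ for all $\beta\in\Pi$, i.e.\ iff $\fg\neq C_n$. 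This is the source of the $C_n$ exclusion and of the residual uncertainty for $\fg\neq C_n$; it is a weak-admissibility computation for the \emph{second} preimage, not a sum-formula argument based on~(\ref{vac1}).
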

\begin{proof}
The module $\bL(0)$ is the irreducible vacuum module and
$0=\varphi_{\cW}(k\Lambda_0)$.

Consider the case when $k$ is an integer.
If $k+h^{\vee}<0$, then $Adm_{\cW}$ is empty and so $0\in\fh^*_{\cW}$
is not admissible. One has
$\langle k\Lambda_0+\hat{\rho},\alpha_0^{\vee}\rangle=k+1$,
so $k\Lambda_0$ is weakly admissible iff $k\geq -1$.
In the light of~\Prop{extW0}, $0=\varphi_{\cW}(k\Lambda_0)$ is admissible
for $k\geq -1$.

Consider the case $k\in\mathbb{Z}_{\leq -2}$. Assume that
 $0=\varphi_{\cW}(k\Lambda_0)$ is admissible. Then, by~\Prop{propadmW} (ii),
$k\Lambda_0$ or $s_0.k\Lambda_0$ is weakly admissible.
Since $k\Lambda_0$ is not weakly admissible,
 $s_0.k\Lambda_0$ is weakly admissible, that is
for each $\beta\in \Pi$ one has
$$0\leq \langle s_0.k\Lambda_0+\hat{\rho},\beta^{\vee}\rangle=
1-(k+1)\langle  \alpha_0,\beta^{\vee}\rangle.$$
For $k<-2$ the above inequality does not hold if
$\langle  \alpha_0,\beta^{\vee}\rangle\not=0$, so
$0=\varphi_{\cW}(k\Lambda_0)$ is not weakly admissible. Hence
$0=\varphi_{\cW}(k\Lambda_0)$ is not admissible for $k<-2$.
For $k=-2$ the above inequalities hold iff
$\langle  \alpha_0,\beta^{\vee}\rangle\geq -1$
for all $\beta\in \Pi$ and this holds
iff the $0$th column of Cartan matrix of $\hat{\fg}$ contains
only $0,-1,2$; this means that the corresponding Dynkin
diagram does not have arrows going from the $0$th vertex.
Thus $s_0.(-2\Lambda_0)$ is weakly admissible iff $\fg$ is not of the type
$C_n$. Hence $0=\varphi_{\cW}(-2\Lambda_0)$ is not admissible for $C_n$.

If $k\not\in\mathbb{Z}$, then $0\in\fh^*_{\cW}$ is  admissible
iff $k\Lambda_0$ is $k$-admissible, that is iff $k$ is irrational,
or $k+h^{\vee}=\frac{p}{q}\in\mathbb{Q}_{>0}\setminus\mathbb{Z}$,
where $p\geq h^{\vee}-1$ and $\gcd(q,l)=1$, or
$p\geq h-1$ and $\gcd(q,l)=l$, see~\Cor{corvac1}. Note that the case
$q=1,\ p\geq h^{\vee}-1$ produces $k\in\mathbb{Z}_{\geq -1}$.
\end{proof}

\subsubsection{}
Set
$$\begin{array}{l}
Y:=\{\lambda\in\fhh^*_k|\ (\lambda,\alpha_0)\in\mathbb{Z}\},\ \
Y_0:=\{\lambda\in\fhh^*_k|\ (\lambda,\alpha_0)=0\},\\
\ol{Y}:=\fhh^*_k\setminus Y,\ \
\ol{\varphi_{\cW}(Y)}:=\fh^*_{\cW}\setminus\varphi_{\cW}(Y),
\end{array}$$
and observe that
$$\ol{\varphi_{\cW}(Y)}=\varphi_{\cW}(\ol{Y}),\ \
\varphi_{\cW}^{-1}\bigl(\varphi_{\cW}(Y)\bigr)=Y,\ \
\varphi_{\cW}^{-1}\bigl(\varphi_{\cW}(Y_0)\bigr)=Y_0.$$

We start the proof of~\Prop{propadmW} from the following description of
the set of weakly admissible weights in $\fh^*_{\cW}$, which we denote by
$wAdm_{\cW}$.

\subsubsection{}
\begin{lem}{lemwadm}
One has
$$wAdm_{\cW}=\varphi_{\cW}(wAdm_k),\ \ \
\varphi_{\cW}^{-1}\bigl(wAdm_{\cW}\cap \varphi_{\cW}(\ol{Y})\bigr)\subset
wAdm_k.$$
\end{lem}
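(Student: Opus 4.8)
The plan is to transfer composition-factor data between $M(\mu)$ and $\bM(\mu_{\cW})=H^0(M(\mu))$ using exactness of $H^0$. Since Verma modules at non-critical level have finite length, properties (H1)--(H3) together with \Cor{corlamW} yield the basic \emph{observation}: for $\mu\in\fhh^*_k$, $\bL(\nu)$ is a composition factor of $\bM(\mu_{\cW})$ if and only if $M(\mu)$ has a composition factor $L(\sigma)$ with $(\sigma,\alpha_0)\notin\mathbb{Z}_{\geq 0}$ and $\sigma_{\cW}=\nu$; by \Lem{lamW} the condition on $\sigma$ means $\sigma\in\{\lambda,s_0.\lambda\}+\mathbb{C}\delta$ for any fixed $\lambda$ with $\lambda_{\cW}=\nu$. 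I shall also use the \emph{norm trick}: two weights lying in one $W(\cdot).$-orbit and differing by $c\delta$ must coincide, since $(\eta+\hat\rho,\eta+\hat\rho)$ is an orbit invariant changed by $2c(k+h^{\vee})\neq 0$ under $\eta\mapsto\eta+c\delta$. Recall that a non-critical weight of $\fhg$ is weakly admissible iff it is maximal in its $W(\cdot).$-orbit. Below I treat $k+h^{\vee}>0$, where such maxima exist; for $k+h^{\vee}\in\mathbb{Q}_{<0}$ one has $wAdm_k=\emptyset$ by \ref{43}, and $wAdm_{\cW}=\emptyset$ by a variant of the second step, so both claimed identities read $\emptyset=\emptyset$.

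\emph{Proof that $\varphi_{\cW}(wAdm_k)\subseteq wAdm_{\cW}$.} Let $\lambda\in wAdm_k$ and suppose $\bL(\lambda_{\cW})$ occurs in $\bM(\mu_{\cW})$. By the observation $M(\mu)$ has a composition factor $L(\sigma)$ with $(\sigma,\alpha_0)\notin\mathbb{Z}_{\geq 0}$ and $\sigma\in\{\lambda,s_0.\lambda\}+\mathbb{C}\delta$. If $\sigma=\lambda+c\delta$, then $L(\lambda)$ is a composition factor of $M(\mu-c\delta)$, so weak admissibility of $\lambda$ forces $\mu-c\delta=\lambda$ and $\mu_{\cW}=\lambda_{\cW}$. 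If $\sigma=s_0.\lambda+c\delta$ with $s_0.\lambda\neq\lambda$, then (since $\lambda$ is dominant and $(\sigma,\alpha_0)\notin\mathbb{Z}_{\geq 0}$) $m:=\langle\lambda+\hat\rho,\alpha_0^{\vee}\rangle\geq 1$ and $L(s_0.\lambda)$ is a composition factor of $M(\mu-c\delta)$. Here $s_0.\lambda$ sits directly below the dominant $\lambda$ in the linkage order: when $\alpha_0\in\Delta(\lambda)$ this is because $\alpha_0$ is a simple element of $\Delta_+(\lambda)$, so by the strong linkage principle (a composition factor $L(\eta)$ of $M(\tau)$ satisfies $\eta\uparrow\tau$, cf.\ \cite{KK}) the only $\tau$ in the $W(\lambda).$-orbit with $s_0.\lambda\uparrow\tau$ are $\lambda$ and $s_0.\lambda$; when $\alpha_0\notin\Delta(\lambda)$ one observes that $\mu-c\delta-s_0.\lambda$ is a sum of positive roots of $\fhg$ that $s_0$ maps to a sum of non-positive roots, hence equals $n\alpha_0$ with $n\geq 0$, and $n>0$ is impossible since then $n\alpha_0$ would lie in $\sum_{\gamma\in\Pi(\lambda)}\mathbb{Z}_{\geq 0}\gamma$ although $\alpha_0\notin\Delta(\lambda)$. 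In every case $\mu-c\delta\in\{\lambda,s_0.\lambda\}$, so $\mu_{\cW}=\lambda_{\cW}$; thus $\lambda_{\cW}\in wAdm_{\cW}$.

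\emph{Proof that $wAdm_{\cW}\subseteq\varphi_{\cW}(wAdm_k)$ and of the second inclusion.} Let $\nu\in wAdm_{\cW}$ and pick $\lambda$ with $\lambda_{\cW}=\nu$ and $(\lambda,\alpha_0)\notin\mathbb{Z}_{\geq 0}$, so $H^0(L(\lambda))=\bL(\nu)\neq 0$ by \ref{redH}. Let $\mu$ be the maximal (= dominant, hence weakly admissible) element of the $W(\lambda).$-orbit of $\lambda$. Since $\mu$ is dominant, $[M(\mu):L(\lambda)]\neq 0$, so by the observation $\bL(\nu)$ occurs in $\bM(\mu_{\cW})$; weak admissibility of $\nu$ then forces $\mu_{\cW}=\nu$, i.e.\ $\nu=\varphi_{\cW}(\mu)\in\varphi_{\cW}(wAdm_k)$. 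For the second inclusion, note $\varphi_{\cW}^{-1}(\varphi_{\cW}(\ol Y))=\ol Y$, so the set in question is $\varphi_{\cW}^{-1}(wAdm_{\cW})\cap\ol Y$. Let $\lambda\in\ol Y$ with $\lambda_{\cW}\in wAdm_{\cW}$ and suppose $\lambda\notin wAdm_k$, i.e.\ $L(\lambda)$ is a composition factor of $M(\mu)$ for some $\mu\neq\lambda$. As $(\lambda,\alpha_0)\notin\mathbb{Z}$, $H^0(L(\lambda))=\bL(\lambda_{\cW})\neq 0$, so $\bL(\lambda_{\cW})$ occurs in $\bM(\mu_{\cW})$ and hence $\mu_{\cW}=\lambda_{\cW}$, i.e.\ $\mu\in\{\lambda,s_0.\lambda\}+\mathbb{C}\delta$. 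The case $\mu\equiv\lambda$ gives $\mu=\lambda$ by the norm trick, a contradiction; and $\mu=s_0.\lambda+c\delta$ is impossible since then $\mu-\lambda=-m\alpha_0+c\delta$ with $m=\langle\lambda+\hat\rho,\alpha_0^{\vee}\rangle\notin\mathbb{Z}$ (as $\alpha_0\notin\Delta(\lambda)$), which cannot be a sum of positive roots of $\fhg$. Hence $\lambda\in wAdm_k$.

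The main obstacle is the $\sigma=s_0.\lambda$ branch of the first inclusion, namely pinning down exactly which Verma modules in the $W(\lambda).$-orbit of a dominant $\lambda$ have $L(s_0.\lambda)$ as a composition factor, handled separately according to whether $\alpha_0\in\Delta(\lambda)$ or not; all the rest is bookkeeping with the fibres of $\varphi_{\cW}$ and the norm trick (with a little extra care for $k+h^{\vee}<0$ and for irrational $k$ with infinite $\Delta(\lambda)$).
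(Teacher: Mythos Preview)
Your overall strategy—transferring composition-factor data through $H^0$ via (H4)—matches the paper's, but there are gaps in execution.

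First, the argument for the sub-case $\alpha_0\notin\Delta(\lambda)$ of the first inclusion is flawed. You assert that $s_0$ maps $\mu-c\delta-s_0.\lambda$ to a sum of non-positive roots, but in fact this element lies in $\mathbb{Z}_{\geq 0}\Pi(s_0.\lambda)$ by the chain condition of \cite{KK}, and $s_0$ sends $\Pi(s_0.\lambda)=s_0\Pi(\lambda)$ back into $\hat\Delta_+$ (since $\alpha_0\notin\Delta(\lambda)$ means $s_0$ permutes $\Delta_+(\lambda)$), so the image stays in $\hat Q_+$, not $-\hat Q_+$. Your subsequent claim that $n\alpha_0$ would lie in $\sum_{\gamma\in\Pi(\lambda)}\mathbb{Z}_{\geq 0}\gamma$ also uses the wrong root subsystem: the difference lives in the lattice of $\Delta(s_0.\lambda)$, not $\Delta(\lambda)$. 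The quick fix is to observe that when $\alpha_0\notin\Delta(\lambda)$ and $\lambda$ is dominant for $\Delta(\lambda)$, the weight $s_0.\lambda$ is dominant for $\Delta(s_0.\lambda)=s_0\Delta(\lambda)$ (cf.\ \Rem{remadm}), hence itself weakly admissible; then $[M(\mu-c\delta):L(s_0.\lambda)]\neq 0$ forces $\mu-c\delta=s_0.\lambda$ immediately.

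Second, your case analysis on $k$ is incomplete. You treat ``$k+h^{\vee}>0$, where such maxima exist'' and declare $k+h^{\vee}\in\mathbb{Q}_{<0}$ handled ``by a variant of the second step'', without spelling out that variant or addressing complex $k$ (where ``$>0$'' is meaningless) or irrational $k$ with infinite $\Delta(\lambda)$, which you yourself flag in the final sentence. The paper avoids any case split on $k$: when $\lambda\in Y$ and the orbit $W(\lambda).\lambda$ has no maximum, one may assume $s_0.\lambda\leq\lambda$ and pick $\beta$ with $s_\beta.\lambda>\lambda$; exactness of $H^0$ gives $\Hom_{\cW}(\bM(\lambda_{\cW}),\bM((s_\beta.\lambda)_{\cW}))\neq 0$, so weak admissibility of $\lambda_{\cW}$ forces $(s_\beta.\lambda)_{\cW}=\lambda_{\cW}$, and the norm trick then yields $s_\beta.\lambda=s_0.\lambda\leq\lambda$, a contradiction. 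This short argument works uniformly for all non-critical $k$ and is precisely what your ``variant'' must supply.

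A minor correction: Verma modules at non-critical level over affine algebras need \emph{not} have finite length (e.g., $M(0)$ over $\hat{\fsl}_2$ at level $0$ has infinitely many composition factors). Your observation nonetheless follows directly from (H4), which does not require finite length.
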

\begin{proof}
Consider the case when  $\lambda\in \ol{Y}$.
Let us show that $\lambda\in\fhh^*_k$ is weakly admissible
iff $\lambda_{\cW}\in\fh_{\cW}^*$ is weakly admissible.

Take $\lambda\in \ol{Y}$, which is not weakly admissible.
Then $\Hom_{\fg}(M(\lambda),M(\lambda+m\beta))\not=0$ for some
real root $\beta$. Therefore
$\Hom_{\cW}(\bM(\lambda_{\cW}),\bM(\lambda+m\beta)_{\cW})\not=0$.
Since $(\lambda,\alpha_0)\not\in\mathbb{Z}$
one has $m\beta, m\beta+(\lambda-s_0.\lambda)\not\in\mathbb{C}\delta$ so
$\lambda_{\cW}\not=(\lambda+m\beta)_{\cW}$.
Hence $\lambda_{\cW}$ is not weakly admissible.

Assume that $\lambda_{\cW}$ is not weakly admissible that is
$[\bM(\lambda'_{\cW}):\bL(\lambda_{\cW})]\not=0$
for some $\lambda'$ with $\lambda'_{\cW}\not=\lambda_{\cW}$. Recall that
$\bM(\lambda'_{\cW})=H^0(M(\lambda'))=H^0(M(s_0.\lambda'))$.
From (H4) we obtain
$$\begin{array}{l}
\sum_{a\in\mathbb{C}} [M(\lambda'):L(\lambda-a\delta)]+
[M(\lambda'):L(s_0.\lambda-a\delta)]\not=0,\\
\sum_{a\in\mathbb{C}} [M(s_0.\lambda'):L(\lambda-a\delta)]+
[M(s_0.\lambda'):L(s_0.\lambda-a\delta)]\not=0.
\end{array}$$
If $[M(\lambda'):L(s_0.\lambda-a_1\delta)]\not=0$
and $[M(s_0.\lambda'):L(s_0.\lambda-a_2\delta)]\not=0$
for some $a_1,a_2$, then $\lambda'-s_0.\lambda,\ s_0.\lambda'-s_0.\lambda
\in\hat{Q}+\mathbb{C}\delta$ so $\lambda-s_0.\lambda\in \hat{Q}+
\mathbb{C}\delta$, which contradicts to
$(\lambda,\alpha_0)\not\in\mathbb{Z}$. As a result,
$[M(\lambda'):L(\lambda-a\delta)]\not=0$
or $[M(s_0.\lambda'):L(\lambda-a\delta)]\not=0$
for some $a$, then
$[M(\lambda''+a\delta):L(\lambda)]\not=0$ for
$\lambda''\in\{\lambda',s_0.\lambda'\}$.
Notice that $(\lambda''+a\delta)_{\cW}=\lambda'_{\cW}\not=\lambda_{\cW}$
so $\lambda''+a\delta\not=\lambda$.
Thus $\lambda$ is not weakly admissible.

Consider the case when $\lambda\in Y$.
Take a weakly admissible $\lambda\in \fhh^*_k$ with
$\alpha_0\in\Delta(\lambda)$.
We claim that $\lambda_{\cW}$ is weakly admissible.
Indeed, otherwise $[\bM(\lambda'_{\cW}):\bL(\lambda_{\cW})]\not=0$
for some $\lambda'$ with $\lambda'_{\cW}\not=\lambda_{\cW}$.
Since $H^0(L(\lambda+a\delta))=0$ for all $a\in\mathbb{C}$,
(H4) give
$[M(\lambda'):L(s_0.\lambda+a\delta)]>0$ for some $a\in\mathbb{C}$
that is $[M(\lambda'-a\delta):L(s_0.\lambda)]>0$.
Since $\lambda$ is weakly admissible, it is maximal in its
$W(\lambda).$-orbit so $\lambda'-a\delta=\lambda$ or
$\lambda'-a\delta=s_0.\lambda$. This contradicts
to the assumption $\lambda'_{\cW}\not=\lambda_{\cW}$.

Now suppose that $\lambda_{\cW}$ is weakly admissible and
the orbit $W(\lambda).\lambda$ has a maximal element
$\lambda'$. Then $\Hom_{\fg}(M(\lambda),M(\lambda'))\not=0$ so
$\Hom_{\cW}(\bM(\lambda_{\cW}),\bM(\lambda'_{\cW}))\not=0$
that is $\lambda_{\cW}=\lambda'_{\cW}$, because
$\lambda_{\cW}$ is weakly admissible. Since
$\lambda'$ is maximal in its $W(\lambda).\lambda$-orbit, $\lambda'$
is weakly admissible and so the preimage of $\lambda_{\cW}$
in $\fhh^*_k$ contains a weakly admissible element.

Finally, suppose $\lambda_{\cW}$ is weakly admissible
and the orbit $W(\lambda).\lambda$ does not have
a maximal element. Since $\lambda_{\cW}=(s_0.\lambda)_{\cW}$, we may
(and will) assume that $s_0.\lambda\leq \lambda$. By the assumption,
$s_{\beta}.\lambda>\lambda$ for some real root $\beta$ so
$\Hom_{\fg}(M(\lambda),M(s_{\beta}.\lambda))\not=0$.
Therefore $\Hom_{\cW}(\bM(\lambda_{\cW}),\bM((s_{\beta}.\lambda)_{\cW}))\not=0$
so $\lambda_{\cW}=(s_{\beta}.\lambda)_{\cW}$, because
$\lambda_{\cW}$ is weakly admissible.
This implies $s_{\beta}.\lambda-\lambda\in\mathbb{C}\delta$,
which is impossible, or
$s_{\beta}.\lambda-s_0.\lambda\in\mathbb{C}\delta$.
Since $(s_{\beta}.\lambda+\hat{\rho},s_{\beta}.\lambda+\hat{\rho})=
(s_0.\lambda+\hat{\rho},s_0.\lambda+\hat{\rho})$ and
$(s_{\beta}.\lambda+\hat{\rho},\delta)\not=0$, the formula
$s_{\beta}.\lambda-s_0.\lambda\in\mathbb{C}\delta$ forces
$s_{\beta}.\lambda=s_0.\lambda$, which contradicts to
$s_{\beta}.\lambda>\lambda\geq s_0.\lambda$.
\end{proof}

\subsubsection{}
Retain notation of~\ref{CC'}.

\begin{lem}{lemWY}
$$\begin{array}{l}
Adm_{\cW}\cap \varphi_{\cW}(Y\setminus Y_0)\subset \varphi_{\cW}\bigl(\{
\lambda\in Y\setminus Y_0\cap wAdm_k|\
\mathbb{C}\,C(\lambda)+\mathbb{C}\delta=\mathbb{C}\hat{\Delta}\}
\bigr),\\
Adm_{\cW}\cap \varphi_{\cW}(Y_0)\subset \varphi_{\cW}\bigl(\{
\lambda\in Y_0\cap wAdm_k|\ \mathbb{C}\,C(\lambda)+\mathbb{C}\delta
\supset(\mathbb{C}\hat{\Delta}\cap\alpha_0^{\perp})\}
\bigr).
\end{array}
$$
\end{lem}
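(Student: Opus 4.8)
The plan is to analyze, for an admissible weight $\nu\in Adm_{\cW}$ lying in $\varphi_{\cW}(Y)$, the image $\im\Upsilon_{\bL(\nu)}$ inside $\fh^*_{\cW}$ and to show it is too small whenever $\mathbb{C}C(\lambda)+\mathbb{C}\delta$ fails to contain the appropriate subspace of $\mathbb{C}\hat\Delta$. First I would fix $\lambda\in\fhh^*_k$ with $\varphi_{\cW}(\lambda)=\nu$ and $\alpha_0\in\Delta(\lambda)$ (so $\lambda\in Y$); by~\Lem{lemwadm}, admissibility of $\nu$ forces $\nu\in wAdm_{\cW}=\varphi_{\cW}(wAdm_k)$, and after replacing $\lambda$ by $s_0.\lambda$ if necessary we may arrange $\lambda\in wAdm_k$, i.e. $\lambda$ is maximal in its $W(\lambda).$-orbit. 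Then by~\Rem{regmax} (applied in the affine setting, cf.~\ref{noncrit}) $C(\lambda)=\Delta_+(\lambda)$.

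Next I would invoke~\Thm{thmmuW}(i): for a self-extension $N\in\CO_k'$ of $L(\lambda)$ one has $\Upsilon_{H^0(L(\lambda))}(H^0(N))=\phi_\lambda(\Upsilon_{L(\lambda)}(N))$, so $\phi_\lambda(\im\Upsilon_{L(\lambda)}\cap\fhh^*_0)\subset\im\Upsilon_{\bL(\nu)}$. However, since $\alpha_0\in\Delta(\lambda)$ we are exactly in the case $(\lambda+\hat\rho,\alpha_0)=0$ of~\Thm{thmmuW}(iii) precisely when $\lambda\in Y_0$; otherwise $(\lambda+\hat\rho,\alpha_0)\neq 0$. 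The key point is the reverse containment direction is not available (we are not in the hypothesis of~\Thm{thmmuW}(ii) because $(\lambda,\alpha_0)\in\mathbb{Z}$), so I would instead argue by contradiction from the assumption $\nu\in Adm_{\cW}$: admissibility means $\im\Upsilon_{\bL(\nu)}=0$ (over the relevant category), whereas~\Prop{propalmin} applied to the $\fhg$-module gives $C(\lambda)^{\perp}\subset\im\Upsilon_{L(\lambda)}$. Pushing this forward by $\phi_\lambda$ and using that $\ker\phi_\lambda$ is $\mathbb{C}\delta$ (resp. $\mathbb{C}\delta+\mathbb{C}\alpha_0$ when $\lambda\in Y_0$) by~\Thm{thmmuW}(iii), one finds $\phi_\lambda(C(\lambda)^\perp\cap\fhh^*_0)\subset\im\Upsilon_{\bL(\nu)}=0$. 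Unwinding, this forces $C(\lambda)^\perp\cap\fhh^*_0\subset\ker\phi_\lambda$, i.e. $C(\lambda)^\perp\cap\delta^\perp\subset\mathbb{C}\delta$ in the $Y\setminus Y_0$ case (resp. $C(\lambda)^\perp\cap\delta^\perp\subset\mathbb{C}\delta+\mathbb{C}\alpha_0$ in the $Y_0$ case). Taking perpendiculars converts these into $\mathbb{C}C(\lambda)+\mathbb{C}\delta=\mathbb{C}\hat\Delta$ (resp. $\mathbb{C}C(\lambda)+\mathbb{C}\delta\supset\mathbb{C}\hat\Delta\cap\alpha_0^\perp$), which is exactly the asserted necessary condition; combined with $\lambda\in wAdm_k$ and $\lambda\in Y\setminus Y_0$ (resp. $Y_0$) this gives membership in the sets on the right-hand side.

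The main obstacle I anticipate is handling the interaction between $\fhh^*_0$ (the level-zero hyperplane, where self-extensions with diagonal $K$-action live) and $\delta^\perp$, and being careful that $\phi_\lambda$ restricted to $\fhh^*_0$ has exactly the kernel claimed — one must check $\delta\in\fhh^*_0$ and that the extra kernel direction $\alpha_0$ in the $Y_0$ case indeed lies in $\fhh^*_0$, which it does since $(\alpha_0,\delta)=0$ hence $\langle\alpha_0,K\rangle=0$. A secondary subtlety is the bookkeeping for which precise "admissible" category is meant for $\cW$ (self-extensions with diagonal $\fh_{\cW}$-action) and matching it to the level-$k$ diagonal-$K$ condition on the $\fhg$ side via the functor $H^0$; but this is exactly what~\Thm{thmmuW}(i) and the exactness (H1) are designed to bridge, so it should go through mechanically. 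I would also need the elementary linear-algebra fact that for subspaces $U,V$ of a space with a nondegenerate form, $U^\perp\cap V^\perp=(U+V)^\perp$, together with $\Delta^\perp=\mathbb{C}\delta$ from~\ref{notaf}.
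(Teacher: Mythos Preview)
Your approach is essentially the paper's for the $Y_0$ case, and the linear-algebra unwinding at the end (passing from $C(\lambda)^\perp\cap\fhh^*_0\subset\ker\phi_\lambda$ to the perpendicular form) is exactly what the paper does. However, there is a genuine gap in the $Y\setminus Y_0$ case.

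The problem is the hypothesis of~\Thm{thmmuW}: it requires $H^0(L(\lambda))\neq 0$. If $\lambda\in Y\setminus Y_0$ is the \emph{weakly admissible} representative you have arranged, then $(\lambda+\hat\rho,\alpha_0)\in\mathbb{Z}_{>0}$, i.e.\ $(\lambda,\alpha_0)\in\mathbb{Z}_{\geq 0}$, and by~(H3) this forces $H^0(L(\lambda))=0$. So~\Thm{thmmuW}(i) simply does not apply to this $\lambda$, and you cannot push $\im\Upsilon_{L(\lambda)}$ forward by $\phi_\lambda$ to conclude anything about $\im\Upsilon_{\bL(\nu)}$. (In the $Y_0$ case there is no issue: $(\lambda,\alpha_0)=-1$, so $H^0(L(\lambda))\neq 0$ and your argument goes through as written.)

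The paper fixes this by working with the \emph{other} preimage: one has $\bL(\lambda_{\cW})=H^0(L(s_0.\lambda))$ (not $H^0(L(\lambda))$, which is zero), and $(s_0.\lambda+\hat\rho,\alpha_0)=-(\lambda+\hat\rho,\alpha_0)\neq 0$, so~\Thm{thmmuW}(i),(iii) applied to $s_0.\lambda$ give $\im\Upsilon_{L(s_0.\lambda)}\cap\fhh^*_0=\mathbb{C}\delta$, hence $\mathbb{C}\,C(s_0.\lambda)+\mathbb{C}\delta=\mathbb{C}\hat\Delta$ via~\Prop{propalmin} exactly as you argue. The extra step, which your outline is missing, is then to translate this into a statement about $C(\lambda)$: one checks $C(s_0.\lambda')\setminus\{\alpha_0\}=s_0\bigl(C(\lambda')\setminus\{\alpha_0\}\bigr)$ for any $\lambda'$, and since here $\alpha_0\in C(\lambda)$ but $\alpha_0\notin C(s_0.\lambda)$, one gets $C(\lambda)=s_0\bigl(C(s_0.\lambda)\bigr)\cup\{\alpha_0\}$, from which $\mathbb{C}\,C(\lambda)+\mathbb{C}\delta=\mathbb{C}\hat\Delta$ follows. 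Without this translation you would only conclude membership of $s_0.\lambda$ in some set, but $s_0.\lambda$ is not weakly admissible, so this does not give the lemma as stated.
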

\begin{proof}
Take $\nu\in Adm_{\cW}\cap \varphi_{\cW}(Y)$.
By~\Lem{lemwadm}, $\nu=\lambda_{\cW}$ for a weakly admissible
$\lambda\in Y$. Then
$(\lambda+\hat{\rho},\alpha_0)\in\mathbb{Z}_{\geq 0}$ and so
$\bL(\lambda_{\cW})=H^0(L(s_0.\lambda))$.

Consider the case when $\lambda\in Y\setminus Y_0$ that is
$(\lambda+\hat{\rho},\alpha_0)\in\mathbb{Z}_{>0}$.
Since $H^0(L(s_0.\lambda))$ is admissible, this module
does not admits self-extensions and so
$\im\Upsilon_{L(s_0.\lambda)}\cap \fhh^*_0=\mathbb{C}\delta$,
by~\Thm{thmmuW}.
By~\Prop{propalmin}, $\im\Upsilon_{L(s_0.\lambda)}$
contain $C(s_0.\lambda)^{\perp}$ so $C(s_0.\lambda)^{\perp}\cap \fhh^*_0=
\mathbb{C}\delta$
that is
$\mathbb{C}\,C(s_0.\lambda)+\mathbb{C}\delta=\mathbb{C}\hat{\Delta}$.
It is easy to see that $C(s_0.\lambda')\setminus\{\alpha_0\}
=s_0(C(\lambda')\setminus\{\alpha_0\})$ for any $\lambda'$.
The condition $(\lambda+\hat{\rho},\alpha_0)\in\mathbb{Z}_{>0}$
gives $\alpha_0\in C(\lambda),
\alpha_0\not\in C(s_0.\lambda)$ that is
$C(\lambda)=s_0(C(s_0.\lambda))\cup\{\alpha_0\}$.
Since $\mathbb{C}\,C(s_0.\lambda)+\mathbb{C}\delta=\mathbb{C}\hat{\Delta}$
we get $\mathbb{C}\,C(\lambda)+\mathbb{C}\delta=\mathbb{C}\hat{\Delta}$
as required.

Consider the case when $\lambda\in Y_0$ that is
$s_0.\lambda=\lambda$. Using~\Thm{thmmuW} we obtain
$\im\Upsilon_{L(\lambda)}\cap \fhh^*_0\subset\spn\{\delta,
\alpha_0\}$. By~\Prop{propalmin}, $\im\Upsilon_{L(\lambda)}$
contains $C(\lambda)^{\perp}$ so $\im\Upsilon_{L(\lambda)}\cap \fhh^*_0$
contains $\bigl(C(\lambda)\cup\{\delta\}\bigr)^{\perp}$. Thus
$\mathbb{C} C(\lambda)+\mathbb{C}\delta\supset
(\mathbb{C}\hat{\Delta}\cap\alpha_0^{\perp})$ as required.
\end{proof}

\subsubsection{Proof of~\Prop{propadmW}}\label{pfpropadmW}
We rewrite (i) as follows
\begin{equation}\label{propad1}
\lambda_{\cW}\in Adm_{\cW}\cap \varphi_{\cW}(\ol{Y})\ \Longleftrightarrow\
\lambda\in k\text{-}Adm\cap \ol{Y}.
\end{equation}
Take $\lambda\in \ol{Y}$.
Combining~\Lem{lemwadm} and~\Thm{thmmuW} we conclude that
$\lambda_{\cW}\in Adm_{\cW}$ iff $\lambda$ is weakly admissible and
$\im\Upsilon_{L(\lambda)}\cap \fhh^*_0=\mathbb{C}\delta$.
Thus $\lambda_{\cW}\in Adm_{\cW}$ iff $\lambda\in k\text{-}Adm$.
This establishes~(\ref{propad1}) and (i).

Rewrite (ii) in the form
\begin{equation}\label{QQ}
\lambda'\in Adm_{\cW}\ \Longrightarrow\
\exists \lambda\in\varphi_{\cW}^{-1}(\lambda')\ \text{ s.t. }\ \lambda\in
wAdm_k\ \&\
\mathbb{C} \Delta(\lambda)+\mathbb{C}\delta=\mathbb{C}\hat{\Delta}.
\end{equation}
If $\lambda'\in Adm_{\cW}\setminus\varphi_{\cW}(Y)$ the existence of
$\lambda$  follows from (i)
and~\Cor{corLie1}; if
$\lambda'\in Adm_{\cW}\cap\varphi_{\cW}(Y\setminus Y_0)$
this follows from~\Lem{lemWY} (because $C(\lambda)\subset\Delta(\lambda)$).
Take $\lambda'\in Adm_{\cW}\cap\varphi_{\cW}(Y_0)$. By~\Lem{lemWY},
$\lambda'=\varphi_{\cW}(\lambda)$, where
$\lambda\in Y_0\cap wAdm_k$ is such that
$\mathbb{C}\Delta(\lambda)+\mathbb{C}\delta$ contains
$\mathbb{C}\hat{\Delta}\cap\alpha_0^{\perp}$. Note that for
$\lambda\in Y_0$ one has
$\alpha_0\in\Delta(\lambda),\alpha_0\not\in C(\lambda)$.
Thus $\mathbb{C}\Delta(\lambda)+\mathbb{C}\delta$ contains
$(\mathbb{C}\hat{\Delta}\cap\alpha_0^{\perp})+\mathbb{C}\alpha_0=
\mathbb{C}\hat{\Delta}$. This establishes (ii).

Finally, (iii) follows from (ii) and~\Rem{Remkrat}, and (iv)
follows from (ii),~\Rem{Remkrat}, and~\Cor{corkh}.
\qed

This completes the proof of~\Thm{thmadmW}. Now let us continue
our study of admissible weights in $\fh^*_{\cW}$.

\subsubsection{}
\begin{prop}{extW0}
If a rational weakly admissible $\lambda\in\fhh^*_k$ is such that
$\langle\lambda+\hat{\rho},\beta^{\vee}\rangle>0$ for each
$\beta\in\Pi(\lambda)\setminus\{\alpha_0\}$,
then $\lambda_{\cW}$ is admissible.
\end{prop}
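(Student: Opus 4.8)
The plan is to establish separately the two conditions defining admissibility of $\lambda_{\cW}$. Weak admissibility is immediate: $\lambda$ is weakly admissible for $\fhg$, i.e.\ $\lambda\in wAdm_k$, and $wAdm_{\cW}=\varphi_{\cW}(wAdm_k)$ by~\Lem{lemwadm}, so $\lambda_{\cW}\in wAdm_{\cW}$. For the remaining condition $\Ext^1(\bL(\lambda_{\cW}),\bL(\lambda_{\cW}))=0$ it suffices, by injectivity of $\Upsilon_{\bL(\lambda_{\cW})}$, to prove $\im\Upsilon_{\bL(\lambda_{\cW})}=0$. Choose $\nu\in\{\lambda,s_0.\lambda\}$ with $(\nu,\alpha_0)\notin\mathbb{Z}_{\geq 0}$; then by~(H3) and $\nu_{\cW}=\lambda_{\cW}$ (\Lem{lamW}) we have $\bL(\lambda_{\cW})=H^0(L(\nu))\neq 0$, and the idea is to compute $\im\Upsilon_{\bL(\lambda_{\cW})}$ from $\im\Upsilon_{L(\nu)}$ via~\Thm{thmmuW}.

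If $\alpha_0\notin\Delta(\lambda)$ this is straightforward. Here $\nu=\lambda$, $(\lambda,\alpha_0)\notin\mathbb{Z}$, and $\alpha_0\notin\Pi(\lambda)$, so by hypothesis $\langle\lambda+\hat{\rho},\beta^{\vee}\rangle>0$ for \emph{every} $\beta\in\Pi(\lambda)$; hence every element of $\Pi(\lambda)$ is $\lambda$-minimal by~\Lem{lemmik}, and~\Prop{propalmin} gives $\im\Upsilon_{L(\lambda)}\subset\Pi(\lambda)^{\perp}$. Rationality gives $\mathbb{C}\Pi(\lambda)=\mathbb{C}\Delta(\lambda)=\mathbb{C}\hat{\Delta}$, so $\im\Upsilon_{L(\lambda)}\cap\fhh^*_0\subset\Pi(\lambda)^{\perp}\cap\delta^{\perp}=\hat{\Delta}^{\perp}=\mathbb{C}\delta=\ker\phi_{\lambda}$ (the last equality by~\Thm{thmmuW}(iii), as $(\lambda+\hat{\rho},\alpha_0)\neq 0$, else $\alpha_0\in\Delta(\lambda)$). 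Now~\Thm{thmmuW}(ii) applies and gives $\im\Upsilon_{\bL(\lambda_{\cW})}=\phi_{\lambda}(\im\Upsilon_{L(\lambda)}\cap\fhh^*_0)=0$.

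The case $\alpha_0\in\Delta(\lambda)$ is the substantive one, and I expect it to be the main obstacle. Then $\alpha_0\in\Pi(\lambda)$ (a simple root of $\fhg$ cannot be a sum of two or more elements of $\Delta_+(\lambda)$), weak admissibility forces $m_0:=\langle\lambda+\hat{\rho},\alpha_0^{\vee}\rangle\in\mathbb{Z}_{\geq 0}$, and one takes $\nu:=s_0.\lambda$ if $m_0>0$, $\nu:=\lambda$ if $m_0=0$; in either case $\nu\in Y$, $\Pi(\nu)=\Pi(\lambda)$, $\langle\nu+\hat{\rho},\alpha_0^{\vee}\rangle=-m_0\leq 0$, and — using the hypothesis together with~($\Pi 3$) and $\langle\beta,\alpha_0^{\vee}\rangle\leq 0$ for $\beta\in\Pi(\lambda)\setminus\{\alpha_0\}$ — a short computation gives $\langle\nu+\hat{\rho},\beta^{\vee}\rangle>0$ for all $\beta\in\Pi(\nu)\setminus\{\alpha_0\}$ and $(\nu,\alpha_0)<0$. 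Since $\nu\in Y$, part (ii) of~\Thm{thmmuW} is unavailable, so the plan is to rerun the argument of its proof: from a hypothetical $0\neq\mu\in\im\Upsilon_{\bL(\lambda_{\cW})}$, lift through $\phi_{\nu}$ to an extension $N\in\CO_k'$ of $M(\nu)$ by $M(\nu)$ whose $\Upsilon_{M(\nu)}$-class is not annihilated by $\phi_{\nu}$, and use~\Lem{lemcri} to get a subquotient $E$ of $N$, isomorphic to a submodule of $M(\nu)^{\#}$, with $E\not\cong L(\nu)$. Since $H^0$ is exact, $\dim H^0(M(\nu)^{\#})_{\lambda_{\cW}}=1$ (by~(\ref{chuhk})) and $H^0(L(\nu))=\bL(\lambda_{\cW})$ is the socle of $H^0(M(\nu)^{\#})$ (by~(H5)), the same contradiction as in~\Thm{thmmuW}(ii) will follow once $H^0(E)$ is shown to have length $\geq 2$. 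Its top factor $L(\nu)$ survives $H^0$ because $(\nu,\alpha_0)<0$; the delicate point — the content of the ``interior of a face'' phenomenon anticipated in~\ref{polyhW} — is to show that the dominance of $\nu$ away from $\alpha_0$ forces $E^{\#}$, a quotient of $M(\nu)$ strictly larger than $L(\nu)$, to carry a further composition factor $L(w.\nu)$ with $(w.\nu,\alpha_0)\notin\mathbb{Z}_{\geq 0}$, i.e.\ with $H^0(L(w.\nu))\neq 0$; I expect this by tracking the values $(w.\nu,\alpha_0)$ along the chains of reflections $s_{\beta}$, $\beta\in\Pi(\nu)\setminus\{\alpha_0\}$, that realize the composition factors of $M(\nu)$ (\cite{KK}, Thm.~2), using $\langle\beta,\alpha_0^{\vee}\rangle\leq 0$, $(\nu,\alpha_0)<0$ and rationality. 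The subcase $m_0=0$ (where $\nu=\lambda\in Y_0$, $\ker\phi_{\lambda}=\mathbb{C}\delta+\mathbb{C}\alpha_0$ and $\phi_{\lambda}$ is not onto; this is the situation of the level $k=-1$ vacuum module in~\Cor{corvacW}) is the most delicate, and I expect it to require in addition the inclusion $\im\Upsilon_{\bL(\lambda_{\cW})}\subset\im\phi_{\lambda}$, which the same analysis should yield.
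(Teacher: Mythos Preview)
Your treatment of the case $\alpha_0\notin\Delta(\lambda)$ is correct and essentially the same as the paper's. For the main case $\alpha_0\in\Pi(\lambda)$, however, the paper takes a completely different route from yours: rather than transporting information from the affine side via $H^0$ and~\Thm{thmmuW}, it works directly on the $\cW$-side, using the explicit determinant formula for $\bM(\lambda_{\cW})$ from~\cite{KWdet}, Thm.~7.2. Writing that determinant in terms of factors $\phi_{\alpha,m}(\lambda_{\cW})$ indexed by $(\alpha,m)\in\hat{\Delta}^{re}_+\setminus\{\alpha_0\}\times\mathbb{Z}_{>0}$, the paper computes the linear term of $\phi_{\alpha,m}(\lambda_{\cW}+t\mu)$ in $t$ explicitly, then applies the sum formula and~\Prop{corally} on the $\cW$-side, together with a Bruhat-order multiplicity argument (via~\cite{KT} and~\Cor{corlamW}), to obtain for any $\mu\in\im\Upsilon_{\bL(\lambda_{\cW})}$ a system of linear constraints indexed by $\Pi(\lambda)\setminus\{\alpha_0\}$; rationality then forces $\mu=0$.

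Your proposed approach for $\alpha_0\in\Pi(\lambda)$ has a genuine gap. In the subcase $m_0=0$ you need $\im\Upsilon_{\bL(\lambda_{\cW})}\subset\im\phi_{\lambda}$ merely to lift a nonzero $\mu$ to the affine side, but~\Thm{thmmuW}(i) only gives the reverse inclusion, and~\Exa{exaW} shows the inclusion you want can fail when $(\lambda+\hat{\rho},\alpha_0)=0$ (there $M(\lambda)$ is irreducible, so rationality is violated, but this shows the containment is not formal and requires input from rationality you have not supplied). In the subcase $m_0>0$ you need every submodule $E\not\cong L(\nu)$ of $M(\nu)^{\#}$ to have a second composition factor surviving $H^0$; your sketch (``tracking $(w.\nu,\alpha_0)$ along chains of reflections'') is not an argument, and nothing in the setup forces the specific $E$ produced by~\Lem{lemcri} to avoid the locus $(w.\nu,\alpha_0)\in\mathbb{Z}_{\geq 0}$. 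The paper's direct $\cW$-side computation sidesteps both issues entirely.
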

\begin{proof}
Set
$B:=\hat{\Delta}^{re}_+\setminus\{\alpha_0\}\times
\mathbb{Z}_{>0}$,
and let $B_{>i}\subset B$ (resp., $B_{\geq i}$)
be the set of pairs $(\alpha,m)\in B$ such that
$\langle\alpha,x\rangle>i$ (resp., $\langle\alpha,x\rangle\geq i$).
The determinant formula in~\cite{KWdet} (Thm. 7.2) for $\fg$ being
Lie algebra  can be rewritten
in the following form:
for $\nu\in Q_+$ one has
$$\det\! _{\nu}(\lambda_{\cW})=(k+h^{\vee})^M\prod_{(\alpha,m)\in B_{\geq 0}}
\phi_{\alpha,m}(\lambda_{\cW})^
{\dim \cM((s_{\alpha}.\lambda)_{\cW})_{\lambda_{\cW}-\nu}}
\!\!\prod_{(\alpha,m)\in B_{>0}}\phi'_{\alpha,m}(\lambda_{\cW})^
{\dim \cM((s_{\alpha}s_0.\lambda)_{\cW})_{\lambda_{\cW}-\nu}},$$
if $s_0.\lambda\not=\lambda$, and
$$\det\! _{\nu}(\lambda_{\cW})=(k+h^{\vee})^M\prod_{(\alpha,m)\in B_{\geq 0}}
\phi_{\alpha,m}(\lambda_{\cW})^
{\dim \cM((s_{\alpha}.\lambda)_{\cW})_{\lambda_{\cW}-\nu}},$$
if $s_0.\lambda=\lambda$, where $M\geq 0$, and
$\phi_{\alpha,m}(\lambda_{\cW})=0$ iff
$(\lambda+\hat{\rho},\alpha)=\frac{m}{2}(\alpha,\alpha)$,
$\phi'_{\alpha,m}(\lambda_{\cW})=0$ iff
$(s_0.\lambda+\hat{\rho},\alpha)=\frac{m}{2}(\alpha,\alpha)$.

One has $\phi'_{\alpha,m}(\lambda_{\cW})=0$ iff
$(\lambda+\hat{\rho},s_0\alpha)=\frac{m}{2}(s_0\alpha,s_0\alpha)$ that is
$\phi_{s_0\alpha,m}(\lambda_{\cW})=0$. Furthermore,
$(s_{\alpha}s_0.\lambda)_{\cW}=(s_0s_{\alpha}s_0.\lambda)_{\cW}=
(s_{s_0\alpha}\lambda)_{\cW}$,
and $(\alpha,m)\in B_{>0}$ iff $(s_0\alpha,m)\in B_{<0}$.
Hence the determinant formula
for $s_0.\lambda\not=\lambda$ can be rewritten as
$$\det\! _{\nu}(\lambda_{\cW})=(k+h^{\vee})^M
\prod_{\alpha\in \hat{\Delta}^{re}_+\setminus\{\alpha_0\}}\prod_{m=1}^{\infty}
\phi_{\alpha,m}(\lambda_{\cW})^
{\dim \cM((s_{\alpha}.\lambda)_{\cW})_{\lambda_{\cW}-\nu}}.$$

Let us make the following identifications
$$\fh^*=\{\xi\in\fhh^*|\ (\xi,\delta)=(\xi,\Lambda_0)=0\},\ \
(\fh^f)^*=\{\xi\in\fhh^*|\ (\xi,\delta)=(\xi,\Lambda_0)=(\xi,\theta)=0\}.$$
For $\lambda\in\fhh^*_k$ we write
$$\lambda=k\Lambda_0+\langle \lambda,x\rangle \theta+\lambda^{\#}.$$
From~\cite{KWdet} (Thm. 7.2) one has
\begin{equation}\label{phit}
\begin{array}{ll}
\phi_{\alpha,m}(\lambda_{\cW}+t\mu)=\phi_{\alpha,m}(\lambda_{\cW})+t
(\mu|_{\fh^f},\alpha), &\text{ if }\langle\alpha,x\rangle=0,\\
\phi_{\alpha,m}(\lambda_{\cW}+t\mu)=\phi_{\alpha,m}(\lambda_{\cW})+t
a(\mu,\lambda,\alpha)\mod t^2, &\text{ if }
\langle\alpha,x\rangle=\pm\frac{1}{2},\pm 1,\\
a(\mu,\lambda,\alpha):=
\langle \mu,L_0\rangle+\frac{((\lambda+\rho,\theta)
+k+h^{\vee})(\mu|_{\fh^f},\alpha)+(\mu|_{\fh^f},\lambda^{\#}+\rho^{\#})}
{(k+h^{\vee})}, &
\end{array}\end{equation}
where we view $\mu|_{\fh^f}\in (\fh^f)^*$ as a vector in $\fhh^*$
using the above identification.

Let $\lambda\in\fhh^*_k$ satisfy the assumptions  of our proposition.
If $\alpha_0\not\in \Pi(\lambda)$, then $\lambda$ is weakly admissible,
rational and regular, so $\lambda$ is KW-admissible and
$(\lambda,\alpha_0)\not\in \mathbb{Z}$. In this case
$\lambda_{\cW}$ is admissible by~\Prop{propadmW}.

From now on we assume that $\alpha_0\in \Pi(\lambda)$. By~\Lem{lemwadm},
$\lambda_{\cW}$ is weakly admissible, so it is enough to verify that
$\Ext^1(\bL(\lambda_{\cW}),\bL(\lambda_{\cW})=0$.

For each $\beta\in\Pi(\lambda)\setminus\{\alpha_0\}$ one has
$(\beta,\alpha_0)=-(\beta,\theta)\leq 0$ (by~\ref{Pilambda})
and so $\langle\beta,x\rangle\geq 0$ that is
$\langle\beta,x\rangle=0,\frac{1}{2},1$, by~\ref{notW}.
Write
$$\Pi(\lambda)=\{\alpha_0\}\cup\Pi(\lambda)_0\cup\Pi(\lambda)_{\frac{1}{2}}\cup
\Pi(\lambda)_1,\text{ where } \Pi(\lambda)_j=\{\beta\in\Pi(\lambda)|\
\langle\beta,x\rangle=j\}.$$

We will use the following fact (see~\cite{KT}):
if  a non-critical weight
$\lambda'\in'\fhh^*$ is maximal in its $W.$-orbit,
then $\Stab_{W(\lambda')}(\lambda'+\hat{\rho})=
\Stab_{\hat{W}}(\lambda'+\hat{\rho})$
is a finite Coxeter group
generated by the reflections $s_{\alpha}, \alpha\in \Pi(\lambda')$ such that
$(\lambda'+\hat{\rho},\alpha)=0$, and
$[M(y.\lambda'):L(w.\lambda')]\not=0$ implies the existence of
$z\in \Stab_{W}(\lambda'+\hat{\rho})$ such that $y\leq_{\lambda'} wz$, where
$\leq_{\lambda'}$ stands for the Bruhat order in the Coxeter
group $W(\lambda')$.

Assume that
\begin{equation}\label{pila}
\beta\in\Pi(\lambda)\setminus\{\alpha_0\},\
\alpha\in \hat{\Delta}^{re}_+\setminus\{\alpha_0\}\ \text{ s.t. }
[\bM((s_{\alpha}.\lambda)_{\cW}):\bL((s_{\beta}.\lambda)_{\cW})]\not=0.
\end{equation}
Then, by~\Cor{corlamW}, $[M(s_{\alpha}.\lambda):L(w.\lambda)]\not=0$, where
$w\in W(\lambda)$ is such that $(w.\lambda)_{\cW}=(s_{\beta}.\lambda)_{\cW}$,
that is $w.\lambda=s_{\beta}.\lambda$ or $w.\lambda=s_0s_{\beta}.\lambda$.
By above, $s_{\alpha}\leq_{\lambda} wz$, where
$w\in\{s_{\beta},s_0s_{\beta}\}$,
$z\in \Stab_{W(\lambda)}(\lambda+\hat{\rho})$,
and $\Stab_{W(\lambda)}(\lambda+\hat{\rho})$ is generated by $s_0$
if $(\lambda+\hat{\rho},\alpha_0)=0$ and is trivial otherwise.
Since $\beta,\alpha_0\in\Pi(\lambda)$,
$s_{\alpha}\leq_{\lambda} s_{\beta}$ forces $\alpha=\beta$;
$s_{\alpha}\leq_{\lambda} s_{\beta}s_0$ or
$s_{\alpha}\leq_{\lambda} s_0s_{\beta}$ forces $\alpha=\beta,\alpha_0$;
$s_{\alpha}\leq_{\lambda} s_0s_{\beta}s_0$ forces
$\alpha=\beta,\alpha_0,s_0\beta$. Therefore~(\ref{pila}) implies
$\alpha=\beta$ if $(\lambda+\hat{\rho},\alpha_0)\not=0$,
 and $\alpha\in\{\beta,s_0\beta\}$ if $(\lambda+\hat{\rho},\alpha_0)=0$.
Since $\langle \beta,x\rangle\geq 0$, one has
$\langle s_0\beta,x\rangle\geq 0$ iff $s_0\beta=\beta$.

Take $\mu\in\fh^*_{\cW}$ and let $\{\bM(\lambda_{\cW})^{\mu;j}\}$
be the corresponding Jantzen-type filtration, see Subsection~\ref{Jan}.
Combining the above determinant formulas and the fact that~(\ref{pila})
implies $\alpha=\beta$ if $(\lambda+\hat{\rho},\alpha_0)\not=0$
or $(\lambda+\hat{\rho},\alpha_0)=0,\ (\alpha,m)\in B_{\geq 0}$,
we obtain that
\begin{equation}\label{syumj}
\sum_{j=1}^{\infty} [\bM(\lambda_{\cW})^{\mu;j}:
\bL((s_{\beta}.\lambda)_{\cW})]=1,
\end{equation}
if $\beta\in\Pi(\lambda)_0,\ (\mu|_{\fh^f},\beta)\not=0$ or
$\beta\in\Pi(\lambda)_{j},\ a(\mu,\lambda,\beta)\not=0, j>0$.

Take $\mu\in\im\Upsilon_{\bL(\lambda_{\cW})}$ and
set $\xi:=\mu|_{\fh^f}$. Let us show that $\xi=0$.
By~\Prop{corally}, $\bM(\lambda_{\cW})^{\mu;1}=\bM(\lambda_{\cW})^{\mu;2}$ so
$\sum_{j=1}^{\infty} [\bM(\lambda_{\cW})^{\mu;j}:
\bL((s_{\beta}.\lambda)_{\cW})]\not=1$. Using~(\ref{syumj}) and~(\ref{phit}),
we conclude that $(\xi,\beta)=0$ for all $\beta\in\Pi(\lambda)_0$
and $a(\mu,\lambda,\beta)=0$ for each $\beta\in\Pi(\lambda)_{j}, j>0$.
By~(\ref{phit}),
$$a(\mu,\lambda,\alpha)=a(\mu,\lambda,\theta)+(\frac{(\lambda+\rho,\theta)}
{k+h^{\vee}}+1)(\xi,\alpha)$$
Hence
\begin{equation}\label{xibeta}
\begin{array}{l}
\forall\beta\in\Pi(\lambda)_0\ \ (\xi,\beta)=0\\
\forall\beta\in\Pi(\lambda)_j, j>0\ \
 a(\mu,\lambda,\theta)+(\frac{(\lambda+\rho,\theta)}
{k+h^{\vee}}+1)(\xi,\beta)=0.
\end{array}
\end{equation}

We claim that $(\lambda+\rho,\theta)+k+h^{\vee}\not=0$. Indeed,
$$(\lambda+\rho,\theta)+k+h^{\vee}
=(\lambda+\hat{\rho},\delta+\theta)=2(k+h^{\vee})-
(\lambda+\hat{\rho},\alpha_0).$$
If $(\lambda+\hat{\rho},\alpha_0)\not=0$, then, by above,
$\lambda+\hat{\rho}$ has a trivial stabilizer in $\hat{W}$ so
$(\lambda+\hat{\rho},\delta+\theta)\not=0$. On the other hand, if
$(\lambda+\hat{\rho},\alpha_0)=0$, then
$(\lambda+\rho,\theta)+k+h^{\vee}=2(k+h^{\vee})\not=0$.

Combining the inequality $(\lambda+\rho,\theta)+k+h^{\vee}\not=0$
and~(\ref{xibeta}), we conclude that for $a:=-
\frac{a(\mu,\lambda,\theta)(k+h^{\vee})}{(\lambda+\rho,\theta)+k+h^{\vee}}$
one has $(\xi,\beta)=a$ for all   $\beta\in\Pi(\lambda)_{\frac{1}{2}}$.
Let us show that $\xi-a\theta$ is orthogonal to
$\Pi(\lambda)\setminus\{\alpha_0\}$.
Indeed, since $(\xi,\beta)=a$ for all   $\beta\in\Pi(\lambda)_{\frac{1}{2}}$,
$\xi-a\theta$ is orthogonal to $\Pi(\lambda)_{\frac{1}{2}}$.
Moreover, by~(\ref{xibeta}), $\xi$ is orthogonal to $\Pi(\lambda)_0$ so
$\xi-a\theta$ is orthogonal to $\Pi(\lambda)_0$. Thus
$\xi-a\theta$ is orthogonal to $\Pi(\lambda)\setminus\{\alpha_0\}$, provided
that $\Pi(\lambda)_1$ is empty. Consider the case when
$\Pi(\lambda)_1\not=\emptyset$. The root
 $\beta\in\Pi(\lambda)_1$ is of the form $m\delta+\theta$; one has
$(\xi,\beta)=0$ since $\xi\in(\fh^f)^*$. Hence
$a(\mu,\lambda,\beta)=a(\mu,\lambda,\theta)=0$ so $a=0$, and thus
$\xi=\xi-a\theta$ is orthogonal to $\beta$. Hence
$\xi-a\theta$ is orthogonal to $\Pi(\lambda)\setminus\{\alpha_0\}$
as required.

Since $\alpha_0=\delta-\theta\in\Delta(\lambda)$ and $k+h^{\vee}=\frac{p}{q}$
is rational, $(q-1)\delta+\theta\in \Delta(\lambda)$ so
\begin{equation}\label{kukun}
(q-1)\delta+\theta=x_0(\delta-\theta)+\sum_{\beta\in
\Pi(\lambda)\setminus\{\alpha_0\}} x_{\beta}\beta
\end{equation}
for some $x_0,x_{\beta}\in\mathbb{Z}_{\geq 0}$. Thus
$(q-1-x_0)\delta+(1+x_0)\theta=\sum x_{\beta}\beta$
and so, by above, $\xi-a\theta$ is orthogonal to
$(q-1-x_0)\delta+(1+x_0)\theta$. Since $\xi-a\theta$ lies in
$\fh^*$, it is orthogonal to $\delta$. Therefore $\xi-a\theta$
is orthogonal to $\theta$ (because $1+x_0\not=0$) and thus
is orthogonal to $\alpha_0$. Hence $\xi-a\theta$
is orthogonal to $\Pi(\lambda)$. The rationality of $\lambda$
gives $\mathbb{C}\Pi(\lambda)=\mathbb{C}\hat{\Delta}$ so
$\xi-a\theta\in\mathbb{C}\delta$. Taking into account that
$\xi\in(\fh^f)^*$ is orthogonal to $\theta$ and $\Lambda_0$,
we conclude that $\mu|_{\fh^f}=\xi=0$.

From~(\ref{kukun}) it follows that
$\Pi(\lambda)\not=\Pi(\lambda)_0\cup\{\alpha_0\}$, so, by~(\ref{xibeta}),
$a(\mu,\lambda,\beta)=0$ for some $\beta$. Substituting $\xi=0$
to~(\ref{phit}), we obtain $\langle \mu, L_0\rangle=0$. By above,
$\mu|_{\fh^f}=0$, so $\mu=0$, that is
$\Ext^1(\bL(\lambda_{\cW}),\bL(\lambda_{\cW})=0$ as required.
\end{proof}

\subsubsection{}
\begin{cor}{corpolyh}
If $\lambda\in\fhh^*_k$ is KW-admissible weight,
then $\lambda_{\cW}$ is admissible.
\end{cor}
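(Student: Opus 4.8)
The plan is to obtain \Cor{corpolyh} as an immediate specialization of \Prop{extW0}. First I would recall, from Subsection~\ref{admKM}, that a KW-admissible weight $\lambda\in\fhh^*_k$ is by definition non-critical, rational, shifted-regular, and weakly admissible; in particular conditions (A1) and (A2) hold. Thus $\mathbb{C}\Delta(\lambda)^{\vee}=\fh'$, so $\lambda$ is rational in the sense required by \Prop{extW0}, and $\lambda$ is weakly admissible as demanded there.

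Next I would verify the remaining hypothesis of \Prop{extW0}, namely that $\langle\lambda+\hat\rho,\beta^{\vee}\rangle>0$ for every $\beta\in\Pi(\lambda)\setminus\{\alpha_0\}$. Since $\lambda$ is non-critical and weakly admissible, it is maximal in its $W(\lambda).$-orbit, so by Subsection~\ref{noncrit} one has $\langle\lambda+\hat\rho,\beta^{\vee}\rangle\in\mathbb{Z}_{\geq 0}$ for all $\beta\in\Pi(\lambda)$; shifted-regularity of $\lambda$ (i.e. $\lambda+\hat\rho$ regular) rules out the value $0$, so in fact $\langle\lambda+\hat\rho,\beta^{\vee}\rangle\in\mathbb{Z}_{>0}$ for \emph{all} $\beta\in\Pi(\lambda)$. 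This positivity, holding on the whole of $\Pi(\lambda)$, holds a fortiori on $\Pi(\lambda)\setminus\{\alpha_0\}$, so the hypotheses of \Prop{extW0} are met.

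Applying \Prop{extW0} then gives that $\lambda_{\cW}$ is admissible, which is the assertion. I do not expect any genuine obstacle: all the work is already contained in \Prop{extW0} (whose proof, incidentally, already disposes of the subcase $\alpha_0\notin\Pi(\lambda)$ via \Prop{propadmW}), and the corollary merely records that the regularity hypothesis of KW-admissibility forces the one-sided positivity condition at every element of $\Pi(\lambda)$, hence in particular away from $\alpha_0$. The only point worth making explicit in the write-up is the equivalence, noted in (A1) and in Subsection~\ref{noncrit}, between dominance (weak admissibility, for non-critical $\lambda$) and the integrality-and-nonnegativity of $\langle\lambda+\hat\rho,\cdot^{\vee}\rangle$ on $\Pi(\lambda)$, which is exactly what lets us upgrade to strict positivity and drop the condition at $\alpha_0$.
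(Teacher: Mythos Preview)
Your proposal is correct and is precisely the argument the paper intends: \Cor{corpolyh} is stated immediately after \Prop{extW0} with no separate proof, because a KW-admissible weight is rational, weakly admissible, and (by shifted-regularity) satisfies $\langle\lambda+\hat\rho,\beta^{\vee}\rangle>0$ for every $\beta\in\Pi(\lambda)$, so the hypotheses of \Prop{extW0} hold trivially.
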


\subsubsection{}
Recall that the Virasoro vertex algebra is isomorphic to $\cW^k(\fsl_2,f)$,
where $f$ is a non-zero nilpotent element of $\fsl_2$, and that its central
charge $c(k)$ is given by formula~(\ref{ck}). Here we show how to recover some
results of Section~\ref{Vir} using the theory of W-algebras.
Retain notations of~\ref{exaaff}. One has
$\varphi(\lambda_{r,s})=(h^{p,q}_{r,s},c^{p,q})$.
Recall that $B=\{\Gamma_r\}_{r=1}^q,\
X_{\Gamma_r}=\{\lambda_{r,s}\}_{s=0}^p$.

Note that $\alpha_0^{\vee}\in\Gamma_r$
only for $r=q$. Thus, by above, $\varphi(\lambda_{r,s})$
is admissible if $r=1,\ldots,q-1, s=0,\ldots, p$.
For $r=q$ the KW-admissible weights are $\lambda_{q,s}$ with
$s=1,\ldots,p-1$. The face $(\lambda+\rho,\alpha_0)=0$
of the polyhedron $X_{\Gamma_q}$ is $\lambda_{q,p}$.
By~\Prop{extW0}, $\varphi(\lambda_{q,s})$
is admissible if $s=1,\ldots, p-1$ and if $(r,s)=(q,p)$.

Summarizing, we get that $\varphi(\lambda_{r,s})=(h^{p,q}_{r,s},c^{p,q})$
is admissible if $r=1,\ldots,q-1, s=0,\ldots, p$ or $r=q, s=1,\ldots, p$.
Taking into account that $h^{p,q}_{r,s}=h^{p,q}_{q-r,p-s}$,
we see that the admissible weights lie in the set
$\{(h^{p,q}_{r,s},c^{p,q})\}_{r=0,\ldots,q, s=0,\ldots, p}$ and that
$(h^{p,q}_{r,s},c^{p,q})$ is admissible
if $r=0,\ldots,q, s=0,\ldots, p$, and $(r,s)\not=(0,p), (q,0)$.
By~\Cor{corVir}, the latter set is exactly the set
of admissible weights. Thus, we recover this corollary, except for
the proof that points $(0,p), (q,0)$ are not admissible.
One can treat similarly
the Neveu-Schwarz algebra by taking
${\osp}(1,2)$, recovering thereby~\Cor{corNS}.

\subsection{Example}\label{exavir}
We give an example of $N\in \CO_k$ with the image $H^0(N)$
which does not lie in $\CO$-category for $\cW$; in fact the image $H^0(N)$
is a non-splitting extension of a Verma $\cW$-module by itself.

Take $\fg:=\fsl_2$  and  $k\not=-2$.
In this case $\cW$ is a Virasoro algebra $\Vir^c=\Vir/(C-c)$ for
$c=1-\frac{6(k+1)^2}{k+2}$.
Let $N\in \CO_k$ be a non-splitting extension of
$M(k\omega)$ by $M(k\omega-\alpha_0)$, where $\omega\in \fhh$ is given by
$$\langle \omega,D\rangle=(\omega,\alpha_0)=0, \ (\omega,\alpha)=1;$$
it is easy to see that such extension exists
if $M(k\omega-\alpha_0)$ is irreducible:
it is a submodule of $M(k\omega-\Lambda_0)\otimes L(\Lambda_0)$
generated by $v\otimes w$, where $v$ is a highest weight vector
of $M(k\omega-\Lambda_0)$ and $w\in  L(\Lambda_0)_{s_0\Lambda_0}$.

We show below that $H^0(N)$ is a non-splitting extension of
a Verma module $\bM((k\omega)_{\cW})$ over $\Vir^c$ by itself
(such extension is unique up to an isomorphism).
Since $L_0$ does not act diagonally on $H^0(N)$,
$H^0(N)$ does not lie in $\CO$-category for the Virasoro algebra.
Notice that $\bM((k\omega)_{\cW})$ is irreducible if $k$ is not rational.

\subsubsection{Definition of $H^i$ in the case $\fg=\fsl_2$}\label{Hi}
Denote by $f,h,e$ the standard basis of $\fsl_2$ .
Then $\hat{\fsl}_2$ is generated by $f,h=\alpha^{\vee},e, f_0:=et^{-1},\
e_0:=ft,\ \alpha^{\vee}_0=[e_0,f_0], K, D$.

Define a Clifford algebra generated by the odd elements
$\psi_n,\psi_n^*$ with $n\in\mathbb{Z}$ subject to the relations
$[\psi_n,\psi_m]=[\psi_n^*,\psi_m^*]=0,\ [\psi_n,\psi_m^*]=\delta_{m,n}$.
Take a vacuum module $F^{ch}$ over this algebra generated by $\vac$
such that $\psi_n\vac=0$ for $n\geq 0$, $\psi_n^*\vac=0$ for $n>0$.
We define a $\mathbb{Z}$-grading on $F^{ch}=\sum (F^{ch})^i$
by the assignment $\deg \vac=0,\ \deg \psi_n=-1,\ \deg \psi^*_n=1$.
For a $\fhg$-module $V$ we set $C(V):=V\otimes F^{ch}$ and
define a $\mathbb{Z}$-grading by the assignment
$C(V)^i:=V\otimes (F^{ch})^i$.
Finally, we define $d: C(V)\to C(V)$ by
$$d=\sum_{n\in\mathbb{Z}} et^{-n}\otimes \psi_n^*+\psi_1^*.$$
Then $d$ is odd, $d^2=0$ and $d(C(V)^i)\subset C(V)^{i+1}$. By definition,
$$H^i(V):=\Ker d\cap C(V)^i/\im d\cap C(V)^i.$$
Recall that for $V\in \CO_k$ one has $H^i(V)=0$ for $i\not=0$.

Define a semisimple action of $\fhh$ on $F^{ch}$ by assigning
to $\psi_n$ the weight $\alpha+n\delta$, to $\psi_n^*$ the weight
$-\alpha+n\delta$ and to $\vac$ the zero weight.
This induces an action of $\fhh$ on the tensor product
$C(V)=V\otimes F^{ch}$. If $\fhh$ acts locally finitely on $V$, then
$\fhh$ acts locally finitely on $C(V)$ and
$$d(C(V)_{\nu})\subset C(V)_{\nu}+C(V)_{\nu+\alpha_0}.$$

Let $\hat{\Omega}$ be the Casimir operator on $V\in\CO_k$; define
the action of $\hat{\Omega}$ on $C(V)$ by
$$\hat{\Omega}(v\otimes u'):=\hat{\Omega}v\otimes u, \ v\in V, u\in F^{ch}.$$
The action of $L_0$ on $H(V)$ is given by
$$L_0=\frac{\hat{\Omega}}{2(k+2)}-(\frac{\alpha^{\vee}}{2}+D).$$

\subsubsection{}
Set $\bN:=H^0(N)$. Since  $N$ is an extension of
$M(k\omega)$ by $M(k\omega-\alpha_0)$, $\bN$ is an extension of
$M((k\omega)_{\cW})$ by itself. Let us show that this extension
is non-splitting.

Let $E$ be the highest weight space of $\bN$, i.e.
$E:=\bN_{(k\omega)_{\cW}}$. Then $E$ is two-dimensional.
Set $X:=\sum_j C(N)_{k\omega-j\alpha_0}$ and note that
$d(X)\subset X$. It is easy to see that $E$ is the image of
$X$ in $\bN=H^0(N)=\Ker d/\im d$, i.e. $E=(\Ker d\cap X)/(\im d\cap X)$.

Denote by $v$ the highest weight generator of $M(k\omega)\subset N$,
and by $v'$ a vector in $N$ satisfying  $e_0v'=v$.
One readily sees  that $X$ is spanned by
$$x_n:=f_0^nv\otimes \vac,\ \ x'_n:=f_0^nv'\otimes \vac,\ \
y_n:=f_0^nv\otimes \psi_{-1}\vac,\ \ y'_n:=f_0^nv'\otimes \psi_{-1}\vac,\
n\geq 0.$$
One has
$$d(x_n)=d(x_n')=0,\ \ d(y_n)=-x_n-x_{n+1},\ d(y'_n)=-x'_n-x'_{n+1}.$$
Thus $E=(\Ker d\cap X)/(\im d\cap X)$ is spanned by the images
$\ol{x}_0,\ol{x}'_0$.

The Casimir operator $\hat{\Omega}$
acts on $M(k\omega)\subset N$ by $a\cdot\id$ for some $a\in\mathbb{C}$ and
one has $\hat{\Omega}v'=av'+f_0v$. Therefore
$$L_0\ol{x}_0=b\ol{x}_0,\ \ \ L_0\ol{x}'_0=b\ol{x}'_0+\frac{\ol{x}_1}
{2(k+2)}=b\ol{x}'_0-\frac{\ol{x}_0}
{2(k+2)}, \text{ where } b:=\frac{a}{2(k+2)}-\frac{1}{2}.$$
Hence $L_0$ does not act semisimply on the highest weight space of $\bN$
and thus $\bN$ is a non-splitting extension
of $M((k\omega)_{\cW})$ by itself.

\subsection{Example}
\label{exaW}
Below we give an example of $\bN\in \tilde{\CO}(\cW)$
which does not lie in $H^0(\CO'_k)$, see~\ref{COO} for notation.
This example is based on the following observation:
if $M(\lambda)$ is an irreducible Verma module
and  $(\lambda+\hat{\rho},\alpha_0)=0$,
then $\bM:=H^0(M(\lambda))$  is an irreducible Verma module over $\cW$
and $Ext^1(\bM,\bM)\iso \fh^*_{\cW}\not=\im\phi_{\lambda}$,
by~\Thm{thmmuW} (iii).

\subsubsection{}
Fix $\lambda\in\fhh^*_k$ such that
$(\lambda+\hat{\rho},\alpha_0)=0$ and
$(\lambda+\hat{\rho}+\mu/2,\mu)\not=0$
for all non-zero $\mu$ in the root lattice of $\fhg$
(such $\lambda$ exists for
$k\not\in\mathbb{Q}$); this condition ensures
that the Casimir operator have different eigenvalues on $M(\lambda)$
and on $M(\lambda+\mu)$. In particular,  $M(\lambda+s\delta)$ is irreducible.
Moreover, if an indecomposable module $N\in \CO_k'$ has
an irreducible subquotient isomorphic to $M(\lambda+s\delta)$,
then all irreducible subquotients of $N$ are isomorphic.

Set $\bM:=H^0(M(\lambda))$. Since $\bM$ is a Verma module
one has $\im\Upsilon_{\bM}=\fh^*_{\cW}$.
Notice that $\im\phi_{\lambda}\not=\fh^*_{\cW}$.
We claim that  if $\bN$ is an extension of $\bM$ by $\bM$,
then $\bN\in H^0(\CO'_k)$ iff
$\Upsilon_{\bM}(\bN)\subset \im\phi_{\lambda}$.
In particular, any extension which does not correspond
to an element in $\im\phi_{\lambda}$
does not lie in $H^0(\CO'_k)$.

Let  $\bN$ be a non-splitting extension of $\bM$ by $\bM$
and $\bN\in H^0(\CO'_k)$. Then $\bN=H^0(N)$ for some indecomposable module
$N\in\CO_k'$. Combining~\Lem{lamW} and the assumption
$(\lambda+\hat{\rho},\alpha_0)=0$, we conclude that
$H^0(L(\lambda'))\cong \bM$ iff $\lambda'=\lambda+s\delta$.
Since $N\in\CO_k'$, $N$ has a finite filtration with the factors belonging
to $\CO_k$. Since $H^0(N)=\bN$,
$N$ has a subquotient of the form $M(\lambda+s\delta)$.
Then, by above, all irreducible subquotients
of $N$ are isomorphic to $M(\lambda+s\delta)$ and
$[N: M(\lambda+s\delta)]=2$.
Hence $N$ is an extension of $M(\lambda+s\delta)$ by itself
and, by~\Thm{thmmuW} (i),   $\Upsilon_{\bM}(\bN)=
\phi_{\lambda+s\delta}\bigl(\Upsilon_{M(\lambda+s\delta)}(N)\bigr)$.
One readily sees that $\phi_{\lambda+s\delta}=\phi_{\lambda}$
so $\Upsilon_{\bM}(\bN)\in \im\phi_{\lambda}$. This establishes the claim.

\subsection{Example}\label{exL}
Let us give an example of an admissible $\cW$-module $\bL$, which
does not admit a $\mathbb{R}_{\geq 0}$-grading
$\bL=\oplus_{j\geq 0}\bL_j$, compatible
with the grading on $\cW$ (i.e., $\cW_n\bL_j\subset \bL_{j-n}$)
such that all $\bL_j$ are finite-dimensional.

\subsubsection{Description of $\bL$}
Let $\fg:=\fsl(4)$. Set $\Pi=\{\alpha_1,\alpha_2,\alpha_3\}$,
fix the standard bases $\{\alpha_i^{\vee}\}_{i=1}^3$
in $\fh$ and $\{e_{\beta}\}_{\beta\in \Delta}$ in $\fn_-\oplus\fn_+$.
Then $\theta=\alpha_1+\alpha_2+\alpha_3,\
x=(\alpha_1^{\vee}+\alpha_2^{\vee}+\alpha_3^{\vee})/2$ and
$\fg^{\natural}\cong \fgl(3)$ has the following triangular
decomposition $\mathbb{C}e_{-\alpha_2}\oplus(\mathbb{C}\alpha_2^{\vee}
+\mathbb{C}(\alpha_1^{\vee}-\alpha_3^{\vee}))\oplus
\mathbb{C}e_{\alpha_2}$.

Set
$$\Pi':=\{2\delta-\theta;3\delta+\alpha_1;4\delta+\alpha_2;4\delta+\alpha_3\}
\subset \hat\Delta_+$$
and take $\lambda\in\fhh^*$ such that
$(\lambda+\hat{\rho},\beta)=1$ for any $\beta\in \Pi'$.
It is easy to check that $k:=\langle \lambda,K\rangle=4/13-4$,
that $\lambda$ is a KW-admissible weight
with $\Pi(\lambda)=\Pi'$.
Set $M:=M(\lambda)$ and let $M'$ be the maximal proper submodule of $M$.
 Note that
\begin{equation}\label{betaxD}
\langle \beta,x+D\rangle>0\ \text{ for any }\beta\in \Pi'.
\end{equation}

By~\Cor{corpolyh}, $\bL:=H^0(L(\lambda))\cong H^0(M)/H^0(M')$
is an admissible $W^k(\fg,e_{-\theta})$-module.

\subsubsection{}\label{prL}
Let $v\in\bM$ (resp. $\ol{v}\in\bL$) be a highest weight vector.
Set
$$y:=J^{e_{-\alpha_2}}_0\in\cU(\cW).$$
Observe that $v$ generates a free
module over $\mathbb{C}[y]$. Let us show that
$\ol{v}\in\bL$ also generates a free module over $\mathbb{C}[y]$.
Indeed, for any  $\lambda'\in\fhh^*$ one has
$$[\bM': \bL(\lambda'_{\cW})]=[M':L(\lambda')]\ \text{ by (H4)};\ \
\langle\lambda'_{\cW}-\lambda_{\cW}, L_0\rangle=
\langle\lambda-\lambda',x+D\rangle, \ \text{ by (H2)}.$$
Note that $[M':L(\lambda')]>0$ implies that $\lambda-\lambda'$
is a non-negative integral linear combination of elements in $\Pi'$ so,
by~(\ref{betaxD}), $\langle\lambda-\lambda',x+D\rangle>0$.
Therefore
$$[\bM': \bL(\lambda'_{\cW})]>0\ \Longrightarrow\
\langle\lambda'_{\cW}-\lambda_{\cW}, L_0\rangle>0.$$
As a result, for any $\mu\in\Omega(\bM')$ one has
$\langle\mu-\lambda_{\cW}, L_0\rangle>0$.
Since $[L_0,y]=0$, this means that $\mathbb{C}[y]v$ intersects $\bM'$
trivially and thus $\ol{v}\in\bL$ generates a free
module over $\mathbb{C}[y]$.

\subsubsection{}
Suppose that $\bL$ has such a grading.
Let us show that $\dim \bL_i=\infty$ for some $i$.

Write $\ol{v}=\sum_{i=1}^m v_i$,
where $v_i$ are homogeneous vectors.
By above, $\ol{v}$ generates a free module over $\mathbb{C}[y]$.
If some $v_i$ generates a free module over $\mathbb{C}[y]$,
then its homogeneous component is infinite-dimensional
(since $y$ has zero degree). If this is not the case, then
for each $v_i$ there exists a non-zero polynomial $P_i(y)$ such that
$P_i(y)v_i=0$. But then $\prod P_i(y)v=0$, a contradiction.

\section{A conjecture on simple $W$-algebras}
Recall that to any finite-dimensional simple Lie algebra
$\fg$, a nilpotent element $f$ of $\fg$ and $k\in\mathbb{C}$ one associates
the $W$-algebra $\cW^k(\fg,f)$, which is a
$\frac{1}{2}\mathbb{Z}_{\geq 0}$-graded vertex algebra~\cite{KRW},
\cite{KWdet}. Provided that $k\not=-h^{\vee}$, the grading is the
eigenspace decomposition with respect $L_0$, the $0$th coefficient
of a Virasoro filed $L(z)$, and the $0$th eigenspace of $L_0$ is
$\mathbb{C}\vac$. It follows that the vertex algebra
$\cW^k(\fg,f)$ has a unique simple quotient, denoted by $\cW_k(\fg,f)$.

One can define highest weight modules and Verma modules over
$\cW^k(\fg,f)$~\cite{KWdet} and, in the same way as in Section~\ref{Wa}
for $\cW=\cW^k(\fg,e_{-\theta})$, one  can define weakly admissible
and admissible (irreducible  highest weight)  modules  over $\cW^k(\fg,f)$.

\subsection{Conjecture}\label{conj9}
Suppose that the vertex algebra $\cW_k(\fg,f)$ satisfies the $C_2$ condition.
Then any irreducible $\cW_k(\fg,f)$-module is obtained by pushing down from
an admissible $\cW^k(\fg,f)$-module.

\subsection{}\label{92}
Suppose that $\cW_k(\fg,e_{-\theta})$ satisfies the $C_2$ condition.
Let $\bL$ be an irreducible $\cW_k(\fg,e_{-\theta})$-module; then it is, of
course, a $\cW$-module. It follows from Lemma 2.4 of~\cite{M} that
the eigenvalues of $L_0$  in $\bL$ are bounded from the below
and each eigenspace is finite-dimensional.
Retain notation of Subsection~\ref{UW}.
The elements $\{J^a_0, a\in ((\fn_+\cap\fg_0)+\fh^f)\}$
span a Lie subalgebra $\fp\subset\cU(\cW)$ which is
isomorphic to $(\fn_+\cap\fg_0)+\fh^f$. These elements
preserve the eigenspaces of $L_0$ in $\bL$.
Since the Lie algebra $(\fn_+\cap\fg_0)+\fh^f$ is solvable,
each eigenspace contains a $\fp$-eigenvector. Let
$v$ be a $\fp$-eigenvector of the minimal $L_0$-eigenvalue.
Then  $J^b_iv=0$ for each
$b$ and $i>0$, since $v$ has the minimal $L_0$-eigenvalue,
$J^a_0v=0$ if $a\in (\fn_+\cap\fg_0)$, and $\fh_{\cW}=\spn\{J^a_0,
a\in\fh^f\}+\mathbb{C}L_0$ acts diagonally on $v$. Hence $v$
is a highest weight vector and so $\bL$ is a highest weight module.
So, if $\cW_k(\fg,e_{-\theta})$ satisfies the $C_2$ condition, then
any its irreducible module is a highest weight module.
If, in addition, this module is admissible (i.e. Conjecture~\ref{conj9}
holds), it follows from~\Thm{thmadmW} that $\cW_k(\fg,e_{-\theta})$
is a regular vertex algebra.

We conjecture that a theorem, similar to~\Thm{thmadmW}
can be established for any $W$-algebra $\cW^k(\fg,f)$. Then
Conjecture~\ref{conj9} would imply that any $W$-algebra $\cW_k(\fg,f)$
satisfying $C_2$ condition, is regular.

%%%%%%%%%%%%%%%%  biblio.tex


\begin{thebibliography}{MMM}
%\bibitem[ABD]{ABD} T.~Abe, G.~Buhl and C.~Dong, {\em Rationality,
%regularity, and $C_2$-cofiniteness}, Trans. Amer. math. Soc. {\bf 356}
%(2004), no. 8, 3391--3402.


\bibitem[AM]{AM} D.~Adamovi\'c, A.~Milas {Vertex operator algebras associated
to modular invariant representations for $A_1^{(1)}$},
Math. Res. Lett. {\bf 2} (1995), 563--575.


\bibitem[Ar]{Ar} T.~Arakawa, {\em Representation theory of superconformal
algebras and the Kac-Roan-Wakimoto conjecture}, Duke Math. J.,
{\bf 130} (2005), 435--478.



\bibitem[Ast]{Ast} A.~Astashkevich, {\em On the structure of Verma modules
over Virasoro and Neveu-Schwarz algebras}, Commun. in Math. physics,
{\bf 186} (1997), 531--562.

\bibitem[BB]{BB} A.~Beilinson, J.~Bernstein, {\em A proof
of Jantzen conjectures}, I.~M.~Gelfand Seminar, 1--50,
Adv. Soviet Math. 16, Part 1, AMS, Providence, RI, 1993.

\bibitem[BPZ]{BPZ} A.~A.~Belavin, A.~M.~Polyakov, A.~B.~Zamolodchikov,
{\em Infinite conformal symmetry in two-dimensional quantum field theory},
Nuclear Physics B {\bf 241} (1984), 333-380.

\bibitem[DGK]{DGK} V.~V.~Deodhar, O.~Gabber, V.~G.~Kac,
{\em Structures of some categories of representations of
infinite-dimensional Lie algebras}, Adv. Math., {\bf 45} (1982), 92--116.

\bibitem[DLM]{DLM1} Ch.~Dong. H.~Li, G.~Mason,
{\em Regularity of rational vertex operator algebras}, Adv. Math.,
{\bf 132} (1997), 148-166.

%\bibitem[DLM2]{DLM2} Ch.~Dong. H.~Li, G.~Mason,
%{\em Twisted represenations of vertex operator algebras}, Math. Ann. {\bf
%310}, (1998), 571-600.

%\bibitem[FF]{FF} B.~Feigin, E.~Frenkel, {\em Quantization
%of Drinfeld-Sokolov reduction}, Phys. Lett., {\bf B 246}
%(1990),  75--81.

%\bibitem[FFu]{FFu} B.~Feigin, D.~Fuchs, ??


\bibitem[FZ]{FZ} I.~B.~~Frenkel, Y.~Zhu, {\em Vertex operator algebras
associated to representations of affine and Virasoro algebras},
Duke Math. J. {\bf 66} (1992),  123--168.

%\bibitem[G]{Gq} M.~Gorelik,
%{\em Shapovalov determinants of $Q$-type Lie superalgebras},
%IMRP (2006), Art. Id. 96895, 1-71.


\bibitem[GK]{GK} M.~Gorelik, V.~G.~Kac,
{\em On simplicity of vacuum modules},
Adv. Math. {\bf 211}, (2007), 621--677.


\bibitem[IK1]{IK2} K.~Iohara, Y.~Koga, {\em Fusion algebras for N=1
superconformal field theories through coinvariants I:
${\osp}(1|2)\hat{ }$-symmetry}, Journal f\"ur die reine und angewandte
Mathematik {\bf 531}, (2001), 1--34.


\bibitem[IK2]{IK1} K.~Iohara, Y.~Koga, {\em Representation Theory of
Neveu-Schwarz and Ramond algebra I: Verma modules}, Adv. in Math.
{\bf 178 } (2003), 1--65.




\bibitem[Jan]{Jan} J.-C.~Jantzen, {\em Kontravariante Formen
   auf induzierten Darstellungen halbeinfacher Lie-Algebren}, Math. Ann.
{\bf 226} (1977), 53--65.

\bibitem[J]{Tony} A.~Joseph, {\em Sur l'annulateur d'un module de Verma},
in NATO Adv. Sci. Inst. Ser. C Math. Phys. Sci., {\bf 514}
Representation Theories and Algebraic Geometry, (Montreal, PQ, 1997),
237--300, Kluwer Acad. Publ., Dordrecht,  1998.


\bibitem[K1]{K77} V.~G.~Kac, {\em Lie superalgebras},
Adv. in Math., {\bf 26} (1977), 8--96.

%\bibitem[K2]{K78} V.~G.~Kac, {\em Infinite-dimensional algebras, Dedekind's
%$\eta$-function, classical M\"obius function and the Very Strange  Formula},
%Adv. in Math., {\bf 30} (1978), 85--136.

\bibitem[K2]{Kbook} V.~G.~Kac, {\em Infinite-dimensional Lie algebras},
Cambridge University Press, 1990.

\bibitem[K3]{Kbook2} V.~G.~Kac,
{\em Vertex algebras for beginners}, University Lecture Series,
Vol. 10, AMS, 1998.




\bibitem[KK]{KK} V.~G.~Kac, D.~Kazhdan, {\em Structure of representations
with highest weight of infinite-dimensional Lie algebras},
Adv. Math.,  {\bf 34} (1979), 97-108.

\bibitem[KR]{KR} V.~G.~Kac, A.~K.~Raina, {\em Bombay lectures
on highest weight representations of infinite dimensional
Lie algebras}, Singapore: World Sci., 1987.

\bibitem[KRW]{KRW} V.~G.~Kac, S.-S.~Roan and M.~Wakimoto,
{\em Quantum reduction for affine superalgebras}, Commun. Math. Phys.
{\bf 241} (2003), 307--342.



\bibitem[KW1]{KWmod} V.~G.~Kac, M.~Wakimoto, {\em Modular invariant
representations of infinite-dimensional Lie algebras and superalgebras},
Proc. Natl. Acad. Sci., USA, {\bf 85} (1998), 4956--4960.


\bibitem[KW2]{KWmod2} V.~G.~Kac, M.~Wakimoto, {\em Classification of
modular invariant representations of  affine Lie algebras},
Adv. Series in Math. Physics, {\bf 7}, World Scientific, 1989.




\bibitem[KW3]{KWdet} V.~G.~Kac, M.~Wakimoto, {\em Quantum reduction
and representation theory of superconformal algebras}, Adv. in Math.,
{\bf 185}, (2004), 400--458.


\bibitem[KW4]{KW4} V.~G.~Kac, M.~Wakimoto,
{\em On rationality of $\cW$-algebras}, Transform. Groups, {\bf 13}, (2008),
671--713.

\bibitem[KWa]{KWa} V.~G.~Kac, W.~Wang, {\em Vertex operator
superalgebras and their representations}, in Mathematical Aspects
of Conformal and Topological Field Theories and Quantum Groups,
Contemp. Math. {\bf 175}, AMS, Providence, Rhode Island, 1994, 161--192.

\bibitem[KT]{KT}  M.~Kashiwara, T.~Tanisaki,
{\em Characters of irreducible modules with non-critical highest weights
over affine Lie algebras}, Representations and Quantizations,
Shanghai (1998), 275--296, China High. Educ. Press, Beijing, 2000.


%\bibitem[KL]{KL} D.~Kazhdan, G.~Lusztig, {\em Representations of Coxeter
%groups and Hecke algebras}, Invent. Math., {\bf 53} (1979), 165--184.

%\bibitem[KT1]{KT} M.~Kashiwara, T.~Tanisaki {\em Kazhdan-Lusztig
%conjecture for symmetrizable Kac-Moody algebras III- positive rational case},
%Asian J. Math. {\bf 2}  (1998), 779--832.

%\bibitem[KT2]{KT18} M.~Kashiwara, T.~Tanisaki {\em Kazhdan-Lusztig
%conjecture for affine Lie algebras with
%negative level}, Duke Math. {\bf 77} (1995), 21--62.


\bibitem[L]{Lang} S.~Lang, {\em Algebra}, Addison-Wesley, 1965.

%\bibitem[Li]{Li} H.~Li, {\em Some finiteness properties of regular vertex
%operator algebras}, J. of Algebra, {\bf 212}, (1999), 495--514.

\bibitem[M]{M} M.~Miyamoto, {\em Modular invariance of vertex
operator algebras satisfying $C_2$-cofiniteness}, Duke Math. J.,
{\bf 122}, (2004), 51--91.


\bibitem[MP]{MP} R.~V.~Moody, A.~Pianzola, {\em Lie algebras with triangular
decompositions}, Canadian math. Soc. series of monographs and advanced
texts, A Wiley-Interscience Publication, John Wiley \& Sons, 1995.

\bibitem[Sh]{Sh} N.~Shapovalov, {\em On a bilinear form on the
universal enveloping algebra of a complex semisimple Lie algebra},
 Functional Anal. Appl. {\bf 6} (1972), p. 307--312 (in Russian).


\bibitem[Zh]{Zh} Y.~Zhu, {\em Modular invariance of characters of
vertex operator algebras},  J. Amer. Math. Soc. {\bf 9} (1996), 237--302.

\end{thebibliography}
\end{document}